\newcommand{\bigO}{\mathcal{O}}
\newcommand{\intZ}{\mathbb{Z}}
\newcommand{\realR}{\mathbb{R}}
\newcommand{\compC}{\mathbb{C}}
\newcommand{\resp}{resp.}
\newcommand{\ie}{i.e.}
\newcommand{\bbO}{\mathbb{O}}
\newcommand{\bbI}{\mathbb{I}}
\newcommand{\lHopital}{l'H\^{o}pital}
\newcommand{\Pineiro}{Pi\~{n}eiro}
\newcommand{\dequal}{\mathop{=}\limits^{\rm d}}
\DeclareMathOperator{\arsinh}{arsinh}
\DeclareMathOperator{\pv}{p.v.}
\DeclareMathOperator{\local}{local}
\DeclareMathOperator{\diag}{diag}
\DeclareMathOperator{\dist}{dist}
\DeclareMathOperator{\ePDF}{ePDF}
\DeclareMathOperator{\curved}{curved}
\DeclareMathOperator{\vertical}{vertical}
\DeclareMathOperator{\FC}{F-C}
\DeclareMathOperator{\JFC}{J-M-B}
\newtheorem{theorem}{Theorem}[section]
\newtheorem{corollary}[theorem]{Corollary}
\newtheorem{lemma}[theorem]{Lemma}
\newtheorem{prop}[theorem]{Proposition}
\theoremstyle{definition}
\theoremstyle{remark}
\newtheorem{remark}[theorem]{Remark}
\numberwithin{equation}{section}
\def\be{\begin{equation}}
\def\ee{\end{equation}}
\def\ba{\begin{eqnarray*}}
\def\ea{\end{eqnarray*}}
\def\bae{\begin{eqnarray}}
\def\eae{\end{eqnarray}}
\def\bc{\begin{center}}
\def\ec{\end{center}}
\begin{document}

\title[Muttalib--Borodin ensembles]{Muttalib--Borodin ensembles in random matrix theory --- realisations and
correlation functions}

\author{Peter J. Forrester} \address{Department of Mathematics and Statistics, The University of Melbourne, Victoria 3010, Australia; ARC Centre of Excellence for Mathematical \& Statistical Frontiers
}\email{p.forrester@ms.unimelb.edu.au}

\author{Dong Wang} \address{Department of Mathematics, National University of Singapore, Singapore, 119076}\email{matwd@nus.edu.sg}

\date{\today}

%\keywords{biorthogonal polynomials}

\begin{abstract}Muttalib--Borodin ensembles are characterised by the pair interaction term in the eigenvalue probability density
function being of the form $\prod_{1 \le j < k \le N}(\lambda_k - \lambda_j)  (\lambda_k^\theta - \lambda_j^\theta)$.
We study the Laguerre and Jacobi versions of this model --- so named by the form of the one-body interaction terms --- and
show that for $\theta \in \mathbb Z^+$ they can be realised as the eigenvalue PDF of certain random matrices with Gaussian entries. For general $\theta > 0$, realisations in terms of the  eigenvalue PDF of ensembles involving triangular matrices
are given. In the Laguerre case this is a recent result due to Cheliotis, although our derivation is different.
We make use  of a generalisation of a double contour integral formula
for the correlation functions contained in a paper by Adler, van Moerbeke and Wang to analyse the global density (which
we also analyse by studying characteristic polynomials), and the hard edge scaled correlation functions.  
For the global density functional equations for the corresponding resolvents are obtained; solving this gives the moments
in terms of Fuss--Catalan numbers (Laguerre case --- a known result) and particular binomial coefficients (Jacobi case).
For $\theta \in \mathbb Z^+$ the Laguerre and Jacobi cases are closely related to the squared singular values
for products of $\theta$ standard Gaussian random matrices, and truncations of unitary matrices, respectively.
At the hard
edge the double contour integral formulas provide a double contour integral form of the scaled correlation kernel obtained by Borodin in terms
of Wright's Bessel function. 
\end{abstract}

\maketitle
\section{Introduction}
Recent studies in random matrix theory \cite{Claeys-Romano14,Kuijlaars-Stivigny14,Cheliotis14,Borot-Guionnet-Kozlowski13, Forrester-Liu14} have drawn renewed attention to the class of
eigenvalue probability density functions (PDFs) proportional to
\begin{equation}\label{1.1}
\prod_{l=1}^N e^{- V(\lambda_l)} \prod_{1 \le j < k \le N} (\lambda_k - \lambda_j)  (\lambda_k^\theta - \lambda_j^\theta) , \qquad
\lambda_l > 0.
\end{equation}
These PDFs were proposed by Muttalib \cite{Muttalib95} in the context of a simplified model of the joint distribution of the
transmission eigenvalues for disordered conductors in the metallic regime, and with no time reversal symmetry.
The latter is known to have its exact form proportional to \cite{Beenakker-Rejaei93}
\begin{equation}\label{1.2}
\prod_{l=1}^N e^{- V(\lambda_l)} \prod_{1 \le j < k \le N} (\lambda_k - \lambda_j) 
\Big ( \arsinh^2\lambda_k^{1/2} -  \arsinh^2\lambda_j^{1/2}  \Big ), \qquad \lambda_l > 0,
\end{equation}
where for large $\lambda$, $V(\lambda) =  Nc  \arsinh^2\lambda^{1/2} (1 + \bigO(N^{-1}))$, with $c= \ell/L$, $\ell$ denoting the mean
free path length and $L$ the length of the wire. In practice one has $1 \ll c \ll N$.

Recalling that $  {\rm arsinh} \, z = \log (z + \sqrt{z^2 + 1} )$ one sees  (\ref{1.1}) relates to (\ref{1.2}) in the limit
$\theta \to 0^+$ when we have
\begin{equation}\label{1.3}
{1 \over \theta} (\lambda_k^\theta - \lambda_j^\theta) \to  (  \log \lambda_k - \log \lambda_j ),
\end{equation}
although this is still only an approximation to the corresponding factor in (\ref{1.2}). 
Actually in \cite{Muttalib95} attention was restricted to $\theta$ a positive integer; on this
point we remark that the change of variables 
\begin{equation}\label{Cv}
\lambda \to \lambda^{1/\theta}
\end{equation}
 maps $\theta$ to $1/\theta$ in (\ref{1.1})
at the expense of altering $V(\lambda)$.

Our interest is two special cases of (\ref{1.1}).  The first is when
\begin{equation}\label{L}
  e^{-V(\lambda)} = \lambda^c e^{-\lambda},  \qquad  \lambda > 0, \: \: c > -1.
\end{equation}
This is referred to as the Laguerre weight, due to its appearance as the weight function in the orthogonality of the Laguerre
polynomials in the theory of classical orthogonal polynomials. The choice (\ref{L}), together with the choice of
$e^{-V(\lambda)}$ as a Gaussian or Jacobi weight (for the latter see (\ref{J}) below), was considered in some detail by 
Borodin \cite{Borodin99}. Due to the
significant advancement contained in \cite{Borodin99}, we will refer to the general class of PDFs (\ref{1.1}) as Muttalib--Borodin
ensembles, and the particular choice of weight (\ref{L}) in (\ref{1.1}) as the Laguerre Muttalib--Borodin ensemble.

Let us now describe our results. In relation to the Laguerre Muttalib--Borodin ensemble, we first realise the special cases  $c, \theta \in \intZ^+$ as a particular class of complex Wishart matrices isolated in \cite{Adler-van_Moerbeke-Wang11}.
For general parameters we give a new derivation of a recent result of Cheliotis
\cite{Cheliotis14} which gives a realisation in terms of a particular class of random upper-triangular matrices,
and we furthermore develop working in \cite{Adler-van_Moerbeke-Wang11} to generalise this result  (Section \ref{sec:realisations_and_extensions}).  The differential equation satisfied by the characteristic polynomial of the ensemble under the mapping (\ref{Cv}) is studied, and we relate this to the resolvent and global density (Section \ref{sec:characteristic_poly}). We use results contained in \cite{Adler-van_Moerbeke-Wang11},
and take inspiration from the recent work \cite{Liu-Wang-Zhang14}, to obtain a derivation of the global density directly from a double contour formula for the one-point function using the saddle point method (Section \ref{sec:saddle_pt}). Furthermore the double contour integral form of the correlation kernel given in \cite{Adler-van_Moerbeke-Wang11},
suitably generalised from integer to real parameters, is used to rederive the hard edge scaled limit known from \cite{Borodin99} (Section \ref{sec:hed}).
% and then generalise the contour integral formulas in \cite{Adler-van_Moerbeke-Wang11} that depend on integer-values parameters corresponding to $c, \theta \in \intZ^+$ so that they are valid for real-valued parameters and are applicable to the Laguerre Muttalib--Borodin ensemble with all possible parameters $c, \theta$ .

Parallel to the analysis of the Laguerre Muttalib--Borodin ensemble, we also undertake an analogous program of study in relation to the Jacobi weight
\begin{equation}\label{J}
e^{-V(\lambda)} = \lambda^{c_1} (1 -\lambda)^{c_2}, \qquad 0 < \lambda < 1, \quad c_1, c_2 > -1.
\end{equation}
This substituted in (\ref{1.1}) gives the Jacobi Muttalib--Borodin ensemble. 
Our realisation and double contour integral formula for the correlation kernel makes essential use of
results contained in \cite{Adler-van_Moerbeke-Wang11}. In relation to the global density, the resolvent is specified by a nonlinear equation which  we
solve using the Lagrange inversion formula to deduce that the moments of the global density are given in terms of particular binomial coefficients.
A trigonometric parametrisation of the spectral variable is given which allows for the determination of an explicit functional form for
the global density.
The hard edge scaled limit gives the same double contour integral form as found for the Laguerre case, in keeping with the findings of  \cite{Borodin99}.

We now give a precise statement of the main results in our paper for the Laguerre Muttalib--Borodin ensemble. All the results obtained for the Laguerre case have counterparts for the Jacobi case, 
described in the paragraph above and which 
 are presented in the body of the paper subsequent to presentation of the Laguerre case.
 We do not state the results in their most general form below, and readers can find the generalisations in subsequent sections.

\subsection{Main results for Laguerre Muttalib--Borodin ensemble}

First, the Laguerre Muttalib--Borodin ensemble can be realised as the eigenvalues of a random matrix in the \emph{upper-triangular random matrix ensemble}, which is defined in Section \ref{subsec:multiple_Laguerre_ensemble}, and if $\theta , c \in \intZ_{\geq 0}$, it can be realised as the eigenvalues of a random matrix in the \emph{multiple Laguerre ensemble}, which is defined in \cite{Adler-van_Moerbeke-Wang11} and also described in Section \ref{subsec:multiple_Laguerre_ensemble}.
\begin{prop}
  Let $Y$ be a upper-triangular $N \times N$ matrix with all entries independent, the strictly upper-triangular entries distributed as standard complex Gaussians, and the diagonal entries are real positive random variables with $\lvert y_{k, k} \rvert^2 \mathop{=}\limits^{\rm d}  \Gamma[\theta(k - 1) + c + 1, 1]$, or equivalently, $2\lvert y_{k,k} \rvert^2  \mathop{=}\limits^{\rm d} \chi_{2({\theta(k - 1) + c +1)}}^2$, for $k=1,\dots,N$. Then the eigenvalues $\lambda_1, \dotsc, \lambda_n$ of $Y^{\dagger} Y$ have the PDF proportional to \eqref{1.1} with $V$ given by \eqref{L}. Furthermore, if $\theta , c \in \intZ_{\geq 0}$, the PDF of the eigenvalues of $Y$ is the same as the PDF of the eigenvalues of $X^{\dagger} X$, where $X$ is an $N \times N$ random matrix whose entries $x_{j, k}$ with $1 \leq j \leq \theta(k - 1) + c + k$ have independent standard complex normal distribution, while other entries are zero.
\end{prop}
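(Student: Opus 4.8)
The plan is to compute the eigenvalue PDF of $Y^\dagger Y$ for the upper-triangular ensemble by a change of variables, and then to recognize the integer-parameter case as a Wishart-type matrix via a QR-type factorization. First I would set up the Laguerre part: for the matrix $Y$ as described, I want the joint density of the entries, which factorizes as a product of independent complex Gaussian densities on the strictly-upper entries times independent Gamma densities on the squared moduli of the diagonal entries. The key observation is that $Y^\dagger Y$ and $YY^\dagger$ have the same nonzero eigenvalues, but it is cleaner to work with $A := Y^\dagger Y$, which is positive definite Hermitian. I would pass from the entries of $Y$ to the eigenvalues $\lambda_1,\dots,\lambda_N$ of $A$ together with auxiliary ``angular'' coordinates (e.g. the unitary matrix diagonalizing $A$, or equivalently integrate out all the off-diagonal structure). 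The Jacobian of the map $Y \mapsto (\text{eigenvalues}, \text{angles})$ is where the interaction term must emerge; I expect the factor $\prod_{j<k}(\lambda_k-\lambda_j)(\lambda_k^\theta-\lambda_j^\theta)$ to arise from combining the ordinary Vandermonde $\prod_{j<k}(\lambda_k-\lambda_j)$ coming from the Hermitian-matrix Jacobian with a second factor $\prod_{j<k}(\lambda_k^\theta-\lambda_j^\theta)$ produced by integrating the diagonal Gamma weights $\prod_k |y_{kk}|^{2(\theta(k-1)+c)} e^{-|y_{kk}|^2}$ against the constraint relating the $|y_{kk}|^2$ to the $\lambda_l$. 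Concretely, $\det Y = \prod_k y_{kk}$ forces $\prod_k \lambda_k = \prod_k |y_{kk}|^2$, and more refined relations between elementary symmetric functions should, after the angular integration, collapse the diagonal-parameter–dependent weight into the theta-Vandermonde; this is the computation I would carry out following the method of \cite{Cheliotis14} but organized differently, as the paper advertises.

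For the second assertion, assume $\theta, c \in \intZ_{\geq 0}$ and let $X$ be the rectangular (or rather $N\times N$ with a triangular support pattern) Gaussian matrix described, so that $x_{j,k}$ is a standard complex normal exactly when $j \le \theta(k-1)+c+k$ and zero otherwise. I would apply a QR decomposition column by column: write $X = QR$ with $Q$ having orthonormal columns and $R$ upper-triangular with positive diagonal. The staircase support pattern of $X$ is engineered precisely so that the $k$-th column of $X$ lives in a coordinate subspace of dimension $d_k := \theta(k-1)+c+k$; Gram–Schmidt applied to such a column produces, for the diagonal entry $r_{kk}$, the norm of the projection of a $d_k$-dimensional complex Gaussian onto a space of codimension $k-1$, hence $|r_{kk}|^2 \dequal \Gamma[d_k - (k-1), 1] = \Gamma[\theta(k-1)+c+1, 1]$, while the strictly-upper entries $r_{jk}$ come out as independent standard complex Gaussians. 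This matches exactly the distribution of $Y$ in the first part (here one must be slightly careful that the staircase is compatible with the triangular structure of $R$, i.e. $d_k \ge k$, which holds since $\theta, c \ge 0$). Since $X^\dagger X = R^\dagger Q^\dagger Q R = R^\dagger R$ has the same eigenvalue PDF as $Y^\dagger Y$, and the statement conflates ``eigenvalues of $Y$'' with eigenvalues of $Y^\dagger Y$ in the integer case (as $Y$ here plays the role of $R$), the two realisations coincide. Alternatively, one can simply invoke the multiple Laguerre ensemble of \cite{Adler-van_Moerbeke-Wang11}, of which this $X$ is by definition a member, and quote that its eigenvalue PDF is \eqref{1.1} with weight \eqref{L}.

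The main obstacle I anticipate is the first part: carrying out the entry-to-eigenvalue change of variables cleanly enough to see the theta-Vandermonde appear, rather than just an integral representation of it. The ordinary Vandermonde from the Hermitian Jacobian is standard, but extracting $\prod_{j<k}(\lambda_k^\theta - \lambda_j^\theta)$ requires exploiting the very specific arithmetic progression $\theta(k-1)+c$ in the diagonal shape parameters; the natural route is to reduce to a known determinantal evaluation (a generalized Selberg/Jacobi-type integral, or a de Bruijn / Andréief identity) after integrating out the angular variables, and to recognize the resulting determinant of moments as factoring into the two Vandermonde-type products. I would either reproduce Cheliotis's argument in streamlined form or, preferably, deduce it as a special case of the more general upper-triangular result developed in Section \ref{sec:realisations_and_extensions} via the \cite{Adler-van_Moerbeke-Wang11} framework; in the latter approach the identity to verify becomes a biorthogonality/Schur-function computation rather than a raw Jacobian, which I expect to be the more robust path.
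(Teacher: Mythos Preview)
Your treatment of the second assertion (the integer case, relating $X^\dagger X$ to $Y^\dagger Y$) matches the paper: the paper uses Householder reflections column by column to reduce $X$ to an upper-triangular matrix distributed as $Y$, which is exactly your QR/Gram--Schmidt argument in a different guise. The dimension count $d_k-(k-1)=\theta(k-1)+c+1$ for the Gamma shape parameter is the same.

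For the first assertion (the eigenvalue PDF of $Y^\dagger Y$), however, the paper does \emph{not} do a direct entry-to-eigenvalue Jacobian computation, and your sketch of that route is too vague to stand on its own. The paper's derivation is inductive and exploits a Markov-chain structure you do not mention: writing $B=Y_{n\times n}$ and $A=Y_{(n-1)\times(n-1)}$, one has $BB^\dagger = B_{n\times(n-1)}B_{n\times(n-1)}^\dagger + \vec{y}\,\vec{y}^\dagger$ where $\vec{y}$ is the new (Gaussian$+$Gamma) column. After diagonalising the first summand this becomes $\mathrm{diag}(\mu_1,\dots,\mu_{n-1},0)+\vec{z}\,\vec{z}^\dagger$, and the eigenvalue equation reduces to a rational equation whose root distribution is read off from a result of Forrester--Rains. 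This yields the conditional PDF $p_{n,n-1}$ (explicit, with an interlacing constraint), and the marginal $p_n$ then follows by integrating $p_{n,n-1}\,p_{n-1}$ over the interlacing region using a simple determinantal identity (Lemma~\ref{L1}). The $\theta$-Vandermonde emerges from this integration, not from any direct Jacobian.

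So your proposal contains a genuine gap in the first part: the mechanism you describe (constraints between $|y_{kk}|^2$ and the $\lambda_l$, a de Bruijn/Andr\'eief identity on moments) is not how the paper proceeds, and you have not identified a concrete substitute. The direct change-of-variables approach does work---it is Cheliotis's proof---but the paper explicitly advertises a different derivation, and the key ideas (rank-one update, rational secular equation, interlacing integral) are absent from your outline. Your closing remark about the ``AMW framework'' points in the right direction but stops short of naming these ingredients.
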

This result is covered by Proposition \ref{P1} and Corollary \ref{cor:X_and_Y_L}. 

The statistical system defined by the Laguerre Muttalib--Borodin ensemble is a determinantal point process,
and so is fully determined by its correlation kernel, for which we give a double contour integral formula.

\begin{prop} \label{prop:kernel_formula_L}
  The correlation kernel for the Laguerre Muttalib--Borodin ensemble can be written
\begin{equation}\label{KK2}
  K^{\rm L}(x,y) = {1 \over (2 \pi i)^2} {h(x) \over h(y)} \oint_{\Sigma} dz  \oint_{\Gamma_\alpha} dw \,
  { x^{-z - 1} y^w \Gamma(z+1) \prod_{k=1}^N ( z - \alpha_k) \over
    (z - w) \Gamma(w+1) \prod_{l=1}^N (w - \alpha_l)},
\end{equation}
where $\alpha_j = \theta(j - 1) + c$ is specified in \eqref{XY1}, and the contours $\Sigma$ and $\Gamma_{\alpha}$ are specified in Proposition \ref{prop:corr_kernel_L}. The function $h(x) = x^{c/2} e^{x/2}$, as specified in \eqref{hh} up to a multiplicative constant.
\end{prop}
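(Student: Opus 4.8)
The plan is to obtain \eqref{KK2} as a limiting/specialised case of the double contour integral formula for the correlation kernel of the \emph{multiple Laguerre ensemble} established in \cite{Adler-van_Moerbeke-Wang11}, extended from the integer parameters $\theta,c\in\intZ_{\ge0}$ appearing there to general $\theta>0$, $c>-1$. First I would recall, from Proposition \ref{P1} and Corollary \ref{cor:X_and_Y_L}, that the eigenvalues of $Y^\dagger Y$ (equivalently of $X^\dagger X$ when the parameters are integers) have PDF \eqref{1.1} with weight \eqref{L}; this places the ensemble inside the biorthogonal-polynomial framework with exponents $\alpha_j=\theta(j-1)+c$, and the correlation kernel is then the Christoffel--Darboux-type kernel $\sum_{j=0}^{N-1} P_j(x) Q_j(y)$ built from the two families of biorthogonal functions, one polynomial in $x$ and the other polynomial in $x^\theta$, against the weight $x^c e^{-x}$. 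The known moment formulas $\int_0^\infty x^m x^c e^{-x}\,dx=\Gamma(c+m+1)$ give these biorthogonal functions explicit contour integral representations: the ``$P$'' side is a single contour integral whose integrand carries the factor $\Gamma(z+1)\prod_{k}(z-\alpha_k)$ and $x^{-z-1}$, and the ``$Q$'' side carries $1/(\Gamma(w+1)\prod_l(w-\alpha_l))$ and $y^{w}$, exactly the structure visible in \eqref{KK2}.

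Next I would assemble the kernel. Summing the $N$ terms of the biorthogonal expansion collapses, via a geometric-series/residue manipulation, to the single factor $1/(z-w)$ together with the requirement that the $w$-contour $\Gamma_\alpha$ encircle the poles at $w=\alpha_1,\dots,\alpha_N$ while the $z$-contour $\Sigma$ separates these from the pole at $z=w$ and is compatible with the convergence of the $x^{-z-1}$ and $\Gamma(z+1)$ factors (this is precisely where the contour prescriptions of Proposition \ref{prop:corr_kernel_L} enter, and where one checks that deforming contours picks up no spurious residues). The prefactor $h(x)/h(y)$ from \eqref{hh} is the conjugation that symmetrises the kernel / absorbs the weight $x^c e^{-x}$ in the standard way, and it does not affect the determinantal correlation functions. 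For integer $\theta,c$ this reproduces verbatim the formula of \cite{Adler-van_Moerbeke-Wang11}; the passage to general real $\theta>0$, $c>-1$ is then carried out by analytic continuation, observing that both sides of \eqref{KK2} are analytic in $(\theta,c)$ in the relevant region (the right-hand side manifestly so, since the integrand is, and the contours can be chosen locally uniformly; the left-hand side because the kernel is a rational-in-moments expression and the moments $\Gamma(c+m+1)$ are entire in $c$ and the $\alpha_j$ depend analytically on $\theta$), so agreement on $\intZ_{\ge0}\times\intZ_{\ge0}$ forces agreement throughout.

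The main obstacle I anticipate is bookkeeping the contours and the convergence/deformation issues rather than any deep new idea: one must verify that $\Sigma$ and $\Gamma_\alpha$ can be chosen so that (i) $\Gamma_\alpha$ winds once around each $\alpha_l$ and around nothing else, (ii) $\Sigma$ lies to the right of all $\alpha_k$ (so that $\prod_k(z-\alpha_k)$ and $\Gamma(z+1)$ behave correctly and the $w=z$ residue is captured with the right sign), (iii) the double integral converges absolutely for $x\neq y$ in the bulk and has the correct behaviour as $x,y\to0$ or $\infty$, and (iv) the collapse of the sum $\sum_{j=0}^{N-1}$ to $1/(z-w)$ is valid on these contours. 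A secondary technical point is justifying the analytic continuation in $\theta$: one should confirm the $N$-fold integral defining the kernel (before the contour-integral rewriting) is genuinely analytic in $\theta$ near positive integers and that the contour-integral representation remains valid under continuation, e.g.\ by a Vitali/Morera argument applied to a fixed admissible choice of $\Sigma$ and $\Gamma_\alpha$. Once these are in place, \eqref{KK2} follows; the detailed verification is deferred to Proposition \ref{prop:corr_kernel_L}.
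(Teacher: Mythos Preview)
Your proposal is essentially correct but proceeds by a different mechanism than the paper. The paper does not build the kernel by summing biorthogonal functions with individual contour-integral representations and then collapsing the sum to $1/(z-w)$. Instead, it first establishes the full joint PDF of the eigenvalues of the nested minors $Y_{N\times 1}^\dagger Y_{N\times 1},\dots,Y_{N\times N}^\dagger Y_{N\times N}$ (Corollary~\ref{C1}\ref{enu:cor:C_1:b}), recognises this as a Markov chain with explicit transition densities, and then invokes the Eynard--Mehta-type machinery for such interlacing determinantal processes to obtain the multi-time kernel of Proposition~\ref{prop:corr_kernel_L} in one stroke; the single-time case $n_1=n_2=N$ together with the gauge factor $h(x)/h(y)$ then gives \eqref{KK2}. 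Your route is more elementary and self-contained for the single-time kernel, while the paper's route yields the extended (minor-process) kernel \eqref{eq:corr_kernel_L} as a byproduct.

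One correction: the analytic-continuation step from $(\theta,c)\in\intZ_{\ge0}^2$ to real parameters is unnecessary. The paper's point is precisely that the argument in \cite[Thm.~3(c)]{Adler-van_Moerbeke-Wang11} nowhere uses the integrality of the $\alpha_j$ or the inequalities \eqref{IN}; once the joint density \eqref{eq:jpdf_upp_tri} is available for real $\alpha_j>-1$ (which is what Corollary~\ref{C1} supplies), the determinantal-process computation goes through verbatim. Also, your description of the contour $\Sigma$ as lying ``to the right of all $\alpha_k$'' is not the specification in Proposition~\ref{prop:corr_kernel_L}: there $\Sigma$ is a Hankel-type loop from $-\infty-i\epsilon$ to $-\infty+i\epsilon$ that encloses both $\Gamma_\alpha$ and the poles of $\Gamma(z+1)$ at the negative integers. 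Deformation to a vertical line is possible (and used later in the asymptotic analysis), but the basic formula is stated with the Hankel contour.
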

This result is covered by Proposition \ref{prop:corr_kernel_L} and the definition \eqref{KK1} of the kernel.

Next, we derive the limiting global density of the ensemble, 
defined in terms of the correlation kernel $K^{\rm L}(x,x)$ according to \eqref{hh1}.

\begin{prop} \label{prop:global_density}
  For all $\theta > 0$, the limiting global density of the Laguerre Muttalib--Borodin ensemble with change of variable \eqref{Cv} is the Fuss--Catalan distribution, that is,
  \begin{equation}\label{1.8}
    \tilde{\rho}^{{\rm L}}_{(1)}(x) = \rho^{\FC}(x).
  \end{equation}
\end{prop}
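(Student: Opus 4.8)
The plan is to establish \eqref{1.8} by a steepest-descent analysis of the double contour integral formula \eqref{KK2}, following the route announced in Section~\ref{sec:saddle_pt}. Since the correlation kernel transforms covariantly under the change of variables \eqref{Cv}, the one-point function $\tilde\rho^{\rm L}_{(1)}$ of the mapped ensemble is obtained from the diagonal kernel $K^{\rm L}(x,x)$ by the substitution $x\mapsto x^{1/\theta}$ together with an elementary Jacobian factor; equivalently, the limiting law of the mapped ensemble is the pushforward under $t\mapsto t^{\theta}$ of the limiting law of $x/N$ for the Laguerre Muttalib--Borodin ensemble itself, up to the $\theta$-dependent rescaling of the spectral variable implicit in the normalisation of $\rho^{\FC}$. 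It therefore suffices to determine the large-$N$ behaviour of $K^{\rm L}(x,x)$. Because the exponents $\alpha_j=\theta(j-1)+c$ fill out an interval of length of order $N$, the natural scalings are $z=N\zeta$, $w=N\omega$ for the contour variables in \eqref{KK2} and $x=N\xi$ for the spectral variable; on the diagonal the prefactor $h(x)/h(y)$ of \eqref{hh} equals $1$ and plays no role.

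The first step is to isolate the leading exponential. Stirling's formula for $\Gamma(z+1)$ and $\Gamma(w+1)$, together with the Riemann-sum estimate $\sum_{k=1}^N\log(N\zeta-\theta(k-1)-c)=N\log N+N\!\int_0^1\!\log(\zeta-\theta s)\,ds+o(N)$ (and the same for $w$), show that the $N\log N$ contributions cancel between the $z$- and $w$-parts, leaving the integrand in the form $e^{N(\Phi(\zeta)-\Phi(\omega))}$ times factors of lower order in $N$, where
\[
  \Phi(\zeta)=\zeta\log\zeta-\zeta-\zeta\log\xi+\tfrac1\theta\big(\zeta\log\zeta-(\zeta-\theta)\log(\zeta-\theta)-\theta\big),
\]
the logarithms being branched on the plane slit along $[0,\theta]$. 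The saddle equation $\Phi'(\zeta)=0$ exponentiates to $\zeta^{\theta+1}=\xi^{\theta}(\zeta-\theta)$; for $\xi$ in the eventual support this has a complex-conjugate pair of roots $\zeta_\pm(\xi)$, which coalesce at $\zeta=\theta+1$, the point mapping to the soft edge of the limiting spectrum (while $\zeta=0$ corresponds to the hard edge $\xi=0$).

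The next step is the contour deformation: one deforms $\Sigma$ and $\Gamma_\alpha$ from Proposition~\ref{prop:corr_kernel_L} so that each passes through both saddles $\zeta_\pm(\xi)$ along steepest directions, with the remaining arcs and the neighbourhoods of $0$, $\theta$ and $\infty$ contributing only subleading or exponentially small terms. Collecting the two leading contributions --- $z$ near $\zeta_+$ with $w$ near $\zeta_-$, and vice versa --- the Gaussian fluctuation integrals supply exactly the powers of $N$ needed to balance $dx=N\,d\xi$, while the factor $(z-w)^{-1}$ evaluated at the saddles yields a limiting density proportional to $\xi^{-1}\,\mathrm{Im}\,\zeta_+(\xi)$. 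Transferring this to the mapped ensemble, with (suitably normalised) spectral variable $\eta$ and Stieltjes transform $G$, the saddle relation becomes $\zeta=\theta\,\eta\,G(\eta)$ --- this being the branch singled out by $G(\eta)\sim1/\eta$ at infinity, by the square-root vanishing at the edge, and by Sokhotski--Plemelj --- and substituting it into the saddle-point equation gives $\eta^{\theta}G^{\theta+1}-\eta G+1=0$. This is precisely the functional equation characterising the Fuss--Catalan distribution, whose moments are the Fuss--Catalan numbers $\tfrac1{\theta m+1}\binom{(\theta+1)m}{m}$; hence $\tilde\rho^{\rm L}_{(1)}=\rho^{\FC}$.

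The main obstacle is the contour-deformation step: one must check, for every $\xi$ in the bulk, that the steepest-descent contours through $\zeta_\pm(\xi)$ lie in the homotopy class of $\Sigma$ and $\Gamma_\alpha$ within the slit plane, control the multivaluedness of $\Phi$ across $[0,\theta]$, and deal with the confluence of the two saddles --- and their collision with the endpoint singularities --- at the soft edge and as $\xi\to0^+$. The Stirling and Riemann-sum bookkeeping, and the final algebraic identification with the Fuss--Catalan curve, are by comparison routine; a convenient independent check on the resolvent equation is provided by the WKB analysis of the linear ODE satisfied by the characteristic polynomial of the mapped ensemble carried out in Section~\ref{sec:characteristic_poly}.
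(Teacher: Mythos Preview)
Your overall framework --- steepest descent on the double contour integral \eqref{KK2}, Stirling plus Riemann-sum bookkeeping, the correct phase function, and the saddle equation $\zeta^{\theta+1}=\xi^{\theta}(\zeta-\theta)$ (equivalent, under $\zeta=\theta u$ and $x=(\xi/\theta)^\theta$, to the paper's $u^{\theta+1}=(u-1)x$) --- matches the paper. The final identification of the saddle curve with the Fuss--Catalan resolvent equation is also exactly how the paper closes the argument.

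The gap is in the mechanism you propose for extracting the $O(1)$ density from the double integral. The cross-saddle contributions you single out, ``$z$ near $\zeta_+$ with $w$ near $\zeta_-$ and vice versa'', are \emph{not} the leading terms. With $z=N\theta u_++N^{1/2}s$, $w=N\theta u_-+N^{1/2}t$, one has $dz\,dw=N\,ds\,dt$ while $z-w\sim N\theta(u_+-u_-)$, so the whole cross term is $O(1/N)$ and moreover carries the oscillatory factor $e^{N(\hat F_1(u_+)-\hat F_1(u_-))}$ of unit modulus; it cannot produce a finite density. Likewise, the same-saddle contributions (both $z$ and $w$ near $u_+$, or both near $u_-$) are principal-value integrals that the paper bounds as $O(N^{-1/2})$. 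The actual $O(1)$ contribution comes from a mechanism you do not mention: one splits $\Gamma_\alpha$ into two closed loops separated by the vertical line through $N\theta u_\pm$, and deforms $\Sigma$ to that vertical line \emph{between} them. In doing so the pole at $z=w$ is crossed along the vertical segment joining $N\theta u_-$ to $N\theta u_+$; the residue picks up
\[
\frac{1}{N\theta x}\,\frac{1}{2\pi i}\int_{N\theta u_-}^{N\theta u_+}\frac{e^{F(z;x)}}{e^{F(z;x)}}\,dz
=\frac{\operatorname{Im}u_+}{\pi x},
\]
which is the Fuss--Catalan density in the Biane--Neuschel parametrisation. Thus the density is a \emph{residue} contribution from the $1/(z-w)$ kernel when the contours are made to intersect, with the saddle-point analysis serving only to show that everything else is subleading. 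Your sketch needs this contour-crossing step; without it the argument does not produce a nonzero limit.
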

Here the the Fuss--Catalan distribution is defined in \eqref{eq:Fuss-Catalan_density}. We remark  that this result is not totally new, as we 
explain in Proposition \ref{Pfl}, and after proper interpretation can be proved by a different method. But this method does not generalise simply to the Jacobi case, so we introduce two alternative
methods to prove Proposition \ref{prop:global_density}, one in Section \ref{S3.1} for positive integer $\theta$ and the other in Section \ref{subsec_multiple_Laguerre_ensemble} for general $\theta > 0$. Both of these two methods can be applied to the Jacobi case with little change.

Finally, we consider the local behaviour of $K^{\rm L}(x,y)$ around $0$, first obtained by Borodin \cite{Borodin99} in terms of Wright's
Bessel function.

\begin{prop} \label{prop:Laguerre_hard_edge}
We have
\begin{multline}\label{KK3}
\lim_{N \to \infty} N^{-1/\theta} K^{\rm L}(N^{-1/\theta} x, N^{-1/\theta} y)  \\
= 
\Big ( {y \over x} \Big )^{c/2} {\theta \over (2 \pi i)^2} \oint_{\Sigma^{\delta}_{-1/2}} dz  \oint_{\Gamma_0} dw \,
{ x^{-\theta z - 1} y^{\theta w} \over z - w}
{\Gamma(\theta z + c + 1) \over \Gamma(\theta w + c + 1) }
{\Gamma(z+1) \over \Gamma(w+1)} {\sin \pi z \over \sin \pi w},
\end{multline}
where $\Gamma_0$ is the Hankel loop contour starting at $\infty + i \epsilon$, running parallel to the positive real axis, looping around the origin, and finishing at
$\infty - i \epsilon$ after again running parallel to the negative real axis, while $\Sigma^{\delta}_{-1/2}$ is the contour consisting of two rays, one is from $-1/2$ to $e^{(\pi/2 + \delta)i} \cdot \infty$ and the other from $e^{-(\pi/2 + \delta)i} \cdot \infty$ to $-1/2$, where $\delta \in (0, \pi/2)$, see Figure \ref{fig:hard_edge}. If $\theta \geq 1$, we can also take $\delta = 0$ and let $\Sigma^{\delta}_{-1/2}$ be the upward vertical contour through $-1/2$.
In the case $\theta \in \mathbb Z^+$ this can be rewritten
\begin{multline}\label{KK3a}
x^{1/\theta - 1} \ \lim_{N \to \infty} N^{-1/\theta} K^{\rm L}(\theta (x/N)^{1/\theta} , \theta (y/N)^{1/\theta}) 
= 
\Big ( {y \over x} \Big )^{c/(2 \theta)} \\ \times
{1 \over (2 \pi i)^2} \oint_{\Sigma^{\delta}_{-1/2}} dz  \oint_{\Gamma_0} dw \,
{ x^{- z - 1} y^{w} \over z - w}
{\prod_{k=0}^{\theta - 1} \Gamma(z + (c+1+k)/\theta) \over\prod_{k=0}^{\theta - 1}  \Gamma(w + (c+1+k)/\theta)}
{\Gamma(z+1) \over \Gamma(w+1)} {\sin \pi z \over \sin \pi w}.
\end{multline}
\end{prop}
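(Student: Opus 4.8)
The plan is to obtain \eqref{KK3} directly from the double contour integral representation \eqref{KK2}, by inserting the hard edge scaling $x\mapsto N^{-1/\theta}x$, $y\mapsto N^{-1/\theta}y$ and then passing to the limit $N\to\infty$ under the integral sign. The first step is the change of variables $z\mapsto\theta z+c$ and $w\mapsto\theta w+c$ in the two contour integrals, which is adapted to the arithmetic progression $\alpha_k=\theta(k-1)+c$: one then has $z-\alpha_k=\theta(z-k+1)$, so that
\[
\prod_{k=1}^N(z-\alpha_k)\Big|_{z\mapsto\theta z+c}=\theta^N\prod_{j=0}^{N-1}(z-j)=\theta^N(-1)^N\,\frac{\Gamma(N-z)}{\Gamma(-z)},
\]
and similarly for $w$. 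Using the reflection formula $\Gamma(1+z)\sin\pi z=-\pi/\Gamma(-z)$, the quotient $\prod_k(z-\alpha_k)/\prod_l(w-\alpha_l)$ collapses to $\tfrac{\Gamma(1+z)\sin\pi z}{\Gamma(1+w)\sin\pi w}\cdot\tfrac{\Gamma(N-z)}{\Gamma(N-w)}$, which already carries the $\Gamma$-quotients and the factor $\sin\pi z/\sin\pi w$ appearing in \eqref{KK3}, together with a single $N$-dependent factor $\Gamma(N-z)/\Gamma(N-w)$. The remaining ingredients of \eqref{KK2} transform routinely: $\Gamma(z+1)\mapsto\Gamma(\theta z+c+1)$, the term $x^{-z-1}y^w$ becomes $x^{-\theta z-c-1}y^{\theta w+c}$, the Jacobian of the double substitution is $\theta^2$ against the factor $1/(z-w)\mapsto1/(\theta(z-w))$, and the prefactor $h(N^{-1/\theta}x)/h(N^{-1/\theta}y)$ is read off from \eqref{hh}; combining the latter with the residual powers of $x$ and $y$ produces the $(y/x)^{c/2}$ of \eqref{KK3}.

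The next step is the bookkeeping of the powers of $N$. Writing out the scaling in $x^{-\theta z-c-1}y^{\theta w+c}$ produces the explicit factor $N^{z-w+1/\theta}$, while Stirling's formula gives $\Gamma(N-z)/\Gamma(N-w)=N^{w-z}(1+o(1))$; together with the overall prefactor $N^{-1/\theta}$ of the statement one gets $N^{z-w+1/\theta}\cdot N^{w-z}\cdot N^{-1/\theta}\to1$, so the $N$-dependence drops out. The surviving constant is $\theta^2/\theta=\theta$, matching the prefactor in \eqref{KK3}, and assembling all the pieces gives its right-hand side, with the limiting $z$- and $w$-contours obtained from $\Sigma$ and $\Gamma_\alpha$ by the deformations discussed next: $\Gamma_\alpha$ encircles $\alpha_1,\dots,\alpha_N$, that is $\{0,1,\dots,N-1\}$ in the new variable, and is deformed so as to become the Hankel loop $\Gamma_0$ around $[0,\infty)$ as $N\to\infty$, while $\Sigma$, which encircles the poles $z_{\rm old}=-1,-2,\dots$ of $\Gamma(z_{\rm old}+1)$, becomes after deformation the left-opening contour $\Sigma^{\delta}_{-1/2}$, still enclosing the poles $z=-(c+1+n)/\theta$, $n\ge0$, of $\Gamma(\theta z+c+1)$ (and kept separated from $\Gamma_0$, so that the factor $1/(z-w)$ causes no trouble).

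The main obstacle is to make the passage to the limit under the integral rigorous. One must first deform, for each $N$, the contours $\Sigma$ and $\Gamma_\alpha$ to $N$-independent contours whose images under the substitution are $\Sigma^{\delta}_{-1/2}$ and $\Gamma_0$, verifying that no poles are crossed and that the connecting arcs at infinity contribute nothing; this is where one uses the super-exponential decay of the integrand along the rays of $\Sigma^{\delta}_{-1/2}$, coming from the factors $\Gamma(\theta z+c+1)$ and $1/\Gamma(-z)$ both decaying faster than any exponential as $\operatorname{Re}z\to-\infty$, which beats the exponential growth of $\sin\pi z$ and of $x^{-\theta z}$ for any fixed $x>0$ --- the positive angle $\delta$ being precisely what forces $\operatorname{Re}z\to-\infty$ along the contour. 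One then needs a dominating function on $\Sigma^{\delta}_{-1/2}\times\Gamma_0$ uniform in $N$: the standard device is to truncate the contours to $\lvert\operatorname{Im}z\rvert\le T$ (and likewise for $w$), use the uniform-on-compacts version of the Stirling estimate for $\Gamma(N-z)/\Gamma(N-w)$ on the truncated part, and bound the two tails uniformly in $N$ --- again via the super-exponential decay along $\Sigma^{\delta}_{-1/2}$, together with the super-exponential decay of $1/(\Gamma(\theta w+c+1)\Gamma(w+1))$ along $\Gamma_0$. The alternative choice $\delta=0$ for $\theta\ge1$ is a refinement: along the vertical line through $-1/2$ the net exponential factor of the integrand is $e^{\pi(1-\theta)\lvert\operatorname{Im}z\rvert/2}$, which decays when $\theta>1$ and is at worst marginal when $\theta=1$, so the contour can be straightened with the $z$-integral understood as conditionally convergent.

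Finally, the rewriting \eqref{KK3a} for $\theta\in\intZ^+$ follows from \eqref{KK3} by applying the Gauss multiplication formula
\[
\Gamma(\theta z+c+1)=(2\pi)^{(1-\theta)/2}\,\theta^{\,\theta z+c+1/2}\prod_{k=0}^{\theta-1}\Gamma\!\Big(z+\tfrac{c+1+k}{\theta}\Big)
\]
to both $\Gamma(\theta z+c+1)$ and $\Gamma(\theta w+c+1)$, and then changing variables $x\mapsto\theta x^{1/\theta}$, $y\mapsto\theta y^{1/\theta}$ (so that the kernel argument $N^{-1/\theta}x$ of \eqref{KK3} becomes $\theta(x/N)^{1/\theta}$). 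One checks that the powers of $\theta$ generated by the multiplication formula, by $(\theta x^{1/\theta})^{-\theta z-1}$ and $(\theta y^{1/\theta})^{\theta w}$, and by the overall $\theta$ of \eqref{KK3}, all cancel; that the constants $(2\pi)^{\pm(1-\theta)/2}$ and $\theta^{\pm(c+1/2)}$ cancel between numerator and denominator; that $(\theta y^{1/\theta}/\theta x^{1/\theta})^{c/2}=(y/x)^{c/(2\theta)}$; and that the residual powers of $x$, together with the prefactor $x^{1/\theta-1}$ on the left-hand side of \eqref{KK3a}, combine to $x^{-z-1}$. This is a routine, if slightly lengthy, algebraic identity, which yields \eqref{KK3a} exactly.
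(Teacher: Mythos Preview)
Your argument is correct and follows essentially the same route as the paper's proof: both rewrite the product $\prod_k(z-\alpha_k)/\prod_l(w-\alpha_l)$ via the reflection formula for the gamma function, use Stirling's formula to control the single remaining $N$-dependent ratio $\Gamma(N-z)/\Gamma(N-w)$, pass to the limit under the integral sign by dominated convergence after deforming to the fixed contours $\Sigma^{\delta}_{-1/2}$ and $\Gamma_0$, and finally appeal to the Gauss multiplication formula for the $\theta\in\mathbb Z^+$ rewriting. The only organisational difference is that you perform the substitution $z\mapsto\theta z+c$, $w\mapsto\theta w+c$ at the outset, whereas the paper keeps the original variables until after the limit and only then changes variables; the paper also defers the details of the dominated convergence step to \cite{Kuijlaars-Zhang14}, while you sketch them explicitly.
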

This result is proved in Section \ref{subsubsec:hard_edge_L}.
We remark that the $\theta \to 0_+$ limit of Propositions \ref{prop:global_density} and \ref{prop:Laguerre_hard_edge} are also obtained in this paper, see Sections \ref{subsubsec:theta=0_L} and \ref{subsubsec:Laguerre_hard_edge_L0}.

\section{Realisations and extensions} \label{sec:realisations_and_extensions}

\subsection{The Laguerre upper-triangular ensemble} \label{subsec:multiple_Laguerre_ensemble}

We use the term complex Wishart matrix to refer to a random matrix of the form $W^\dagger W$ with $W$ containing
complex Gaussian entries with mean and standard deviation to be specified. Furthermore, 
for any $M \times N$ matrix $A$, we denote $A_{m \times n}$ as the $m \times n$ matrix consisting of the upper-left $m \times n$ block of $A$.
We are interested in a particular complex Wishart matrix, due to Adler, van Moerbeke and Wang \cite{Adler-van_Moerbeke-Wang11}, which is
parametrised by non-negative integers $\alpha_1, \dots, \alpha_N$ satisfying
\begin{equation}\label{IN}
  1 + \alpha_1 \le 2 + \alpha_2 \le \cdots \le N + \alpha_N \le M,
\end{equation}
with $M \ge N$. One defines the $M \times N$ random matrix $X = [x_{j,k}]_{j = 1,\dots,M \atop k=1,\dots,N}$ to have
entries
\begin{equation}\label{IN1}
x_{j,k} \mathop{=}\limits^{\rm d} \left \{ \begin{array}{ll}
{\rm N}[0,1/2] + i {\rm N}[0,1/2], & 1 \le j \le k + \alpha_k \nonumber \\
0, & {\rm otherwise}.  \end{array} \right.
\end{equation}
All nonzero entries of $X$ are therefore standard complex Gaussians. Moreover, the condition (\ref{IN}) implies all entries on
and above the diagonal of $X$ are non-zero. Due to its relationship to a certain family of special functions by the
same name, the ensemble of matrices $X_{M \times n}^\dagger X_{M \times n}$, $(n=1,\dots,M)$ was referred to
in \cite{Adler-van_Moerbeke-Wang11} as the multiple Laguerre ensemble.
Our first result is to specify an upper-triangular matrix obtained
from $X$ by a sequence of Householder transformations. Such transformations were introduced into random matrix
theory in \cite{Trotter84}, \cite{Silverstein85}, and furthermore underpin the construction of $\beta$-ensembles as formulated in \cite{Dumitriu-Edelman02}.

We denote by $\Gamma[k,\sigma]$ the gamma distribution,
specified by the density function $(\sigma^{-k}/\Gamma(k)) t^{k-1} e^{-t/\sigma}$, $(t>0)$.
Let $Y = [y_{j,k}]_{j, k = 1, \dotsc, N}$ be an upper-triangular $N \times N$ random matrix with all entries independent. The strictly upper-triangular entries are distributed as standard complex Gaussians, and
the diagonal entries are real positive random variables with distributions depending on parameters $\alpha_k > -1$ specified by
\begin{equation}\label{yk}
  \lvert y_{k,k} \rvert^2  \mathop{=}\limits^{\rm d}  \Gamma[\alpha_k + 1, 1], \quad \text{or equivalently} \quad 2\lvert y_{k,k} \rvert^2  \mathop{=}\limits^{\rm d} \chi_{2({\alpha_k+1)}}^2, \qquad (k=1,\dots,N).
\end{equation}
We say that $Y$ is a random matrix in the upper-triangular ensemble, and that $Y^\dagger Y$ belongs to the Laguerre upper-triangular ensemble.

\begin{prop}\label{P1}
  The random matrices $X^{\dagger}_{M \times 1} X_{M \times 1}$, $X^{\dagger}_{M \times 2} X_{M \times 2}$, $\dotsc$, $X^{\dagger}_{M \times N} X_{M \times N}$  and  $Y^{\dagger}_{N \times 1} Y_{N \times 1}, Y^{\dagger}_{N \times 2} Y_{N \times 2}, \dotsc, Y^{\dagger}_{N \times N} Y_{N \times N}$ have the same joint distribution. In particular, the eigenvalues of $X^{\dagger}_{M \times 1} X_{M \times 1}, X^{\dagger}_{M \times 2} X_{M \times 2}, \dotsc, X^{\dagger}_{M \times N} X_{M \times N}$ have the same joint distribution as the eigenvalues of $Y^{\dagger}_{N \times 1} Y_{N \times 1}, Y^{\dagger}_{N \times 2} Y_{N \times 2}, \dotsc, Y^{\dagger}_{N \times N} Y_{N \times N}$.
\end{prop}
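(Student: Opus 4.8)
The plan is to construct the upper-triangular matrix $Y$ explicitly from $X$ by a sequence of Householder reflections, and track how the distribution of the entries transforms. First I would recall the basic mechanism, as in \cite{Trotter84,Silverstein85,Dumitriu-Edelman02}: given the $M \times N$ matrix $X$ with the staircase sparsity pattern from \eqref{IN1}, apply a Householder reflection $H_1$ (acting on $\compC^M$ from the left) that maps the first column of $X$, which has $1 + \alpha_1$ nonzero i.i.d.\ complex Gaussian entries, to a vector supported on the first coordinate. By unitary invariance of the complex Gaussian and the standard fact that the Euclidean norm of a vector of $k$ i.i.d.\ standard complex Gaussians satisfies $\|v\|^2 \dequal \Gamma[k,1]$ (equivalently $2\|v\|^2 \dequal \chi^2_{2k}$), the $(1,1)$ entry of $H_1 X$ has modulus-squared distributed as $\Gamma[\alpha_1 + 1, 1]$, matching \eqref{yk}. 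The crucial point is that $H_1$ depends only on the first column of $X$, so it acts on the remaining columns independently; and because of the staircase condition \eqref{IN}, column $k$ of $X$ has its nonzero entries in rows $1,\dots,k+\alpha_k$, a set that \emph{contains} the support of column $1$ — one then checks that after applying $H_1$ the entries of $H_1 X$ in rows $2,\dots$ of columns $2,\dots,N$ remain independent standard complex Gaussians with the same staircase support (now shifted), while row $1$ of columns $2,\dots,N$ also stays Gaussian but is irrelevant to $X^\dagger X$ after the next steps. Iterating: apply $H_2$ determined by the entries in rows $2,\dots,2+\alpha_2$ of the second column of $H_1 X$, and so on. After $N$ steps one obtains $U X = \begin{bmatrix} Y \\ 0 \end{bmatrix}$ with $U = H_N \cdots H_1$ unitary and $Y$ upper-triangular; the induction shows $Y$ has exactly the distribution in \eqref{yk}.

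Next I would address the \emph{joint} (nested) statement, which is the real content beyond the single-index case. The key observation is that the Householder reflection $H_k$ is built from data living only in the first $k$ columns of $X$ (more precisely, from column $k$ of $H_{k-1}\cdots H_1 X$, which depends only on columns $1,\dots,k$ of $X$), and $H_k$ acts as the identity outside rows $k,\dots,M$. Consequently, for each $n \le N$, the upper-left $M \times n$ block satisfies $(H_n \cdots H_1) X_{M \times n} = \begin{bmatrix} Y_{n\times n} \\ 0 \end{bmatrix}$ where $Y_{n \times n}$ is the upper-left block of the final $Y$, and $H_n\cdots H_1$ restricted to the first $M$ rows is unitary. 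Therefore $X^\dagger_{M\times n} X_{M\times n} = Y^\dagger_{n\times n} Y_{n\times n}$ \emph{identically} (not just in distribution) after the transformation is applied — the same unitary conjugation that trivialises the first column simultaneously trivialises all the nested blocks. Since the $H_k$ depend measurably on $X$ and the resulting $Y$ has the stated law, the whole family $\{X^\dagger_{M\times n} X_{M\times n}\}_{n=1}^N$ equals in distribution $\{Y^\dagger_{n\times n} Y_{n\times n}\}_{n=1}^N$. The statement about eigenvalues follows since the eigenvalues of $A^\dagger A$ (for $A$ an $m \times n$ matrix) coincide with those of $AA^\dagger$ padded by zeros, hence the nonzero eigenvalues of $X^\dagger_{M\times n}X_{M\times n}$ and $Y^\dagger_{n\times n}Y_{n\times n}$ agree, and the counts match because $Y_{n\times n}$ is $n \times n$ with a.s.\ nonzero diagonal.

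The main obstacle I anticipate is the careful bookkeeping in the inductive step: one must verify that after applying $H_k$, the submatrix in rows $k+1,\dots,M$ and columns $k+1,\dots,N$ of $H_k \cdots H_1 X$ still consists of independent standard complex Gaussians in precisely the staircase pattern \eqref{IN1} (with parameters $\alpha_{k+1},\dots,\alpha_N$ and $N$ replaced by $N-k$), so that the induction hypothesis applies. This uses (i) the orthogonal-invariance of i.i.d.\ complex Gaussian vectors, (ii) independence of distinct columns, and (iii) the nesting of supports guaranteed by \eqref{IN}, which ensures $H_k$ — which only ``sees'' coordinates $k,\dots,k+\alpha_k$ of column $k$ — acts on the later columns without mixing in any zero entries in a way that would destroy the pattern. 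Once this invariant is established, the conclusion for both the single and nested cases is immediate, and the diagonal distribution \eqref{yk} reads off from the chi-squared identity for Gaussian norms. A remark worth including is that this construction is essentially the one used by Edelman and others to tridiagonalise/bidiagonalise Gaussian matrices; the only novelty here is the staircase sparsity, which is exactly what produces the varying shape parameters $\alpha_k + 1$ and hence, via Proposition stated later, the Muttalib--Borodin interaction $\prod(\lambda_k-\lambda_j)(\lambda_k^\theta - \lambda_j^\theta)$.
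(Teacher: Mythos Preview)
Your proposal is correct and follows essentially the same route as the paper: an inductive sequence of Householder reflections $U^{(1)},\dots,U^{(N)}$ acting on the left of $X$, each determined by the current column and supported on rows $k,\dots,k+\alpha_k$, with the rotational invariance of complex Gaussians preserving the staircase pattern and producing the diagonal law \eqref{yk}; the nested statement then follows because the full unitary $U^{(N)}\cdots U^{(1)}$ gives $(X^{(N)}_{M\times n})^\dagger X^{(N)}_{M\times n}=X^\dagger_{M\times n}X_{M\times n}$ identically for every $n$. The only phrasing to tighten is the remark that row~$1$ of columns $2,\dots,N$ is ``irrelevant to $X^\dagger X$'' --- it is of course part of $Y$ and hence of $Y^\dagger Y$; what you mean (and what the paper uses) is that it is untouched by the subsequent reflections $H_2,\dots,H_N$.
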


\begin{proof}
  Recall that a complex Householder reflection matrix acting on the left of the $M \times N$ matrix $X$ has the form
  \begin{equation*}
    U = \mathbb I_{M} - 2 \vec{u} \: \vec{u}^{\, \dagger},
  \end{equation*}
  where ${}^\dagger$ denotes the operation of complex conjugate and transpose and $\vec{u}$ is a $M \times 1$ complex column vector with the property that $ \vec{u}^{\, \dagger} \cdot \vec{u} = 1$. This latter requirement implies $U^\dagger U = \mathbb I_M$, so $U$ is unitary. Geometrically $U$ corresponds to a reflection in the complex hyperplane orthogonal to $\vec{u}^{\, \dagger}$.
  
  To prove the proposition, we construct a sequence of $M \times N$ random matrices $X^{(0)} = X, X^{(1)}, \dotsc, X^{(N)}$ and a sequence of $M \times M$ random Householder reflection matrices $U^{(1)}, \dotsc, U^{(N)}$ inductively. The matrix $U^{(l)}$ is determined by $X^{(l - 1)}$ and $X^{(l + 1)} = U^{(l + 1)} X^{(l)}$ ($l = 0, 1, \dotsc, N - 1$), where  $X^{(l)}$ satisfies: (1) the $(j, k)$ entry is zero for $k < j \leq M$ if $1 \leq k \leq l$, or $k + \alpha_k < j \leq M$ if $l < k \leq N$; (2) the diagonal $(k, k)$ entry is real positive and its square is in $\Gamma[\alpha_k + 1, 1]$ distribution if $k \leq l$; (3) all other entries are in complex standard Gaussian distribution. By the method of construction to be detailed below, the block of $X^{(N)}$ consisting of the first $N$ rows is such that
  $X^{(N)}_{N \times N} \dequal Y$, and all entries in the last $(M - N)$ rows of $X^{(N)}$ are zeros. Thus the joint distribution of $(X^{(N)}_{M \times n})^{\dagger} X^{(N)}_{M \times n}$ for $n = 1, \dotsc, N$ is the same as that of $Y^{\dagger}_{N \times n} Y_{N \times n}$. On the other hand, $X^{(N)}$ is a function of $X$ and for any $n$, $(X^{(N)}_{M \times n})^{\dagger} X^{(N)}_{M \times n} = X^{\dagger}_{M \times n} X_{M \times n}$. 
Except for the construction of $\{X^{(l)} \}$, this finishes the proof.

  We now give an algorithm for the construction using induction. For any $l = 0, \dotsc, N - 1$, by the induction assumption
   $X^{(l)} = [x^{(l)}_{j, k}]_{j = 1,\dots,M \atop k=1,\dots,N}$ is well defined and $x^{(l)}_{j, l + 1}$ with $j = l + 1, \dotsc, l + 1 + \alpha_{l + 1}$ are in independent complex standard Gaussian distribution. We denote the $(1 + \alpha_{l + 1})$-dimensional vector
  \begin{equation*}
    \vec{x}^{(l + 1)} = (x^{(l)}_{l + 1, l + 1}, x^{(l)}_{l + 2, l + 1}, \dotsc, x^{(l)}_{l + 1 + \alpha_{l + 1}, l + 1})^T.
  \end{equation*}
  We construct the Householder reflection matrix $U^{(l + 1)}$ using $\vec{u}^{(l + 1)}$, which is a concatenation of an $l$-dimensional zero vector, an $(1 + \alpha_{l + 1})$-dimensional vector $\vec{v}^{(l + 1)}$, and an $(M - l - 1 - \alpha_{l + 1})$-dimensional zero vector. Here the vector $\vec{v}^{(l + 1)}$ is defined as the unit vector
  \begin{equation*}
    \frac{\vec{w}}{\lVert \vec{w} \rVert}, \quad \text{where} \quad \vec{w} = \vec{x}^{(l + 1)} - (\lVert \vec{x}^{(l + 1)} \rVert, 0, \dotsc, 0)^T,
  \end{equation*}
  if $\vec{x}^{(l + 1)} \neq 0$, and is defined simply as $(1, 0, \dotsc, 0)^T$ if $\vec{x}^{(l + 1)} = 0$.

  From the definition of $U^{(l + 1)}$, and recalling $X^{(l + 1)} = U^{(l + 1)} X^{(l)}$, it is clear that $X^{(l+1)}$ and $X^{(l)}$
  are identical in the upper block consisting of the first $l$ rows and the lower block consisting of the last $(M - l - 1 - \alpha_{l + 1})$ rows. In the middle block consisting of the remaining $1 + \alpha_{l + 1}$ rows, the left part consisting of the middle part of the left-most $l$ columns of $X^{(l)}$ are zeros, so they remain zero in $X^{(l + 1)}$. The entries of $X^{(l)}$ in the right part of the middle block consisting of the right-most $(N - l - 1)$ columns are in independent complex standard Gaussian distribution, and they are all independent of $\vec{u}^{(l + 1)}$. So after the left multiplication by $U^{(l + 1)}$, the entries of $X^{(l + 1)}$ in that part of the middle block are also in independent complex standard Gaussian distribution by the rotational invariance of random Gaussian vectors. The $(l + 1)$-th column of the middle block becomes $(\lVert \vec{x}^{(l + 1)} \rVert, 0, \dotsc, 0)^T$, so its first entry is positive and the square of the first entry is in $\Gamma[\alpha_k + 1, 1]$ distribution, while all other entries are zero. Hence each  $X^{(l + 1)}$ satisfies the required properties,  so finishing the proof by induction.
\end{proof}

\begin{remark}
  Since we are interested in the spectral properties of $X^{\dagger}_{M \times n} X_{M \times n}$ in the multiple Laguerre ensemble
  of \cite{Adler-van_Moerbeke-Wang11}, and by Proposition \ref{P1} they are identical to $Y^{\dagger}_{N \times n} Y_{N \times n}$ with $\alpha_1, \dotsc, \alpha_N$ restricted to some subset to their domain, we can think 
   of the upper-triangular ensemble as a generalisation of the multiple Laguerre ensemble.
\end{remark}

\medskip

 The upper-triangular matrix $Y$ distributed as in \eqref{yk} with
\begin{equation}\label{XY1}
  \alpha_j = \theta (j-1) + c \qquad (j=1,\dots,N),
\end{equation}
has recently been shown by Cheliotis \cite{Cheliotis14} to have the PDF for its squared singular values given by the
Laguerre Muttalib--Borodin ensemble (\ref{1.1}) with weight (\ref{L}).  For this to relate to our
construction from complex Gaussian matrices according to Proposition \ref{P1} we must
have $\theta$ and $c$ non-negative integers. Thus in this circumstance we have identified a realisation of
 this ensemble as the eigenvalue PDF of a Wishart matrix.
 
 \begin{corollary} \label{cor:X_and_Y_L}
 Consider the matrix $X$ as defined below (\ref{IN}), and with $\theta, c \in \mathbb Z_{\ge 0}$, let $\alpha_j$ be specified as in (\ref{XY1}). We have that the eigenvalue PDF 
of the Wishart matrix  $X^\dagger X$ is given by the  Laguerre Muttalib--Borodin ensemble (\ref{1.1}) with weight (\ref{L}). 
\end{corollary}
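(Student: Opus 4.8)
The plan is to derive Corollary~\ref{cor:X_and_Y_L} by combining Proposition~\ref{P1} with the description of the squared singular values of the upper-triangular ensemble matrix $Y$ carrying the parameters \eqref{yk}, \eqref{XY1}. First one checks admissibility: for $\theta,c\in\mathbb Z_{\ge 0}$ the numbers $\alpha_j=\theta(j-1)+c$ are non-negative integers, and $j+\alpha_j=(\theta+1)(j-1)+c+1$ is strictly increasing in $j$, so $1+\alpha_1<2+\alpha_2<\cdots<N+\alpha_N$, and the chain \eqref{IN} holds for every $M\ge N+\alpha_N=(\theta+1)(N-1)+c+1$; since the rows of $X$ beyond row $N+\alpha_N$ are identically zero, the spectrum of $X^\dagger X$ is unchanged once $M$ meets this bound, so we may work with such an $M$. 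Proposition~\ref{P1} (case $n=N$) then gives that the eigenvalues of $X^\dagger X=X^\dagger_{M\times N}X_{M\times N}$ and of $Y^\dagger Y=Y^\dagger_{N\times N}Y_{N\times N}$ have the same joint distribution, so what remains is to identify the eigenvalue PDF of $Y^\dagger Y$ with \eqref{1.1}, \eqref{L} --- which is the theorem of Cheliotis \cite{Cheliotis14}, re-derived in this section.

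For a self-contained treatment of this last point I would proceed as follows. With respect to $\prod_k dy_{kk}\prod_{j<k}d^2y_{jk}$ the joint law of the entries of $Y$ is proportional to $\big(\prod_{k=1}^N y_{kk}^{2\alpha_k+1}\big)e^{-\operatorname{Tr}(Y^\dagger Y)}$, the exponent $2\alpha_k+1$ coming from $|y_{kk}|^2\dequal\Gamma[\alpha_k+1,1]$ together with $dt=2y_{kk}\,dy_{kk}$. The map $Y\mapsto A=Y^\dagger Y$ is a bijection from upper-triangular matrices with positive real diagonal onto positive-definite Hermitian matrices (Cholesky factorisation), with Jacobian giving $(dA)=2^N\prod_k y_{kk}^{2(N-k)+1}(dY)$; using $y_{kk}^2=\det A_{k\times k}/\det A_{(k-1)\times(k-1)}$ and the fact that the resulting exponents telescope when $\alpha_{k+1}-\alpha_k=\theta$, one finds that the density of $A$ on the positive-definite Hermitian matrices is proportional to
\[
(\det A)^{\theta(N-1)+c}\,\Big(\prod_{k=1}^{N-1}(\det A_{k\times k})^{-(\theta+1)}\Big)\,e^{-\operatorname{Tr}A}.
\]
Passing to spectral coordinates $A=U\Lambda U^\dagger$, $\Lambda=\diag(\lambda_1,\dots,\lambda_N)$, the change of variables contributes the factor $\prod_{j<k}(\lambda_k-\lambda_j)^2$, and the problem is reduced to integrating the leading-minor factors over the unitary variable $U$.

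The crux --- and the step I expect to be the main obstacle --- is the group integral
\[
\int\ \prod_{k=1}^{N-1}\big(\det(U\Lambda U^\dagger)_{k\times k}\big)^{-(\theta+1)}\,dU \;=\; \text{const}\cdot\Big(\prod_{l=1}^N\lambda_l\Big)^{-\theta(N-1)}\,\frac{\prod_{j<k}(\lambda_k^{\theta}-\lambda_j^{\theta})}{\prod_{j<k}(\lambda_k-\lambda_j)},
\]
with a $\Lambda$-independent constant; granting this, substitution collapses the eigenvalue density to $\prod_{l=1}^N\lambda_l^{c}e^{-\lambda_l}\prod_{j<k}(\lambda_k-\lambda_j)(\lambda_k^{\theta}-\lambda_j^{\theta})$, which is \eqref{1.1} with \eqref{L}. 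I would establish the displayed identity by induction on $N$: integrating out first the components of $U$ acting only on the last coordinate axis peels off $(\det A_{(N-1)\times(N-1)})^{-(\theta+1)}$ and recasts the remainder as a Dixon--Anderson-type interlacing integral between the spectra of $A$ and $A_{(N-1)\times(N-1)}$, the exponent $\theta+1$ producing one power of the ``$\theta$-Vandermonde'' $\prod_{j<k}(\lambda_k^\theta-\lambda_j^\theta)$ per level; for $N=2$ this is the elementary integral $\int_0^1(\lambda_2+u(\lambda_1-\lambda_2))^{-(\theta+1)}\,du$, which already exhibits the mechanism, and the inductive step is bookkeeping. A possibly cleaner variant never forms the density of $A$: one builds $Y_{n\times n}$ from $Y_{(n-1)\times(n-1)}$ by adjoining a Gaussian column, so that $Y_{n\times n}^\dagger Y_{n\times n}$ is a random bordering of $Y_{(n-1)\times(n-1)}^\dagger Y_{(n-1)\times(n-1)}$; the new eigenvalues satisfy a secular equation whose change of variables (again Dixon--Anderson in flavour) carries the inductive hypothesis forward. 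If instead one simply invokes \cite{Cheliotis14}, or the derivation given in this section, for the squared singular values of $Y$, the corollary is immediate from Proposition~\ref{P1}.
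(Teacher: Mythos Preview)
Your proposal is correct. At the level at which the corollary is actually argued in the paper, you do exactly what the paper does: verify that $\alpha_j=\theta(j-1)+c$ satisfies \eqref{IN}, invoke Proposition~\ref{P1} to transfer the eigenvalue law of $X^\dagger X$ to that of $Y^\dagger Y$, and then cite Cheliotis (or the re-derivation in Corollary~\ref{C1}) for the eigenvalue PDF of $Y^\dagger Y$.

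Where your self-contained treatment differs is in the re-derivation of Cheliotis's result. Your main route --- pushing forward the entry density of $Y$ through the Cholesky map to obtain a density on positive-definite $A$ involving $\prod_{k<N}(\det A_{k\times k})^{-(\theta+1)}$, and then evaluating the unitary integral $\int_{U(N)}\prod_{k<N}(\det(U\Lambda U^\dagger)_{k\times k})^{-(\theta+1)}\,dU$ --- is essentially Cheliotis's own argument, not the paper's. The paper explicitly avoids this group integral: its new derivation (Proposition~\ref{P3}, Lemma~\ref{lem:P3}, Lemma~\ref{L1}, Corollary~\ref{C1}) works entirely in eigenvalue variables, identifying the transition from the spectrum of $Y_{(n-1)\times(n-1)}^\dagger Y_{(n-1)\times(n-1)}$ to that of $Y_{n\times n}^\dagger Y_{n\times n}$ via the secular equation for a rank-one update (Lemma~\ref{lem:P3}, using \cite{Forrester-Rains05}), and then integrating out the interlaced spectrum with the elementary determinant identity of Lemma~\ref{L1}. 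Your ``cleaner variant'' --- bordering $Y_{(n-1)\times(n-1)}$ by a Gaussian column and using the secular equation --- is precisely this approach. The trade-off is that the Cholesky/group-integral route gives the matrix-level density of $A$ (useful in its own right) at the price of a nontrivial $U(N)$ integral whose inductive proof again passes through interlacing, whereas the paper's Markov-chain route bypasses the matrix density entirely and produces the conditional eigenvalue PDF \eqref{J1} and the integration formula \eqref{A1} as reusable ingredients, which the paper then recycles verbatim in the Jacobi case.
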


Corollary \ref{cor:X_and_Y_L} is a special case of Corollary \ref{C1}\ref{enu:cor:C_1:a} below. Additional details of the spectral properties of $X^\dagger X$, or equivalently
according to Proposition \ref{P1}, of $Y^\dagger Y$ with the parameters $\alpha_j$ of integer values, beyond the joint distribution of the eigenvalues have been given in \cite{Adler-van_Moerbeke-Wang11}. In particular, we can read off from the multiple Laguerre part of \cite[Thm.~1]{Adler-van_Moerbeke-Wang11} the explicit functional form of the conditional distribution of the eigenvalues of $Y_{N \times n}^\dagger Y_{N \times n}$, given the eigenvalues of $Y_{N \times (n-1)}^\dagger Y_{N \times (n-1)}$, where $Y_{p \times q}$ denotes the top $p \times q$ sub-block of $Y$. Below we show that the results can be generalised to arbitrary upper-triangular random matrices $Y$ with real-valued $\alpha_j > -1$. The proofs are similar to those in \cite{Adler-van_Moerbeke-Wang11} and we mainly emphasis the differences.

\begin{prop} \label{P3}
  Let the $N \times N$ random matrix $Y$ be in the upper-triangular ensemble with diagonal entries specified by \eqref{yk}. Denote by $\{\lambda_1,\dots,\lambda_n\}$
  and $\{\mu_1,\dots,\mu_{n-1}\}$ the eigenvalues of $Y_{N \times n}^\dagger Y_{N \times n}$ and
  $Y_{N \times (n-1)}^\dagger Y_{N \times (n-1)}$ respectively in descending order. The conditional PDF of 
  $\{\lambda_1,\dots,\lambda_n\}$, with $\{\mu_1,\dots,\mu_{n-1}\}$ fixed and distinct,
  is equal to
  \begin{multline}\label{J1}
    p_{n,n-1}(\{\lambda_j\}_{j=1}^n, \{\mu_j \}_{j=1}^{n-1}) = \\
    {1 \over \Gamma(\alpha_n + 1)} \prod_{k=1}^n \lambda_k^{\alpha_n} e^{-\lambda_k} \prod_{l=1}^{n-1} \mu_l^{-\alpha_{n} - 1} e^{\mu_l}
    {\prod_{1 \le j < k \le n} ( \lambda_j - \lambda_k) \over \prod_{1 \le j < k \le n-1} ( \mu_j - \mu_k) },
  \end{multline}
  subject to the interlacing constraint
  \begin{equation}\label{J2}
    \lambda_1 \geq \mu_1 \geq \lambda_2 \geq \mu_2 \geq \cdots \geq \lambda_n \ge 0.
  \end{equation}
\end{prop}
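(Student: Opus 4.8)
The plan is to realise $G_n := Y_{N\times n}^{\dagger}Y_{N\times n}$ as a rank-one bordering of $G_{n-1}:=Y_{N\times(n-1)}^{\dagger}Y_{N\times(n-1)}$ and then change variables through the secular equation of the bordered matrix, following the scheme of \cite{Adler-van_Moerbeke-Wang11} and pointing out where the upper-triangular structure enters. Since $Y$ is upper-triangular, only the first $n$ rows of $Y_{N\times n}$ are nonzero, so $G_n = Y_{n\times n}^{\dagger}Y_{n\times n}$ with $Y_{n\times n}=\left(\begin{smallmatrix} Y_{(n-1)\times(n-1)} & \vec{r}\\ 0 & y_{n,n}\end{smallmatrix}\right)$, $\vec{r}=(y_{1,n},\dots,y_{n-1,n})^{T}$, and the bottom row being $(0,\dots,0,y_{n,n})$. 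Multiplying out,
\begin{equation*}
G_n=\begin{pmatrix} G_{n-1} & \vec{b}\\ \vec{b}^{\,\dagger} & \lVert\vec{r}\rVert^2+\lvert y_{n,n}\rvert^2\end{pmatrix},\qquad \vec{b}=Y_{(n-1)\times(n-1)}^{\dagger}\vec{r},
\end{equation*}
and the entries of the $n$-th column of $Y$ (hence $\vec{r}$ and $y_{n,n}$) are independent of $Y_{(n-1)\times(n-1)}$, so of $G_{n-1}$.

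Next I would diagonalise $G_{n-1}=U\Lambda U^{\dagger}$ with $\Lambda=\diag(\mu_1,\dots,\mu_{n-1})$ and conjugate $G_n$ by $\diag(U,1)$ to obtain the arrowhead matrix $\left(\begin{smallmatrix}\Lambda & \vec{\xi}\\ \vec{\xi}^{\,\dagger} & s\end{smallmatrix}\right)$ with $\vec{\xi}=U^{\dagger}\vec{b}$ and $s=\lVert\vec{r}\rVert^2+\lvert y_{n,n}\rvert^2$, whose eigenvalues coincide with those of $G_n$. The key distributional input is that, conditionally on the \emph{eigenvalues} $\mu_j$ alone, $(\vec{\xi},\lvert y_{n,n}\rvert^2)$ has a product law: by rotational invariance of the i.i.d.\ complex Gaussian vector $\vec{r}$, the vector $\vec{\xi}=(U^{\dagger}Y_{(n-1)\times(n-1)}^{\dagger})\vec{r}$ is a centred complex Gaussian with covariance $U^{\dagger}G_{n-1}U=\Lambda$ — so the $\xi_j$ are independent and $\lvert\xi_j\rvert^2$ exponential with mean $\mu_j$ — a law independent of the eigenvectors $U$; and $\lvert y_{n,n}\rvert^2\sim\Gamma[\alpha_n+1,1]$ independently. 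Writing $C:=U^{\dagger}Y_{(n-1)\times(n-1)}^{\dagger}$, so $CC^{\dagger}=\Lambda$, the identity $(C^{\dagger})^{-1}C^{-1}=(CC^{\dagger})^{-1}$ gives $\lVert\vec{r}\rVert^2=\vec{\xi}^{\,\dagger}\Lambda^{-1}\vec{\xi}=\sum_j\lvert\xi_j\rvert^2/\mu_j$, hence $s=\sum_j\lvert\xi_j\rvert^2/\mu_j+\lvert y_{n,n}\rvert^2$.

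Then I would carry out the change of variables $(\lvert\xi_1\rvert^2,\dots,\lvert\xi_{n-1}\rvert^2,s)\leftrightarrow(\lambda_1,\dots,\lambda_n)$ fixed by the characteristic polynomial of the arrowhead matrix,
\begin{equation*}
\prod_{j=1}^{n-1}(\lambda-\mu_j)\Bigl(\lambda-s-\sum_{j=1}^{n-1}\frac{\lvert\xi_j\rvert^2}{\lambda-\mu_j}\Bigr)=\prod_{k=1}^{n}(\lambda-\lambda_k),
\end{equation*}
which for the $\mu_j$ distinct is a bijection onto the interlacing region \eqref{J2}, with inverse $s=\sum_k\lambda_k-\sum_j\mu_j$, $\lvert\xi_j\rvert^2=-\prod_k(\mu_j-\lambda_k)/\prod_{i\neq j}(\mu_j-\mu_i)$, and Jacobian $\prod_{1\le j<k\le n}(\lambda_j-\lambda_k)/\prod_{1\le j<k\le n-1}(\mu_j-\mu_k)$ (a standard Jacobian computation). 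Assembling the conditional density of $(\lambda_1,\dots,\lambda_n)$ then uses two simplifications: the exponential factors collapse, $\prod_j e^{-\lvert\xi_j\rvert^2/\mu_j}\cdot e^{-\lvert y_{n,n}\rvert^2}=e^{-s}=\prod_k e^{-\lambda_k}\prod_j e^{\mu_j}$; and the power $\lvert y_{n,n}\rvert^{2\alpha_n}$ is rewritten via $\prod_k\lambda_k=\det G_n=\lvert y_{n,n}\rvert^2\det G_{n-1}=\lvert y_{n,n}\rvert^2\prod_j\mu_j$ (both determinants being products of squared diagonal entries of triangular matrices), so $\lvert y_{n,n}\rvert^2=\prod_k\lambda_k/\prod_j\mu_j$. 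Combining these with the factor $\prod_j\mu_j^{-1}$ from the Gaussian densities and the normalisation $1/\Gamma(\alpha_n+1)$ reproduces \eqref{J1} on \eqref{J2}.

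I expect the main obstacle to be the careful handling of the conditioning: one must argue that conditioning on the eigenvalues $\mu_j$ (not on $G_{n-1}$ itself) still leaves $\vec{\xi}$ a centred complex Gaussian with covariance $\Lambda$ and keeps $\lvert y_{n,n}\rvert^2$ independent — exactly where both the rotational invariance of $\vec{r}$ and the independence of the $n$-th column of $Y$ are used — together with the verification of the secular-equation Jacobian and the check that the positive orthant in the $(\lvert\xi_j\rvert^2,s)$ variables maps precisely onto the strict-interlacing region (using that $Y_{(n-1)\times(n-1)}$ is a.s.\ invertible, since its diagonal entries are a.s.\ nonzero). The remaining algebra is routine and parallels \cite{Adler-van_Moerbeke-Wang11}.
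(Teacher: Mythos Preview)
Your argument is correct, but it proceeds via a different decomposition than the paper's. You border $B^{\dagger}B$ to obtain an arrowhead matrix $\bigl(\begin{smallmatrix}\Lambda&\vec{\xi}\\\vec{\xi}^{\,\dagger}&s\end{smallmatrix}\bigr)$, where the residues $|\xi_j|^2$ are exponential with mean $\mu_j$ (so their law depends on the $\mu_j$), and you then carry out the secular-equation change of variables by hand, using the two triangular-matrix identities $s=\sum_k\lambda_k-\sum_j\mu_j$ and $|y_{n,n}|^2=\prod_k\lambda_k/\prod_j\mu_j$ to collapse the algebra. The paper instead passes to $BB^{\dagger}$: writing $BB^{\dagger}=B_{n\times(n-1)}B_{n\times(n-1)}^{\dagger}+\vec{y}\vec{y}^{\,\dagger}$ and conjugating by $V\oplus\mathbb I_1$ gives $\diag(\mu_1,\dots,\mu_{n-1},0)+\vec{z}\vec{z}^{\,\dagger}$, a rank-one perturbation whose residues $|z_k|^2$ ($k<n$) are \emph{standard} exponentials, independent of the $\mu_j$, with $z_n=y_{n,n}$ unchanged. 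This $\mu$-free distribution of the residues lets the paper invoke a ready-made result (\cite[Cor.~3]{Forrester-Rains05}) for the root distribution, skipping the explicit Jacobian. The trade-off is clear: the paper's $BB^{\dagger}$ route is shorter because it lands on a secular equation already catalogued in the literature, while your $B^{\dagger}B$ route is self-contained but requires the Jacobian computation and the two determinant/trace identities. Both are clean.
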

Actually we can prove a slightly stronger result:
\begin{lemma} \label{lem:P3}
  Let $A = (a_{i, j})$ be an $(n - 1) \times (n - 1)$ matrix with the eigenvalues of $A^{\dagger}A$ being $\{\mu_1, \dotsc, \mu_{n - 1} \}$ in descending order. Define the $n \times n$ matrix $B = (b_{i, j})$ by letting: (1) the $(n - 1) \times (n - 1)$ upper-triangular block equal $A$; (2) the bottom row has all entries but the rightmost one equal $0$; (3) all entries of the rightmost row be independent random variables, such that the $(n, n)$-entry $b_{n, n}$ is real positive with $\lvert b_{n, n} \rvert^2 \mathop{=}\limits^{\rm d}  \Gamma[\alpha_n + 1, 1]$, and all other entries in the row are in standard complex normal distribution. Then the eigenvalues of $B^{\dagger}B$, denoted by $\{ \lambda_1, \dotsc, \lambda_n \}$ in descending order, satisfies the interlacing constraint \eqref{J2} and have the distribution given by $p_{n, n - 1}(\{\lambda_j\}_{j=1}^n, \{\mu_j \}_{j=1}^{n-1})$ in \eqref{J1}.
\end{lemma}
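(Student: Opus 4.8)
The plan is to compute the distribution of the eigenvalues of $B^\dagger B$ directly, by writing $B^\dagger B$ in block form and exploiting the fact that the bottom row of $B$ has a very special structure: only the $(n,n)$-entry and the entries in the last column matter once we record that the $(n-1)\times(n-1)$ block is $A$. Write $B = \begin{pmatrix} A & \vec{c} \\ \vec{0} & b_{n,n}\end{pmatrix}$, where $\vec{c}$ is the $(n-1)\times 1$ column of the first $n-1$ entries of the last column of $B$ and $\vec{0}$ is a row of zeros (here I am using that $B$'s bottom row has only its rightmost entry nonzero). Then
\begin{equation*}
  B^\dagger B = \begin{pmatrix} A^\dagger A & A^\dagger \vec{c} \\ \vec{c}^{\,\dagger} A & \vec{c}^{\,\dagger}\vec{c} + \lvert b_{n,n}\rvert^2 \end{pmatrix}.
\end{equation*}
First I would diagonalise $A^\dagger A = V\,\mathrm{diag}(\mu_1,\dots,\mu_{n-1})\,V^\dagger$ with $V$ unitary, and change variables $\vec{c} \mapsto V^\dagger \vec{c}$; by unitary invariance of the standard complex Gaussian the transformed vector has the same law, so without loss of generality $A^\dagger A$ is already diagonal. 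In these coordinates $B^\dagger B$ is a rank-one (plus diagonal shift) perturbation of a diagonal matrix, which is exactly the structure for which the eigenvalue interlacing and the Cauchy-type secular equation $\lvert b_{n,n}\rvert^2 = \sum_{j=1}^{n-1} \lvert c_j\rvert^2 \mu_j/(\lambda - \mu_j) + (\lambda - \sum_j \lvert c_j\rvert^2)$ — more precisely, the characteristic polynomial of $B^\dagger B$ expanded along the last row/column — gives a rational identity relating $\{\lambda_k\}$, $\{\mu_j\}$, the $\lvert c_j\rvert^2$ and $\lvert b_{n,n}\rvert^2$.

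The key computation is then a change of variables from $(\lvert c_1\rvert^2,\dots,\lvert c_{n-1}\rvert^2, \lvert b_{n,n}\rvert^2, \text{phases})$ to the eigenvalues $(\lambda_1,\dots,\lambda_n)$. Here I would follow the standard route (as in the Dumitriu--Edelman / Forrester bordering computations): the phases of the $c_j$ integrate out trivially, leaving $n-1$ positive variables $t_j := \lvert c_j\rvert^2$ and $s := \lvert b_{n,n}\rvert^2$, with joint density $\prod_j e^{-t_j}\cdot \frac{1}{\Gamma(\alpha_n+1)} s^{\alpha_n} e^{-s}$ (the $t_j$ are standard complex-Gaussian moduli squared, hence unit-rate exponentials, and $s$ is $\Gamma[\alpha_n+1,1]$). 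From the secular equation one solves explicitly
\begin{equation*}
  t_j = \lvert c_j\rvert^2 = \frac{\prod_{k=1}^n (\mu_j - \lambda_k)}{\prod_{l \ne j}(\mu_j - \mu_l)}, \qquad
  s + \sum_j t_j = \sum_k \lambda_k - \sum_l \mu_l ,
\end{equation*}
valid precisely when the interlacing \eqref{J2} holds (which forces each $t_j \ge 0$ and $s \ge 0$, pinning down the support). The exponential factors combine: $\prod_j e^{-t_j}\, e^{-s} = \exp(-\sum_k \lambda_k + \sum_l \mu_l)$, and the $s^{\alpha_n}$ factor must be re-expressed — this is where one uses that the $(n,n)$-entry of $B^\dagger B$ minus its "diagonal part" is not quite $s$; a cleaner route is to track $\det(B^\dagger B) = \lvert b_{n,n}\rvert^2 \det(A^\dagger A)$, giving $\prod_k \lambda_k = s\prod_l \mu_l$, hence $s^{\alpha_n} = \big(\prod_k\lambda_k \big/ \prod_l \mu_l\big)^{\alpha_n}$, which produces exactly the $\prod_k \lambda_k^{\alpha_n}\prod_l \mu_l^{-\alpha_n}$ prefactor in \eqref{J1}. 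Finally the Jacobian of the map $(\{t_j\}, s) \mapsto \{\lambda_k\}$: differentiating the explicit formulas for $t_j$, the determinant factorises (it is essentially a Cauchy/confluent-Vandermonde determinant) into $\prod_{j<k}(\lambda_j-\lambda_k)\big/\prod_{j<k}(\mu_j-\mu_k)$ up to sign, matching the last ratio in \eqref{J1}. Collecting the exponential, power, and Jacobian factors yields \eqref{J1} on the nose, and Proposition \ref{P3} then follows by applying the lemma with $A = Y_{N\times(n-1)}$ (whose rows beyond the $(n-1)$st are irrelevant since $Y$ is upper-triangular, so $Y_{N\times n}$ has exactly the block shape of $B$).

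The main obstacle I anticipate is the Jacobian evaluation — showing that the determinant of the $n\times n$ matrix of partial derivatives $\partial t_j/\partial\lambda_k$ (together with the row coming from $s = \sum\lambda_k - \sum\mu_l - \sum t_j$) collapses to the clean Vandermonde ratio. This is a determinant-identity manipulation: one writes $t_j$ as a residue, $t_j = -\mathrm{Res}_{z=\mu_j}\frac{\prod_k(z-\lambda_k)}{\prod_l(z-\mu_l)}$, differentiates under the residue, and recognises the resulting array as a rank-one-perturbed Cauchy matrix whose determinant is known; alternatively one can avoid it entirely by noting that \cite{Adler-van_Moerbeke-Wang11} already carries out the analogous computation for integer $\alpha_j$ and the only change here is that $\alpha_n$ enters solely through the density of $s = \lvert b_{n,n}\rvert^2$, which is $\Gamma[\alpha_n+1,1]$ for any real $\alpha_n > -1$ rather than a $\chi^2$ with integer degrees of freedom — the Jacobian and all other factors are $\alpha_n$-independent, so the generalisation is immediate. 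I would present the argument this second way, stating that we follow \cite{Adler-van_Moerbeke-Wang11} and emphasising that the sole modification is replacing the $\chi^2_{2(\alpha_n+1)}$ law of the squared diagonal entry by the continuous-parameter gamma law, with the normalisation constant $1/\Gamma(\alpha_n+1)$ accounting for the change.
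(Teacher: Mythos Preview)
Your strategy --- reduce to a secular equation via unitary conjugation, then pass from the squared moduli to the eigenvalues --- is the paper's. Two differences are worth noting. First, the paper works with $BB^\dagger$ rather than $B^\dagger B$: writing $BB^\dagger = B_{n\times(n-1)}B_{n\times(n-1)}^\dagger + \vec{y}\,\vec{y}^{\,\dagger}$ with $\vec{y}$ the last column of $B$, and conjugating by $V\oplus\bbI_1$ where $V$ diagonalises the first term, one lands directly on $\diag[\mu_1,\ldots,\mu_{n-1},0]+\vec{z}\,\vec{z}^{\,\dagger}$ with $z_n=b_{n,n}$ untouched and $z_1,\ldots,z_{n-1}$ still i.i.d.\ standard complex Gaussians. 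This sidesteps a slip in your reduction: in $B^\dagger B$ the off-diagonal block is $A^\dagger\vec{c}$, not $\vec{c}$, so conjugating by the unitary that diagonalises $A^\dagger A$ does \emph{not} leave a bordered diagonal matrix with border $\vec{c}$ as you claim; one must use the SVD of $A$, after which the border carries factors $\sqrt{\mu_j}$. The secular equation then still simplifies to the paper's $0 = 1 - |z_n|^2/\lambda + \sum_{k} |z_k|^2/(\mu_k-\lambda)$, so your route is salvageable, but the write-up as stated is incorrect. Second, for the root distribution the paper simply invokes \cite[Cor.~3]{Forrester-Rains05}, which handles exactly this rational equation with gamma-distributed residues and delivers \eqref{J1} together with the interlacing \eqref{J2} at once; this avoids the explicit Jacobian computation, in which, incidentally, your formula for $t_j$ is off by a factor $-1/\mu_j$ (and the Jacobian correspondingly by $\prod_l \mu_l^{-1}$).
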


\begin{proof}[Proof of Lemma \ref{lem:P3}]
  A key point is that
  \begin{equation}\label{dd}
    B B^\dagger = B_{n \times (n-1)}  B^\dagger_{n \times (n-1)}  +
    \vec{y} \vec{y}^\dagger,
  \end{equation}
  where $\vec{y} = (b_{1, n}, b_{2, n}, \dotsc, b_{n, n})^T$. The distribution of $b_{k, n}$ implies that
  \begin{equation}\label{dd1}
    \lvert b_{k, n} \rvert^2  \mathop{=}\limits^{\rm d} \Gamma[1,1] \: (k=1,\dots,n-1), \quad
    \lvert b_{n, n} \rvert^2   \mathop{=}\limits^{\rm d}  \Gamma[\alpha_k+1,1].
  \end{equation}
  Moreover there exists an $(n - 1) \times (n - 1)$ unitary matrix $V$ depending on $A = B_{(n - 1) \times (n - 1)}$ such that
  \begin{equation}\label{dd2}
    (V \oplus \bbI_1) B_{n \times (n - 1)} B^\dagger_{n \times (n - 1)} (V \oplus \bbI_1)^{\dagger} = \diag[\mu_1,\dots,\mu_{n-1}, 0],
  \end{equation}
where
\begin{equation*}
  V \oplus \bbI_1 =
  \begin{bmatrix}
    V & \bbO_{(n-1) \times 1} \\
    \bbO_{1 \times (n-1)} & 1
  \end{bmatrix}.
\end{equation*}
 Using the property that the multiplication of a unitary matrix and a vector of independent standard complex Gaussians yields another vector of independent standard complex Gaussians, we have that the vector $\vec{z} = (z_1, \dotsc, z_n)^T$ defined as 
  \begin{equation*}
    \vec{z} = (V \oplus \bbI_1) \vec{y}
  \end{equation*}
 has the properties that all its components are independent, $z_1, \dotsc, z_{n - 1}$ are in standard complex normal distribution, and
  \begin{equation*}
    \lvert z_k \rvert^2  \mathop{=}\limits^{\rm d} \Gamma[1,1] \: (k=1,\dots,n-1), \quad \text{and} \quad z_n = b_{n, n}.
  \end{equation*}
  Conjugating both sides of \eqref{dd} as in \eqref{dd2}, we have that
  \begin{equation}\label{3.1}
    \ePDF B^{\dagger} B = \ePDF B B^{\dagger} = \ePDF \Big (\diag[\mu_1,\dots,\mu_{n-1},0] + \vec{z} \vec{z}^{\, \dagger} \Big )
  \end{equation}
  where ePDF denotes the eigenvalue PDF. 
  
  By a standard manipulation of the characteristic polynomial, one can show that the eigenvalue equation for the matrix on the RHS of (\ref{3.1}) is given by
  \begin{equation} \label{eq:char_equation_Laguerre}
    0 = 1 + \Big ( - {\lvert z_n \rvert^2 \over \lambda} + \sum_{k=1}^{n-1} {\lvert z_k \rvert^2 \over \mu_k - \lambda} \Big ).
  \end{equation}
  The distribution of the roots of this rational function, with residues distributed according to (\ref{dd1}), and thus
  the conditional PDF of $\{\lambda_1,\dots,\lambda_n\}$, can now be read off as a special case of
  \cite[Cor.~3]{Forrester-Rains05}, and (\ref{J1}) with interlacing constraint (\ref{J2}) follows. 
\end{proof}

\medskip
 
Knowledge of Proposition \ref{P3} allows us to rederive the result of Cheliotis \cite{Cheliotis14} noted below (\ref{XY1}).
We will require the use of a particular multiple integral evaluation.

\begin{lemma}\label{L1}
Let $R_{\preceq \lambda}$ denote the region \eqref{J2} for $(\mu_1, \dotsc, \mu_{n - 1})$, and suppose $\alpha_k  \ne \alpha_n$ $(k=1,\dots,n-1)$.
 We have
\begin{align}\label{A1}
&  \prod_{l=1}^n \lambda_l^{\alpha_n}
\int_{R_{\preceq \lambda}} \prod_{l=1}^{n-1} \mu_l^{- \alpha_n-1}\det[ \mu_j^{\alpha_k}]_{j,k=1,\dots,n-1}
\, d\mu_1 \cdots d \mu_{n-1}
\nonumber \\
&
\qquad = \prod_{k=1}^{n-1} {1 \over \alpha_k - \alpha_n}    \det[ \lambda_{j}^{\alpha_k}]_{j,k=1,\dots,n}. 
\end{align}
\end{lemma}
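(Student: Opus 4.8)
The plan is to reduce everything to an elementary one-dimensional integration by first observing that the interlacing region decouples into a product of intervals. Indeed, with $\lambda_1,\dots,\lambda_n$ fixed the constraint \eqref{J2} says precisely that $\mu_j\in[\lambda_{j+1},\lambda_j]$ for each $j$ separately, so $R_{\preceq\lambda}=\prod_{j=1}^{n-1}[\lambda_{j+1},\lambda_j]$ is a box. Expanding $\det[\mu_j^{\alpha_k}]_{j,k=1,\dots,n-1}=\sum_{\sigma\in S_{n-1}}\operatorname{sgn}(\sigma)\prod_{j=1}^{n-1}\mu_j^{\alpha_{\sigma(j)}}$, multiplying through by $\prod_l\mu_l^{-\alpha_n-1}$, and applying Fubini, the left-hand side of \eqref{A1} becomes
\[
\prod_{l=1}^n\lambda_l^{\alpha_n}\sum_{\sigma\in S_{n-1}}\operatorname{sgn}(\sigma)\prod_{j=1}^{n-1}\int_{\lambda_{j+1}}^{\lambda_j}\mu^{\alpha_{\sigma(j)}-\alpha_n-1}\,d\mu .
\]
Since $\alpha_{\sigma(j)}\ne\alpha_n$ by hypothesis, each inner integral equals $(\lambda_j^{\alpha_{\sigma(j)}-\alpha_n}-\lambda_{j+1}^{\alpha_{\sigma(j)}-\alpha_n})/(\alpha_{\sigma(j)}-\alpha_n)$; there is no convergence issue since we may assume $\lambda_1>\dots>\lambda_n>0$, which is essentially the only case of interest.

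Next I would pull the factor $\prod_{j=1}^{n-1}(\alpha_{\sigma(j)}-\alpha_n)^{-1}=\prod_{k=1}^{n-1}(\alpha_k-\alpha_n)^{-1}$, which does not depend on $\sigma$, out of the sum, and recognise what remains as a determinant:
\[
\sum_{\sigma\in S_{n-1}}\operatorname{sgn}(\sigma)\prod_{j=1}^{n-1}\bigl(\lambda_j^{\alpha_{\sigma(j)}-\alpha_n}-\lambda_{j+1}^{\alpha_{\sigma(j)}-\alpha_n}\bigr)=\det\bigl[\lambda_j^{\alpha_k-\alpha_n}-\lambda_{j+1}^{\alpha_k-\alpha_n}\bigr]_{j,k=1,\dots,n-1}.
\]
Comparing with the right-hand side of \eqref{A1}, the lemma reduces to the purely algebraic identity
\[
\prod_{l=1}^n\lambda_l^{\alpha_n}\,\det\bigl[\lambda_j^{\alpha_k-\alpha_n}-\lambda_{j+1}^{\alpha_k-\alpha_n}\bigr]_{j,k=1,\dots,n-1}=\det\bigl[\lambda_j^{\alpha_k}\bigr]_{j,k=1,\dots,n}.
\]

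To prove this last identity I would start from the $n\times n$ determinant, factor $\lambda_j^{\alpha_n}$ out of the $j$th row for each $j$ to obtain $\prod_l\lambda_l^{\alpha_n}\det[\lambda_j^{\alpha_k-\alpha_n}]_{j,k=1,\dots,n}$, and note that the last ($k=n$) column of this matrix is $(1,\dots,1)^T$. Subtracting the $(j{+}1)$th row from the $j$th row for $j=1,\dots,n-1$ — equivalently, left multiplication by a unit upper-triangular matrix, hence determinant-preserving — turns the last column into $(0,\dots,0,1)^T$ and changes the $(j,k)$ entry with $j,k\le n-1$ into $\lambda_j^{\alpha_k-\alpha_n}-\lambda_{j+1}^{\alpha_k-\alpha_n}$; Laplace expansion along the last column then gives exactly $\det[\lambda_j^{\alpha_k-\alpha_n}-\lambda_{j+1}^{\alpha_k-\alpha_n}]_{j,k=1,\dots,n-1}$, closing the argument.

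There is no genuinely hard step: the computation is entirely elementary once one notices the product structure of $R_{\preceq\lambda}$, which is what converts the multiple integral into a product of one-variable integrals. The points needing a little care are this observation itself, the reordering of the scalar factors across the permutation sum, and keeping track of rows and columns in the final determinant reduction — in particular performing the row subtractions in increasing order of $j$, so that each is a legitimate elementary operation.
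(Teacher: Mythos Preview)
Your proof is correct and follows essentially the same approach as the paper: absorb $\mu_l^{-\alpha_n-1}$ into the determinant, integrate each $\mu_j$ over its own interval $[\lambda_{j+1},\lambda_j]$ (the paper phrases this as ``row-by-row'' integration rather than via the permutation expansion), and then relate the resulting $(n-1)\times(n-1)$ determinant of differences to the $n\times n$ determinant by elementary row operations on the column of $1$'s. The only cosmetic difference is that the paper first telescopes rows so the entries become $\lambda_{j+1}^{\alpha_k-\alpha_n}-\lambda_1^{\alpha_k-\alpha_n}$ and then subtracts the first row of the $n\times n$ determinant from the others, whereas you keep $\lambda_j^{\alpha_k-\alpha_n}-\lambda_{j+1}^{\alpha_k-\alpha_n}$ and subtract consecutive rows; both are equally valid.
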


\begin{proof}
  We have
  \begin{equation}\label{S1}
    \prod_{l=1}^{n-1} \mu_l^{- \alpha_n-1}\det[ \mu_j^{\alpha_k } ]_{j,k=1,\dots,n-1}  =
    \det [ \mu_j^{\alpha_k - \alpha_{n} -1}]_{j,k=1,\dots,n-1}.
  \end{equation}
  Since the dependence on $\mu_j$ is entirely in row $j$, the integration over $\lambda_{j+1} >
  \mu_j > \lambda_j$ can be done row-by-row. Furthermore, by adding row $1,\dots,j-1$ to row $j$
  the integration can be taken to be over $\lambda_{j+1} >
  \mu_j > \lambda_1$. Applying this operation to (\ref{S1}) gives
  $$
  \prod_{k=1}^{n-1} {1 \over \alpha_k - \alpha_n} \det[ \lambda_{j+1}^{\alpha_k - \alpha_n} -   \lambda_{1}^{\alpha_k - \alpha_n}]_{j,k=1,\dots,n-1}.
  $$
  But
  $$
 \det[ \lambda_{j+1}^{\alpha_k - \alpha_n} -   \lambda_{1}^{\alpha_k - \alpha_n}]_{j,k=1,\dots,n-1} =
  \det[ \lambda_{j}^{\alpha_k-\alpha_n}]_{j,k=1,\dots,n},
  $$
  as can be seen by subtracting the first row from each of the next rows in the determinant on the RHS, 
  then expanding by the final column to obtain the LHS. Thus \eqref{A1} now follows. 
\end{proof}

\begin{remark}
  Let $\lambda = (\lambda_1,\lambda_2,\dots,\lambda_N)$ denote a partition \cite[Chap.~7]{Stanley99}.
  The Schur polynomial can be defined by
  \begin{equation}\label{SU}
    s_\lambda(x_1,\dots,x_N) = {\det [x_j^{\lambda_{N-k+1} + k - 1} ]_{j,k=1,\dots,N} \over
      \prod_{1 \le j < k \le N} (x_k - x_j) },
  \end{equation}
  which in fact is well defined for any $N$-array $\lambda$.
  With $N = n - 1$, set $\lambda_{N-k+1} + k = \alpha_{k} -\alpha_n $ and define 
  $\tilde{\alpha}^{(n-1)} = (\alpha_{n-1} - \alpha_n -n-1,
  \alpha_{n-2} - \alpha_n -n+ 2,\dots, \alpha_1 - \alpha_n  +1)$. 
  Substituting (\ref{SU}) in (\ref{A1}) gives
  \begin{multline}
    \int_{R_{\preceq \lambda}}  
    \prod_{1 \le j < k \le n - 1} (\mu_k - \mu_k)
    s_{\tilde{\alpha}^{(n-1)}}(\mu_1,\dots,\mu_{n-1}) \, d\mu_1 \cdots d\mu_{n-1}   \\
    = \prod_{k=1}^{n-1} {1 \over \alpha_k - \alpha_n}   
    \prod_{1 \le j < k \le n} (\lambda_k - \lambda_j) 
    s_{\tilde{\alpha}^{(n)}}( \lambda_1,\dots,\lambda_n),
  \end{multline}
  where $\tilde{\alpha}^{(n)}$ is the $n$-array formed from the $(n-1)$-array $\tilde{\alpha}^{(n-1)}$
  by appending 0 to the end.
  This is a special case of an integration formula from the theory of Jack polynomials
  \cite{Okounkov-Olshanski97, Kuznetsov-Mangazeev-Sklyanin03, Kohler11}; see also \cite[Eq.~(12.210)]{Forrester10}.
\end{remark}
 
\begin{corollary}\label{C1}
  Let the $N \times N$ random matrix $Y$ be in the upper-triangular ensemble with diagonal entries specified by \eqref{yk}.
  \begin{enumerate}[label=(\alph*)]
  \item \cite{Cheliotis14} \label{enu:cor:C_1:a}
    For all $n = 1, \dotsc, N$, the eigenvalue  PDF of $Y^\dagger_{N \times n} Y_{N \times n} = Y^\dagger_{n \times n} Y_{n \times n}$, denoted by $\lambda^{(n)} = \{ \lambda^{(n)}_1, \dotsc, \lambda^{(n)}_n \}$ in descendent order, is equal to
    \begin{multline}\label{Pd1x}
      p_n(\lambda^{(n)}_1, \dotsc, \lambda^{(n)}_n) = \\
      {C^{-1}_n \over \prod_{1 \le j < k \le n} (\alpha_j - \alpha_k)}  \left( \prod_{k=1}^n e^{-\lambda^{(n)}_k} \right)
      \prod_{1 \le j < k \le n} (\lambda^{(n)}_j - \lambda^{(n)}_k)    \det[ (\lambda^{(n)}_{j})^{\alpha_k}]_{j,k=1,\dots,n} ,
    \end{multline}
    where
    \begin{equation}\label{Pd2ax}
      C_n = \prod_{l=1}^n  \Gamma(\alpha_l + 1).
    \end{equation}
    In the special case the $\alpha_j$ are  given by (\ref{XY1}) this reduces to
    \begin{equation}\label{Pd2}
      p_n(\lambda^{(n)}_1, \dotsc, \lambda^{(n)}_n) = {1 \over C_{n,\theta,c}}   \prod_{k=1}^n  (\lambda^{(n)}_k)^c e^{-\lambda^{(n)}_k} 
      \prod_{1 \le j < k \le n} (\lambda^{(n)}_j - \lambda^{(n)}_k)   ((\lambda^{(n)}_j)^\theta - (\lambda^{(n)}_k)^\theta), 
    \end{equation}
    where
    \begin{equation}\label{Pd2b1}
      C_{n,\theta,c}  = \prod_{l=1}^n  \Gamma(\theta (l-1) + c + 1)  \, \,  \theta^{n(n-1)/2}\prod_{l=1}^{n-1} l!.
    \end{equation}
  \item \label{enu:cor:C_1:b}
    The joint probability density function of $\lambda^{(1)}, \lambda^{(2)}, \dotsc, \lambda^{(N)}$ is equal to
    \begin{multline} \label{eq:jpdf_upp_tri}
      p(\lambda^{(1)}, \dotsc, \lambda^{(n)}) = \frac{1}{C_N} \prod_{1 \leq j < k \leq N} (\lambda^{(N)}_j - \lambda^{(N)}_k) \prod^N_{n = 1} (\lambda^{(N)}_n)^{\alpha_N} e^{-\lambda^{(N)}_n} \\
      \times \prod^{N - 1}_{n = 1} \left( \prod^n_{i = 1} (\lambda^{(n)}_i)^{\alpha_n - \alpha_{n + 1} - 1} \right) 1_{\lambda^{(n)} \preceq \lambda^{(n + 1)}},
    \end{multline}
    where $C_n$ is defined in \eqref{Pd2ax}, and $1_{\mu \preceq \lambda}$ is the indicator function of the region that satisfies inequality \eqref{J2}.
  \end{enumerate}
  In case that some $\alpha_i$ are identical, we understand the formulas in the limiting sense with \lHopital's rule. 
\end{corollary}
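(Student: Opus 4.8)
The plan is to prove part \ref{enu:cor:C_1:a} by induction on $n$, feeding Proposition \ref{P3} (the one-step conditional law) into Lemma \ref{L1} (the interlacing integral), and then to obtain part \ref{enu:cor:C_1:b} from the Markov property of the sequence $(\lambda^{(1)},\dots,\lambda^{(N)})$ by telescoping the one-step kernels. Since both Lemma \ref{L1} and the denominator $\prod_{1\le j<k\le n}(\alpha_j-\alpha_k)$ in \eqref{Pd1x} require the $\alpha_j$ to be pairwise distinct, I would carry out the argument first under that assumption and then recover the confluent cases by remarking that, after clearing denominators, both sides of \eqref{Pd1x} and of \eqref{eq:jpdf_upp_tri} are real-analytic in $(\alpha_1,\dots,\alpha_N)$ on $\{\alpha_j>-1\}$, so the identities pass to the limit; this is exactly the \lHopital\ clause at the end of the statement.

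For part \ref{enu:cor:C_1:a}, the case $n=1$ is immediate since $Y^\dagger_{1\times1}Y_{1\times1}=\lvert y_{1,1}\rvert^2\sim\Gamma[\alpha_1+1,1]$ has density $\lambda^{\alpha_1}e^{-\lambda}/\Gamma(\alpha_1+1)$, which is \eqref{Pd1x} for $n=1$ with all products over pairs empty. For the inductive step, using that $\lambda^{(n)}$ given $\lambda^{(n-1)}$ has conditional density $p_{n,n-1}$ from \eqref{J1}, I would write
\[
  p_n(\lambda^{(n)}) = \int_{R_{\preceq\lambda^{(n)}}} p_{n,n-1}\!\left(\{\lambda^{(n)}_j\}_{j=1}^n,\{\mu_j\}_{j=1}^{n-1}\right) p_{n-1}(\mu_1,\dots,\mu_{n-1})\,d\mu_1\cdots d\mu_{n-1}
\]
and substitute the inductive hypothesis for $p_{n-1}$. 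In the integrand the factors $e^{\mu_l}$ and $e^{-\mu_l}$ cancel, and the Vandermonde $\prod_{1\le j<k\le n-1}(\mu_j-\mu_k)$ coming from $p_{n-1}$ cancels the one in the denominator of \eqref{J1}; what remains under the integral is precisely the left side of \eqref{A1} up to $\lambda$- and constant-prefactors. Applying Lemma \ref{L1} replaces $\det[\mu_j^{\alpha_k}]_{j,k=1,\dots,n-1}$ by $\det[(\lambda^{(n)}_j)^{\alpha_k}]_{j,k=1,\dots,n}$ at the cost of $\prod_{k=1}^{n-1}(\alpha_k-\alpha_n)^{-1}$, and then $C_{n-1}^{-1}/\Gamma(\alpha_n+1)=C_n^{-1}$ together with $\prod_{1\le j<k\le n-1}(\alpha_j-\alpha_k)\prod_{k=1}^{n-1}(\alpha_k-\alpha_n)=\prod_{1\le j<k\le n}(\alpha_j-\alpha_k)$ reproduces \eqref{Pd1x}. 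The specialisation \eqref{Pd2} then follows by inserting $\alpha_j=\theta(j-1)+c$: pulling $\prod_l\lambda_l^c$ out of the determinant leaves a Vandermonde in the $\lambda_j^\theta$, while $\prod_{1\le j<k\le n}(\alpha_j-\alpha_k)=\theta^{n(n-1)/2}\prod_{1\le j<k\le n}(j-k)=(-\theta)^{n(n-1)/2}\prod_{l=1}^{n-1}l!$, and the two factors $(-1)^{n(n-1)/2}$ (one from here, one from reordering $\prod_{j<k}(\lambda_k^\theta-\lambda_j^\theta)$) cancel to leave the manifestly positive constant $C_{n,\theta,c}$ of \eqref{Pd2b1}.

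For part \ref{enu:cor:C_1:b}, the key observation is that Lemma \ref{lem:P3} exhibits the conditional law of $\lambda^{(n)}$ given the matrix $Y_{(n-1)\times(n-1)}$ as depending on that matrix only through the eigenvalues $\lambda^{(n-1)}$ of $Y_{(n-1)\times(n-1)}^\dagger Y_{(n-1)\times(n-1)}$, since the secular equation \eqref{eq:char_equation_Laguerre} involves only $\mu_1,\dots,\mu_{n-1}$. As $\lambda^{(1)},\dots,\lambda^{(n-2)}$ are functions of $Y_{(n-1)\times(n-1)}$, the tower property makes $(\lambda^{(1)},\dots,\lambda^{(N)})$ a Markov chain with initial law $p_1$ and one-step kernels $p_{n,n-1}$ from \eqref{J1}, so that $p(\lambda^{(1)},\dots,\lambda^{(N)})=p_1(\lambda^{(1)})\prod_{n=2}^N p_{n,n-1}(\lambda^{(n)},\lambda^{(n-1)})$. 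Formula \eqref{eq:jpdf_upp_tri} then drops out by collecting terms: the ratios of Vandermondes telescope to $\prod_{1\le j<k\le N}(\lambda^{(N)}_j-\lambda^{(N)}_k)$; the factors $e^{\pm\lambda^{(n)}_k}$ cancel between consecutive levels leaving $\prod_k e^{-\lambda^{(N)}_k}$; the powers $(\lambda^{(n)}_k)^{\alpha_n}$ and $(\lambda^{(n)}_k)^{-\alpha_{n+1}-1}$ merge into $(\lambda^{(n)}_k)^{\alpha_n-\alpha_{n+1}-1}$ for $n\le N-1$; the normalisations multiply to $\prod_{n=1}^N\Gamma(\alpha_n+1)^{-1}=C_N^{-1}$; and the interlacing indicators are merely reindexed.

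The substantive ingredients --- the transition density \eqref{J1} and the integral evaluation \eqref{A1} --- are already in hand, so the remaining work is essentially bookkeeping. The one place that needs genuine care is the sign accounting in the passage to \eqref{Pd2}: one must match the sign produced by writing the factor $\prod_{j<k}(\lambda_k^\theta-\lambda_j^\theta)$ with the order $j<k$ against the sign of $\prod_{j<k}(j-k)$, so as to land on the positive constant \eqref{Pd2b1}. Justifying the confluent limit when some $\alpha_i$ coincide is routine once one notes analyticity in the parameters.
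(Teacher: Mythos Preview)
Your proposal is correct and follows essentially the same approach as the paper. For part \ref{enu:cor:C_1:a} the paper also proceeds by induction on $n$, using Proposition \ref{P3} for the conditional step and Lemma \ref{L1} to perform the interlacing integral, with the confluent cases handled by analytic continuation; for part \ref{enu:cor:C_1:b} the paper likewise invokes the Markov chain structure supplied by Lemma \ref{lem:P3} and telescopes the transition densities \eqref{J1} (referring to \cite[Cor.~1]{Adler-van_Moerbeke-Wang11} for the bookkeeping you spell out explicitly).
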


\begin{proof}[Proof of Part \ref{enu:cor:C_1:a}]
  We prove the case that $\alpha_i$ are distinct, and the general result follows by analytical continuation.

   In the case $n=1$ we see from \eqref{yk} that the PDF of the unique eigenvalue $\lambda^{(1)}_1$ is equal to $\Gamma(\alpha_1 + 1)^{-1} (\lambda^{(1)}_1)^{\alpha_1} e^{-\lambda^{(1)}_1}$, which is the $n=1$ case of \eqref{Pd1x}. Let us now assume that \eqref{Pd1x} is valid in the case $n-1$. Next we prove the $n > 1$ cases by induction. For notational simplicity, we denote $\lambda^{(n)}_i$ by $\lambda_i$ and $\lambda^{(n - 1)}_i$ by $\mu_i$. Recall the conditional PDF $p_{n,n-1}(\{\lambda_j\}_{j=1}^n, \{\mu_j \}_{j=1}^{n-1})$ of $\{ \lambda_1, \dotsc, \lambda_n \}$ with fixed $\{ \mu_1, \dotsc, \mu_{n - 1} \}$, such that the interlacing condition \eqref{J2} is satisfied, defined in \eqref{J1}. Our task is to show that with $R_{\preceq \lambda}$  the domain of $\{ \mu_1, \dotsc, \mu_{n - 1} \}$ given by \eqref{J2},
  \begin{equation}\label{PE}
    \int_{R_{\preceq \lambda}} p_{n,n-1}(\{\lambda_j\}_{j=1}^n,
    \{\mu_j \}_{j=1}^{n-1}) p_{n-1}(\mu_1,\dots,\mu_{n-1}) \, d\mu_1 \cdots d \mu_{n-1} =
    p_n(\lambda_1,\dots,\lambda_n).
  \end{equation}
  Substituting for the integrand, then making use of Lemma \ref{L1} shows that the LHS is equal to
  \begin{equation*}
    {1 \over C_{n-1}} {1 \over \Gamma(\alpha_n + 1)} \prod_{1 \leq j < k \leq n} {1 \over \alpha_j - \alpha_k} \prod_{k=1}^n e^{-\lambda_k}
    \prod_{1 \le j < k \le n} (\lambda_j - \lambda_k)
    \det[ \lambda_{j}^{\alpha_k}]_{j,k=1,\dots,n}. 
  \end{equation*}
  Comparison with (\ref{Pd1x}) and recalling (\ref{Pd2ax}) shows that this is precisely the RHS.
  
  To deduce (\ref{Pd2}) from (\ref{Pd1x}) we make use of the Vandermonde determinant identity
  $$
  \prod_{1 \le j < k \le n}( x_k - x_j) = \det [ x_j^{k-1} ]_{j,k=1,\dots,n}.
  $$
\end{proof}

\begin{proof}[Proof of Part \ref{enu:cor:C_1:b}]
  By Lemma \ref{lem:P3}, we have that the eigenvalues $\lambda^{(1)}, \lambda^{(2)}, \dotsc, \lambda^{(N)}$ constitute an inhomogeneous Markov chain with the transition probability density function from time $n - 1$ to time $n$ given by \eqref{J1}. Thus the joint distribution function of $\lambda^{(n)}$ ($n = 1, \dotsc, N$) is obtained by multiplying \eqref{J1} repeatedly. The argument is the same as the proof of \cite[Cor. 1]{Adler-van_Moerbeke-Wang11} and we omit the details.
\end{proof}

\begin{remark}
In the special case $\theta = 0$ of (\ref{XY1}) we have from  (\ref{yk}) that  $2 \lvert y_{k,k} \rvert^2 
  \mathop{=}\limits^{\rm d}  \chi_{2(c+1)}^2$ independent of $k$. Taking the limit $\theta \to 0^+$ in \eqref{Pd2}, we have that the corresponding eigenvalue PDF is equal to \cite{Cheliotis14}
 \begin{equation}\label{Pd3A}
   {1 \over \Gamma(c + 1)^n \prod_{l=1}^{n-1} l!}   \prod_{k=1}^n  \lambda_k^c e^{-\lambda_k} 
   \prod_{1 \le j < k \le n} (\lambda_j - \lambda_k)   ( \log \lambda_j - \log \lambda_k).
 \end{equation}
\end{remark} 

From the joint probability distribution function \eqref{eq:jpdf_upp_tri}, we have, as a natural generalisation of \cite[Thm.~3(c)]{Adler-van_Moerbeke-Wang11}:
\begin{prop} \label{prop:corr_kernel_L}
  For matrices in the the upper-triangular ensemble, let the eigenvalues $\lambda^{(1)}$, $\dotsc, \lambda^{(N)}$ be defined as in Corollary \ref{C1}. The eigenvalues $\lambda^{(1)}, \dotsc, \lambda^{(N)}$ constitute a determinantal process, and the correlation kernel of $\lambda^{(n_1)}$ and $\lambda^{(n_2)}$ is given by
  \begin{multline} \label{eq:corr_kernel_L}
    K(n_1, x; n_2, y) = \frac{-1}{2\pi i} \oint_{\Gamma_{\alpha}} \frac{x^{-w - 1} y^w}{\prod^{n_2}_{l = n_1 + 1} (w - \alpha_l)} dw 1_{x < y} 1_{n_1 < n_2} \\
    + \frac{1}{(2\pi i)^2} \oint_{\Sigma} dz \oint_{\Gamma_{\alpha}} dw \frac{x^{-z - 1} y^w \Gamma(z + 1)}{(z - w) \Gamma(w + 1)} \frac{\prod^{n_1}_{k = 1} (z - \alpha_k)}{\prod^{n_2}_{l = 1} (w - \alpha_l)},
  \end{multline}
  where $\Gamma_{\alpha}$ is a contour enclosing $\alpha_1,\dots,\alpha_N$, while $\Sigma$ is a Hankel like contour, starting at $-\infty - i \epsilon$, running parallel to the negative real axis, looping around the point $z=-1$ and the contour $\Gamma_{\alpha}$, and finishing at $-\infty + i \epsilon$ after again running parallel to the negative real axis, see Figure \ref{fig:Gamma_alpha_Sigma}. 
\end{prop}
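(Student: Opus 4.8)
The plan is to recast the joint PDF \eqref{eq:jpdf_upp_tri} as a product of determinants over consecutive levels, so that the Eynard--Mehta theorem applies, and then run that machine exactly as in the proof of \cite[Thm.~3(c)]{Adler-van_Moerbeke-Wang11}; the only genuinely new point is that the parameters $\alpha_j$ are now real rather than non-negative integers. First I would write the top-level Vandermonde $\prod_{1\le j<k\le N}(\lambda^{(N)}_j-\lambda^{(N)}_k)$ as a determinant of monomials and each inter-level factor $\prod_{i=1}^n(\lambda^{(n)}_i)^{\alpha_n-\alpha_{n+1}-1}\,1_{\lambda^{(n)}\preceq\lambda^{(n+1)}}$ as $\det[\phi_n(\lambda^{(n)}_i,\lambda^{(n+1)}_j)]$, after adjoining $N-n$ virtual variables to level $n$ in the Borodin--Ferrari fashion so that every level formally carries $N$ points; here the one-step transition on genuine variables is $\phi_n(x,y)=x^{\alpha_n-\alpha_{n+1}-1}1_{x>y}$, with the conventional extension to the virtual ones. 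This presents the measure as a product of determinants, so the eigenvalues over all levels form a determinantal process and the extended correlation kernel is given by the Eynard--Mehta formula $K(n_1,x;n_2,y)=-\phi^{(n_1,n_2)}(x,y)\,1_{n_1<n_2}+\sum_{k=1}^N\Psi^{n_1}_k(x)\Phi^{n_2}_k(y)$.

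Next I would evaluate the two ingredients explicitly. The first term $\phi^{(n_1,n_2)}$ is the $(n_2-n_1)$-fold composition of one-step transitions; iterating the convolution of $\phi_n(x,y)=x^{\alpha_n-\alpha_{n+1}-1}1_{x>y}$ and passing to a Mellin/Hankel integration variable converts the resulting telescoping product into $-\frac{1}{2\pi i}\oint_{\Gamma_\alpha} \frac{x^{-w-1}}{\prod_{l=n_1+1}^{n_2}(w-\alpha_l)}\,dw$, matching the first term of \eqref{eq:corr_kernel_L}. For the second term one identifies the biorthogonal families $(\Psi^{n_1}_k,\Phi^{n_2}_k)$ attached to the boundary data, as in \cite{Adler-van_Moerbeke-Wang11}: the $\Psi^{n_1}_k(x)$ come from the Laguerre weight $x^{\alpha_k}e^{-x}$ transported down the interlacing chain, whose Mellin--Barnes representation produces the factors $\Gamma(z+1)$ and $\prod_{k'=1}^{n_1}(z-\alpha_{k'})$ together with the Hankel contour $\Sigma$ around $z=-1$, while the $\Phi^{n_2}_k(y)$ are the dual functions; performing the finite sum $\sum_{k=1}^N$ by rewriting it as a residue computation over a contour $\Gamma_\alpha$ enclosing $\alpha_1,\dots,\alpha_N$, which also generates the $1/(z-w)$ factor, yields the stated double contour integral. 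Since the algebraic steps involved --- Vandermonde and Cauchy-type determinant evaluations, the convolution of the $\phi_n$, and the inversion of the Gram matrix --- never use integrality of the $\alpha_j$, the computation of \cite[Thm.~3(c)]{Adler-van_Moerbeke-Wang11} goes through verbatim for real $\alpha_j>-1$; alternatively one can treat the integer case and then analytically continue, once convergence is settled.

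The main obstacle I anticipate is analytic bookkeeping rather than anything conceptual. One must check that the contours in the statement can be chosen so that every double integral converges absolutely and the residue manipulations in the biorthogonalisation are legitimate: that $\Sigma$ may be taken as a Hankel-type loop surrounding both $z=-1$ and all of $\Gamma_\alpha$ along which the integrand decays, and that contracting $\Gamma_\alpha$ to a small loop around $\{\alpha_1,\dots,\alpha_N\}$ is consistent with having already split off, into the first term, the pole of $\prod_{l=n_1+1}^{n_2}(w-\alpha_l)^{-1}$. One must also verify that the virtual-variable formalism isolates the pure transition part as the $1_{n_1<n_2}$ term with no leftover contribution, and handle the degenerate loci where some $\alpha_i$ coincide: there \eqref{eq:jpdf_upp_tri} is read in the \lHopital\ limiting sense, but since \eqref{eq:corr_kernel_L} is manifestly holomorphic in $(\alpha_1,\dots,\alpha_N)$ near the relevant real locus, continuity takes care of it. Finally I would note that specialising $n_1=n_2=N$ and conjugating by the gauge factor $h$ recovers Proposition \ref{prop:kernel_formula_L}, namely \eqref{KK2}.
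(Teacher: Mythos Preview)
Your proposal is correct and matches the paper's approach exactly: the paper does not give a self-contained proof but simply observes that the argument of \cite[Thm.~3(c)]{Adler-van_Moerbeke-Wang11} --- the Eynard--Mehta machinery applied to the product-of-determinants form of \eqref{eq:jpdf_upp_tri} --- nowhere uses the integrality of the $\alpha_j$ or the condition \eqref{IN}, and hence goes through verbatim for real $\alpha_j>-1$. Your sketch is in fact more detailed than what the paper provides.
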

\begin{figure}[htb]
  \centering
  \includegraphics{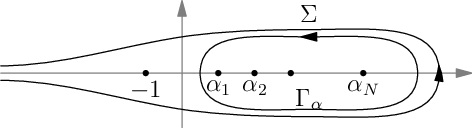}
  \caption{Schematic figures of $\Gamma_{\alpha}$ and $\Sigma$.}
  \label{fig:Gamma_alpha_Sigma}
\end{figure}
The proof of the proposition is by a standard argument for  determinantal processes based on the joint probability density function \eqref{eq:jpdf_upp_tri}. In \cite[Thm.~3(c)]{Adler-van_Moerbeke-Wang11}, the proposition for non-negative integer $\alpha_i$ under condition \eqref{IN} is proved. Since the proof does not use these additional conditions, it is also a complete proof to Proposition \ref{prop:corr_kernel_L}.

\subsection{The Jacobi upper-triangular ensemble}

Adler, van Moerbeke and Wang \cite{Adler-van_Moerbeke-Wang11} considered the joint eigenvalue PDF of a sequence of random matrices
\begin{equation}\label{YA1}
  \begin{split}
    \tilde{J}_n = {}& (X^\dagger_{M \times n} X_{M \times n})(A^\dagger_{M' \times n} A_{M' \times n} + X^\dagger_{M \times n} X_{M \times n})^{-1} \\
    = {}& (A_{M' \times n}^\dagger  A_{M' \times n} (X^\dagger_{M \times n} X_{M \times n})^{-1} + \mathbb I_n )^{-1}, \quad n = 1, \dotsc, N
  \end{split}
\end{equation}
where $X_{M \times n}$ is the left $M \times n$ sub-block of the $M \times N$ matrix $X$ specified by \eqref{IN1}, while $A_{M' \times n}$ is the top $M' \times n$ sub-block of the $M' \times N$ matrix
$A$ which has all elements independently distributed as standard complex Gaussians. Here it is required that
$M \ge n$ and $M' \ge n$. Due to its relationship to particular multiple orthogonal polynomials, this was referred to as the
Jacobi-\Pineiro\  ensemble.

We consider in this section a Jacobi-type counterpart of the random matrix ensembles $Y^{\dagger}_{N \times n} Y_{N \times n}$ in Section \ref{subsec:multiple_Laguerre_ensemble}. To this end, we use the $N \times N$ random matrix $Y$ specified in Section \ref{subsec:multiple_Laguerre_ensemble} above Proposition \ref{P1}, and denote the $N \times N$ random matrix $Z$, which is also in the same upper-triangular ensemble as $Y$. Let $Z = [z_{j, k}]_{j, k = 1, \dotsc, N}$ be an upper-triangular $N \times N$ random matrix with all upper-triangular entries independent, the diagonal entries be real positive with distributions depending on parameters $\beta_k > -1$
\begin{equation} \label{eq:diagonal_Z}
  \lvert z_{k,k} \rvert^2  \mathop{=}\limits^{\rm d}  \Gamma[\beta_k + 1, 1], \quad \text{or equivalently} \quad 2\lvert z_{k,k} \rvert^2  \mathop{=}\limits^{\rm d} \chi_{2({\beta_k+1)}}^2, \qquad (k=1,\dots,N),
\end{equation}
and all entries strictly above the diagonal be complex and in standard complex Gaussian distribution. Then define for all $n = 1, \dotsc, N$
\begin{equation} \label{eq:defn_J_n}
  \begin{split}
    J_n = {}& (Y^{\dagger}_{N \times n} Y_{N \times n})(Z^{\dagger}_{N \times n} Z_{N \times n} + Y^{\dagger}_{N \times n} Y_{N \times n})^{-1} \\
    = {}& (Z^{\dagger}_{N \times n} Z_{N \times n} (Y^{\dagger}_{N \times n} Y_{N \times n})^{-1} + \mathbb I_n )^{-1},
  \end{split}
\end{equation}
where $Y_{N \times n}$ (\resp\ $Z_{N \times n}$) is the left $N \times n$ sub-block of the $N \times N$ matrix $Y$ (\resp\ $Z$). Note that both $\tilde{J}_n$ and $J_n$ depend on parameters $\alpha_1, \dotsc, \alpha_n > -1$, $J_n$ also depends on $\beta_1, \dotsc, \beta_n > -1$ and for $\tilde{J}_n$ it is further assumed that $\alpha_1, \dotsc, \alpha_n$ are non-negative integers satisfying inequality \eqref{IN}. We say that the matrices $J_n$ form the Jacobi upper-triangular ensemble.

Parallel to Proposition \ref{P1}, we have
\begin{prop}
  Fix nonnegative integers $\alpha_1, \dotsc, \alpha_N$ such that \eqref{IN} is satisfied and fix $\beta_i = M' - i$ ($i = 1, \dotsc, N$), and consider $J_n$ and $\tilde{J}_n$ defined by \eqref{eq:defn_J_n} and \eqref{YA1} respectively with the same parameters $\alpha_1, \dotsc \alpha_n$. Then the joint distribution of $\tilde{J}_1, \tilde{J}_2, \dotsc, \tilde{J}_N$ is the same as the joint distribution of $J_1, J_2, \dotsc, J_N$. In particular, the joint distribution of the eigenvalues of $\tilde{J}_1, \dotsc, \tilde{J}_n$ is the same as the joint distribution of the eigenvalues of $J_1, \dotsc, J_N$.
\end{prop}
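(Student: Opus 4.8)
The plan is to reduce the statement to two applications of Proposition \ref{P1}, combined with the elementary observation that the upper-triangularisation in that proposition is realised by left multiplication by a unitary matrix, and hence leaves the Gram matrices $X^\dagger_{M\times n}X_{M\times n}$ (respectively $A^\dagger_{M'\times n}A_{M'\times n}$) unchanged. Since both $\tilde J_n$ and $J_n$ in \eqref{YA1} and \eqref{eq:defn_J_n} are expressed entirely through these Gram matrices, it suffices to match the joint law of the pair of Gram-matrix sequences $\big(X^\dagger_{M\times n}X_{M\times n},\,A^\dagger_{M'\times n}A_{M'\times n}\big)_{n=1}^N$ coming from $(X,A)$ with that of $\big(Y^\dagger_{N\times n}Y_{N\times n},\,Z^\dagger_{N\times n}Z_{N\times n}\big)_{n=1}^N$ coming from $(Y,Z)$.

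First I would apply the construction in the proof of Proposition \ref{P1} to $X$: iterating the Householder reflections produces a measurable map $X\mapsto \mathcal U(X)\in U(M)$ such that $X^{(N)}:=\mathcal U(X)X$ has vanishing rows below the $N$-th, its top $N\times N$ block $\widehat Y$ is distributed as the upper-triangular matrix $Y$ with diagonal law \eqref{yk} and parameters $\alpha_k$, and $X^\dagger_{M\times n}X_{M\times n}=\widehat Y^\dagger_{N\times n}\widehat Y_{N\times n}$ for every $n=1,\dots,N$. Next I would apply the same construction to $A$. Since $A$ is the $M'\times N$ matrix with all entries independent standard complex Gaussians, it is precisely the boundary case of \eqref{IN1} with $\alpha_k$ replaced by $M'-k$: all entries are nonzero, and \eqref{IN} degenerates to the chain $M'\le M'\le\cdots\le M'$, which holds because $M'\ge N$. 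Proposition \ref{P1} then furnishes a measurable map $A\mapsto\mathcal V(A)\in U(M')$ with $A^{(N)}:=\mathcal V(A)A$ having vanishing rows below the $N$-th, top $N\times N$ block $\widehat Z$ distributed as an upper-triangular matrix with $\lvert\widehat z_{k,k}\rvert^2\dequal\Gamma[M'-k+1,1]$ (well defined, since $M'-k+1\ge1>0$ as $k\le N\le M'$), i.e.\ distributed as $Z$ with $\beta_k=M'-k$, and $A^\dagger_{M'\times n}A_{M'\times n}=\widehat Z^\dagger_{N\times n}\widehat Z_{N\times n}$ for all $n$. Because $X$ and $A$ are independent, $\widehat Y=\widehat Y(X)$ and $\widehat Z=\widehat Z(A)$ are independent; as $\widehat Y\dequal Y$, $\widehat Z\dequal Z$, and $Y,Z$ are independent in the Jacobi upper-triangular ensemble, this gives $(\widehat Y,\widehat Z)\dequal(Y,Z)$ as a pair.

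Finally, substituting $X^\dagger_{M\times n}X_{M\times n}=\widehat Y^\dagger_{N\times n}\widehat Y_{N\times n}$ and $A^\dagger_{M'\times n}A_{M'\times n}=\widehat Z^\dagger_{N\times n}\widehat Z_{N\times n}$ into \eqref{YA1} turns $\tilde J_n$ into exactly the expression \eqref{eq:defn_J_n} for $J_n$ with $(Y,Z)$ replaced by $(\widehat Y,\widehat Z)$; thus $(\tilde J_1,\dots,\tilde J_N)$ is a fixed deterministic function of $(\widehat Y,\widehat Z)$ while $(J_1,\dots,J_N)$ is the same function of $(Y,Z)$. The identity $(\widehat Y,\widehat Z)\dequal(Y,Z)$ then yields the claimed equality of joint distributions, and the statement on eigenvalues follows since the eigenvalues depend continuously on the matrix. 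The routine points (that $A^\dagger_{M'\times n}A_{M'\times n}+X^\dagger_{M\times n}X_{M\times n}$ is a.s.\ invertible so all $\tilde J_n,J_n$ are a.s.\ defined, and that passing to the first $n$ columns commutes with left multiplication, which is what preserves the Gram matrices) are immediate. The only step requiring genuine care—and the main obstacle, though a mild one—is the identification of the all-Gaussian $A$ as the case $\alpha_k=M'-k$ of Proposition \ref{P1}, together with the bookkeeping that its reduced diagonal law is $\Gamma[M'-k+1,1]$, matching $\beta_k=M'-k$; here one checks that the degenerate chain in \eqref{IN} and the positivity of the gamma parameter both hold.
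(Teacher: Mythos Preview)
Your proposal is correct and follows essentially the same approach as the paper: both arguments apply the Householder reduction of Proposition \ref{P1} separately to $X$ and to $A$ (identifying the all-Gaussian $A$ as the case $\alpha_k = M'-k$), use that the reductions preserve the Gram-matrix sequences, and invoke independence of $X$ and $A$ to transfer the joint law. The only cosmetic difference is that the paper routes the argument through the intermediate sequence $\hat J_n$ of \eqref{eq:defn_J_hat_n}, doing the two reductions in series, whereas you do them in parallel and appeal directly to $(\widehat Y,\widehat Z)\dequal(Y,Z)$; the content is the same.
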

\begin{proof}
  The proof is analogous to that of Proposition \ref{P1}, and we divide it into two steps. As a bridge between $\tilde{J}_n$ and $J_n$, for all $n = 1, \dotsc, N$ we define
  \begin{equation} \label{eq:defn_J_hat_n}
    \hat{J}_n = (Y^{\dagger}_{N \times n} Y_{N \times n})(A^{\dagger}_{M' \times n} A^{\dagger}_{M' \times n} + Y^{\dagger}_{N \times n} Y_{N \times n})^{-1}.
  \end{equation}

  Recall the sequence of random matrices $X^{(0)} = X, X^{(1)}, \dotsc, X^{(N)}$ and random Householder matrices $U^{(1)}, \dotsc, U^{(N)}$ constructed in the proof of Proposition \ref{P1}. We have that for all $n = 1, \dotsc, N$, $(X^{(N)}_{M \times n})^{\dagger} X^{(N)}_{M \times n}$ are independent of $A$, and they have the same joint distribution as $Y^{\dagger}_{N \times n} Y_{N \times n}$. So the joint distribution of $\hat{J}_n$ is the same as that of 
  \begin{equation*}
    \tilde{J}^{(N)}_n = ((X^{(N)}_{M \times n})^\dagger X^{(N)}_{M \times n})(A^\dagger_{M' \times n} A_{M' \times n} + (X^{(N)}_{M \times n})^\dagger X^{(N)}_{M \times n})^{-1}.
  \end{equation*}
  On the other hand, $X^{(N)}$ is a function of $X$ and $(X^{(N)}_{M \times n})^{\dagger} X^{(N)}_{M \times n} = X^{\dagger}_{M \times n} X_{M \times n}$, so $\tilde{J}^{(N)}_n$ is identical to $\tilde{J}_n$ given that $X$ and $A$ are the same. Thus we have showed that the joint distribution of $\tilde{J}_n$ is the same as that of $\hat{J}_n$.

Next, since $A$ is also in the upper-triangular ensemble, we use the same algorithm in the proof of Proposition \ref{P1} to construct a sequence of random matrices $A^{(0)} = A, A^{(1)}, \dotsc, A^{(N)}$ and random Householder matrices $V^{(1)}, \dotsc, V^{(N)}$, such that $A^{(n + 1)} = V^{(n + 1)} A^{(n)}$ and $A^{(n)}$ satisfy the same conditions as $X^{(n)}$ but with the parameters $M, \alpha_1, \dotsc, \alpha_N$ replaced by $M', \beta_1 = M' -1, \dotsc, \beta_N = M' -N$. We have that for all $n = 1, \dotsc, N$, each $(A^{(N)}_{M' \times n})^{\dagger} A^{(N)}_{M' \times n}$ is independent of $Y$, and they have the same joint distribution as $Z^{\dagger}_{N \times n} Z_{N \times n}$. So the joint distribution of $J_n$ is the same as that of
  \begin{equation*}
    \hat{J}^{(N)}_n = (Y^{\dagger}_{N \times n} Y_{N \times n})((A^{(N)}_{N \times n})^{\dagger} A^{(N)}_{N \times n} + Y^{\dagger}_{N \times n} Y_{N \times n})^{-1}.
  \end{equation*}
  On the other hand, $A^{(N)}$ is a function of $A$ and $(A^{(N)}_{M' \times n})^{\dagger} A^{(N)}_{M' \times n} = A^{\dagger}_{M' \times n} A_{M' \times n}$. So $\hat{J}^{(N)}_n$ is identical to $\hat{J}_n$ given that $Y$ and $A$ are identical. Thus we show that the joint distribution of $\hat{J}_n$ is the same as that of $J_n$. 
\end{proof}

The following proposition is the Jacobi counterpart of Proposition \ref{P3}, and its proof is analogous to that of the Jacobi-\Pineiro\ part of \cite[Thm.~1]{Adler-van_Moerbeke-Wang11}.
\begin{prop}\label{P7}
  Let the $N \times N$ random matrices $Y$ and $Z$ be in the upper-triangular ensemble with diagonal entries specified by \eqref{yk} and \eqref{eq:diagonal_Z}, and thus with parameters $\alpha_1, \dotsc, \alpha_N$ and $\beta_1, \dotsc, \beta_N$ respectively. Denote by $\{\lambda_1,\dots,\lambda_n\}$
  and $\{\mu_1,\dots,\mu_{n-1}\}$ the eigenvalues of $J_n$ and
  $J_{n - 1}$ respectively in descending order, where $J_n$ and $J_{n - 1}$ are defined in \eqref{eq:defn_J_n}. The conditional PDF of 
  $\{\lambda_1,\dots,\lambda_n\}$, with $\{\mu_1,\dots,\mu_{n-1}\}$ fixed and distinct,
  is equal to
  
\begin{multline} \label{Pd4}
    p_{n,n-1}(\{\lambda_j\}_{j=1}^n, \{\mu_j \}_{j=1}^{n-1}) = \frac{\Gamma(\alpha_n + \beta_n + n + 1)}{\Gamma(\alpha_n + 1) \Gamma(\beta_n + 1)} \\
    \times \prod_{k=1}^n \lambda_k^{\alpha_n} (1 - \lambda_k)^{\beta_n} \prod_{l=1}^{n-1} \mu^{-\alpha_{n} - 1}_l (1 - \mu_l)^{-\beta_n - 1}
    {\prod_{1 \le j < k \le n} ( \lambda_j - \lambda_k) \over \prod_{1 \le j < k \le n-1} ( \mu_j - \mu_k) },
  \end{multline}
  subject to the interlacing constraint
  \begin{equation}\label{J2_Jacobi}
    1 \geq \lambda_1 \geq \mu_1 \geq \lambda_2 \geq \mu_2 \geq \cdots \geq \lambda_n \ge 0.
  \end{equation}
\end{prop}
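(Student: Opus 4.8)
The plan is to mimic the Laguerre argument of Lemma~\ref{lem:P3} and Corollary~\ref{C1}: run an induction on $n$ in which one conditions on $Y_{N\times(n-1)}$ and $Z_{N\times(n-1)}$ (hence on the eigenvalues $\{\mu_1,\dots,\mu_{n-1}\}$ of $J_{n-1}$), reduces the eigenvalue problem for $J_n$ to a rank-one update, and reads off the conditional law of $\{\lambda_1,\dots,\lambda_n\}$. Since $Y$ and $Z$ are upper-triangular we may work throughout with the top $n\times n$ and $(n-1)\times(n-1)$ blocks, and --- exactly as in the Laguerre case --- the argument works with those blocks replaced by arbitrary invertible $(n-1)\times(n-1)$ matrices $A,C$ for which the eigenvalues of $A^{\dagger}A(A^{\dagger}A+C^{\dagger}C)^{-1}$ equal $\{\mu_i\}$, so it is enough to treat that situation. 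The extra ingredient compared with the Laguerre case is that the eigenvalues $\lambda$ of $B_Y^{\dagger}B_Y(B_Y^{\dagger}B_Y+B_Z^{\dagger}B_Z)^{-1}$ are the images under the M\"obius map $\lambda=\nu/(1+\nu)$ of the squared singular values $\nu$ of $\widetilde W:=B_YB_Z^{-1}$; this follows from the pencil relation $(1-\lambda)B_Y^{\dagger}B_Y u=\lambda B_Z^{\dagger}B_Z u$.

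Here $B_Y$ (\resp\ $B_Z$) is $A$ (\resp\ $C$) bordered by a new $n$-th column and row: the first $n-1$ entries of the new column form an independent standard complex Gaussian vector $\vec a$ (\resp\ $\vec c$), the $(n,n)$ entry is a real positive $d_Y$ with $|d_Y|^2\dequal\Gamma[\alpha_n+1,1]$ (\resp\ $d_Z$ with $|d_Z|^2\dequal\Gamma[\beta_n+1,1]$), the rest of the last row is zero, and all these variables are mutually independent and independent of $A,C$. Then $B_Z^{-1}$ is block upper-triangular, so $\widetilde W=B_YB_Z^{-1}$ has top-left block $AC^{-1}$ and last row $(0,\dots,0,d_Y/d_Z)$; writing $\widetilde W_{n\times(n-1)}$ for its first $n-1$ columns and $\vec w=d_Z^{-1}(\vec a-AC^{-1}\vec c,\ d_Y)^{T}$ for its last column, one obtains the rank-one identity
\[
\widetilde W\widetilde W^{\dagger}=\begin{pmatrix}(AC^{-1})(AC^{-1})^{\dagger} & \bbO_{(n-1)\times 1}\\ \bbO_{1\times(n-1)} & 0\end{pmatrix}+\vec w\,\vec w^{\,\dagger},
\]
the analogue of \eqref{dd}. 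It is essential to pass to $\widetilde W\widetilde W^{\dagger}$: the pencil $(B_Y^{\dagger}B_Y,B_Z^{\dagger}B_Z)$ is not simultaneously conjugate to $(B_YB_Y^{\dagger},B_ZB_Z^{\dagger})$, so one cannot replace $M^{\dagger}M$ by $MM^{\dagger}$ in the two Wishart matrices separately as in the Laguerre case.

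Diagonalising $(AC^{-1})(AC^{-1})^{\dagger}$ by a unitary $V$, its eigenvalues are the squared singular values of $AC^{-1}$, namely $\nu_i^{(n-1)}=\mu_i/(1-\mu_i)$. Conjugating the displayed identity by $V\oplus\bbI_1$ as in \eqref{dd2}--\eqref{3.1} turns it into $\diag[\nu_1^{(n-1)},\dots,\nu_{n-1}^{(n-1)},0]+\vec p\,\vec p^{\,\dagger}$ with $\vec p=(V\oplus\bbI_1)\vec w=d_Z^{-1}(V(\vec a-AC^{-1}\vec c),\,d_Y)^{T}$; by the rotational invariance of complex Gaussian vectors, $V(\vec a-AC^{-1}\vec c)$ has (conditionally on $A,C$) independent components $q_1,\dots,q_{n-1}$ with $|q_i|^2\dequal\Gamma[1,1/(1-\mu_i)]$. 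Writing out the secular equation as in \eqref{eq:char_equation_Laguerre} and substituting $\nu=\lambda/(1-\lambda)$, $\nu_i^{(n-1)}=\mu_i/(1-\mu_i)$, then cancelling a common factor $1-\lambda$, the equation for $\lambda_1,\dots,\lambda_n$ becomes
\[
-\frac{r_0}{\lambda}+\sum_{i=1}^{n-1}\frac{r_i}{\mu_i-\lambda}+\frac{r_n}{1-\lambda}=0,
\]
where $r_0=|d_Y|^2\dequal\Gamma[\alpha_n+1,1]$, $r_n=|d_Z|^2\dequal\Gamma[\beta_n+1,1]$ and $r_i=(1-\mu_i)|q_i|^2\dequal\Gamma[1,1]$ for $i=1,\dots,n-1$, all mutually independent. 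The decisive simplification is that the factor $1-\mu_i$ from the M\"obius substitution cancels the scale $1/(1-\mu_i)$ of $|q_i|^2$, so the interior residues are standard exponentials and the conditional law of $\{\lambda_i\}$ depends on $A,C$ only through $\{\mu_i\}$. Since the left-hand side is strictly increasing between consecutive poles and runs from $-\infty$ to $+\infty$ across each of the $n$ intervals determined by the points $0<\mu_{n-1}<\dots<\mu_1<1$, all $n$ roots lie in $(0,1)$ and interlace the $\mu_i$ as in \eqref{J2_Jacobi}; the joint law of these roots, with residues of the stated Gamma type, is then \eqref{Pd4}. This last step is the Jacobi analogue of the computation behind \cite[Cor.~3]{Forrester-Rains05} invoked in Lemma~\ref{lem:P3} --- equivalently a direct change of variables from the residues to the $n$ roots (and the irrelevant overall scale of the secular equation) --- and coincidences among the $\mu_i$ are handled by continuity as in Corollary~\ref{C1}. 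Taking $A=Y_{N\times(n-1)}$, $C=Z_{N\times(n-1)}$ gives Proposition~\ref{P7}.

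The main obstacle is the middle step: carrying the distribution of the rank-one vector $\vec w$ through the conjugation by $V\oplus\bbI_1$ and the M\"obius substitution, and in particular verifying the cancellation that turns the interior residues into $\Gamma[1,1]$ variables; without it the density one obtains would depend on $\{\mu_i\}$ beyond their role as poles and would not collapse to \eqref{Pd4}. A secondary, more bookkeeping-type difficulty is to pin down the exact Jacobi version of \cite[Cor.~3]{Forrester-Rains05} --- now with poles at both $0$ and $1$ and two Gamma shape parameters $\alpha_n+1$, $\beta_n+1$ --- and to check that its normalising constant reproduces the prefactor $\Gamma(\alpha_n+\beta_n+n+1)/(\Gamma(\alpha_n+1)\Gamma(\beta_n+1))$ in \eqref{Pd4}, in line with the Jacobi--\Pineiro\ computation in \cite[Thm.~1]{Adler-van_Moerbeke-Wang11}.
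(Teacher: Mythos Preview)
Your proposal is correct and follows essentially the same route as the paper's Lemma~\ref{lem:P7}: pass to the squared singular values of $B_Y B_Z^{-1}$ via the M\"obius map $\lambda=\nu/(1+\nu)$, exhibit the rank-one structure $\diag[\nu_i^{(n-1)},0]+\vec p\,\vec p^{\,\dagger}$, and identify the Gamma-distributed residues in the secular equation, deferring the final density calculation to \cite{Adler-van_Moerbeke-Wang11} (the paper) or the Jacobi analogue of \cite[Cor.~3]{Forrester-Rains05} (you). Your extra step of pulling the secular equation back to $\lambda$-coordinates to get $-r_0/\lambda+\sum r_i/(\mu_i-\lambda)+r_n/(1-\lambda)=0$ with $\Gamma[1,1]$ interior residues is a clean touch the paper does not make---it leaves the equation as \eqref{eq:char_equation_Jacobi} in $\tilde\lambda$-coordinates---but the substance is the same.
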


Analogous to Proposition \ref{P3}, we actually prove a slightly stronger result:
\begin{lemma} \label{lem:P7}
  Let $C_{n - 1} = (c_{i, j})$ and $B_{n - 1} = (b_{i, j})$ be $(n - 1) \times (n - 1)$ invertible matrices such that the eigenvalues of $(C^{\dagger}_{n - 1} C_{n - 1})(B^{\dagger}_{n - 1} B_{n - 1} + C^{\dagger}_{n - 1} C_{n - 1})^{-1}$ are $\{ \mu_1, \dotsc, \mu_{n - 1} \}$ in descending order. Define the $n \times n$ matrix $C_n$ (\resp\ $B_n$) by letting: (1) the $(n - 1) \times (n - 1)$ upper-triangular block equal $C_{n - 1}$ (\resp\ $B_{n - 1}$); (2) the bottom row has all entries but the rightmost one equal $0$; (3) all entries of the rightmost column be independent random variables, such that the $(n, n)$-entry $c_{n, n}$ (\resp\ $b_{n, n}$) is real positive with $\lvert c_{n, n} \rvert^2 \mathop{=}\limits^{\rm d}  \Gamma[\alpha_n + 1, 1]$ (\resp\ $\lvert b_{n, n} \rvert^2 \mathop{=}\limits^{\rm d}  \Gamma[\beta_n + 1, 1]$), and all other entries in the column are in standard complex normal distribution. Then the eigenvalues of $(C^{\dagger}_n C_n)(B^{\dagger}_n B_n + C^{\dagger}_n C_n)^{-1}$, denoted by $\{ \lambda_1, \dotsc, \lambda_n \}$ in descending order, satisfy the interlacing constraint \eqref{J2_Jacobi} and have distribution given by $p_{n,n-1}(\{\lambda_j\}_{j=1}^n, \{\mu_j \}_{j=1}^{n-1})$ in \eqref{Pd4}.
\end{lemma}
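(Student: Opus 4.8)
The plan is to mimic the proof of Lemma~\ref{lem:P3} (and of the Jacobi--\Pineiro\ part of \cite[Thm.~1]{Adler-van_Moerbeke-Wang11}): reduce the generalised eigenvalue problem to a secular equation whose residues are independent gamma variables, and then quote the relevant result from \cite{Forrester-Rains05}. Since $C_{n-1}$ is invertible and $c_{n,n}>0$ almost surely, $C_n$ is invertible, and with $T_n := B_n C_n^{-1}$ and $T_{n-1} := B_{n-1} C_{n-1}^{-1}$ the identity $(C_n^\dagger C_n)(B_n^\dagger B_n + C_n^\dagger C_n)^{-1} = (B_n^\dagger B_n (C_n^\dagger C_n)^{-1} + \bbI_n)^{-1}$ shows that the eigenvalues $\{\lambda_j\}$ of the left-hand side are $\lambda_j = 1/(1+\nu_j)$, where $\nu_1,\dots,\nu_n \ge 0$ are the eigenvalues of $T_n T_n^\dagger$, and likewise $\mu_j = 1/(1+\tau_j^2)$ with $\tau_1^2,\dots,\tau_{n-1}^2$ the eigenvalues of $T_{n-1} T_{n-1}^\dagger$. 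The prescribed block form of $C_n,B_n$ gives $T_n = \begin{bmatrix} T_{n-1} & \vec{t} \\ \bbO & b_{n,n}/c_{n,n}\end{bmatrix}$ with $\vec{t} = c_{n,n}^{-1}(\vec{b} - T_{n-1}\vec{c})$, where $\vec{b} = (b_{1,n},\dots,b_{n-1,n})^T$ and $\vec{c} = (c_{1,n},\dots,c_{n-1,n})^T$; consequently one has the analogue of \eqref{dd},
\begin{equation*}
T_n T_n^\dagger = \begin{bmatrix} T_{n-1} T_{n-1}^\dagger & \bbO \\ \bbO & 0 \end{bmatrix} + \vec{s}\,\vec{s}^{\,\dagger}, \qquad \vec{s} = \text{$n$-th column of } T_n .
\end{equation*}

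Next I would diagonalise $T_{n-1} T_{n-1}^\dagger$ by a unitary $W$ (a function of $C_{n-1},B_{n-1}$ only), conjugate the displayed identity by $W\oplus\bbI_1$, and use rotational invariance of complex Gaussian vectors exactly as in Lemma~\ref{lem:P3}: from the singular value decomposition $T_{n-1}=P\,\diag[\tau_1,\dots,\tau_{n-1}]\,Q^\dagger$, the vectors $Q^\dagger\vec{c}$ and $P^\dagger\vec{b}$ are again independent standard complex Gaussians, so the components of the rotated $\vec{s}$ become independent with $|c_{n,n}|^2|s_j|^2 \dequal (1+\tau_j^2)\,\Gamma[1,1]$ for $j=1,\dots,n-1$ (mutually independent and independent of $c_{n,n},b_{n,n}$) and $|c_{n,n}|^2|s_n|^2 = |b_{n,n}|^2$. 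Hence $\nu$ is an eigenvalue of $T_nT_n^\dagger$ exactly when $1 + \sum_{j=1}^{n-1}|s_j|^2/(\tau_j^2-\nu) - |s_n|^2/\nu = 0$; multiplying through by $|c_{n,n}|^2$, substituting $\nu=(1-\lambda)/\lambda$ and $\tau_j^2=(1-\mu_j)/\mu_j$, using $(1+\tau_j^2)\mu_j=1$ and clearing a factor $\lambda$, this turns into
\begin{equation*}
\frac{g_0}{\lambda} + \sum_{j=1}^{n-1}\frac{g_j}{\lambda-\mu_j} + \frac{g_n}{\lambda-1} = 0,
\end{equation*}
with $g_0=|c_{n,n}|^2\dequal\Gamma[\alpha_n+1,1]$, $g_j\dequal\Gamma[1,1]$ $(1\le j\le n-1)$ and $g_n=|b_{n,n}|^2\dequal\Gamma[\beta_n+1,1]$ mutually independent. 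Since $0<\mu_{n-1}<\dots<\mu_1<1$ and all $g$'s are positive, the sign pattern of this rational function forces exactly one root in each of the $n$ intervals $(0,\mu_{n-1}),\dots,(\mu_1,1)$, which is the interlacing \eqref{J2_Jacobi}; note that everything now depends on $C_{n-1},B_{n-1}$ only through $\{\mu_j\}$, which is why the lemma holds in the stated generality.

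Finally, the joint law of the roots $\{\lambda_j\}$ of such an equation --- with the nodes $\{\mu_j\}$ fixed and distinct and the independent gamma residues above --- is given by the Jacobi instance of \cite[Cor.~3]{Forrester-Rains05}, the same circle of results that yields \eqref{J1} in the Laguerre case; feeding in the parameters $\alpha_n+1$, $\beta_n+1$, $1$ attached to the nodes $0$, $1$, $\mu_1,\dots,\mu_{n-1}$ produces precisely \eqref{Pd4}, its prefactor $\Gamma(\alpha_n+\beta_n+n+1)/(\Gamma(\alpha_n+1)\Gamma(\beta_n+1))$ being $\Gamma$ of the sum of all residue parameters over the $\Gamma$'s of the two boundary ones. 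As in Lemma~\ref{lem:P3}, one proves the distinct-$\mu_j$ case first (so that the eigenvalue equation for $T_nT_n^\dagger$ is genuinely equivalent to the secular equation, no eigenvalue coinciding with a $\mu_j$ with probability one), the general case following by continuity. I expect the only genuine work to be in this last reduction: carrying the gamma parameters correctly through the change of variables $\nu\mapsto\lambda$ so that the interior residues come out as standard $\Gamma[1,1]$ after the cancellation $(1+\tau_j^2)\mu_j=1$, and matching the output of \cite{Forrester-Rains05} term-by-term with \eqref{Pd4}.
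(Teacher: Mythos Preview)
Your argument is correct and follows essentially the same route as the paper's proof: both reduce the problem to a rank--one perturbation of a diagonal matrix via the singular value decomposition of the quotient, obtain a secular equation with independent gamma--distributed residues, and then invoke a known result on the root distribution of such random rational functions. The only cosmetic differences are that the paper works with $C_nB_n^{-1}$ and the variable $\tilde{\lambda}=\lambda/(1-\lambda)$ (citing \cite[Eqs.~(139)--(141)]{Adler-van_Moerbeke-Wang11} for the final step), whereas you use $B_nC_n^{-1}$ and carry the change of variables all the way back to $\lambda$, arriving at the slightly cleaner equation with poles at $0,\mu_1,\dots,\mu_{n-1},1$ before appealing to \cite[Cor.~3]{Forrester-Rains05}.
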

\begin{proof}[Proof of Lemma \ref{lem:P7}]
  It is more convenient to consider for $* = n$ or $n - 1$, 
  \begin{equation*}
    T_* = (B^{\dagger}_* B_*)^{-1} C^{\dagger}_* C_* \quad \text{and} \quad R_* = (C_* B^{-1}_*)^{\dagger} (C_* B^{-1}_*), 
  \end{equation*}
  and their eigenvalues $\{ \tilde{\lambda}_1, \dotsc, \tilde{\lambda}_n \}$ for
  $T_n$ and $R_n$, and  
  $\{ \tilde{\mu}_1, \dotsc, \tilde{\mu}_{n - 1} \}$ for
  $T_{n - 1}$ and $R_{n - 1}$, assumed in ascending order, noting that $T_*$ and $R_*$ have the same eigenvalues. The relations between $\tilde{\lambda}_i, \tilde{\mu}_i$ and $\lambda_i, \mu_i$ are
  \begin{equation*}
    \tilde{\lambda}_i = \frac{\lambda_i}{1 - \lambda_i}, \quad \tilde{\mu}_i = \frac{\mu_i}{1 - \mu_i}, \quad \text{if $\lambda_i \neq 1$ and $\mu_i \neq 1$}.
  \end{equation*}
  It is straightforward to check that the proposition is equivalent to the statement that the conditional PDF of $\{ \tilde{\lambda}_1, \dotsc, \tilde{\lambda}_n \}$ is equal to
  \begin{multline} \label{eq:alternative_Jacobi}
    \tilde{p}_{n,n-1}(\{\tilde{\lambda}_j\}_{j=1}^n, \{\tilde{\mu}_j \}_{j=1}^{n-1}) = \frac{\Gamma(\alpha_n + \beta_n + n + 1)}{\Gamma(\alpha_n + 1) \Gamma(\beta_n + 1)} \\
    \times \prod_{k=1}^n \tilde{\lambda}_k^{\alpha_n} (1 + \tilde{\lambda}_k)^{-(\alpha_n + \beta_n + n + 1)} \prod_{l=1}^{n-1} \tilde{\mu}^{-\alpha_{n} - 1}_l (1 + \tilde{\mu}_l)^{\alpha_n + \beta_n + n}
    {\prod_{1 \le j < k \le n} ( \lambda_j - \lambda_k) \over \prod_{1 \le j < k \le n-1} ( \mu_j - \mu_k) },
  \end{multline}
  subject to the interlacing constraint
  \begin{equation} \label{eq:inter_J}
    0 < \tilde{\lambda}_1 \leq \tilde{\mu}_1 \leq \tilde{\lambda}_2 \leq \tilde{\mu}_2 \leq \dotsb \leq \tilde{\lambda}_n.
  \end{equation}

  For notational simplicity, we denote $\vec{y} = (c_{1, n}, c_{2, n}, \dotsc, c_{n - 1, n})^T$, $\vec{z} = (b_{1, n}, b_{2, n}$, $\dotsc, b_{n - 1, n})^T$, $\eta = c_{n, n}$ and $\zeta = b_{n, n}$. Then
  \begin{equation*}
    R_n = Q^{\dagger}Q, \quad \text{where} \quad Q =
    \begin{bmatrix}
      C_{n - 1} B^{-1}_{n - 1} & \zeta^{-1}(\vec{y} - C_{n - 1}B^{-1}_{n - 1} \vec{z}) \\
      \bbO_{1 \times (n - 1)} & \eta \zeta^{-1}
    \end{bmatrix}.
  \end{equation*}
  Taking the singular value decomposition to $C_{n - 1}B^{-1}_{n - 1}$, we have $(n - 1)$-dimensional unitary matrices $U, V$ such that
  \begin{equation*}
    C_{n - 1}B^{-1}_{n - 1} = UDV^{-1}, \quad \text{where $D = \diag[\sqrt{\tilde{\mu}_1}, \dotsc, \sqrt{\tilde{\mu}_{n - 1}}]$}.
  \end{equation*}
  Introducing
  \begin{equation*}
    U \oplus \bbI_1 =
    \begin{bmatrix}
      U & \bbO_{(n - 1) \times 1} \\
      \bbO_{1 \times (n - 1)} & 1
    \end{bmatrix},
    \quad 
    V \oplus \bbI_1 =
    \begin{bmatrix}
      V & \bbO_{(n - 1) \times 1} \\
      \bbO_{1 \times (n - 1)} & 1
    \end{bmatrix},
  \end{equation*}
  we have
  \begin{equation*}
    Q = (U \oplus \bbI_1)
    \begin{bmatrix}
      D & \zeta^{-1} \vec{w} \\
      \bbO_{1 \times (n - 1)} & \eta \zeta^{-1}
    \end{bmatrix}
    (V \oplus \bbI_1)^{-1}, \quad \text{where $\vec{w} = U^{-1} \vec{y} - DV^{-1} \vec{z}$}.
  \end{equation*}
  Analogous to \eqref{eq:char_equation_Laguerre}, the eigenvalue equation for the matrix $R_n = Q^{\dagger}Q$ is given by
  \begin{equation} \label{eq:char_equation_Jacobi}
    0 = \lambda + \left( -\frac{\eta^2}{\zeta^2} + \sum^{n - 1}_{k = 1} \frac{\lvert w_k \rvert^2}{\zeta^2} \frac{\lambda}{\tilde{\mu}_k - \lambda} \right),
  \end{equation}
  where the $w_k$ are components of $\vec{w}$. Note that $\lvert w_1 \rvert^2, \dotsc, \lvert w_{n - 1} \rvert^2, \eta^2, \zeta^2$ are independent, and their distribution functions are positive with densities
  \begin{equation*}
    \lvert w_k \rvert^2 \mathop{=}\limits^{\rm d} \frac{1}{1 + \tilde{\mu}_k} e^{-\frac{t}{1 + \tilde{\mu}_k}}, \quad \eta^2 \mathop{=}\limits^{\rm d} \frac{1}{\Gamma(\alpha_n + 1)} t^{\alpha_n} e^{-t}, \quad \zeta^2 \mathop{=}\limits^{\rm d} \frac{1}{\Gamma(\beta_n + 1)} t^{\beta_n} e^{-t}.
  \end{equation*}
  Comparing \eqref{eq:char_equation_Jacobi} with \cite[Eq.~(131)]{Adler-van_Moerbeke-Wang11}, and using the calculations in \cite[Eq.~(139)--(141)]{Adler-van_Moerbeke-Wang11}, we prove \eqref{eq:alternative_Jacobi}. (In \cite[Eqs.~(139)--(141)]{Adler-van_Moerbeke-Wang11} the calculations are done for integer valued $\alpha_n$ and $\beta_n$ which are denoted as $M' - n$. But the method works also for real-valued $\alpha_n$ and $\beta_n$.)
\end{proof}

This result can be used to derive the analogue of Corollary \ref{C1} in the Jacobi case.

\begin{corollary}\label{C2}
Let $\alpha_1, \dotsc, \alpha_N$ be arbitrary real numbers greater than $-1$ and
\begin{equation} \label{eq:beta_arith_progr}
  \beta_k = \beta + N - k \quad \text{for $k = 1, \dotsc, N$ where $\beta > -1$}.
\end{equation}
Let $J_n$ be defined in \eqref{eq:defn_J_n} for all $n = 1, \dotsc, N$.
\begin{enumerate}[label=(\alph*)]
\item \label{enu:cor:C_2:a}
  Then the eigenvalues of the random matrix $J_n$, denoted by $\lambda^{(n)} = \{ \lambda^{(n)}_1, \dotsc, \lambda^{(n)}_n \}$ in descending order, is equal to
  \begin{multline}\label{Pd3}
    p_n(\lambda^{(n)}_1, \dotsc, \lambda^{(n)}_n) = {C^{-1}_n \over \prod_{1 \le j < k \le n} (\alpha_j - \alpha_k)} \\
    \times \prod_{k=1}^n (1 - \lambda^{(n)}_k)^{\beta_n}
    \prod_{1 \le j < k \le n} (\lambda^{(n)}_j - \lambda^{(n)}_k)    \det[ (\lambda^{(n)}_j)^{\alpha_k}]_{j,k=1,\dots,n,}
  \end{multline}
  where
  \begin{equation}\label{Pd2a}
    C_n = \prod_{l=1}^n  { \Gamma(\alpha_l + 1) \Gamma(\beta_l + 1) \over  \Gamma(\alpha_l + \beta + N + 1) }.
  \end{equation}
  In the special case that  $\alpha_j$ is  given by (\ref{XY1}) this reduces to 
  \begin{multline}\label{Pd21}
    p_n(\lambda^{(n)}_1, \dotsc, \lambda^{(n)}_n) = \\
    {1 \over C_{n, \theta, c} }   \prod_{k=1}^n  (\lambda^{(n)}_k)^{c} (1 - \lambda^{(n)}_k)^{\beta_n}
    \prod_{1 \le j < k \le n} (\lambda^{(n)}_j - \lambda^{(n)}_k)   ((\lambda^{(n)}_j)^\theta - (\lambda^{(n)}_k)^\theta), 
  \end{multline}
  where
  \begin{equation}\label{Pd2b}
    C_{n, \theta, c} = \prod_{l=1}^n { \Gamma(\theta (l-1) + c + 1)  \Gamma(\beta_l + 1) \over
      \Gamma(\theta (l-1) + c + \beta + N + 1)}
    \, \,  \theta^{n(n-1)/2}\prod_{l=1}^{n-1} l!.
  \end{equation}
\item \label{enu:cor:C_2:b}
  The joint probability distribution function of $\lambda^{(1)}, \lambda^{(2)}, \dotsc, \lambda^{(N)}$ is equal to
  \begin{multline} \label{eq:jpdf_Jac}
    p(\lambda^{(1)}, \dotsc, \lambda^{(N)}) = \frac{1}{C_N} \prod_{1 \leq j < k \leq N} (\lambda^{(N)}_j - \lambda^{(N)}_k) \prod^N_{n = 1} (\lambda^{(N)}_n)^{\alpha_N} (1 - \lambda^{(N)}_n)^{\beta} \\
  \times  \prod^N_{n = 1} \left( \prod^n_{i = 1} (\lambda^{(n)}_i)^{\alpha_n - \alpha_{n + 1} - 1} \right) 1_{\lambda^{(n)} \preceq \lambda^{(n + 1)}},
  \end{multline}
  where $C_n$ is defined in \eqref{Pd2a}, and $1_{\mu \preceq \lambda}$ is the indicator function of the region that satisfies inequality \eqref{eq:inter_J}.
\end{enumerate}
In case that some $\alpha_i$ are identical, we understand the formulas in the limiting sense with \lHopital's rule. 
\end{corollary}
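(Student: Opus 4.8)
The proof will parallel that of Corollary \ref{C1}, substituting the Jacobi conditional PDF \eqref{Pd4} of Proposition \ref{P7} (itself a consequence of Lemma \ref{lem:P7}) for the Laguerre one \eqref{J1}, and using in an essential way the arithmetic-progression structure \eqref{eq:beta_arith_progr} of the $\beta_k$. Throughout I assume the $\alpha_i$ distinct, the coincident case following by analytic continuation / \lHopital's rule exactly as in Corollary \ref{C1}.

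\emph{Part \ref{enu:cor:C_2:a}.} I would induct on $n$. For $n = 1$: since $Y$ and $Z$ are upper-triangular, $J_1 = \lvert y_{1,1} \rvert^2 / (\lvert z_{1,1} \rvert^2 + \lvert y_{1,1} \rvert^2)$ is a ratio $U/(U + W)$ of independent gamma variables with $U \dequal \Gamma[\alpha_1 + 1, 1]$ and $W \dequal \Gamma[\beta_1 + 1, 1]$ by \eqref{yk} and \eqref{eq:diagonal_Z}, hence is Beta-distributed with density $\Gamma(\alpha_1 + \beta_1 + 2) \Gamma(\alpha_1 + 1)^{-1} \Gamma(\beta_1 + 1)^{-1} \lambda^{\alpha_1} (1 - \lambda)^{\beta_1}$; since $\alpha_1 + \beta_1 + 2 = \alpha_1 + \beta + N + 1$ this is \eqref{Pd3} at $n = 1$. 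For the inductive step, write $\lambda_i = \lambda^{(n)}_i$, $\mu_i = \lambda^{(n-1)}_i$; by Lemma \ref{lem:P7} the $\lambda^{(n)}$ form a Markov chain, so it suffices to show that $\int_{R_{\preceq \lambda}} p_{n,n-1}(\{\lambda_j\}_{j=1}^n, \{\mu_j\}_{j=1}^{n-1})\, p_{n-1}(\mu_1, \dots, \mu_{n-1})\, d\mu_1 \cdots d\mu_{n-1} = p_n(\lambda_1, \dots, \lambda_n)$, with $p_{n,n-1}$ given by \eqref{Pd4} and $p_{n-1}$ by the inductive hypothesis \eqref{Pd3} (the $\mu$-region being the region \eqref{J2_Jacobi}, which given the $\lambda$'s coincides with the region $R_{\preceq \lambda}$ of Lemma \ref{L1}). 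The key point is that the $\mu$-dependent factor $(1 - \mu_l)^{-\beta_n - 1}$ from \eqref{Pd4} and the factor $(1 - \mu_l)^{\beta_{n-1}}$ from \eqref{Pd3} cancel completely, because \eqref{eq:beta_arith_progr} gives $\beta_{n-1} - \beta_n - 1 = 0$; likewise the two Vandermonde products in $\mu$ cancel. What survives of the $\mu$-integral is precisely the integral on the left of \eqref{A1}, so Lemma \ref{L1} applies verbatim. Substituting \eqref{A1} and collecting terms, the Gamma-constants telescope via $\Gamma(\alpha_n + \beta_n + n + 1) \Gamma(\alpha_n + 1)^{-1} \Gamma(\beta_n + 1)^{-1} C_{n-1}^{-1} = C_n^{-1}$ (using $\alpha_n + \beta_n + n + 1 = \alpha_n + \beta + N + 1$), while $\prod_{1 \le j < k \le n-1}(\alpha_j - \alpha_k)^{-1} \prod_{k=1}^{n-1}(\alpha_k - \alpha_n)^{-1} = \prod_{1 \le j < k \le n}(\alpha_j - \alpha_k)^{-1}$; this produces exactly \eqref{Pd3}. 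Finally \eqref{Pd21} follows from \eqref{Pd3} by the Vandermonde identity $\det[\lambda_j^{\theta(k-1) + c}]_{j,k=1}^n = \prod_j \lambda_j^c \prod_{1 \le j < k \le n}(\lambda_k^\theta - \lambda_j^\theta)$ together with $\prod_{1 \le j < k \le n}(\alpha_j - \alpha_k) = (-1)^{n(n-1)/2} \theta^{n(n-1)/2} \prod_{l=1}^{n-1} l!$ when $\alpha_j = \theta(j-1) + c$; the two factors $(-1)^{n(n-1)/2}$ cancel, and the leftover $\theta^{n(n-1)/2} \prod_l l!$ combines with $C_n$ to give the normalisation \eqref{Pd2b}.

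\emph{Part \ref{enu:cor:C_2:b}.} By Lemma \ref{lem:P7} the sequence $\lambda^{(1)}, \dots, \lambda^{(N)}$ is an inhomogeneous Markov chain with transition density \eqref{Pd4} from level $n - 1$ to level $n$, so the joint law is obtained by multiplying \eqref{Pd4} for $n = 2, \dots, N$ together with the $n = 1$ density $p_1$. In the product, the Vandermonde factors $\prod_{j < k}(\lambda^{(n)}_j - \lambda^{(n)}_k)$ telescope to the single top-level factor; the factors $(1 - \lambda^{(n)}_k)$ cancel at every level $n < N$ (again because $\beta_n - \beta_{n+1} - 1 = 0$), leaving only $(1 - \lambda^{(N)}_k)^{\beta_N}$ with $\beta_N = \beta$; the power of $\lambda^{(n)}_k$ at level $n < N$ becomes $\alpha_n - \alpha_{n+1} - 1$ while $\lambda^{(N)}_k$ keeps power $\alpha_N$; and the Gamma constants telescope to $C_N^{-1}$. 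This is \eqref{eq:jpdf_Jac}; the argument is identical to that of \cite[Cor.~1]{Adler-van_Moerbeke-Wang11} and to the proof of Part \ref{enu:cor:C_1:b}.

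The only substantive point, as opposed to bookkeeping, is the structural observation that the arithmetic-progression choice \eqref{eq:beta_arith_progr} makes all the Jacobi weights $(1 - \mu_l)$ and $(1 - \lambda^{(n)}_k)$ cancel in the relevant products, so that the Jacobi computation collapses onto the already-established Laguerre integral \eqref{A1}; the remaining care is in tracking signs and Gamma factors and in the routine reduction of the coincident-$\alpha_i$ case to the generic one.
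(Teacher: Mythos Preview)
Your proof is correct and follows essentially the same approach as the paper's own proof: induction on $n$ with the Beta base case, the Markov property from Lemma~\ref{lem:P7}, and reduction of the $\mu$-integral to Lemma~\ref{L1}, with Part~\ref{enu:cor:C_2:b} obtained by telescoping the transition densities. You are in fact more explicit than the paper about the crucial cancellation $(1-\mu_l)^{\beta_{n-1}-\beta_n-1}=1$ that makes the Jacobi integral collapse to the Laguerre one of Lemma~\ref{L1}; the paper records this only implicitly in its ``substituting for the integrand'' step, though it does flag the point afterwards in the remark that the arithmetic-progression assumption \eqref{eq:beta_arith_progr} is crucial for the application of Lemma~\ref{L1}.
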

\begin{proof}[Proof of Part \ref{enu:cor:C_2:a}]
  We prove the case that the $\alpha_i$ are distinct, and the general result follows by analytical continuation.

  In the case $n=1$, $J_1$ defined in \eqref{eq:defn_J_n} is equal in distribution to $\chi_{2(\alpha_1 + 1)}^2/( \chi_{2(\beta_1 + 1)}^2  +
  \chi_{2(\alpha_1 + 1)}^2)$. It is a classical result \cite[Sec.~25.2]{Balakrishnan-Johnson-Kotz95} that this combination of random variables is distributed according to the beta distribution ${\rm B}[\alpha_1 + 1, \beta_1 + 1]$ and thus has for its PDF
  \begin{equation*}
    {\Gamma(\alpha_1 + \beta_1 + 2) \over \Gamma(\alpha_1+1) \Gamma(\beta_1 + 1)}
    x^{\alpha_1} (1 - x)^{\beta_1},
  \end{equation*}
  in agreement with (\ref{Pd3}) with $n=1$. Proceeding by induction, let us now assume that
  (\ref{Pd3}) is valid in the case $n-1$. For notational simplicity, we denote $\lambda^{(n)}_i$ by $\lambda_i$ and $\lambda^{(n - 1)}_i$ by $\mu_i$. Our task is to check the validity of 
  \begin{equation*}
    \int_{R^J_{\preceq \lambda}} p_{n,n-1}(\{\lambda_j\}_{j=1}^n,
    \{\mu_j \}_{j=1}^{n-1}) p_{n-1}(\mu_1,\dots,\mu_{n-1}) \, d\mu_1 \cdots d \mu_{n-1} =
    p_n(\lambda_1,\dots,\lambda_n),
  \end{equation*}
  which is analogous to \eqref{PE}, but the integral domain $R^J_{\preceq \lambda}$ for $\{ \mu_1, \dotsc, \mu_{n - 1} \}$ is defined by \eqref{J2_Jacobi}, and the $p_n$, $p_{n - 1}$ and $p_{n, n - 1}$ are defined differently. Recall the conditional PDF $p_{n,n-1}(\{\lambda_j\}_{j=1}^n, \{\mu_j \}_{j=1}^{n-1})$ of $\{ \lambda_1, \dotsc, \lambda_n \}$ with fixed $\{ \mu_1, \dotsc, \mu_{n - 1} \}$, such that the interlacing condition \eqref{J2_Jacobi} is satisfied, defined in \eqref{Pd4}.  Substituting for the integrand, then making use of Lemma \ref{L1} shows that the LHS is equal to
\begin{multline*}
  {1 \over C_{n-1}} 
  {\Gamma(\alpha_n + \beta_n + n + 1) \over \Gamma(\alpha_n) \Gamma(\beta_n + 1)} \\
  \times \prod_{1 \leq j < k \leq n} \frac{1}{\alpha_j - \alpha_k} \prod_{k=1}^n  (1 - \lambda_k)^{\beta_n}
  \prod_{1 \le j < k \le n} (\lambda_j - \lambda_k)
  \det[ \lambda_{j}^{\alpha_k}]_{j,k=1,\dots,n}. 
\end{multline*}
Comparison with (\ref{Pd3}) and (\ref{Pd2a}) shows that this is precisely the RHS.

We deduce (\ref{Pd21}) from (\ref{Pd3}) in the same way as we deduced (\ref{Pd2}) from (\ref{Pd1x}).
\end{proof}

\begin{proof}[Proof of Part \ref{enu:cor:C_2:b}]
  By Lemma \ref{lem:P7}, we have that the eigenvalues $\lambda^{(1)}, \lambda^{(2)}, \dotsc, \lambda^{(N)}$ constitute an inhomogeneous Markov chain with the transition probability density function from time $n - 1$ to time $n$ given by \eqref{Pd4}. Thus the joint distribution function of $\lambda^{(n)}$ ($n = 1, \dotsc, N$) is obtained by multiplying \eqref{Pd4} repeatedly. The argument is the same as the proof of \cite[Cor. 1]{Adler-van_Moerbeke-Wang11} and we omit the details.
\end{proof}
\begin{remark}
  The assumption \eqref{eq:beta_arith_progr} that $\beta_k$ are in arithmetic progression with common difference $-1$ is crucial in the application of Lemma \ref{L1}. By the symmetry of the model, it is also possible to let $\beta_k$ be arbitrary and $\alpha_k$ in arithmetic progression with common difference $-1$.
\end{remark}

From the joint probability distribution function \eqref{eq:jpdf_Jac}, we have, as a natural generalisation of \cite[Thm.~3(d)]{Adler-van_Moerbeke-Wang11}:
\begin{prop} \label{prop:corr_kernel_J}
  Let the matrices $J_n$ and the eigenvalues $\lambda^{(1)}, \dotsc, \lambda^{(N)}$ be defined as in Corollary \ref{C2}. The eigenvalues $\lambda^{(1)}, \dotsc, \lambda^{(N)}$ constitute a determinantal process, and the correlation kernel of $\lambda^{(n_1)}$ and $\lambda^{(n_2)}$ is given by
  \begin{multline} \label{eq:corr_kernel_J}
    K(n_1, x; n_2, y) = \frac{-1}{2\pi i} \oint_{\Gamma_{\alpha}} \frac{x^{-w - 1} y^w}{\prod^{n_2}_{l = n_1 + 1} (w - \alpha_l)} dw 1_{x < y} 1_{n_1 < n_2} \\
    + \frac{1}{(2\pi i)^2} \oint_{\Sigma} dz \oint_{\Gamma_{\alpha}} dw \frac{x^{-z - 1} y^w}{(z - w)} \frac{\Gamma(w + \beta + N + 1) \Gamma(z + 1)}{\Gamma(z + \beta + N + 1)\Gamma(w + 1)} \frac{\prod^{n_1}_{k = 1} (z - \alpha_k)}{\prod^{n_2}_{l = 1} (w - \alpha_l)},
  \end{multline}
  where
  \begin{enumerate}
  \item \label{enu:prop_corr_kernel_J:1}
    if $\beta \in \intZ$, $\Gamma_{\sigma}$ is a positively oriented contour enclosing $\alpha_1,\dots,\alpha_N$, while $\Sigma$ is a contour going counterclockwise enclosing $-1,-2,\dots, -(\beta + N)$ and the contour $\Gamma_\alpha$, and
  \item 
    if $\beta \notin \intZ$, $\Gamma_{\alpha}$ is a Hankel like contour, starting at $-\infty - i\epsilon$, running parallel to the negative real axis, enclosing the poles $w = -(\beta + N + k)$ with $k \in \intZ_+$ and $w = \alpha_1, \dotsc, \alpha_N$, and finishing at $-\infty + i\epsilon$ after again running parallel to the negative real axis, while $\Sigma$ is a Hankel like contour that loops around $\Gamma_{\alpha}$.
  \end{enumerate}
\end{prop}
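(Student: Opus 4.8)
The plan is to derive \eqref{eq:corr_kernel_J} from the joint PDF \eqref{eq:jpdf_Jac} by the standard Eynard--Mehta machinery for determinantal chains, repeating the derivation of \cite[Thm.~3(d)]{Adler-van_Moerbeke-Wang11} with its integrality hypotheses removed, exactly as Proposition \ref{prop:corr_kernel_L} was obtained from \cite[Thm.~3(c)]{Adler-van_Moerbeke-Wang11}. First I would read off from \eqref{eq:jpdf_Jac} the data of such a chain: a single Vandermonde $\prod_{1\le j<k\le N}(\lambda^{(N)}_j-\lambda^{(N)}_k)$ at the top level $N$, the top one-body weight $x^{\alpha_N}(1-x)^{\beta}$, the intermediate one-body weights $x^{\alpha_n-\alpha_{n+1}-1}$ at levels $n<N$, and the interlacing indicators $1_{\lambda^{(n)}\preceq\lambda^{(n+1)}}$. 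Writing each indicator as $\det[1_{\lambda^{(n)}_i<\lambda^{(n+1)}_j}]$, with the mismatch of cardinalities absorbed by the usual virtual-variable device (for the orientation of \eqref{eq:inter_J} the extra particle is placed at $+\infty$), turns \eqref{eq:jpdf_Jac} into a product of determinants, so the Eynard--Mehta theorem gives the correlation kernel in the form $-\phi^{(n_1,n_2)}(x,y)\,1_{n_1<n_2}+\sum_{i,j}\Psi^{(n_1)}_i(x)[\mathcal{G}^{-1}]_{ij}\Phi^{(n_2)}_j(y)$, with $\phi^{(n_1,n_2)}$ the composition of the intertwining kernels and $\mathcal{G}$ the associated Gram matrix.

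Next I would evaluate these ingredients in closed form. Evaluating the composition $\phi^{(n_1,n_2)}$ --- an iterated integral of products of $1_{\cdot<\cdot}$ against the intermediate monomial weights --- and rewriting it as a single contour integral reproduces the first term $-\frac{1}{2\pi i}\oint_{\Gamma_\alpha}\frac{x^{-w-1}}{\prod_{l=n_1+1}^{n_2}(w-\alpha_l)}\,dw\,1_{x<y}1_{n_1<n_2}$ of \eqref{eq:corr_kernel_J}, exactly as in the Laguerre case \eqref{eq:corr_kernel_L} since the intermediate weights are the same there. For the double sum, the relevant biorthogonal system is that of the monomials $x^{\alpha_k}$ against the top weight $x^{\alpha_N}(1-x)^{\beta}$, \ie\ the Jacobi-\Pineiro\ multiple orthogonal polynomial structure of \cite{Adler-van_Moerbeke-Wang11}; the assumption \eqref{eq:beta_arith_progr} that the $\beta_k$ form an arithmetic progression of step $-1$, which via Lemma \ref{L1} already underlies \eqref{eq:jpdf_Jac}, is what makes $\mathcal{G}$ and $\mathcal{G}^{-1}$ explicit. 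Assembling $\Psi$, $\mathcal{G}^{-1}$, $\Phi$, together with Euler Beta integrals attached to the top-level weight that produce the Gamma quotient $\Gamma(w+\beta+N+1)/\Gamma(z+\beta+N+1)$ (the combination $\beta+N$ being dictated by \eqref{eq:beta_arith_progr}) and the $\Gamma(z+1)/\Gamma(w+1)$ coming from the $x^{\alpha_k}$-basis as in the Laguerre case, yields the double contour integral in \eqref{eq:corr_kernel_J}, up to the conjugation freedom intrinsic to kernels of determinantal processes.

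The last step, which I expect to be the main obstacle, is to pin down the contours, and this is where the dichotomy $\beta\in\intZ$ versus $\beta\notin\intZ$ and the analytic continuation in $\beta$ enter. When $\beta\in\intZ_{\geq 0}$ the ratios $\Gamma(z+1)/\Gamma(z+\beta+N+1)=\prod_{m=1}^{\beta+N}(z+m)^{-1}$ and $\Gamma(w+\beta+N+1)/\Gamma(w+1)=\prod_{m=1}^{\beta+N}(w+m)$ are rational --- the latter a polynomial --- so the integrand has $z$-poles only at $-1,\dots,-(\beta+N)$ and $w$-poles only at $\alpha_1,\dots,\alpha_N$, and one recovers verbatim the small-loop prescription of case \ref{enu:prop_corr_kernel_J:1}, which is \cite[Thm.~3(d)]{Adler-van_Moerbeke-Wang11}. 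For $\beta\notin\intZ$ these become genuine ratios of Gamma functions: the $z$-integrand acquires a simple pole at every negative integer, accumulating at $-\infty$, which forces $\Sigma$ to be a Hankel contour, while the $w$-integrand acquires simple poles at $w=-(\beta+N+k)$, $k\in\intZ_+$, which forces $\Gamma_\alpha$ to open into a Hankel contour enclosing these together with $\alpha_1,\dots,\alpha_N$; this is case (2). I would justify the passage from one picture to the other by checking that both contour integrals agree with the common analytic continuation in $\beta$, using Stirling's formula on the Gamma quotient together with the factors $x^{-z}$ and $(y/x)^{w}$ to control convergence of the tails running to $-\infty$. Since the remainder of the argument of \cite[Thm.~3(d)]{Adler-van_Moerbeke-Wang11} uses only the structure \eqref{eq:jpdf_Jac} and never the integrality of the $\alpha_j$ or of $\beta$, it carries over without change, which is the precise sense in which Proposition \ref{prop:corr_kernel_J} is a natural generalisation.
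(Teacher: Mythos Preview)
Your proposal is correct and follows the same route as the paper: the paper simply states that the argument of \cite[Thm.~3(d)]{Adler-van_Moerbeke-Wang11}, based on the joint density \eqref{eq:jpdf_Jac} and the standard Eynard--Mehta machinery, goes through verbatim without the integrality hypotheses on the $\alpha_j$ and $\beta$, and notes that only case \ref{enu:prop_corr_kernel_J:1} of the contours appears in \cite{Adler-van_Moerbeke-Wang11}. Your sketch is in fact more explicit than the paper's own treatment, in particular your discussion of why the Hankel contours of case (2) are forced for $\beta\notin\intZ$ and how analytic continuation in $\beta$ matches the two prescriptions; the paper leaves this point implicit.
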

The proof of the proposition is by a standard argument of determinantal process based on the joint probability density function \eqref{eq:jpdf_Jac}. In \cite[Thm.~3(d)]{Adler-van_Moerbeke-Wang11}, the proposition for non-negative integer $\beta$ and non-negative integer $\alpha_i$ under condition \eqref{IN} is proved. Since the proof does not use these additional conditions, it is also a complete proof to Proposition \ref{prop:corr_kernel_J}. Note that in \cite{Adler-van_Moerbeke-Wang11}, only case (\ref{enu:prop_corr_kernel_J:1}) of the contours occurs.

\medskip

It is possible to use Corollary \ref{C1}\ref{enu:cor:C_1:a} to give an alternative derivation of Corollary \ref{C2}\ref{enu:cor:C_2:a}, in the case that if $\beta_1, \dotsc, \beta_N$ satisfies \eqref{eq:beta_arith_progr} with $\beta  \in \intZ^+$. In this case we note that the eigenvalue PDF of $J_n$ is the same as the eigenvalue PDF of $\hat{J}_n$ defined in \eqref{eq:defn_J_hat_n}, where the height of the random matrix $A$ is $M' = \beta + N$. We also need a recent result due to Kuijlaars and Stivigny \cite{Kuijlaars-Stivigny14}. Below we give the derivation without the tedious calculation of the normalisation constant of the PDF.

\begin{prop}[Special case of {\cite[Thm.~2.1]{Kuijlaars-Stivigny14}}] \label{P5}
Let the matrix $W$ be an $n \times n$ random matrix such that $W^\dagger W$  has an eigenvalue PDF
proportional to
\begin{equation}\label{2.3w}
\prod_{1 \le j < k \le n} (x_k - x_j) \det [ f_{k-1}(x_j) ]_{j,k=1}^n
\end{equation}
for some $\{f_{k-1}(x) \}_{k=1,\dots,n}$.  For $\nu \ge 0$, let $G$ be an $(n + \nu) \times n$ random matrix whose entries are in independent standard complex Gaussian distribution. The squared singular values of $GW$, or equivalently the eigenvalues of $(GW)^{\dagger} GW$, have PDF proportional to
\begin{equation}\label{2.3wA}
\prod_{1 \le j < k \le n} (y_k - y_j)  \det [ g_{k-1}(y_j) ]_{j,k=1}^n,
\end{equation}
where
\begin{equation}\label{2.3wB}
g_k(y) = \int_0^\infty x^\nu e^{-x} f_k \Big ( {y \over x} \Big ) \, {dx \over x}, \qquad
(k=0,\dots,n-1).
\end{equation}
\end{prop}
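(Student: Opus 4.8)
The plan is to condition on $W$, reduce to the case of diagonal $W$, compute the resulting conditional law of the squared singular values of $GW$, and then integrate against the given PDF of the squared singular values of $W$. Write the singular value decomposition $W = U \Sigma V^\dagger$ with $U, V \in \mathrm{U}(n)$ and $\Sigma = \diag(\sqrt{x_1}, \dots, \sqrt{x_n})$, where $x_1, \dots, x_n$ are the eigenvalues of $W^\dagger W$. Since $G$ has i.i.d.\ complex Gaussian entries, $GU \dequal G$, and right multiplication by the unitary $V^\dagger$ leaves singular values unchanged; hence, conditionally on $W$, the squared singular values of $GW$ are distributed as the eigenvalues of $(G\Sigma)^\dagger(G\Sigma) = X^{1/2}(G^\dagger G)X^{1/2}$ with $X = \diag(x_1, \dots, x_n)$, a law that depends on $W$ only through $x_1, \dots, x_n$. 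Consequently the PDF \eqref{2.3wA} to be established equals $\int p(\{y_i\} \mid \{x_j\})\, p_W(\{x_j\})\, dx$, where $p_W \propto \prod_{1 \le j < k \le n}(x_k - x_j)\det[f_{k-1}(x_j)]_{j,k=1}^n$ is the hypothesised PDF and $p(\cdot \mid \cdot)$ is the conditional PDF just described.

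The main step is to evaluate this conditional PDF. The matrix $H := G^\dagger G$ is a complex Wishart matrix, with density proportional to $(\det H)^\nu e^{-\operatorname{tr} H}$ on positive definite Hermitian matrices (this uses $\nu \ge 0$). The linear change of variables $H \mapsto Y := X^{1/2} H X^{1/2}$ on the space of Hermitian matrices has Jacobian $(\det X)^n$, so $Y$ has density proportional to $(\det X)^{-\nu - n}(\det Y)^\nu e^{-\operatorname{tr}(Y X^{-1})}$ on positive definite Hermitian matrices, the prefactor depending on the $x_j$ but not on $Y$. Diagonalising $Y = Q\, \diag(y_1, \dots, y_n)\, Q^\dagger$ and invoking the Weyl integration formula ($dY \propto \prod_{1 \le j < k \le n}(y_k - y_j)^2\, dy\, dQ$) reduces the angular part to the Harish-Chandra--Itzykson--Zuber integral $\int_{\mathrm{U}(n)} e^{-\operatorname{tr}(Q\, \diag(y_1, \dots, y_n)\, Q^\dagger X^{-1})}\, dQ \propto \det[e^{-y_i/x_j}]_{i,j=1}^n \big/ \big(\prod_{j<k}(y_k - y_j) \prod_{j<k}(x_j^{-1} - x_k^{-1})\big)$. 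Since $\prod_{j<k}(x_j^{-1} - x_k^{-1}) = (\det X)^{-(n-1)} \prod_{j<k}(x_k - x_j)$, collecting all surviving powers of $x_j$ into the columns of the determinant and the powers of $y_i$ into its rows yields
\[
  p(\{y_i\} \mid \{x_j\}) \;\propto\; \frac{\prod_{1 \le j < k \le n}(y_k - y_j)}{\prod_{1 \le j < k \le n}(x_k - x_j)}\, \det\!\left[\frac{y_i^\nu e^{-y_i/x_j}}{x_j^{\nu + 1}}\right]_{i,j=1}^n,
\]
with a proportionality constant free of both $x$ and $y$ (its $x$-independence can be checked a posteriori by integrating the right-hand side over $y$ via Andr\'eief's identity).

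It remains to integrate. Multiplying the last display by $p_W \propto \prod_{j<k}(x_k - x_j)\det[f_{k-1}(x_j)]_{j,k=1}^n$ cancels the Vandermonde in the $x_j$, leaving $\prod_{j<k}(y_k - y_j)$ times the integral over $x_1, \dots, x_n$ of a product of two $n \times n$ determinants. Andr\'eief's (Heine's) identity turns this into
\[
  p(\{y_i\}) \;\propto\; \prod_{1 \le j < k \le n}(y_k - y_j)\; \det\!\left[\int_0^\infty \frac{y_i^\nu e^{-y_i/x}}{x^{\nu + 1}}\, f_{k-1}(x)\, dx\right]_{i,k=1}^n,
\]
and the substitution $x \mapsto y_i/x$ in the inner integral rewrites $\int_0^\infty x^{-\nu - 1} y_i^\nu e^{-y_i/x} f_{k-1}(x)\, dx$ as $\int_0^\infty x^\nu e^{-x} f_{k-1}(y_i/x)\, dx/x = g_{k-1}(y_i)$, precisely \eqref{2.3wB}. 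This gives \eqref{2.3wA}.

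The step I expect to be the main obstacle is the evaluation of the conditional PDF $p(\{y_i\} \mid \{x_j\})$: one must combine the change of variables on Hermitian matrices (with its Jacobian $(\det X)^n$), the Weyl integration formula, and the HCIZ integral, and carefully track every factor of $\det X$ so as to confirm that the Vandermonde in the $x_j^{-1}$ thrown off by HCIZ collapses to exactly $\prod_{j<k}(x_k - x_j)$ --- which is what makes the cancellation against $p_W$ clean --- and that the residual constant is genuinely independent of $x$ and $y$. The remaining ingredients (Andr\'eief's identity and the one-dimensional change of variable) are routine.
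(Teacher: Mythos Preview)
The paper does not prove this proposition at all: it is quoted as a special case of \cite[Thm.~2.1]{Kuijlaars-Stivigny14} and then immediately applied. Your argument is a correct self-contained proof. The reduction via the singular value decomposition and bi-unitary invariance of the Gaussian is right; the conditional density computation is accurate (the Jacobian $(\det X)^n$ for $H\mapsto X^{1/2}HX^{1/2}$ on Hermitian matrices, the complex Wishart density $(\det H)^\nu e^{-\operatorname{tr}H}$ valid for $\nu\ge 0$, the HCIZ integral, and the bookkeeping $\prod_{j<k}(x_j^{-1}-x_k^{-1}) = \pm(\det X)^{-(n-1)}\prod_{j<k}(x_k-x_j)$ all check out and leave exactly one Vandermonde in $x$ to cancel against $p_W$); and the Andr\'eief step together with the substitution $x\mapsto y_i/x$ produces precisely $g_{k-1}(y_i)$ as in \eqref{2.3wB}. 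Your a posteriori check that the proportionality constant is $x$-independent is also valid: integrating $\Delta(y)\det[y_i^\nu x_j^{-\nu-1}e^{-y_i/x_j}]$ over $y$ via Andr\'eief gives $n!\prod_j\Gamma(\nu+j)\,\Delta(x)$, which cancels $\Delta(x)$ in the denominator. This is essentially the standard route to such product-of-random-matrices identities and is in the same spirit as the proof in \cite{Kuijlaars-Stivigny14}.
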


\medskip
In the application of Proposition \ref{P5}, we let $W = Y^{-1}_{n \times n}$ and $G = A_{M' \times n}$, such that $\nu = M' - n = \beta_n = \beta + N - n$. In 
Corollary \ref{C1} we obtained the eigenvalue PDF of $Y^\dagger_{n \times n} Y_{n \times n}$. From this, by the change of variables $\lambda_j \mapsto 1/x_j$, we have that the eigenvalue PDF of $W^{\dagger} W =  (Y_{n \times n} Y^{\dagger}_{n \times n})^{-1}$ is proportional to
\begin{equation*}
  \prod_{k=1}^n x_k^{-(n+1)} e^{-1/ x_k}
  \prod_{1 \le j < k \le n} (x_k - x_j)    \det[ x_{j}^{-\alpha_k}]_{j,k=1,\dots,n},
\end{equation*}
and so we can take
\begin{equation*}
  f_k(x) = x^{-(n+1)} e^{-1/x} x^{-\alpha_k}.
\end{equation*}
Then substituting this in (\ref{2.3wB}) gives that
\begin{equation*}
  g_k(y) \propto y^{\nu} (1 + y)^{-(\nu + \alpha_k + n + 1)}.
\end{equation*}
Hence we deduce that the eigenvalue PDF of $(GW)^{\dagger} GW = (A_{M' \times n}Y^{-1}_{n \times n})^{\dagger} A_{M' \times n}Y^{-1}_{n \times n}$, which is the same as the eigenvalue PDF of
\begin{equation*}
  \hat{S}_n = (Y^\dagger_{n \times n} Y_{n \times n})^{-1} A_{M' \times n}^\dagger  A_{M' \times n},
\end{equation*}
has its eigenvalue PDF $p^{\hat{S}}_n(\sigma_1, \dotsc, \sigma_n)$ proportional to
\begin{equation*}
  \prod_{l=1}^n \sigma_l^\nu (1 + \sigma_l)^{-(\nu + n + 1)} \prod_{1 \le j < k \le n} (\sigma_k - \sigma_j)
  \det [ (1 + \sigma_j)^{-\alpha_k} ]_{j,k=1}^n.
\end{equation*}
The eigenvalues $\{\lambda_j\}$ of $\hat{J}_n$ and the eigenvalues $\{ \sigma_j \}$ of $\hat{S}_n$ are related by
\begin{equation}\label{B1}
  \lambda_j = {1 \over \sigma_j + 1}.
\end{equation}
Changing variables to $\{\lambda_j\}$ according to \eqref{B1} gives \eqref{Pd3} up to the normalisation constant $C_n$.

\section{The global density --- characteristic polynomial approach} \label{sec:characteristic_poly}

\subsection{The Laguerre Muttalib--Borodin ensemble}\label{S3.1}

Recent results \cite{Kuijlaars-Stivigny14,Forrester-Liu14} have revealed an intimate relationship between random
matrix products and the Laguerre Muttalib--Borodin ensemble. To explain this requires
the introduction of a family of integer sequences --- the Fuss--Catalan numbers --- parametrised by
$s \in \mathbb R^+$  and specified by
\begin{equation} 
 C_{s}(k)=\frac{1}{sk+1}\binom{sk+k}{k}, \qquad k=0,1,2,\dots \label{FCnumber}
 \end{equation}
For general $s > 0$ these are known to be moments of a PDF --- the Fuss--Catalan density --- with compact support $[0,L]$, $L > 0$ \cite{Banica-Belinschi-Capitaine-Collins11, Mlotkowski10}, and they uniquely define the PDF.

Consider first a product of $s$ $N\times N$ matrices $X_1,\dots,X_s$ with each containing independent, identically distributed zero mean, unit standard deviation random variables. Alternatively, for one such matrix $X_1$ say, consider the power $X_1^s$. In either case, ask for the limiting spectral density of the squared singular values after dividing by $N^s$ --- what results is precisely the Fuss--Catalan density with parameter $s$  \cite{Alexeev-Gotze-Tikhomirov10,Banica-Belinschi-Capitaine-Collins11,Nica-Speicher06}. 

Consider now the Laguerre Muttalib--Borodin ensemble defined by \eqref{1.1} and \eqref{L}. Let $\lambda^{(N)}_1, \dotsb, \lambda^{(N)}_N$ be the eigenvalues in the $N$-dimensional ensemble. As $N \to \infty$, by standard techniques we have that the empirical distribution of the scaled eigenvalues $N^{-1}\lambda^{(N)}_1, \dotsc, N^{-1}\lambda^{(N)}_N$ converges in distribution to a limiting probability distribution, also known as the equilibrium measure of the model; see \cite[Sec.~6.4]{Deift99}. The equilibrium measure is characterised as the minimum of a variation problem; see Claeys and Romano \cite[Eq.~(1.22)]{Claeys-Romano14}. By interpreting the recent results of \cite{Claeys-Romano14}, Forrester and Liu \cite{Forrester-Liu14} have identified the global density (i.e.~density scaled by an appropriate power of $N$ to have compact support) for the Laguerre Muttalib--Borodin ensemble in terms of the  Fuss--Catalan  density for general $s \ge 1$.
 
 \begin{prop}\label{Pfl}
Suppose $\theta \ge 1$ (this is for technical reasons in the working of \cite{Claeys-Romano14}; in \cite{Forrester-Liu14} it is commented that the same result is expected to hold for all $\theta > 0$ and in fact this has recently been established in \cite{Forrester-Liu-Zinn_Justin14}). After changing variables $x^{(N)}_k = (\lambda^{(N)}_k/(N\theta))^{\theta}$ where $\lambda^{(N)}_1, \dotsc, \lambda^{(N)}_N$ are the eigenvalues in the Laguerre Muttalib--Borodin ensemble defined by \eqref{1.1} and \eqref{L}, the empirical distribution of $x^{(N)}_1, \dotsc, x^{(N)}_N$ converges to the Fuss--Catalan density with parameter $\theta$ as $N \to \infty$.
 \end{prop}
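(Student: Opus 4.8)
The plan is to prove the proposition by combining three ingredients: the standard large-$N$ reduction of the ensemble \eqref{1.1}--\eqref{L} to an equilibrium problem, the explicit solution of that problem due to Claeys and Romano \cite{Claeys-Romano14} (as reworked by Forrester and Liu \cite{Forrester-Liu14}), and a short computation identifying the resulting limiting measure with $\rho^{\FC}$ through its spectral curve.

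I would begin with the potential-theoretic set-up. Writing $\lambda^{(N)}_k=N t^{(N)}_k$ in \eqref{1.1}--\eqref{L}, the logarithm of the density equals $-N^2$ times a suitably normalised version of the energy functional
\[
I_\theta[\mu]=-\iint\log|s-t|\,d\mu(s)\,d\mu(t)-\iint\log|s^\theta-t^\theta|\,d\mu(s)\,d\mu(t)+\int t\,d\mu(t),
\]
plus an $\bigO(N)$ contribution from the factor $\lambda^c$; by the general theory (see \cite[Sec.~6.4]{Deift99}, and \cite{Claeys-Romano14} for the two-body Muttalib--Borodin interaction) the empirical measure of $\{t^{(N)}_k\}$ converges weakly to the unique minimiser $\mu_\theta$ of $I_\theta$ over probability measures on $[0,\infty)$, i.e.\ to the equilibrium measure of \cite[Eq.~(1.22)]{Claeys-Romano14}. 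The change of variables $x^{(N)}_k=(\lambda^{(N)}_k/(N\theta))^\theta$ then means that the empirical measure of $\{x^{(N)}_k\}$ converges to the push-forward $\nu_\theta$ of $\mu_\theta$ under $t\mapsto(t/\theta)^\theta$, whose moments are $\int x^k\,d\nu_\theta(x)=\theta^{-\theta k}\int t^{\theta k}\,d\mu_\theta(t)$.

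Next I would import from \cite{Claeys-Romano14,Forrester-Liu14} the form of the solution: the resolvent $G_\theta(z)=\int d\mu_\theta(t)/(z-t)$ is an algebraic function, and after the rescaling $t\mapsto(t/\theta)^\theta$ the resolvent $H_\theta(z)=\int d\nu_\theta(x)/(z-x)$ satisfies the spectral-curve relation $z=y^{\theta+1}/(y-1)$ with $y=z\,H_\theta(z)$. The point is that this is precisely the spectral curve of the Fuss--Catalan density: the generating function $B(w)=\sum_{k\ge0}C_\theta(k)w^k$ of the Fuss--Catalan numbers \eqref{FCnumber} satisfies $B=1+wB^{\theta+1}$, so the Stieltjes transform $G^{\FC}(z)=z^{-1}B(1/z)$ of $\rho^{\FC}$ also satisfies $z=y^{\theta+1}/(y-1)$ with $y=z\,G^{\FC}(z)$. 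Since this equation has a unique branch with $y\to1$, equivalently with the resolvent $\sim 1/z$, as $z\to\infty$, it follows that $H_\theta\equiv G^{\FC}$, and hence $\nu_\theta=\rho^{\FC}$, because a compactly supported probability measure is determined by its Stieltjes transform (equivalently, the Fuss--Catalan numbers uniquely determine their PDF, as recalled below \eqref{FCnumber}). This is the assertion \eqref{1.8}.

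The only genuinely nontrivial step is the extraction of the clean spectral curve $z=y^{\theta+1}/(y-1)$ for $\nu_\theta$ from the Claeys--Romano solution --- that is, matching their parametrisation of $\mu_\theta$ (obtained via a vector equilibrium problem with a particular scaling of the external field) against the Fuss--Catalan curve, and carrying the change of variables $t\mapsto(t/\theta)^\theta$ through all the normalisation constants, which is exactly what fixes the factor $N\theta$ in the statement; this is the computation performed in \cite{Forrester-Liu14}, so in practice the proof amounts to quoting that result and appending the identification above. The restriction $\theta\ge1$ is inherited from a regularity hypothesis in \cite{Claeys-Romano14} that guarantees the expected one-cut structure of $\mu_\theta$; it is not removed by the involution $\lambda\mapsto\lambda^{1/\theta}$ of \eqref{Cv} since that map alters $V$, which is why the range $0<\theta<1$ was handled separately in \cite{Forrester-Liu-Zinn_Justin14}, and why Sections~\ref{S3.1} and \ref{subsec_multiple_Laguerre_ensemble} of the present paper supply independent proofs valid for all $\theta>0$.
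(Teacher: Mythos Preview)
Your proposal is correct and matches the paper's treatment. The paper does not give its own proof of Proposition~\ref{Pfl}; it presents the result as already established in the literature, obtained by combining the equilibrium-measure analysis of \cite{Claeys-Romano14} with the identification carried out in \cite{Forrester-Liu14} (and extended to $0<\theta<1$ in \cite{Forrester-Liu-Zinn_Justin14}). Your sketch --- reduction to the equilibrium problem, importing the Claeys--Romano solution, and matching the spectral curve $z=y^{\theta+1}/(y-1)$ to the Fuss--Catalan generating-function identity $B=1+wB^{\theta+1}$ --- is exactly the chain of reasoning underlying those citations, and your remarks on the origin of the restriction $\theta\ge1$ and the role of the factor $N\theta$ are accurate.

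The paper's own contributions are the two \emph{alternative} proofs it then develops (Proposition~\ref{PC1} via the differential equation for the biorthogonal polynomial $q_N$, valid for $\theta\in\mathbb Z^+$, and the saddle-point argument of Section~\ref{subsec_multiple_Laguerre_ensemble}, valid for all $\theta>0$), both of which recover the same spectral curve \eqref{3.4b} and are designed to transfer to the Jacobi case where the Claeys--Romano mapping $J(s)$ is not available.
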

 
In particular, this shows a relationship between the product of $s$ random matrices and the Laguerre Muttalib--Borodin ensemble with $\theta = s$ (see also \cite{Kuijlaars-Stivigny14}
and the appendix in \cite{Forrester-Liu-Zinn_Justin14}). Here we will demonstrate the relationship  in a different way, by considering the characteristic polynomial of the latter after the change of variables \eqref{Cv}. The proof of Proposition \ref{Pfl} via Claeys and Romano's approach is valid for all $\theta \ge  1$ at least, but depends on the construction of a mapping $J(s)$ \cite[Eq.~(1.25)]{Claeys-Romano14}, so it is unclear if it can be applied to the Jacobi case. On the other hand, the approach presented below is valid only for $\theta \in \intZ^+$, but it does generalise to the Jacobi case.

Crucial for the alternative proof is knowledge of certain biorthogonal polynomials associated with \eqref{1.1} where $V(x)$ is given in \eqref{L}. Thus for given $j=0,1,2,\dots$ let $p_j(x)$, $q_j(x)$ be monic polynomials of degree $j$,
and suppose these polynomials have the biorthogonal property 
\begin{equation}\label{PL} 
\int_0^\infty e^{-V(x)} p_j(x) q_k(x^\theta) \, dx = h_j \delta_{j,k}, \qquad h_j > 0,
 \end{equation}
 where the positivity of $h_j$ follows from the positivity of the integral over (\ref{1.1}), because $n! h_0 h_1 \dotsb h_{n - 1}$ is equal to the integral over \eqref{1.1}.
 Let $\mathcal P_{N,\theta}$ denote the PDF specified by (\ref{1.1}). Straightforward working 
 (see e.g.~\cite[Prop.~5.1.3]{Forrester10}) shows that
\begin{equation} \label{eq:p_q_as_average}
  \Big \langle \prod_{l=1}^N ( x - \lambda^{(N)}_l) \Big \rangle_{\mathcal P_{N,\theta}} = p_N(x), \qquad
  \Big \langle \prod_{l=1}^N ( x - (\lambda^{(N)}_l)^\theta) \Big \rangle_{\mathcal P_{N,\theta} }= q_N(x).
 \end{equation}
 
 Since Proposition \ref{Pfl} requires the change of variables (\ref{Cv}), we see that $q_N(x)$ is equal to the corresponding averaged characteristic polynomial. For the Laguerre weight (\ref{L}), the
 explicit form of $q_N(x)$ is known from a result of Konhauser \cite{Konhauser67}. A relation between $q_N(x)$ and generalised hypergeometric functions is observed in \cite{Srivastava82}. We will use the standard notation ${}_pF_q$ to denote the generalized hypergeometric function defined by a series as presented in e.g.~\cite[Sec.~3]{Luke69}. % {WikipFq}
 
 \begin{prop} \label{prop:biorthogonal_poly_algebraic_Laguerre}
 For the Laguerre weight (\ref{L}), the biorthogonal polynomials $\{q_j(x)\}$ in (\ref{PL}) are
 given by
\begin{equation}
q_j(x) = (-1)^j \Gamma(\theta j + c + 1)
\sum_{l=0}^j (-1)^l \binom{j}{l} {x^l \over \Gamma(\theta l + c + 1)}.
\end{equation}
In the case that $\theta \in \mathbb Z^+$, use of the duplication formula for the gamma function
shows that
\begin{equation}\label{qF}
q_j(x) = C \cdot {}_1 F_\theta \bigg ( {\displaystyle -j \atop  \displaystyle (c+1)/\theta, (c+2)/\theta, \dots,
(c + \theta)/\theta} \bigg | {x \over \theta^\theta} \bigg ),
\end{equation}
 where $C$ is independent of $x$ and chosen so that $q_j(x)$ is monic.
\end{prop}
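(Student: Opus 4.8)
The plan is to use the uniqueness of the biorthogonal family. Since the $p_j$ are monic of degree $j$, the family $\{p_0,\dots,p_{j-1}\}$ spans the polynomials of degree $\le j-1$, so the condition that $q_j(x^\theta)$ be orthogonal (in $\int_0^\infty x^c e^{-x}(\cdot)\,dx$) to $p_0,\dots,p_{j-1}$ is the same as orthogonality to $1,x,\dots,x^{j-1}$; and once this one-sided condition holds the full two-sided relation \eqref{PL} follows automatically (the cases $k>j$ coming from the dual vanishing of $p_k$ against $1,x^\theta,\dots,x^{(k-1)\theta}$, which in turn spans the degree-$\le j$ polynomials in $x^\theta$). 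Writing $q_j(y)=y^j+\sum_{l=0}^{j-1}a_l y^l$, the one-sided conditions are a linear system for $(a_0,\dots,a_{j-1})$ with coefficient matrix the $j\times j$ leading block $[\Gamma(c+m+\theta l+1)]_{0\le m,l\le j-1}$ of the moment matrix attached to the weight $x^c e^{-x}$ and the powers $1,x^\theta,x^{2\theta},\dots$; by the Heine/Andr\'eief identity its determinant equals $\frac{1}{j!}\int\prod_{a<b}(\lambda_b-\lambda_a)(\lambda_b^\theta-\lambda_a^\theta)\prod_k\lambda_k^c e^{-\lambda_k}\,d\lambda_k>0$, which is exactly the positivity noted below \eqref{PL}. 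Hence $q_j$ is unique, and it suffices to check that the displayed formula is monic of degree $j$ and orthogonal to $1,x,\dots,x^{j-1}$.

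Monicity is immediate: the $l=j$ term of the sum contributes $(-1)^j\Gamma(\theta j+c+1)\cdot(-1)^j x^j/\Gamma(\theta j+c+1)=x^j$, and no higher power occurs. For the orthogonality, substitute the formula and integrate term by term using $\int_0^\infty x^{c+m+\theta l}e^{-x}\,dx=\Gamma(c+m+\theta l+1)$; the $m$-th condition becomes
\[
(-1)^j\Gamma(\theta j+c+1)\sum_{l=0}^j(-1)^l\binom{j}{l}\frac{\Gamma(c+m+\theta l+1)}{\Gamma(\theta l+c+1)}=0 .
\]
Here $\Gamma(c+m+\theta l+1)/\Gamma(\theta l+c+1)=(\theta l+c+1)(\theta l+c+2)\cdots(\theta l+c+m)$ is, for fixed $m$, a polynomial in $l$ of degree $m$. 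The crux is then the elementary finite-difference identity $\sum_{l=0}^j(-1)^l\binom{j}{l}P(l)=(-1)^j(\Delta^jP)(0)$, which vanishes whenever $\deg P<j$; so the sum is $0$ for $0\le m\le j-1$, as required. (Taking $m=j$ instead, where the polynomial in $l$ has leading term $\theta^j l^j$, gives as a byproduct $h_j=\theta^j\,j!\,\Gamma(\theta j+c+1)>0$.)

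For the $\theta\in\intZ^+$ assertion, use $(-j)_l/l!=(-1)^l\binom{j}{l}$ to recognise $\sum_{l=0}^j(-1)^l\binom{j}{l}x^l/\Gamma(\theta l+c+1)$ as a terminating hypergeometric series, once one checks that $1/\Gamma(\theta l+c+1)$ equals, up to an $l$-independent constant, $\theta^{-\theta l}\big/\prod_{k=1}^\theta\big((c+k)/\theta\big)_l$. That is precisely the Gauss multiplication formula: from
\[
\prod_{k=1}^\theta\Gamma\!\left(l+\frac{c+k}{\theta}\right)=(2\pi)^{(\theta-1)/2}\,\theta^{\,1/2-(\theta l+c+1)}\,\Gamma(\theta l+c+1),
\]
dividing the $l$-instance by the $l=0$-instance yields $\prod_{k=1}^\theta\big((c+k)/\theta\big)_l=\theta^{-\theta l}\Gamma(\theta l+c+1)/\Gamma(c+1)$. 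Substituting this rescales the variable to $x/\theta^\theta$ and turns the sum into ${}_1F_\theta\big(-j;(c+1)/\theta,\dots,(c+\theta)/\theta;x/\theta^\theta\big)$ up to the overall factor $1/\Gamma(c+1)$, which is the claimed proportionality.

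Everything is routine. The only places that need a little care are getting the exact power of $\theta$ and the constant right in the multiplication-formula step, and — conceptually — making explicit that checking monicity together with the $j$ one-sided orthogonality conditions is enough, via the uniqueness/positivity remark. I do not expect a genuine obstacle.
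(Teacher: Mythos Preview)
Your proof is correct and complete. The paper itself does not prove this proposition: it attributes the explicit formula for $q_j$ to Konhauser and the hypergeometric identification to Srivastava, so there is nothing to compare at the level of argument.

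What you supply is a clean self-contained verification. The key step --- recognising $\Gamma(c+m+\theta l+1)/\Gamma(\theta l+c+1)$ as a polynomial of degree $m$ in $l$ and killing it with the $j$-th finite difference $\sum_{l=0}^j(-1)^l\binom{j}{l}P(l)=0$ for $\deg P<j$ --- is exactly the standard device behind Konhauser's original derivation, so your route is in spirit the classical one. Your uniqueness argument via the nonvanishing of the moment determinant (reduced to positivity of the partition function by Andr\'eief) is a nice touch that makes the logic airtight. For the ${}_1F_\theta$ identification, note that what you (and the paper) call the ``duplication formula'' is really the Gauss multiplication formula for $\theta>2$; your computation of $\prod_{k=1}^\theta((c+k)/\theta)_l=\theta^{-\theta l}\Gamma(\theta l+c+1)/\Gamma(c+1)$ is correct and yields the stated proportionality.
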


Below we will make use of the standard fact (see e.g.~\cite[Sec.~5.1]{Luke69}% \cite{WikipFq}
) that the generalized hypergeometric function ${}_p F_q({a_1,\dots,a_p \atop b_1,\dots, b_q} | x)$ satisfies the differential equation
\begin{equation} \label{DpFq}
  x \prod_{n=1}^p \Big ( x {d \over dx} + a_n \Big ) f =  x {d \over dx} \prod_{n=1}^q \Big ( x {d \over dx} + b_n - 1 \Big ) f .
\end{equation}
Also, we require a technical result relating to the convergence of the Stieltjes transforms of the empirical distributions of $\lambda^{(N)}_1, \dotsc, \lambda^{(N)}_N$.
\begin{lemma} \label{lem:equilibrium}
  For all $z$ in a compact subset of $\compC \setminus [0, \infty)$, as $N \to \infty$, uniformly
  \begin{equation} \label{3.4}
    \begin{split}
      \lim_{N \to \infty} \frac{1}{N} \log \left\langle \prod^N_{n = 1} \left( z - \left( \frac{\lambda^{(N)}_n}{N\theta} \right)^{\theta} \right) \right\rangle = {}& \int_J \log \left( z - \left( \frac{x}{\theta} \right)^{\theta} \right) d\mu^{\mathrm{L}}(x) \\
      = {}& \int_{\tilde{J}} \log(z - \tilde{x}) \ d\tilde{\mu}^{\mathrm{L}}(\tilde{x}),
    \end{split}
  \end{equation}
  where $d\mu^{\mathrm{L}}$ denotes the equilibrium measure, $J$ is the corresponding support, $d\tilde{\mu}^{\mathrm{L}}$ is the measure transformed from $d\mu^{\mathrm{L}}$ by the change of variable $\tilde{x} = (x/\theta)^{\theta}$, and $\tilde{J}$ is the support of $d\tilde{\mu}^{\mathrm{L}}$.
\end{lemma}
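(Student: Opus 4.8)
The plan is to recognise the average on the left of \eqref{3.4} as $\frac1N\log$ of a rescaled Konhauser biorthogonal polynomial and to extract its large-$N$ behaviour by potential-theoretic means. By \eqref{eq:p_q_as_average} one has $\langle\prod_{l=1}^N(x-(\lambda^{(N)}_l)^\theta)\rangle_{\mathcal{P}_{N,\theta}}=q_N(x)$; substituting $x=(N\theta)^\theta z$ and extracting the factor $(N\theta)^{\theta N}$ shows that, writing $\tilde x^{(N)}_n:=(\lambda^{(N)}_n/(N\theta))^\theta$,
\begin{equation*}
  Q_N(z):=\Big\langle\prod_{n=1}^N\big(z-\tilde x^{(N)}_n\big)\Big\rangle=(N\theta)^{-\theta N}q_N\big((N\theta)^\theta z\big)
\end{equation*}
is a monic polynomial of degree $N$ in $z$. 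Since $q_N$ is biorthogonal with respect to the positive weight $e^{-V}$ it has $N$ simple zeros in $(0,\infty)$ (if $q_N(y^\theta)$ changed sign at fewer than $N$ points of $(0,\infty)$ one could build a polynomial of degree $<N$ in $y$ whose product with $q_N(y^\theta)$ keeps a fixed sign, contradicting $\int_0^\infty e^{-V}y^jq_N(y^\theta)\,dy=0$ for $j<N$). Hence the zeros $\xi^{(N)}_1,\dots,\xi^{(N)}_N$ of $Q_N$ lie in $(0,\infty)$, so $L_N(z):=\frac1N\log Q_N(z)$, with the branch normalised by $L_N(z)-\log z\to0$ as $z\to\infty$, is holomorphic on $\compC\setminus[0,\infty)$; moreover $L_N'(w)=\frac1N\sum_j(w-\xi^{(N)}_j)^{-1}$ satisfies $|L_N'(w)|\le\dist(w,[0,\infty))^{-1}$, and $|Q_N(-1)|=\langle\prod_n(1+\tilde x^{(N)}_n)\rangle\le e^{CN}$ by a crude large-deviation bound on $\max_n\lambda^{(N)}_n$ coming from the exponential tail in \eqref{L}. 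Together these make $\{L_N\}$ locally uniformly bounded, hence a normal family, on $\compC\setminus[0,\infty)$.

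The simplification on the negative real axis is that everything there is real up to a fixed phase. For $z<0$ every factor $z-\tilde x^{(N)}_n$ (and every $z-\xi^{(N)}_j$) is negative, so $\arg Q_N(z)=N\pi$ and
\begin{equation*}
  L_N(z)=\frac1N\log\Big\langle\prod_{n=1}^N\big|z-\tilde x^{(N)}_n\big|\Big\rangle+i\pi
  =\frac1N\log\big\langle e^{\,N\int\log|z-\tilde x|\,d\nu_N(\tilde x)}\big\rangle+i\pi ,
\end{equation*}
where $\nu_N=\frac1N\sum_n\delta_{\tilde x^{(N)}_n}$ is the random empirical measure of the rescaled eigenvalues. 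Here I would use the standard weighted-potential-theory input for the Muttalib--Borodin ensemble \eqref{1.1} with weight \eqref{L}: after the scaling $\lambda\mapsto N\theta\,x$ the empirical measure converges weakly in probability to the push-forward $\tilde\mu^{\mathrm L}$ of the equilibrium measure $\mu^{\mathrm L}$ under $x\mapsto(x/\theta)^\theta$, the particles are exponentially confined to a fixed compact $[0,R]$, and atypical configurations are penalised at speed $N^2$, $\mathbb P(d(\nu_N,\tilde\mu^{\mathrm L})\ge\delta)\le e^{-c(\delta)N^2}$ (cf.\ \cite{Deift99,Claeys-Romano14,Borot-Guionnet-Kozlowski13}; none of this requires knowing $\mu^{\mathrm L}$ explicitly). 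To pass from this to the annealed quantity $\frac1N\log\langle e^{NG_z(\nu_N)}\rangle$, $G_z(\nu):=\int\log|z-\tilde x|\,d\nu(\tilde x)$, note that for fixed $z<0$ the functional $G_z$ is weakly continuous and bounded ($\log|z|\le G_z(\nu)\le\log(|z|+R)$) on probability measures supported in $[0,R]$. Splitting the expectation over $\{d(\nu_N,\tilde\mu^{\mathrm L})<\delta\}$ and its complement: on the former $G_z(\nu_N)$ differs from $G_z(\tilde\mu^{\mathrm L})$ by at most $\epsilon$ while the probability tends to $1$, giving a contribution $e^{N(G_z(\tilde\mu^{\mathrm L})+o(1))}$; on the latter the probability is $\le e^{-c(\delta)N^2}$ while the integrand is $\le e^{CN}$ on the confined part and negligible on the escape part (confinement bound plus a rough second-moment estimate), giving $e^{-c(\delta)N^2+CN}\to0$. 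Letting $N\to\infty$ and then $\delta,\epsilon\to0$ yields $L_N(z)\to\int\log|z-\tilde x|\,d\tilde\mu^{\mathrm L}(\tilde x)+i\pi$ for every $z<0$.

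It remains to upgrade from the segment to all of $\compC\setminus[0,\infty)$. Since $\{L_N\}$ is normal and converges pointwise on $(-\infty,0)$, which has accumulation points, Vitali's theorem gives $L_N\to L_\ast$ locally uniformly for some holomorphic $L_\ast$ on $\compC\setminus[0,\infty)$. The function $L(z):=\int_{\tilde J}\log(z-\tilde x)\,d\tilde\mu^{\mathrm L}(\tilde x)$ is also holomorphic there and, for $z<0$ and the same branch, equals $\int_{\tilde J}\log|z-\tilde x|\,d\tilde\mu^{\mathrm L}+i\pi$, i.e.\ exactly the limit just found; so $L_\ast=L$ on $(-\infty,0)$ and hence, by the identity theorem, on all of $\compC\setminus[0,\infty)$. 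This is the first equality in \eqref{3.4}, while the second is the change of variables $\tilde x=(x/\theta)^\theta$ inside the integral.

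The main obstacle is the bad-event estimate above: one must confirm that the speed-$N^2$ decay of the atypical probabilities genuinely dominates the at-most-exponential ($\sim e^{CN}$) size of $\prod_n|z-\tilde x^{(N)}_n|$ on those events, and must control the (non-uniformly bounded) contribution of configurations with a particle far out in the spectrum; both are routine given the confinement large-deviation bound and a crude moment estimate for the ensemble, but that is where the real work lies. An alternative that avoids probability altogether is Laplace's method applied directly to the explicit finite sum for $q_N$ in Proposition \ref{prop:biorthogonal_poly_algebraic_Laguerre}: writing the index as $l=Nt$ and using Stirling's formula expresses $\lim_N\frac1N\log Q_N(z)$ (for $z<0$, where all summands are positive, so the sum is comparable to its largest term) as $\max_{t\in[0,1]}\psi(t)$ for an explicit $\psi$, after which one would verify, using the Euler--Lagrange conditions characterising $\mu^{\mathrm L}$, that this maximum is the logarithmic potential $\int\log(z-\tilde x)\,d\tilde\mu^{\mathrm L}(\tilde x)$; that route is elementary but longer and does not directly exhibit the equilibrium measure.
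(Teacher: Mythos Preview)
Your approach is correct and at its core coincides with the paper's: the paper gives no proof beyond the sentence ``similar to \cite[Lem.~6.77]{Deift99}'' together with the remark that the logarithmic growth of $\phi(x)=\log(z-(x/\theta)^\theta)$ is mild enough for that argument to go through, and your ``bad-event'' splitting using the speed-$N^2$ large-deviation bound is precisely the mechanism of Deift's Lemma~6.77. Where you differ is organisational. First, you exploit \eqref{eq:p_q_as_average} to identify the averaged product with the Konhauser polynomial $q_N$ and use the biorthogonality sign-change argument to locate its zeros in $(0,\infty)$; this lets you view $L_N$ as a holomorphic function on $\compC\setminus[0,\infty)$ with a uniform derivative bound, hence a normal family. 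Second, rather than applying the Deift argument directly with the complex-valued test function $\log(z-\cdot)$ (as the paper implicitly does), you restrict to $z<0$ where the integrand is real up to a fixed phase, run the large-deviation estimate there, and then upgrade to all of $\compC\setminus[0,\infty)$ by Vitali's theorem. This buys you the ability to work with a real, bounded-on-$[0,R]$ functional $G_z$ exactly as in Deift's original statement, sidestepping both the complex-$\phi$ and the unboundedness issues in one stroke; the paper's route is shorter on paper but leaves those two points to the reader. Your acknowledged ``main obstacle'' (the annealed bound on the bad event) is indeed the substantive step, and it is exactly the content of \cite[Lem.~6.77]{Deift99}.
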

The proof of Lemma \ref{lem:equilibrium} is similar to \cite[Lem.~6.77]{Deift99}. Note that in \cite{Deift99} it is required that the function $\phi$, corresponding to the function $\log(z - (x/\theta)^{\theta})$, is a bounded function in $x$, while our $\log(z - (x/\theta)^{\theta})$ is not. But since the growth of the function as $x \to \infty$ is mild, the argument there can be applied. Note that for $z$ in a compact subset of $\compC \setminus [0, \infty)$, the convergence in \eqref{3.4} is uniform, since the functions are uniformly bounded and equi-continuous. So if we take derivative on both sides of \eqref{3.4}, the convergence still holds. Comparing the left-hand side of \eqref{3.4} with the formula \eqref{eq:p_q_as_average} for $q_N$, we have that 
\begin{equation} \label{3.4a}
  \lim_{N \to \infty} \frac{1}{N} \frac{\frac{d}{dz} q_N((N\theta)^{\theta} z)}{q_N((N\theta)^{\theta} z)} = \lim_{N \to \infty} \frac{(N\theta)^{\theta}}{N} {q_N'(x) \over q_N(x) } \Big |_{x = (N \theta)^\theta z} = \tilde{G}^{\mathrm{L}}(z), \qquad z \in \compC \setminus [0, \infty),
\end{equation}
where
\begin{equation} \label{eq:defn_tilde_G^L}
  \tilde{G}^{\mathrm{L}}(z) = \int_{\tilde{J}} \frac{d\tilde{\mu}^{\mathrm{L}}(\tilde{x})}{z - \tilde{x}}
\end{equation}
is the limiting resolvent (or equivalently Stieltjes transform) of the measure $d\tilde{\mu}$. Below we show that $\tilde{G}^{\mathrm{L}}(z)$ satisfies a polynomial equation 
which uniquely characterises the Fuss--Catalan distribution.

 \begin{prop}\label{PC1}
   Transform the Laguerre Muttalib--Borodin ensemble according to \eqref{Cv}. Define $\tilde{G}^{\mathrm{L}}(z)$ by \eqref{eq:defn_tilde_G^L} so that it is equal to the resolvent corresponding to the global scaled density, scaling $x = (N \theta)^{\theta} z$. For $\theta \in \mathbb Z^+$ we have that
 \begin{equation} \label{3.4b}
z ( z \tilde{G}^{\rm L}(z) - 1) = (z \tilde{G}^{\rm L}(z))^{\theta + 1}.
\end{equation}
\end{prop}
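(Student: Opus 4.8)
The plan is to extract \eqref{3.4b} from the hypergeometric differential equation satisfied by the biorthogonal polynomial $q_N$, in the spirit of a WKB (``semiclassical'') analysis in which $N^{-1}\vartheta\log q_N$, with $\vartheta:=x\,d/dx$, is the $N\to\infty$ proxy for the resolvent. By Proposition~\ref{prop:biorthogonal_poly_algebraic_Laguerre}, for $\theta\in\intZ^+$ the polynomial $q_N$ is, up to a multiplicative constant, the function ${}_1F_\theta$ of \eqref{qF} evaluated at $x/\theta^\theta$. Hence \eqref{DpFq} with $p=1$, $q=\theta$, $a_1=-N$, $b_n=(c+n)/\theta$, together with the scale invariance of $\vartheta$ (so that rescaling the argument by $\theta^{-\theta}$ merely inserts the prefactor $x/\theta^\theta$), shows that
\begin{equation}\label{eq:PC1plan:ode}
  \frac{x}{\theta^\theta}\,(\vartheta-N)\,q_N \;=\; \vartheta\prod_{n=1}^{\theta}\Bigl(\vartheta+\tfrac{c+n}{\theta}-1\Bigr)q_N .
\end{equation}

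First I would divide \eqref{eq:PC1plan:ode} by $q_N$. Writing $P_k:=\vartheta^k q_N/q_N$ (so $P_0=1$), the product rule gives the recursion $P_{k+1}=\vartheta P_k+P_1P_k$, and an induction on it shows that $P_k=P_1^{\,k}+R_k$, where $R_k$ is a polynomial in $P_1,\vartheta P_1,\vartheta^2 P_1,\dots$ in which every monomial has at most $k-1$ factors. Expanding $\vartheta\prod_{n=1}^{\theta}(\vartheta+\tfrac{c+n}{\theta}-1)=\vartheta^{\theta+1}+\sum_{j=1}^{\theta}e_j\vartheta^{\,j}$ for suitable $N$-independent constants $e_j$, equation \eqref{eq:PC1plan:ode} divided by $q_N$ reads
\begin{equation}\label{eq:PC1plan:div}
  \frac{x}{\theta^\theta}\,(P_1-N) \;=\; P_{\theta+1}+\sum_{j=1}^{\theta}e_j P_j .
\end{equation}

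Next I would pass to the global scale $x=(N\theta)^\theta z$ entering \eqref{3.4a} (after the change of variables \eqref{Cv}), under which $\vartheta$ acts as $z\,d/dz$. By \eqref{eq:p_q_as_average} the averaged characteristic polynomial in \eqref{3.4} equals $(N\theta)^{-N\theta}q_N((N\theta)^\theta z)$; writing $F_N(z)$ for $N^{-1}$ times its logarithm, one has $w_N(z):=N^{-1}P_1\big|_{x=(N\theta)^\theta z}=z\,F_N'(z)$. By Lemma~\ref{lem:equilibrium} the $F_N$ converge on $\compC\setminus[0,\infty)$, and --- as is implicit in that proof, the rescaled zeros of $q_N$ remaining in a fixed compact subset of $[0,\infty)$ --- the convergence is locally uniform; hence $w_N\to z\,\tilde{G}^{\mathrm{L}}(z)$ locally uniformly, which is \eqref{3.4a}, and by Cauchy estimates so do all the $z$-derivatives of $w_N$. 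In particular $\vartheta w_N,\vartheta^2 w_N,\dots$ are bounded locally uniformly in $N$, so each $\vartheta^j P_1=N\,\vartheta^j w_N=O(N)$ and therefore $R_k=O(N^{k-1})$ and $P_k=N^{k}w_N^{\,k}+O(N^{k-1})$, all locally uniformly. Evaluating \eqref{eq:PC1plan:div} at $x=(N\theta)^\theta z$, the left side is $N^{\theta+1}z(w_N-1)$ and the right side is $N^{\theta+1}w_N^{\,\theta+1}+O(N^{\theta})$; dividing by $N^{\theta+1}$ and sending $N\to\infty$ yields $z(z\tilde{G}^{\mathrm{L}}(z)-1)=(z\tilde{G}^{\mathrm{L}}(z))^{\theta+1}$, which is \eqref{3.4b}.

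The step I expect to be the main obstacle is justifying $P_k=N^{k}w_N^{\,k}+O(N^{k-1})$ locally uniformly --- equivalently, that $w_N$ and its $\vartheta$-derivatives remain bounded as $N\to\infty$. This forces one to use that $w_N\to z\tilde{G}^{\mathrm{L}}$ off $[0,\infty)$ \emph{locally uniformly}, not merely pointwise, which rests on Lemma~\ref{lem:equilibrium} together with an a priori confinement of the rescaled zeros of $q_N$ to a compact subset of $[0,\infty)$, so that $\{F_N\}$ is locally bounded on $\compC\setminus[0,\infty)$ and Vitali's theorem applies. Everything else --- the hypergeometric ODE \eqref{DpFq}, the recursion $P_{k+1}=\vartheta P_k+P_1P_k$, and the bookkeeping of powers of $N$ --- is elementary.
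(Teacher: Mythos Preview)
Your proposal is correct and follows essentially the same route as the paper: both derive \eqref{3.4b} by applying the hypergeometric differential equation \eqref{DpFq} to $q_N$ from \eqref{qF}, then using the asymptotic $\vartheta^k q_N/q_N\sim N^k(z\tilde G^{\rm L})^k$ (your $P_k=N^kw_N^k+O(N^{k-1})$, the paper's \eqref{u}) to isolate the leading $N^{\theta+1}$ terms. The only difference is presentational---you spell out the recursion $P_{k+1}=\vartheta P_k+P_1P_k$ and the locally-uniform-convergence justification explicitly, whereas the paper cites \cite{Forrester-Liu14} for this step and drops the subleading $e_j\vartheta^j$ terms at the outset.
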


\begin{proof}
  First we note that the convergence \eqref{3.4a} and \eqref{eq:defn_tilde_G^L} yields that as $N \to \infty$, for $z \in \compC \setminus [0, \infty)$,
  \begin{equation*}
    q_N((N\theta)^{\theta} z) = \exp \left( N \left( \int \tilde{G}^{\mathrm{L}}(z) dz + C + o(1) \right) \right),
  \end{equation*}
  where $o(1)$ is an analytic function in $z$ that vanishes uniformly for $z$ in any compact set of $\compC \setminus [0, \infty)$, and $C$ is a constant. The asymptotic formulas below are thus justified.
  
  Suppose $x = \bigO(N^{\theta})$ and $\arg x \in (0, 2\pi)$, then to leading order in $N$, after substituting for $\{a_i\}$, $\{b_j\}$ as
  implied by (\ref{qF}), we see that (\ref{DpFq})  reads
  \begin{equation} \label{3.4c}
    {x \over \theta^\theta }\Big ( x {d \over dx} - N \Big ) q_N =  \Big (x {d \over dx} \Big )^{\theta + 1} q_N + o(N^{\theta + 1}).
  \end{equation}
  Now change variables  $x= (N\theta)^\theta z$, and let
  $q_N^{(k)}(x)$ denote the $k$-th derivative of $q_N(x)$.
  Working contained in \cite[above Prop.~5.1]{Forrester-Liu14}, establishes that under the validity of
  (\ref{3.4a}),
  \begin{equation}\label{u}
    \frac{\frac{d^k}{dz^k} q_N((N\theta)^{\theta} z)}{q_N((N\theta)^{\theta} z)} = (N\theta)^{k\theta} {q_N^{(k)}(x) \over q_N(x) } \Big |_{x = (N\theta)^\theta z} \sim  N^k (\tilde{G}^{\rm L}(z))^k
  \end{equation}
  (note that (\ref{3.4a}) itself is the case $k=1$; the general $k$ case follows by expressing higher derivatives in terms of the logarithmic derivatives).  Using this in (\ref{3.4c}) gives (\ref{3.4b}).
\end{proof}

\begin{remark} \label{rmk:Fuss--Catalan_distr} 
  \begin{enumerate}[label=(\roman*)]
  \item 
    The large $z$ expansion
    \begin{equation} \label{eq:expansion_G^L}
      z\tilde{G}^{\rm L}(z) = \sum_{k=0}^\infty \frac{\tilde{m}^{\mathrm{L}}_k}{z^k},
    \end{equation}
    where $\tilde{m}^{\mathrm{L}}_k$
    denotes the $k$-th moment of the global scaled spectral measure, substituted in (\ref{3.4b})
    shows that $\tilde{m}^{\mathrm{L}}_k$ is equal to the $k$-th Fuss--Catalan number (\ref{FCnumber}) with
    $s = \theta$. The details of the required calculation can be found in e.g.~\cite[paragraph beginning with eq.~(2.4)]{Forrester-Liu14}.
  \item 
    The spectral density is the global scaled limit of the one-point correlation function. Let $\lambda^{(N)}_1 = \lambda, \lambda^{(N)}_2, \dotsc, \lambda^{(N)}_N$ be the eigenvalues in the 
    $N$-dimensional Laguerre Muttalib--Borodin ensemble. Then the one-point correlation function is (we suppress the $N$-dependence in the notation $\rho_{(1)}$)
    \begin{equation}\label{3.14a}
      \rho_{(1)}(\lambda) = N \int_0^\infty d\lambda^{(N)}_2 \dotsm \int_0^\infty d\lambda^{(N)}_N \, p_N(\lambda, \lambda^{(N)}_2, \dots, \lambda^{(N)}_N),
    \end{equation}
    where $p_N$ is the PDF implied by (\ref{1.1}) with the Laguerre weight (\ref{L}). The scaled spectral density, as $N \to \infty$, converges to the equilibrium measure in distribution:
    \begin{equation} \label{eq:limit_counting_measure}
      d\mu^{\mathrm{L}}(x) = \lim_{N \to \infty} \rho_{(1)}(Nx) dx,
    \end{equation}
    where $d\mu^{\mathrm{L}}$ is the same as in \eqref{3.4}.
  \item 
    The resolvent $\tilde{G}^{\mathrm{L}}(z)$ defined in \eqref{eq:defn_tilde_G^L} can be expressed by the limiting spectral density/equilibrium measure of the Laguerre Muttalib--Borodin ensemble as
    \begin{equation}\label{3.14aS}
      \tilde{G}^{\mathrm{L}}(z) = \int_J {d\mu^{\mathrm{L}}(x) \over z - (x/\theta)^{\theta}}.
    \end{equation}
  \end{enumerate}
\end{remark}

\bigskip
There is another viewpoint on deducing the global spectral density from knowledge of (\ref{qF}). This makes use of
a recent result of Hardy \cite{Hardy15}, which subject to a mild technical condition \cite[Eq.~(1.12)]{Hardy15} states that
characteristic polynomials coming from a class of determinantal point processes including the multiple orthogonal polynomial ensemble under present discussion (see \cite[Sec.~1.4]{Hardy15}), the limiting density of zeros equals the limiting global spectral density. On the other hand, the limiting density of zeros for the generalised hypergeometric function ${}_1 F_\theta$ in (\ref{qF}) has been shown by Neuschel \cite{Neuschel14} to be given by the Fuss--Catalan density with parameter $s = \theta$. (Strictly speaking the result of \cite{Neuschel14} assumes the bottom line of parameters in (\ref{qF}) to be positive integers. Since the leading asymptotics are independent of these parameters, it is expected that this assumption in \cite{Neuschel14} is not necessary.)

\subsection{The Jacobi Muttalib--Borodin ensemble}\label{S3.2}

As with the Laguerre weight (\ref{L}), for the Muttalib--Borodin ensemble with Jacobi weight (\ref{J}), we also let $\lambda^{(N)}_1, \dotsc, \lambda^{(N)}_N$ be the eigenvalues in the 
$N$-dimensional ensemble. As $N \to \infty$, by standard techniques, we have that the empirical distribution of the eigenvalues converges in distribution to a limiting probability distribution, also known as the equilibrium measure of this model. The equilibrium measure can be characterized as the minimum 
of a variation problem  analogous to that for the Laguerre case, and from this, in
 the recent work \cite{Forrester-Liu-Zinn_Justin14} the moments of the corresponding density have been given
in terms of certain binomial coefficients (see Proposition \ref{p3.10} below).
Analogous to the Laguerre ensemble, there is a corresponding system of biorthogonal polynomials. For $j = 0, 1, 2, \dotsc$, let $p_j(x), q_j(x)$ be monic polynomials of degree $j$, and with weight $V(x)$  defined in \eqref{J} suppose these polynomials have the biorthogonal property analogous to \eqref{PL}, 
\begin{equation}\label{eq:PL_Jacobi} 
\int^1_0 e^{-V(x)} p_j(x) q_k(x^\theta) \, dx = h_j \delta_{j,k}, \qquad h_j > 0.
 \end{equation}
Let $\mathcal P_{N,\theta}$ denote the PDF specified by \eqref{1.1} with $V$ specified by \eqref{J} and $0 < \lambda_l < 1$. Then \eqref{eq:p_q_as_average} also holds in the Jacobi ensemble with corresponding different meanings of $p_N, q_N$ and $\mathcal{P}_{N, \theta}$.
Below we state algebraic results on the biorthogonal polynomials analogous to Proposition \ref{prop:biorthogonal_poly_algebraic_Laguerre}. Again we will focus on the polynomials $q_j(x)$.

\begin{prop} \cite{Chat72, Carlitz73, Madhekar-Thakare82}
  For the Jacobi weight (\ref{J}), the biorthogonal polynomials $\{q_j(x)\}$ in (\ref{PL}) are
  given by
  \begin{equation}
    q_j(x) = (-1)^j {(1 + c_1)_{\theta j} \over (1 + c_1 + c_2 + j)_{\theta j}}
    \sum_{l=0}^j (-1)^l \binom{j}{l} {(1 + c_1 + c_2 + j)_{\theta l} \over (1 + c_1)_{\theta l}} x^l, \:\: 
  \end{equation}
  where $(a)_p := \Gamma(a+p)/\Gamma(a)$.
  In the case that $\theta \in \mathbb Z^+$, use of the duplication formula for the gamma function
  shows that
  \begin{equation}\label{qF6}
    q_j(x) \propto {}_{\theta + 1}  F_\theta \bigg ( {\displaystyle -j , \delta/\theta, (\delta + 1)/\theta, \dots,
      (\delta + \theta - 1)/\theta \atop  \displaystyle (c_1+1)/\theta, (c_1+2)/\theta, \dots,
      (c_1 + \theta)/\theta} \bigg | x \bigg ),
  \end{equation}
  where $\delta : = 1 + c_1 + c_2 + j$.
\end{prop}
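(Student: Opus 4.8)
The plan is to identify $q_j$ through the characterisation that, among monic polynomials of degree $j$, it is the unique one with $\int_0^1 x^{c_1}(1-x)^{c_2}\,x^m\,q_j(x^\theta)\,dx=0$ for $m=0,1,\dots,j-1$; this is the Jacobi analogue of the argument behind Proposition \ref{prop:biorthogonal_poly_algebraic_Laguerre}. I would first note that this characterisation is equivalent to the biorthogonality \eqref{eq:PL_Jacobi}: since $p_0,\dots,p_{j-1}$ are monic of degrees $0,\dots,j-1$ they span $\{1,x,\dots,x^{j-1}\}$, so the true $q_j$ obeys these $j$ orthogonality relations; and the relations have a one-dimensional solution space among the polynomials of degree at most $j$ because the matrix with $(m,l)$-entry $\int_0^1 x^{c_1}(1-x)^{c_2}x^{m+\theta l}\,dx$, $0\le m,l\le j-1$, is nonsingular --- the standard consequence of positivity of the partition function of \eqref{1.1}, which also gives $h_j>0$. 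Hence it suffices to verify that the displayed $q_j$ is monic and kills $x^0,\dots,x^{j-1}$ against the weight; the explicit form also appears in \cite{Chat72,Carlitz73,Madhekar-Thakare82}.

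For the orthogonality I would substitute the series for $q_j$ into $\int_0^1 x^{c_1+m}(1-x)^{c_2}q_j(x^\theta)\,dx$ and evaluate the elementary beta integrals $\int_0^1 x^{c_1+m+\theta l}(1-x)^{c_2}\,dx=\Gamma(1+c_1+m+\theta l)\Gamma(1+c_2)/\Gamma(2+c_1+c_2+m+\theta l)$. Pulling out every factor independent of the summation index $l$ (including the prefactor $(-1)^j(1+c_1)_{\theta j}/(1+c_1+c_2+j)_{\theta j}$), the remaining $l$-dependence in the $l$-th term is a constant times
\[
(-1)^l\binom{j}{l}\,(1+c_1+\theta l)_m\,(2+c_1+c_2+m+\theta l)_{j-1-m},
\]
where the last Pochhammer symbol is an honest product of $j-1-m\ge 0$ linear factors because $0\le m\le j-1$, and where one uses $1+c_1+c_2+j+\theta l=(2+c_1+c_2+m+\theta l)+(j-1-m)$ so that the Gamma quotients telescope. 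Now $(1+c_1+\theta l)_m$ and $(2+c_1+c_2+m+\theta l)_{j-1-m}$ are products of $m$ and of $j-1-m$ factors respectively, each linear in $l$ with nonzero slope $\theta$, so their product $R(l)$ is a polynomial in $l$ of degree exactly $j-1<j$. The $j$-th finite difference annihilates it, $\sum_{l=0}^j(-1)^l\binom{j}{l}R(l)=(-1)^j(\Delta^j R)(0)=0$, so the integral vanishes for every $m=0,\dots,j-1$. Monicity follows from the $l=j$ term: the chosen prefactor cancels the Pochhammer ratio exactly, leaving the $x^j$-coefficient equal to $1$.

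For the passage to ${}_{\theta+1}F_\theta$ when $\theta\in\mathbb{Z}^+$ I would write $(-1)^l\binom{j}{l}=(-j)_l/l!$ and apply the Gauss multiplication formula in the shape $(\alpha)_{\theta l}=\theta^{\theta l}\prod_{k=0}^{\theta-1}(\tfrac{\alpha+k}{\theta})_l$ to both $(1+c_1+c_2+j)_{\theta l}$ and $(1+c_1)_{\theta l}$. The powers $\theta^{\theta l}$ cancel between numerator and denominator, so with $\delta:=1+c_1+c_2+j$ the ratio $\frac{(1+c_1+c_2+j)_{\theta l}}{(1+c_1)_{\theta l}}$ turns into $\prod_{k=0}^{\theta-1}(\tfrac{\delta+k}{\theta})_l$ divided by $\prod_{k=0}^{\theta-1}(\tfrac{c_1+1+k}{\theta})_l$. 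Reading $-j$ together with the $\theta$ upper parameters $(\delta+k)/\theta$ and the $\theta$ lower parameters $(c_1+1+k)/\theta$, $k=0,\dots,\theta-1$, into the standard series $\sum_l\frac{(a_1)_l\cdots(a_{\theta+1})_l}{(b_1)_l\cdots(b_\theta)_l}\frac{x^l}{l!}$ gives precisely \eqref{qF6}. Note that no power of $\theta$ rescales the argument here, unlike in the Laguerre formula \eqref{qF}, because $(\cdot)_{\theta l}$ now sits in both numerator and denominator and the $\theta^{\theta l}$ factors cancel.

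The step I expect to need the most care is the reduction of the $l$-summand to the polynomial $R(l)$: one must track exactly which Gamma quotients collapse to Pochhammer symbols and check that the total degree comes out to $j-1$ rather than $\ge j$, since it is precisely the bound $\deg R<j$ that forces the alternating binomial sum to vanish for all $m\le j-1$. The structural inputs --- nonsingularity of the bimoment matrices and positivity of $h_j$ --- are routine and are inherited from the fact that \eqref{1.1} with the Jacobi weight is a genuine probability density, exactly as in the Laguerre case.
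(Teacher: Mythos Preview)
Your proof is correct. The paper itself does not supply a proof of this proposition: it is stated as a citation of \cite{Chat72, Carlitz73, Madhekar-Thakare82}, exactly parallel to how Proposition~\ref{prop:biorthogonal_poly_algebraic_Laguerre} in the Laguerre case is stated by citing Konhauser and Srivastava. So there is no ``paper's own proof'' to compare against; you have supplied a self-contained verification where the paper defers to the literature.

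Your argument is sound in all its parts. The characterisation of $q_j$ by the $j$ moment conditions $\int_0^1 x^{c_1}(1-x)^{c_2}x^m q_j(x^\theta)\,dx=0$, $m=0,\dots,j-1$, plus monicity is indeed equivalent to \eqref{eq:PL_Jacobi}. The key reduction --- writing the $l$-summand after the beta integral as $(-1)^l\binom{j}{l}$ times $(1+c_1+\theta l)_m(2+c_1+c_2+m+\theta l)_{j-1-m}$, a polynomial in $l$ of degree exactly $j-1$ --- is correct, and the vanishing then follows from the finite-difference identity. The monicity check and the passage to ${}_{\theta+1}F_\theta$ via the Gauss multiplication formula, with the $\theta^{\theta l}$ cancellation you flag, are both right. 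The one point you identify as needing care (that the degree of $R(l)$ is $j-1$ and not $\ge j$) is exactly the crux, and you have handled it correctly: the identity $1+c_1+c_2+j+\theta l=(2+c_1+c_2+m+\theta l)+(j-1-m)$ is what forces the second Pochhammer to have $j-1-m$ factors rather than more.
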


Also we have the analogue of Lemma \ref{lem:equilibrium}.
\begin{lemma} \label{lem:equilibrium_Jacobi}
  For all $z \in \compC \setminus [0, 1]$, as $N \to \infty$,
  \begin{equation}
    \begin{split}
      \lim_{N \to \infty} \frac{1}{N} \log \left\langle \prod^N_{n = 1} \left( z - (\lambda^{(N)}_n)^{\theta} \right) \right\rangle = {}& \int_{[0, 1]} \log (z - x^{\theta}) d\mu^{\mathrm{J}}(x) \\
      = {}& \int_{[0, 1]} \log (z - \tilde{x}) d\tilde{\mu}^{\mathrm{J}}(\tilde{x}),
    \end{split}
  \end{equation}
  where $d\mu^{\mathrm{J}}$ denote the equilibrium measure of the Jacobi ensemble and $d\tilde{\mu}^{\mathrm{J}}$ denote the measure transformed from $d\mu^{\mathrm{J}}$ by the change of variable $\tilde{x} = x^{\theta}$. Note that both $d\mu^{\mathrm{J}}$ and $d\tilde{\mu}^{\mathrm{J}}$ have support $[0, 1]$.
\end{lemma}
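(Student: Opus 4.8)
The plan is to prove Lemma~\ref{lem:equilibrium_Jacobi} along the same lines as its Laguerre counterpart Lemma~\ref{lem:equilibrium}, i.e.\ via the argument of \cite[Lem.~6.77]{Deift99}, with the simplification that in the Jacobi situation the eigenvalues are deterministically confined to $[0,1]$ (the weight \eqref{J} vanishes outside), so that $x^{\theta}\in[0,1]$ and, for each fixed $z\in\compC\setminus[0,1]$, the function $x\mapsto\log(z-x^{\theta})$ is bounded and continuous on the support of every empirical measure $\mu_N:=\frac1N\sum_{n=1}^{N}\delta_{\lambda^{(N)}_n}$. In particular the boundedness hypothesis of \cite[Lem.~6.77]{Deift99} holds outright, so none of the ``mild growth'' adjustments of the Laguerre case arise. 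Two ingredients enter: (i) weak convergence $\mu_N\to\mu^{\mathrm J}$ to the equilibrium measure; (ii) the interchange of $\frac1N\log$ with the ensemble average $\langle\,\cdot\,\rangle$.

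For (i): as recalled just above the lemma, the empirical measure of the $N$-point Jacobi Muttalib--Borodin ensemble obeys a large deviation principle at speed $N^{2}$, with rate functional the energy built from the interaction $\prod_{j<k}(\lambda_k-\lambda_j)(\lambda_k^{\theta}-\lambda_j^{\theta})$ and the Jacobi external field, whose unique minimiser is $d\mu^{\mathrm J}$ and with $\operatorname{supp}\mu^{\mathrm J}\subseteq[0,1]$ (see \cite{Forrester-Liu-Zinn_Justin14} and \cite[Sec.~6.4]{Deift99}); hence $\mu_N\to\mu^{\mathrm J}$ weakly in probability, and $\int_{[0,1]}\log(z-x^{\theta})\,d\mu_N(x)\to\int_{[0,1]}\log(z-x^{\theta})\,d\mu^{\mathrm J}(x)$ in probability for fixed $z\notin[0,1]$, once a branch of $\log$ holomorphic on a neighbourhood of the compact connected arc $\{z-x^{\theta}:x\in[0,1]\}$ (which avoids the origin) has been chosen. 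Ingredient (ii) is the only real obstacle. Writing $\big\langle\prod_{n=1}^{N}(z-(\lambda^{(N)}_n)^{\theta})\big\rangle=\big\langle\exp\!\big(N\!\int_{[0,1]}\log(z-x^{\theta})\,d\mu_N(x)\big)\big\rangle$ and noting that the random exponent is bounded in modulus by a deterministic constant $C(z)$, the configurations with $\big|\!\int\log(z-x^{\theta})\,d\mu_N-\!\int\log(z-x^{\theta})\,d\mu^{\mathrm J}\big|>\varepsilon$ carry probability at most $e^{-cN^{2}}$ by (i), hence contribute at most $e^{NC(z)-cN^{2}}\to0$ to the average, while the complementary event fixes the leading exponential to $\exp\!\big(N\!\int_{[0,1]}\log(z-x^{\theta})\,d\mu^{\mathrm J}(x)\big)$; this gives the first equality. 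The residual phase ambiguity for complex $z$ is handled as in \cite{Deift99} (treat real $z$ outside $[0,1]$, where the integrand is real and positive, and extend by analyticity). Equivalently, and perhaps more cleanly: since $q_N(z)=\big\langle\prod_{n=1}^{N}(z-(\lambda^{(N)}_n)^{\theta})\big\rangle$ is the characteristic polynomial of the transformed determinantal ensemble $\{(\lambda^{(N)}_n)^{\theta}\}$, Hardy's theorem \cite{Hardy15} (applicable here exactly as in the Laguerre discussion), or the saddle-point analysis of Neuschel \cite{Neuschel14} applied to \eqref{qF6}, reduces the statement to the weak convergence $\nu_N\to\tilde\mu^{\mathrm J}$ of the normalised zero-counting measure of $q_N$, after which $\frac1N\log q_N(z)=\int\log(z-\zeta)\,d\nu_N(\zeta)\to\int\log(z-\zeta)\,d\tilde\mu^{\mathrm J}(\zeta)$ follows at once (the zeros of the Carlitz-type polynomials \cite{Chat72,Carlitz73,Madhekar-Thakare82} being confined to $[0,1]$).

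Finally, the second equality in the display is the change of variables $\tilde x=x^{\theta}$: by definition $d\tilde\mu^{\mathrm J}$ is the push-forward of $d\mu^{\mathrm J}$ under $x\mapsto x^{\theta}$, whence $\int\log(z-x^{\theta})\,d\mu^{\mathrm J}(x)=\int\log(z-\tilde x)\,d\tilde\mu^{\mathrm J}(\tilde x)$, and since $x\mapsto x^{\theta}$ maps $[0,1]$ bijectively onto $[0,1]$, both $d\mu^{\mathrm J}$ and $d\tilde\mu^{\mathrm J}$ have support $[0,1]$. Thus the only genuine difficulty is the $\log$--average interchange of step (ii); with the speed-$N^{2}$ large deviation bound in hand it is routine, and strictly simpler than in Lemma~\ref{lem:equilibrium} thanks to the compactness of the support.
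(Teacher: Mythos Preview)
Your proposal is correct and follows the same approach as the paper: the paper simply states that the proof is similar to \cite[Lem.~6.77]{Deift99} and omits the details, and your argument is precisely an elaboration of that reference, with the apt observation that the compact support $[0,1]$ makes the boundedness hypothesis of Deift hold outright (so the ``mild growth'' workaround needed in Lemma~\ref{lem:equilibrium} is unnecessary here). Your alternative route via Hardy's theorem on zero distributions of characteristic polynomials is not what the paper invokes for this lemma, but it is mentioned elsewhere in the paper (end of \S\ref{S3.1}) as a parallel viewpoint, so it is a legitimate and consistent aside.
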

The proof of Lemma \ref{lem:equilibrium_Jacobi} is similar to \cite[Lem.~6.77]{Deift99}, and we omit the details. From Lemma \ref{lem:equilibrium_Jacobi}, we derive the counterpart of \eqref{3.4a}, that
\begin{equation}
  \lim_{N \to \infty} \frac{1}{N} \frac{q'_N(z)}{q_N(z)} = \tilde{G}^{\mathrm{J}}(z), \quad \lambda \in \compC \setminus [0, 1],
\end{equation}
where
\begin{equation} \label{eq:defn_tilde_G^J}
  \tilde{G}^{\mathrm{J}}(z) = \int_{[0, 1]} \frac{d\tilde{\mu}^{\mathrm{J}}(\tilde{x})}{z - \tilde{x}}
\end{equation}
is the limiting resolvent of the measure $d\tilde{\mu}^{\mathrm{J}}$. Below we show that $\tilde{G}^{\mathrm{J}}(z)$ satisfies a polynomial equation that is similar to \eqref{3.4b} characterising the Fuss--Catalan distribution.

\begin{prop}\label{P3.8}
  Transform the Jacobi Muttalib--Borodin ensemble according to \eqref{Cv}. Define $\tilde{G}^{\mathrm{J}}(z)$ by \eqref{eq:defn_tilde_G^J} so that it is equal to the resolvent corresponding to the global density. For $\theta \in \intZ^+$ we have that
   \begin{equation}\label{G8}
    z ( z \tilde{G}^{\rm J}(z) - 1)( z\tilde{G}^{\rm J}(z) + 1/\theta)^\theta = (z \tilde{G}^{\rm J}(z))^{\theta + 1}.
  \end{equation}
\end{prop}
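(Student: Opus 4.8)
The plan is to follow the pattern of the proof of Proposition~\ref{PC1}, with the Laguerre input \eqref{qF} replaced by its Jacobi analogue \eqref{qF6} and with the resolvent convergence $\lim_{N\to\infty}\frac1N q_N'(z)/q_N(z)=\tilde G^{\mathrm J}(z)$ recorded after Lemma~\ref{lem:equilibrium_Jacobi} playing the role of \eqref{3.4a}. First I would note, exactly as in the Laguerre case, that integrating this convergence gives $q_N(z)=\exp\!\big(N(\int\tilde G^{\mathrm J}(z)\,dz+C+o(1))\big)$ uniformly on compact subsets of $\compC\setminus[0,1]$, with $o(1)$ analytic; consequently the dominant contribution to any derivative of $q_N$ comes from differentiating the exponential factor, giving on compacts of $\compC\setminus[0,1]$ the analogue of \eqref{u},
\[
  \frac{\big(x\tfrac{d}{dx}\big)^{k}q_N(x)}{q_N(x)}\;\sim\;\big(x\tilde G^{\mathrm J}(x)\big)^{k}N^{k},
  \qquad
  \big(x\tfrac{d}{dx}+a\big)q_N(x)\;\sim\;\big(x\tilde G^{\mathrm J}(x)\,N+a\big)q_N(x),
\]
the second relation holding for a constant $a$ and, more generally, for $a$ of order $N$.

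Next I would substitute the parameters of \eqref{qF6} into the generalized hypergeometric differential equation \eqref{DpFq}, which $q_N$ satisfies since it is a constant multiple of ${}_{\theta+1}F_{\theta}$. Here $p=\theta+1$, $q=\theta$; the top parameters are $a_1=-N$ and $a_n=(\delta+n-2)/\theta$ for $n=2,\dots,\theta+1$ with $\delta=1+c_1+c_2+N$, and the bottom parameters are $b_n=(c_1+n)/\theta$ for $n=1,\dots,\theta$. The one feature genuinely new compared with the Laguerre computation is that $\delta\sim N$, so each of $a_2,\dots,a_{\theta+1}$ is $\sim N/\theta$ rather than $O(1)$; this is exactly what will produce the extra factor $(z\tilde G^{\mathrm J}+1/\theta)^{\theta}$ in \eqref{G8}. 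Applying the asymptotics above term by term and evaluating at $x=z$, the left-hand side $x\prod_{n=1}^{\theta+1}\big(x\tfrac{d}{dx}+a_n\big)q_N$ becomes, to leading order in $N$,
\[
  N^{\theta+1}\,z\,\big(z\tilde G^{\mathrm J}(z)-1\big)\big(z\tilde G^{\mathrm J}(z)+\tfrac1\theta\big)^{\theta}\,q_N(z),
\]
while the right-hand side $x\tfrac{d}{dx}\prod_{n=1}^{\theta}\big(x\tfrac{d}{dx}+b_n-1\big)q_N$ becomes $N^{\theta+1}\big(z\tilde G^{\mathrm J}(z)\big)^{\theta+1}q_N(z)$, since each $b_n-1$ is $O(1)$ and hence negligible against $z\tilde G^{\mathrm J}(z)N$. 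Cancelling $N^{\theta+1}q_N(z)$ yields \eqref{G8}.

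The only real obstacle is the bookkeeping behind the asymptotic relations: one must verify that every omitted term --- those from derivatives of $\tilde G^{\mathrm J}$ acting on slowly varying prefactors, and those from the $O(1)$ shifts in the $a_n$ and $b_n$ --- is of strictly smaller order, uniformly on compacts of $\compC\setminus[0,1]$. This is done exactly as in \cite{Forrester-Liu14} and in the proof of Proposition~\ref{PC1}; the $N$-dependent top parameters change nothing, since a factor $\big(x\tilde G^{\mathrm J}(x)N+a_n\big)$ with $a_n=N/\theta+O(1)$ behaves under differentiation just like the factors handled there. As a sanity check, for $\theta=1$ equation \eqref{G8} collapses to $\big(z\tilde G^{\mathrm J}(z)\big)^{2}=z/(z-1)$, the Stieltjes transform identity for the arcsine equilibrium measure of the classical Jacobi ensemble on $[0,1]$.
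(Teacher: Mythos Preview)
Your proposal is correct and follows essentially the same approach as the paper: substitute the parameters of \eqref{qF6} into the hypergeometric ODE \eqref{DpFq}, observe that the top parameters $a_2,\dots,a_{\theta+1}\sim N/\theta$ while the bottom parameters are $O(1)$, and then use the resolvent asymptotics $q_N^{(k)}/q_N\sim N^k(\tilde G^{\mathrm J})^k$ to read off \eqref{G8}. The paper's proof is more terse --- it simply records the leading-order ODE $z(z\tfrac{d}{dz}-N)(z\tfrac{d}{dz}+N/\theta)^\theta q_N=(z\tfrac{d}{dz})^{\theta+1}q_N$ and invokes the analogue of \eqref{u} --- but the content is identical to what you have written.
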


\begin{proof}
  Applying the identity \eqref{DpFq} to \eqref{qF6} with $j = N$, we have, to leading order in $N$,
  \begin{equation}\label{u1}
    z \Big ( z {d \over dz} - N \Big )  \Big ( z {d \over dz} + {N \over \theta} \Big )^\theta 
    q_N =  \Big ( z {d \over dz} \Big )^{\theta + 1} q_N,
  \end{equation}
  while the analogue of (\ref{u}) is %\marginpar{I changed the scaling}
  \begin{equation}\label{u2}
    {q_N^{(k)}(z) \over q_N(z) } \sim  N^k (\tilde{G}^{\rm J}(z))^k.
  \end{equation}
  Use of (\ref{u2}) in (\ref{u1}) gives (\ref{G8}).
\end{proof}

\begin{remark}\label{R3.9}
  \begin{enumerate}[label=(\roman*)]
  \item 
    Taking the limit $\theta \to \infty$ in (\ref{G8}) gives the nonlinear equation 
\begin{equation}\label{G9}
\Big ( {1 \over z \tilde{G}^{\rm J}(z)} - 1 \Big ) \exp \Big ( {1 \over z \tilde{G}^{\rm J}(z)} - 1 \Big ) = - {1 \over z e}.
\end{equation}
By definition, the Lambert $W$-function $W(z)$ is the principal branch of the functional equation
$z = W(z) e^{W(z)}$ defined in $\mathbb C\backslash (-\infty, - e^{-1})$, and thus we have in this case
\begin{equation}\label{G10}
  \lim_{\theta \to \infty} {1 \over z \tilde{G}^{\rm J}(z)} - 1 = W \Big ( - {1 \over z e} \Big ).
 \end{equation}
\item 
  Generally we expect the global density associated with the weight (\ref{J}) to be independent of
$c_1$ and $c_2$ provided those parameters are themselves independent of $N$. On the other hand, the
change of variables (\ref{Cv}) shows that for finite $N$ the density $\rho_{(1)}(x)$, defined by (\ref{3.14a})
with $p_N$ the PDF implied by (\ref{1.1}) with the Jacobi weight (\ref{J}), has the functional property
\begin{equation}\label{3.14b}
  \rho_{(1)}(x)  \Big |_{c_2 = 0} = \theta x^{\theta - 1}  \rho_{(1)}(x^\theta)  \Big |_{\substack{c_1 \mapsto (c_1+1)/\theta - 1, \\ c_2 = 0, \theta \mapsto 1/\theta}}.
\end{equation}
\item 
For $\theta = 1$ (\ref{G8}) reduces to the simple quadratic equation $(z-1) (z \tilde{G}^{\rm J}(z))^2 = z$ and thus
$z \tilde{G}^{\rm J}(z) = 1/ (1 - 1/z)^{1/2}$, where we have imposed the requirement that $z G(z) \to 1$ as $z \to \infty$ in choosing the root of the
quadratic. Recalling (\ref{3.14aS}), this implies the well known functional form for the density in the classical Jacobi ensemble
(see e.g.~\cite[Prop.~3.6.3]{Forrester10})
\begin{equation}\label{Aj}
d\tilde{\mu}^{\rm J}(y)  = d\mu^{\rm J}(y)  = {1 \over \pi} {1 \over \sqrt{y (1 - y)}} dy, \qquad 0 < y < 1.
\end{equation}
  \end{enumerate}
\end{remark}

\medskip
A feature of (\ref{Aj}) is that the moments are given in terms of a binomial coefficient,
\begin{equation}\label{Aj1}
{1 \over \pi} \int_0^1 {y^p \over \sqrt{y(1-y)}} \, dy = {1 \over 2^{2p}} \binom{2p}{p}.
\end{equation}
In fact it is possible to show that the moments of the density implied by the appropriate solution of (\ref{G8}) are given in terms of binomial
coefficients for general $\theta > 0$. 

By the definition \eqref{eq:defn_tilde_G^J} of $\tilde{G}^{\mathrm{J}}(z)$, for $|z|>1$ we have the expansion
   \begin{equation}  \label{zz0}
z \tilde{G}^{\rm J}(z) =  \sum_{p=0}^\infty {\tilde{m}_p^{\rm J} \over z^p},
\end{equation}
where $\tilde{m}_p^{\rm J}$ denotes the $p$-th moment of $\tilde{\mu}^{\rm J}(x)$, analogous to \eqref{eq:expansion_G^L}. To compute $\{ \tilde{m}_p^{\rm J} \}$ we are guided by the knowledge that the moments (\ref{FCnumber}) corresponding to the density for the Laguerre Muttalib--Borodin ensemble can be computed from the functional equation (\ref{3.4b}) by using the Lagrange inversion formula (see e.g.~\cite[Sec.~2]{Forrester-Liu14}). The setting of the latter requires two analytic functions  $f(z)$ and $\phi(z)$ in a neighbourhood $\Omega$ of a point $a$, and $t$ to be small enough so that $|t \phi(z)| < |z - a|$, $z \in \Omega$. It tells us that the equation in $\zeta$
 \begin{equation} \label{zz}
 \zeta = a + t \phi(\zeta)
 \end{equation}
 has one solution in $\Omega$ and furthermore
   \begin{equation}  \label{zz1}
  f(\zeta) = f(a) + \sum_{n=1}^\infty {t^n \over n!} {d^{n-1} \over d a^{n-1}} (f'(a) (\phi(a))^n ).
  \end{equation}

\begin{prop}\label{p3.10} 
Let $\theta > 0$ and $L := (1 + \theta)^{1 + \theta} \theta^{-\theta}$, and define $\tilde{G}^{\rm J}(z)$ as the solution of  (\ref{G8})  with the expansion (\ref{zz0}) for
$\theta > 0$.
We have
\begin{equation}\label{mL}
 \tilde{m}_p^{\rm J} = L^{-p}  \binom{(1 + \theta)p }{ p}.
\end{equation}
\end{prop}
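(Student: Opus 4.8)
The plan is to convert the algebraic equation \eqref{G8} into a Lagrange inversion problem. Write $w := z\tilde{G}^{\rm J}(z)$ and $t := 1/z$; then \eqref{G8} reads
\begin{equation*}
  t = \frac{(w-1)(w + 1/\theta)^\theta}{w^{\theta+1}} .
\end{equation*}
Because $\tilde{m}_0^{\rm J} = 1$ in \eqref{zz0} we have $w \to 1$ as $z \to \infty$, so this relation determines $w = 1 + O(t)$ uniquely as a formal power series in $t$, and $\tilde{m}_p^{\rm J} = [t^p]\,w$. The task is to compute these coefficients, and the whole point is to choose variables in which \eqref{G8} becomes the standard form \eqref{zz}.

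First I would set $v := 1 - 1/w$, i.e.\ $w = (1-v)^{-1}$; a short cancellation of powers of $(1-v)$ collapses the displayed relation to $t = \theta^{-\theta} v(\theta + 1 - v)^\theta$. Then, setting $v := (\theta+1)\eta/(1+\eta)$ so that $1 - v/(\theta+1) = (1+\eta)^{-1}$, one gets
\begin{equation*}
  t = L\,\frac{\eta}{(1+\eta)^{\theta+1}}, \qquad w = \frac{1+\eta}{1-\theta\eta},
\end{equation*}
with $L = (1+\theta)^{1+\theta}\theta^{-\theta}$ as in the statement. Hence, writing $\sigma := t/L$, the variable $\eta$ solves $\eta = \sigma\,\phi(\eta)$ with $\phi(\eta) = (1+\eta)^{\theta+1}$, $\phi(0) = 1 \neq 0$, which is precisely \eqref{zz} with $a = 0$, so the inversion formula \eqref{zz1} is available.

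The key observation is that $w$ is already in the ``resolvent'' form produced by Lagrange inversion. Indeed $\phi'(\eta) = (\theta+1)(1+\eta)^\theta$, and using $\sigma(1+\eta)^{\theta+1} = \eta$ gives $\sigma\phi'(\eta) = (\theta+1)\eta/(1+\eta)$, so that
\begin{equation*}
  1 - \sigma\phi'(\eta) = \frac{1 - \theta\eta}{1 + \eta} = \frac{1}{w} .
\end{equation*}
The standard residue (or ``diagonal'') form of the Lagrange inversion formula, which follows from \eqref{zz1} by summing the series with $f' \cdot \phi^n$ telescoped, states that for $\eta = \eta(\sigma)$ solving $\eta = \sigma\phi(\eta)$ one has $\bigl(1 - \sigma\phi'(\eta(\sigma))\bigr)^{-1} = \sum_{p\ge 0} \bigl([\eta^p]\,\phi(\eta)^p\bigr)\sigma^p$. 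With $\phi(\eta)^p = (1+\eta)^{(1+\theta)p}$ this gives $w = \sum_{p\ge 0}\binom{(1+\theta)p}{p}\sigma^p$, and comparing with $w = \sum_{p\ge 0}\tilde{m}_p^{\rm J}\,t^p = \sum_{p\ge 0}\tilde{m}_p^{\rm J}\,L^p\,\sigma^p$ yields $\tilde{m}_p^{\rm J} = L^{-p}\binom{(1+\theta)p}{p}$, which is \eqref{mL}.

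The only genuine choice in this argument is finding the change of variables $w \mapsto v \mapsto \eta$ that simultaneously reduces the right-hand side of \eqref{G8} to the pure power $(1+\eta)^{\theta+1}$ and keeps $w$ rational in $\eta$; after that everything is mechanical, and in particular no integrality of $(1+\theta)p$ is needed since $\binom{(1+\theta)p}{p}$ is a polynomial in $\theta$. I expect the main (minor) obstacle to be the bookkeeping: if one insists on applying the derivative form \eqref{zz1} directly rather than its summed version, one instead obtains $\tilde{m}_p^{\rm J} = \tfrac{1+\theta}{p}\,[a^{p-1}]\,(1+a)^{(1+\theta)p}(1-\theta a)^{-2}$ and must verify the (routine) polynomial identity $\tfrac{1+\theta}{p}\sum_{j=0}^{p-1}(j+1)\theta^j\binom{(1+\theta)p}{p-1-j} = \binom{(1+\theta)p}{p}$. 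As consistency checks, $p = 0$ gives $\tilde{m}_0^{\rm J} = 1$, and $\theta = 1$ gives $L = 4$ and $\tilde{m}_p^{\rm J} = 4^{-p}\binom{2p}{p}$, matching \eqref{Aj1} and the $\theta = 1$ case in Remark \ref{R3.9}.
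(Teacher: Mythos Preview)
Your proof is correct and takes a genuinely different route from the paper's. Both proofs use Lagrange inversion on a rewriting of \eqref{G8}, but the choices differ and yours is considerably cleaner.

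The paper works directly with $X := z\tilde G^{\rm J}(z) - 1$ (there is a small notational slip in the paper) and, after setting $1/z = Lt$, obtains $X = t\phi(X)$ with the more complicated $\phi(z) = (1+\theta)(1+z)^{\theta+1}\big/(1+\theta z/(1+\theta))^\theta$. Applying the derivative form \eqref{zz1} with $f(\zeta)=\zeta$ then leaves a nontrivial coefficient extraction: the paper expands the two binomial factors, recognises the resulting finite sum as a terminating ${}_2F_1$, and must invoke a contiguous relation together with \eqref{Xt3} to collapse it to the single binomial \eqref{mL}.

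Your substitution $w \mapsto v = 1-1/w \mapsto \eta$ is chosen precisely so that \emph{both} the inversion equation and the target function become simple simultaneously: the equation reduces to $\eta = \sigma(1+\eta)^{\theta+1}$, and $w$ becomes $(1-\sigma\phi'(\eta))^{-1}$. This lets you apply the ``diagonal'' form of Lagrange inversion, $[\sigma^p]\,(1-\sigma\phi'(\eta))^{-1} = [\eta^p]\,\phi(\eta)^p$, and read off $\binom{(1+\theta)p}{p}$ in one line. The gain is that all the hypergeometric manipulation in the paper is absorbed into the change of variables. Your remark that the direct use of \eqref{zz1} with $f(\eta) = (1+\eta)/(1-\theta\eta)$ leads to the polynomial identity $\frac{1+\theta}{p}\sum_{j=0}^{p-1}(j+1)\theta^j\binom{(1+\theta)p}{p-1-j} = \binom{(1+\theta)p}{p}$ is exactly where the paper's ${}_2F_1$ computation sits; your approach bypasses it. The consistency checks against \eqref{Aj1} and the $\theta=1$ case are apt.
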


\begin{proof}
  Let $1/z = Lt$ and $X = z \tilde{G}^{\rm J}(z)$. Then by \eqref{zz0}, $X$ is a power series in $1/z$ and also a power series in $t$. Simple manipulation of (\ref{G8}) shows 
  \begin{equation}\label{Xt}
    X = t \phi(X), \quad \text{where} \quad \phi(z) = (1 + \theta) {(1 + z)^{\theta + 1} \over (1 + \theta z/ (1 + \theta))^\theta},
  \end{equation}
  which is of the form (\ref{zz}) with $a=0$. Applying (\ref{zz1}) with $f(\zeta) = \zeta$ and recalling (\ref{zz0})
  shows that
  \begin{equation}\label{Xt1}
    \tilde{m}_p^{\rm J} = \left. \frac{L^{-p}}{p!} \frac{d^p X}{dt^p} \right\rvert_{t = 0} = L^{-p} {(1 + \theta)^p \over p!} \left. {d^{p-1} \over d z^{p-1}} ( {(1 + z)^{\theta + 1} \over (1 + \theta z/(1 + \theta))^\theta} \Big )^p \right\rvert_{z = 0}.
  \end{equation}
  
 Using the binomial theorem to expand the two main factors on the right-hand side in (\ref{Xt1}) into power series in $z$, then combining the coefficients appropriately to form a single power series shows
  \begin{equation} %\label{Xt2}
    \tilde{m}_p^{\rm J} = L^{-p} {(1 + \theta)^p \over p}
    \sum_{q=0}^{p-1} \binom{(\theta + 1)p}{q}   \binom{- \theta p}{p-1-q}
    \Big ( {\theta \over 1 + \theta} \Big )^{p-1-q}.
  \end{equation}
  The sum can be recognised as a polynomial example of a particular ${}_2 F_1$ Gaussian
  hypergeometric function, allowing us to write
  \begin{equation}\label{Xt2}
    \tilde{m}_p^{\rm J} = L^{-p} { \theta^p \over p}
    {(1 + \theta) \over \theta}
    \binom{- \theta p}{p-1} \,
    {}_2 F_1 \Big ( {1 - p, -p(1 + \theta) \atop 2 - p (1 + \theta)} \Big | {1 + \theta \over \theta} \Big ).
  \end{equation}
  
  The functional equation for the gamma function shows
  \begin{equation}\label{Xt3}
    \binom{- \theta p}{p-1} = { (-1)^{p-1} p \over p (\theta + 1) - 1}
    \binom{p(1 + \theta) -1 }{p}.
  \end{equation}
  Also, in general, it is a simple exercise to verify from the series definition of $ {}_2 F_1$ that
  \begin{multline*}
    {}_2 F_1(a,b;c;z) = {2b - c + 2 + (a - b - 1)z \over b - c + 1} \, {}_2 F_1(a,b+1;c;z)  \\
    + {(b+1)(z-1) \over b - c + 1} \, {}_2 F_1(a,b+2;c;z).
  \end{multline*}
  In the case of the ${}_2 F_1$ in (\ref{Xt2}) we have
  $
  2b - c + 2 + (a - b - 1)z =0$, $b+2 = c$. Thus on the RHS, only the second term contributes, and
  furthermore it can be simplified from the fact that ${}_2 F_1(a,b+2;b+2;z) = (1 - z)^{-a}$, implying the
  result
  \begin{equation}\label{Xt4}
    {}_2 F_1 \Big ( {1 - p, -p(1 + \theta) \atop 2 - p (1 + \theta)} \Big | {1 + \theta \over \theta} \Big ) =
    (-1)^{p-1}  {p(1 + \theta) - 1 \over \theta^p}.
  \end{equation}
  Substituting (\ref{Xt3}) and    (\ref{Xt4}) into (\ref{Xt2}) we obtain (\ref{mL}). 
\end{proof}

 \begin{remark}
   \begin{enumerate}[label=(\roman*)]
   \item 
     With $1/z = Lt$, the large $z$ expansion of $z\tilde{G}^{\rm J}(z)$ is thus seen to be a special case of
     the function
     \begin{equation}\label{Xt5}
       F(t) = \sum_{n=0}^\infty \binom{\alpha + \beta n}{n} t^n.
     \end{equation}
     This is intimately related to Lambert's solution of the trinomial equation $x = q + x^m$ in a power series in $q$
     \cite{Graham-Knuth-Patashnik94}. In mathematical physics, there are applications of (\ref{Xt5}), and its multivariable analogue, in the
     theory of anyons \cite{Aomoto-Iguchi99, Aomoto-Iguchi01}.
   \item 
     As $z \to 0$, we read off from (\ref{G8}) that $(z \tilde{G}^{\rm J}(z))^{\theta + 1} \sim - z (1/\theta)^\theta$. Since the corresponding global density is given in terms of the resolvent by $d\tilde{\mu}^{{\rm J}}(x) =
     {1 \over \pi} {\rm Im} \, \tilde{G}^{\rm J}(x - i 0) dx$, it follows that
     \begin{equation}\label{Ap}
       d\tilde{\mu}^{\mathrm{J}}(x) \mathop{\sim}_{x \to 0^+} \frac{\sin \left( \frac{\pi}{\theta + 1} \right)}{\pi \theta^{\frac{\theta}{\theta + 1}}} x^{-\frac{\theta}{\theta + 1}} dx. %\quad \text{and} \quad d\mu_{(1)}^{{\rm J}}(x) \mathop{\sim}_{x \to 0^+} \frac{\sin \left( {\pi}{\theta + 1} \right)}{\pi \theta^{-\frac{1}{\theta}}} x^{\frac{1}{\theta + 1}} dx.
     \end{equation}
     Up to the scale factor $(1/\theta)^{\theta/(\theta + 1)}$, this is identical to the known $x \to 0^+$ behaviour of the global density in the Laguerre case \cite[Cor.~2.4 with $p=\theta + 1$, $r=1$]{Forrester-Liu14}. %\marginpar{I corrected the scale factor.}
   \item 
     For $\theta = 2$ we have the explicit functional form \cite{Mlotkowski-Penson14}
     $$
     \left. d\tilde{\mu}^{{\rm J}}(x) \right\rvert_{\theta = 2} = \left( {3 (1 + \sqrt{1 - x})^{1/3} \over 4 \pi \sqrt{3(1-x)} }x^{-2/3} +
     {3 (1 + \sqrt{1 - x})^{-1/3} \over 4 \pi \sqrt{3(1-x)}} x^{-1/3} \right) dx.
     $$
     The corresponding leading term behaviour for $x \to 0^+$ agrees with that implied by (\ref{Ap}). 
   \end{enumerate}
 \end{remark}
 
 In addition to the above remarks, we draw attention to a relationship between the Jacobi Muttalib--Borodin ensemble and products of truncations of
 Haar distributed unitary matrices. Let $U_j$ be a Haar distributed unitary random matrix of size $m_j \times m_j$ and let $T_j$ be the corresponding
 $(n + \nu_j) \times (n + \nu_{j-1})$, with $\nu_j  \ge 0$ and $\nu_0 = 0$, and $m_j \ge 2n$ upper
 left block such that $\ell \ge 2n$. Let $G_j$ denote a standard complex Gaussian matrix of size $(n + \nu_j) \times (n + \nu_{j-1})$.
 According to the recent work \cite{Gawronski-Neuschel-Stivigny14}, in the limit $n \to \infty$ the singular values squared of the random matrix product
 \begin{equation}\label{GTs}
 G_{r} \cdots G_{s+1} T_s \cdots T_1
 \end{equation}
  has a density such that its moments $J_{r,s,1}(n)$ are given by
 \begin{equation}\label{Ja}
 J_{r,s,a}(n) = {a \over n} \Big ( {a^r \over (1 + a)^s } \Big )^n P_{n-1}^{(\alpha_{n-1}, \beta_{n-1})} \Big ( {1 - a \over 1 + a} \Big ),
 \end{equation}
 where
 $$
 \alpha_n = r n + r + 1, \qquad \beta_n = -(r+1-s)n - (r+2-s)
 $$
 and $P_k^{(\alpha,\beta)}(x)$ denotes the Jacobi polynomial. It is also required that $r > s$ and thus at least one Gaussian matrix in
 the product. The immediate relevance of this work is due to the fact that the moments (\ref{Ja}) are shown in \cite{Gawronski-Neuschel-Stivigny14} to be such
 that $w = azG(z)$, where $G(z)$ denotes the corresponding resolvent, satisfy
 \begin{equation}\label{GSa}
 w^{r+1} - z (w-a) (w+1)^s = 0.
 \end{equation}
 This is the same as our equation (\ref{G8}) with $w/a = z \tilde{G}^{\rm J}$, $r=s=a=\theta$. Indeed for these parameters we can check
 that (\ref{Ja}) reduces to (\ref{mL}), up to the form of the scale factor $L$.
 
 Subsequent to  \cite{Gawronski-Neuschel-Stivigny14}, the work \cite{Kieburg-Kuijlaars-Stivigny15} has considered integrability and exactly solvable features of the random matrix product 
 (\ref{GTs}) with $r=s$. In particular, it has been shown that the corresponding characteristic polynomial for the squared singular values is given by \cite[Eq.~(2.31) with $r \mapsto s$]{Kieburg-Kuijlaars-Stivigny15}
  \begin{equation}\label{GSb}
  P_n(x) = G_{s+1,s+1}^{0,s+1} \Big ( {n+1,n-m_1,\dots, n - m_s \atop
  0, -\nu_1,\dots,-\nu_s} \Big | x \Big ),
  \end{equation}
  where $G_{s+1,s+1}^{0,s+1}$ is a particular Meijer $G$-function and instead of the constraint $m_j \ge 2n$ as required in \cite{Gawronski-Neuschel-Stivigny14}, it is required $m_1 \ge 2n _ 1 + \mu_1$ and
  $m_j \ge  n + \nu_j + 1$ ($j=2,\dots,n$). According to the strategy introduced in \cite{Forrester-Liu14}, and applied in the derivation of
  Propositions \ref{PC1} and \ref{P3.8} above, the significance of this result is that the Meijer $G$-function satisfies a linear differential equation
  (see e.g.~\cite[Sec.~5.8]{Luke69}% {WikiG}
  ) allowing us to deduce a polynomial equation for the corresponding resolvent $G(z)$. Specifically, this procedure gives
  \begin{equation}\label{3.47}
  z( zG(z) - 1) \prod_{j=1}^s \Big ( zG(z) -1 + \alpha_j) = z G(z) \prod_{j=1}^s ( zG(z) + \beta_j),
  \end{equation}
where $\alpha_j = \lim_{n \to \infty} m_j/n$, $\beta_j = \lim_{n \to \infty} \nu_j/n$. We see that (\ref{3.47})  reduces to (\ref{G8}) in the case $s = \theta$, $\beta_j=0$ $(j=1,\dots,s)$,
$\alpha_j = 1 + 1/\theta$ $(j=1,\dots,s)$. Thus we learn that the global spectral density for the Jacobi Muttalib--Borodin ensemble in the case $\theta = s \in \mathbb Z^+$ is the same as the global spectral density for the
squared singular values of the random matrix product $T_s \cdots T_1$ where each $T_j$ is an $n \times n$ sub-block of a Haar distributed unitary matrix of size $(n + p) \times (n+p)$ where
$\lim_{n \to \infty} p/n = 1/s$.

Choosing instead $\beta_j = 0$ $(j=1,\dots,s)$ and $\alpha_j = 1 + 1/a$ we see that (\ref{3.47}) is the same equation as
(\ref{GSa}) with $w/a = zG$ in the latter. Thus $J_{s,s,a}(n)$ has an interpretation as the moments of the spectral density of the squared singular values of the random matrix product
$T_s \cdots T_1$ where each $T_j$ is an $n \times n$ sub-block of a Haar distributed unitary matrix of size $(n + p) \times (n+p)$ where
$\lim_{n \to \infty} p/n = 1/a$ for general $a > 0$. The case $a \to 0$ exhibits further structure. Then the underlying unitary matrices are of infinite size, and after rescaling the
sub-blocks have a Gaussian distribution. More specifically, an $n \times n$ sub-block $U_n$ of an $(n+p) \times (n+p)$ Haar distributed unitary matrix has a distribution proportional to
$(\det (\mathbb I_n - U_n^\dagger U_n))^{p-n}$ \cite{Zyczkowski-Sommers00}. It follows that for $p \to \infty$, and with $n$ fixed, the distribution of $G_n = \sqrt{p} U_n$ is proportional to
$e^{-G_n^\dagger G_n}$. Hence $G_n$ is a standard complex Gaussian matrix. For $a \to 0$ we see from (\ref{Ja}) that
\begin{equation}
a^{-1 - sn} J_{s,s,a}(n) \to {1 \over n} P_{n-1}^{(sn+1,-n)}(1) =
{1 \over sn+1} \Big ( {sn + n \atop n} \Big ).
\end{equation}
These are Fuss--Catalan numbers (\ref{FCnumber}), which we know are moments of global spectral limit for the squared singular values of a product of $s$ standard complex Gaussian matrices.

\section{The global density --- saddle point method} \label{sec:saddle_pt}

It has already been remarked that the Borodin-Muttalib ensemble \eqref{1.1} is intimately related to 
biorthogonal polynomials of one variable as specified by (\ref{PL}). Moreover, as noted in \cite{Muttalib95}
(\ref{1.1}) is an example of a determinantal point process. This means that the general $k$-point correlation
function $\rho_{(k)}(x_1,\dots,x_k)$ is fully determined by a function $K(x,y) = K_N(x, y)$, independent of $k$ and referred
to as the correlation kernel, according to the formula
\begin{equation}\label{KK}
\rho_{(k)}(x_1,\dots,x_k) = \det [ K(x_i, x_j) ]_{i,j=1,\dots,k}.
\end{equation}
Moreover, the correlation kernel is expressed in terms of the biorthogonal polynomials $p_l, q_l$ defined and corresponding normalisation $h_l$ in \eqref{PL} according to
\begin{equation}\label{KK1}
K(x,y) = e^{-(V(x) + V(y))/2} \sum_{l=0}^{N-1}  {1 \over h_l} p_l(x) q_l(y^\theta).
\end{equation} 
Since we focus on the classical Laguerre and Jacobi cases, we denote the correlation kernel by $K^{\mathrm{L}}(x, y)$ for $V$ defined in \eqref{L} and $K^{\mathrm{J}}(x, y)$ for $V$ defined in \eqref{J}.

In \cite{Borodin99}, an indirect way to transform the summation in (\ref{KK1}) was devised for the classical Laguerre and Jacobi weights. This transformed summation enabled the computation of the so called hard edge scaled limit. This refers to the limit $N \to \infty$ with $x,y$ scaled so that in the neighbourhood of the origin the spacing between eigenvalues is of order unity. Our present interest is in the global limit of the one-point function. The transformed summation of \cite{Borodin99} is not suited for that purpose. Fortunately Propositions \ref{prop:corr_kernel_L} and \ref{prop:corr_kernel_J} provide the double contour integral formulas for both the Laguerre and Jacobi cases of the Borodin-Muttalib ensemble, which is suited. In Section \ref{sec:realisations_and_extensions} we showed that they are spectrally equivalent to the upper-triangular ensemble (defined in \eqref{yk}) and its Jacobi counterpart (defined in \eqref{eq:diagonal_Z}) respectively, but with restricted values of $\alpha_i$ and $\beta_i$. 

\subsection{The Laguerre Muttalib--Borodin ensemble} \label{subsec_multiple_Laguerre_ensemble}

In the Laguerre case, comparing the correlation kernel formula \eqref{KK1} with the correlation formula \eqref{eq:corr_kernel_L} with $n_1 = n_2 = N$, we derive the kernel formula \eqref{KK2} in Proposition \ref{prop:kernel_formula_L}, modulo the factor $h(x)/h(y)$. A multiplicative factor in the form of $h(x)/h(y)$ in the correlation kernel does not change the joint probability density function expressed in \eqref{KK}, so the kernel formula in \eqref{KK2} is valid. This factor account for the different meaning of $K(n_1, x; n_2, y)$ adopted in Proposition \ref{prop:corr_kernel_L} in consistency with \cite[Eq.~(21)]{Adler-van_Moerbeke-Wang11}, which has the factor $e^{-(V(x) + V(y))/2}$ in \eqref{KK1} replaced by $e^{-x} y^c$, see also \cite[Eq.~(108)]{Adler-van_Moerbeke-Wang11} and derivations above it. Thus the function $h$ can be determined, up to a multiplicative constant, by the requirement
\begin{equation}\label{hh}
{ h(x) \over h(y) } = \Big ({x \over y} \Big )^{c/2} e^{(x-y)/2}.
\end{equation}

We recall from \eqref{Pd3A} that the upper-triangular ensemble is well defined for $\theta = 0$ and $c > -1$ and it is the $\theta \to 0^+$ limit of the Laguerre Muttalib--Borodin ensemble. We call it the $\theta = 0$ case of the ensemble. The double contour integral formula \eqref{KK2} is still valid in this case, where $\alpha_j = c$ for all $j = 1, \dotsc, N$. Note that this case is degenerate in the sense that the biorthogonal polynomials $p_l$ and $q_l$ are not well defined by \eqref{PL}. 

Below we consider the limiting global density in two cases, first for $\theta > 0$ and next for $\theta = 0$. We give details of the derivation in the former case, while point out the differences of the argument in the latter.

\subsubsection{The $\theta > 0$ case} \label{subsubsec:theta>0_L}

We seek to use (\ref{KK2}) to compute the density function of the the transformed limiting counting measure $d\tilde{\mu}$ in \eqref{3.4}. By definition
the density function is given by \eqref{eq:limit_counting_measure} and the required change of variables is $\tilde{x} = (x/\theta)^\theta$ (recall the text below (\ref{3.4})),
so (\ref{KK1}) gives
\begin{equation}\label{hh1}
  \tilde{\rho}^{{\rm L}}_{(1)}(x) =  \lim_{N \to \infty} \tilde{K}^{\mathrm{L}}(x) := \lim_{N \to \infty}  x^{1/\theta - 1} K^{\rm L}(N \theta x^{1/\theta},
N \theta x^{1/\theta}),
\end{equation}
and is the global density appearing in (\ref{eq:defn_tilde_G^L}). The scale factor $\theta$ is chosen with the benefit of hindsight as it allows (\ref{3.4b}) to be reclaimed without the need for rescaling.

Under the assumption $\theta \in \intZ^+$, we showed in Section \ref{S3.1} that the resolvent $\tilde{G}^{\mathrm{L}}(z)$ defined in \eqref{eq:defn_tilde_G^L} corresponding to $d\tilde{\mu}^{\mathrm{L}}$, satisfies the identity \eqref{3.4b} characterizing the Fuss--Catalan density. Consequently
\begin{equation} %\label{hh2}
  \tilde{\rho}^{{\rm L}}_{(1)}(x) = \lim_{\epsilon \to 0^+} {1 \over 2 \pi i}
  \left( \tilde{G}^{\rm L}(x - i \epsilon) -  \tilde{G}^{\rm L}(x + i \epsilon) \right),
\end{equation}
is the Fuss--Catalan distribution supported on $I = (0, (1 + \theta)^{1 + \theta} \theta^{-\theta})$. The parametrization by Biane and independently Neuschel \cite{Bercovici-Pata99, Neuschel14} gives that for $x \in I$, there is a unique $\varphi \in (0, \pi/(\theta + 1))$ such that
\begin{equation}\label{4.15}
  x = \frac{(\sin((\theta + 1)\varphi))^{\theta + 1}}{\sin\varphi(\sin(\theta\varphi))^\theta}, \quad \text{where} \quad 0 < \varphi < \frac{\pi}{\theta + 1},
\end{equation}
and which allows for the simple functional form
\begin{equation} \label{eq:Fuss-Catalan_density}
  \rho^{\FC}(x) = \frac{1}{\pi x} \frac{\sin((\theta + 1)\varphi)\sin\varphi}{\sin(\theta\varphi)} = \frac{\sin(\theta\varphi)^{\theta - 1}(\sin\varphi)^2}{\pi \sin((\theta + 1)\varphi)^{\theta}}.
\end{equation}
With the meaning of $\rho^{\FC}(x)$ in (\ref{1.8}) so established, we now turn to our main task of this subsection.
 
 \medskip
 %\noindent
 \begin{proof}[Proof of Proposition \ref{prop:global_density}]
Our analysis,
which is based on the method of steepest descents, is guided by a very similar calculation carried out recently by
Liu, Wang and Zhang \cite{Liu-Wang-Zhang14} in relation to the global density for the squared singular values of a product $M$
standard complex Gaussian matrices. That these two computations should be closely related is not surprising upon
recalling from Section \ref{S3.1} that the global density in the case $\theta = M \in \mathbb Z^+$ coincides with the 
global density for the squared singular values of a product $M$
standard complex Gaussian matrices. It turns out that if $\theta \geq 1$, our proof follows that in \cite{Liu-Wang-Zhang14} closely, and if $\theta \in (0, 1)$, we need to construct the contour $\Sigma$ in an alternative way, which we explain in the proof. As well as guiding our overall strategy, \cite{Liu-Wang-Zhang14} will be referred to for the
proof of some technical bounds, which we omit below.

Substituting $N\theta x^{1/\theta}$ for $x$ and $N\theta x^{1/\theta}$ for $y$ in \eqref{KK2} gives
\begin{equation} \label{eq:double_contour_density}
  \tilde{K}^{\rm L}(x) := x^{\frac{1}{\theta} - 1} K^{\rm L}(N\theta x^{\frac{1}{\theta}}, N\theta x^{\frac{1}{\theta}}) = \frac{1}{N\theta x} \frac{1}{(2\pi i)^2} \oint_{\Sigma} dz  \oint_{\Gamma_\alpha} dw \frac{e^{F(z; x)}}{e^{F(w; x)}} \frac{1}{z - w},
\end{equation}
where
\begin{equation}
  F(z; x) = \log \left( \frac{\Gamma(z + 1)}{(N\theta x^{\frac{1}{\theta}})^{z}} \prod^N_{k = 1} (z - c - (k - 1)\theta) \right).
\end{equation}
The logarithm function takes the principal branch and we assume that the value of $\log z$ for $z \in (-\infty, 0)$ is continued from above, to remove ambiguity. To derive the asymptotics of $F(z; x)$, we denote
\begin{equation}\label{zuv}
  z = N\theta u, \quad w = N\theta v,
\end{equation}
and let $\epsilon$ be a positive constant. We begin by noting that uniformly in $z$ such that for all $z$ satisfying
\begin{equation} \label{eq:z_away_from_interval}
  \dist(z, [0, N\theta]) > \epsilon N, \quad \text{or equivalently} \quad \dist(u, [0, 1]) > \epsilon/\theta,
\end{equation}
we have
\begin{equation}\label{4.10}
  \begin{split}
    & \log \left( \prod^N_{k = 1} (z - c - (k - 1)\theta) \right) \\
    = {}& N (\log N + \log \theta) + N \log u + N \int^1_0 \log \left( 1 - \frac{c}{N\theta u} - \frac{x}{u} \right) dx \\
    & + \sum^N_{k = 1} \left[ \log \left( 1 - \frac{\frac{c}{\theta} + (k - 1)}{Nu} \right) - N \int^{\frac{k}{N}}_{\frac{k - 1}{N}} \log \left( 1 - \frac{c}{N\theta u} - \frac{x}{u} \right) dx \right] \\
    = {}& N \left( (1 - u) \log \left( 1 - \frac{1}{u} \right) + \log u \right) \\
    & + \left( \frac{c}{\theta} - \frac{1}{2} \right) \log \left( 1 - \frac{1}{u} \right) + N (\log N + \log \theta - 1) + \bigO(N^{-1}).
  \end{split}
\end{equation}
Next using Stirling's formula \cite[6.1.37]{Abramowitz-Stegun64}, we have that for $z$ satisfying \eqref{eq:z_away_from_interval} and $\arg z = \arg u \in (-\pi + \epsilon, \pi - \epsilon)$,
\begin{equation}
  \log \Gamma(z + 1) = N\theta u(\log N + \log \theta + \log u - 1) + \frac{1}{2} \log u + \log \sqrt{2\pi N\theta} + \bigO(N^{-1}). \nonumber
\end{equation}
This allows us to write
\begin{equation} \label{eq:F_in_F_hat}
  F(z; x) = N \hat{F}_1(u; x) + \hat{F}_0(u)  + N \log N + \log \sqrt{2\pi N\theta} + \bigO(N^{-1}),
\end{equation}
where 
\begin{align}
  \hat{F}_1(u; x) & = (\theta + 1)u(\log u - 1) - (u - 1)(\log(u - 1) - 1) - u\log x,  \label{4.13} \\
  \hat{F}_0(u) & = \Big({c \over \theta}  - {1 \over 2} \Big )  \log \left( 1 - \frac{1}{u} \right) + \frac{1}{2} \log u,
\end{align}
and the error bound  $ \bigO(N^{-1})$ holds uniformly for all $z$ satisfying \eqref{eq:z_away_from_interval} and $\arg z = \arg u \in (-\pi + \epsilon, \pi - \epsilon)$.
Substituting in (\ref{eq:double_contour_density}) shows that if we can deform $\Sigma$ and $\Gamma_{\alpha}$ such that $\arg z, \arg w \in (-\pi + \epsilon, \pi - \epsilon)$ and $\lvert z \rvert, \lvert w \rvert > \epsilon N$, then
\begin{equation} \label{eq:double_contour_density1}
  \tilde{K}^{\rm L}(x) = \frac{1}{N\theta x} \frac{1}{(2\pi i)^2} \oint_{\Sigma} dz  \oint_{\Gamma_\alpha} dw \frac{e^{N\hat{F}_1(u; x)
  + \hat{F}_0(u; x)}}{e^{N\hat{F}_1(v; x)
  + \hat{F}_0(v; x)}} \frac{1}{z - w} \Big (1 + \bigO(N^{-1}) \Big ).
\end{equation}
In the case
$\theta = M \in \mathbb Z^+$ this contour integral is comparable to \cite[Eq.~(2.7)]{Liu-Wang-Zhang14}, and our $\hat{F}_1(u; x)$ is defined the same as the $\hat{F}(z; a)$ occurring in \cite[Eq.~(2.6)]{Liu-Wang-Zhang14} with $u = z$ and $x = a$.

It follows from (\ref{4.13}) that
$$
{d \over du} \hat{F}_1(u;x) = \log {u^{\theta + 1} \over (u-1) x},
$$
and thus stationary points occur for $u$ such that
\begin{equation}\label{ux}
u^{\theta + 1} = (u - 1) x.
\end{equation}
Note that this is precisely the equation (\ref{3.4b}) after the identification $z \mapsto x$, $z \tilde{G}^{\rm L}(z) \mapsto u$ in the latter. It is known that (\ref{ux}) permits a pair of complex conjugate solutions for $x \in (0, \theta(1 + 1/\theta)^{\theta + 1})$, $u_+^{\rm L}, u_-^{\rm L}$ say, which merge to real solutions for $x=0$ and $x= \theta (1 + 1/\theta)^{\theta + 1}$.
With $x$ parametrized by $\varphi$ as in \eqref{4.15}, we can check that
\begin{equation} \label{eq:U^L_pm}
  u_+^{\rm L} = \frac{\sin((\theta + 1)\varphi)}{\sin(\theta\varphi)}e^{i\varphi}, \quad u_-^{\rm L} = \frac{\sin((\theta + 1)\varphi)}{\sin(\theta\varphi)}e^{-i\varphi}.
\end{equation}
Note that our $u^{\mathrm{L}}_{\pm}$ are the same as the $w_{\pm}$ defined in \cite[Eq.~(2.10)]{Liu-Wang-Zhang14}. In terms of this parametrisation we have (comparing with \cite[Eq.~(2.11)]{Liu-Wang-Zhang14})
\begin{equation} \label{eq:defn_C_pm}
  C_{\pm} := \frac{d^2}{du^2} \hat{F}_1(u_{\pm}^{\rm L}; x) = \frac{1}{u_{\pm}^{\rm L}} \left( \theta + 1 - \frac{\sin((\theta + 1)\varphi)}{\sin\varphi} e^{\mp i\theta \varphi} \right),
\end{equation}
and in particular $C_{\pm} \neq 0$ for $\varphi$ in the range given in (\ref{4.15}).

Next we deform the contours $\Sigma$ and $\Gamma_{\alpha}$ for the steepest-descent analysis. We consider the cases $\theta \geq 1$ and $\theta \in (0, 1)$ separately. 

\paragraph{Construction of the contours: $\theta \geq 1$ case}

We assume that $\Re N\theta u_{\pm}^{\rm L}$ is different from all $\alpha_k$ ($k = 1, \dotsc, N$). If this assumption is not satisfied, see Remark \ref{rmk:identical_to_a_k} below.

It is clear that the Hankel like contour $\Sigma$ can be deformed to an infinite contour that is from $-i \cdot \infty$ to $i \cdot \infty$, as long as it keeps the poles $-1, -2, \dotsc$ of $z$ to its left and $\Gamma_{\alpha}$ to its right. Actually this deformation has more freedom. By the residue theorem, if $\Sigma$ is deformed into the infinite vertical contour to the right of $\Gamma_{\alpha}$, the double contour integral in \eqref{eq:double_contour_density} remains the same. Moreover, we can split $\Gamma_{\alpha}$ into two positively oriented contours that jointly enclose all the poles $\alpha_1, \dotsc, \alpha_N$, and let $\Sigma$ be an infinite vertical contour passing between them. See \cite[Eq.~(2.8)]{Liu-Wang-Zhang14} for the explicit computation in a similar case. Specifically, deform $\Sigma$ from the Hankel contour into the upward vertical contour 
\begin{equation} \label{eq:Simaga_-1_vertical}
  \Sigma = \{ \Re N\theta u_{\pm}^{\rm L} + it \mid t \in \realR \}.
\end{equation}
To express the shape of the deformed contour $\Gamma_{\alpha}$, we define first the contours $\tilde{\Gamma}$ and $\tilde{\Gamma}^{\epsilon}$ where $\epsilon > 0$. Thus define
\begin{equation} \label{eq:Gamma_constr_Lag}
  \tilde{\Gamma} = \left\{ \frac{\sin((\theta + 1)\phi)}{\sin(\theta\phi)}e^{i\phi} \middle| \phi \in \left[ -\frac{\pi}{\theta + 1}, \frac{\pi}{\theta + 1} \right] \right\},
\end{equation}
and for a small enough $\epsilon > 0$, (see Figure \ref{fig:tilde_Gamma})
\begin{multline} \label{eq:defn_tilde_Gamma^epsilon}
  \tilde{\Gamma}^{\epsilon} = \{ z \in \tilde{\Gamma} \mid \lvert z \rvert \geq \epsilon \} \\
  \cup \text{the arc of $\{ \lvert z  \rvert = \epsilon \}$ connecting $\tilde{\Gamma} \cap \{ \lvert z \rvert = \epsilon \}$ and through $-\epsilon$}.
\end{multline}
Then we define $\Gamma_{\alpha}$, depending on two small positive constants $\epsilon$ and $\epsilon'$. Here we take the notational convention that if $C$ is a contour and $r > 0$, then $rC$ is the contour consisting of $\{ z \mid z/r \in C \}$ and with the same orientation. In terms of this notation
\begin{equation} \label{eq:defn_contour_Gamma}
  \Gamma_{\alpha} = \Gamma_{\curved} \cup \Gamma_{\vertical}, \quad \text{where} \quad \Gamma_{\curved} = \Gamma_1 \cup \Gamma_2, \quad \Gamma_{\vertical} = \Gamma_3 \cup \Gamma_4,
\end{equation}
and
\begin{align}
  \Gamma_1 = {}& N \theta \tilde{\Gamma}^r \cap \{ z \mid \Re z \leq \Re N \theta u^{\mathrm{L}}_{\pm} - \epsilon \}, \label{eqGamma_alpha_split:1} \\
  \Gamma_2 = {}& N \theta \tilde{\Gamma}^r \cap \{ z \mid \Re z \geq \Re N \theta u^{\mathrm{L}}_{\pm} + \epsilon \}, \qquad \text{with $r = \frac{[\epsilon' N] + \frac{1}{2}}{N}$}, \label{eqGamma_alpha_split:2} \\
  \Gamma_3 = {}& \text{vertical bar connecting the two ending points of $\Gamma_1$}, & \label{eqGamma_alpha_split:3} \\
  \Gamma_4 = {}& \text{vertical bar connecting the two ending points of $\Gamma_2$}. \label{eqGamma_alpha_split:4}
\end{align}
\begin{figure}[htb]
  \begin{minipage}[t]{0.45\linewidth}
    \centering
    \includegraphics{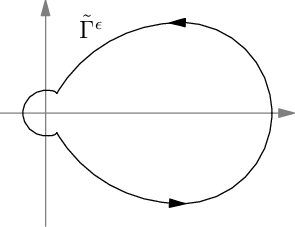}
    \caption{The schematic shape of $\tilde{\Gamma}^{\epsilon}$ for the Laguerre Muttalib--Borodin ensemble with $\theta \geq 1$.}
    \label{fig:tilde_Gamma}
  \end{minipage}
  \hspace{\stretch{1}}
  \begin{minipage}[t]{0.45\linewidth}
    \centering
    \includegraphics{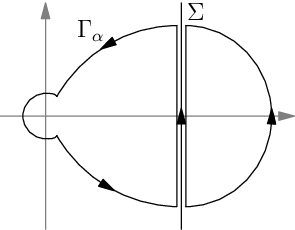}
    \caption{The schematic shapes of $\Sigma$ and $\Gamma_{\alpha}$ for the Laguerre Muttalib Borodin ensemble with $\theta \geq 1$.}
    \label{fig:contours}
  \end{minipage}
\end{figure}

\begin{figure}[htb]
  \begin{minipage}[t]{0.45\linewidth}
    \centering
    \includegraphics{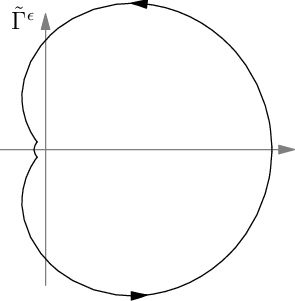}
    \caption{The schematic shape of $\tilde{\Gamma}^{\epsilon}$ for the Laguerre Muttalib--Borodin ensemble with $\theta \in (0, 1)$.}
    \label{fig:small_theta}
  \end{minipage}
  \hspace{\stretch{1}}
  \begin{minipage}[t]{0.45\linewidth}
    \centering
    \includegraphics{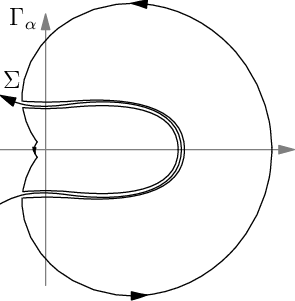}
    \caption{The schematic shapes of $\Sigma$ and $\Gamma_{\alpha}$ for the Laguerre Muttalib Borodin ensemble with $\theta \in (0, 1)$.}
    \label{fig:small_theta_all}
  \end{minipage}
\end{figure}
Note that $\Gamma_1 \cup \Gamma_3$ and $\Gamma_2 \cup \Gamma_4$ are disjoint closed contours, and we assume that they are both oriented counterclockwise. Here we choose $\epsilon$ small enough so that $\Re N \theta u_{\pm}^{\rm L} \pm \epsilon$ lie between two poles of the integral $\alpha_j$ and $\alpha_{j + 1}$, and then $\Gamma_1 \cup \Gamma_3$ and $\Gamma_2 \cup \Gamma_4$ jointly cover all the poles $\alpha_1, \dotsc, \alpha_N$. Our contours $\Sigma$ and $\Gamma_{\alpha}$ are identical to the contours $\mathcal{C}$ and $\Sigma$ respectively defined in \cite[Sec.~2.2]{Liu-Wang-Zhang14} up to the factor $\theta$ and our $\tilde{\Gamma}$ and $\tilde{\Gamma}^{\epsilon}$ are identical to $\tilde{\Sigma}$ and $\tilde{\Sigma}^{\epsilon}$ in \cite[Sec.~3.1]{Liu-Wang-Zhang14} respectively, if $\theta = M \in \intZ^+$. See Figure \ref{fig:contours} for the shapes of $\Sigma$ and $\Gamma_{\alpha}$.

\paragraph{Construction of contours: $\theta \in (0, 1)$ case}

The construction of the contours in the $\theta \geq 1$ case obviously is not valid if $\theta < 1$ and $x$ is close to $0$, since if the vertical line through $u^{\mathrm{L}}_{\pm}$ will intersect $\tilde{\Gamma}^{\epsilon}$ defined in \eqref{eq:defn_tilde_Gamma^epsilon} at other points, see Figure \ref{fig:small_theta}. So $\Sigma$ needs to be deformed into a more complicated shape, and the construction becomes less straightforward. We use method of elementary dynamical systems to construct the contour $\Sigma$.

We construct $\Sigma = N \theta \tilde{\Sigma}$, where $\tilde{\Sigma}$ is a contour passing through $u^{\mathrm{L}}_{\pm}$. Consider the gradient field generated by the function $\Re \hat{F}_1(z; x)$, denoted by $\nabla \Re \hat{F}_1(z; x)$. Through a regular point, there is a unique flow line associated to $\nabla \Re \hat{F}_1(z; x)$, and the value of $\Re \hat{F}_1(z; x)$ increases as $z$ moves along the flow line. Since $\Re \hat{F}_1(z; x)$ is a harmonic function, a flow line will not stop at a local maximum or start from a local minimum. At a saddle point, we can concatenate a flow line into it and a flow line out of it, and $\Re \hat{F}_1(z; x)$ increases along the concatenated flow line. Hence we have that any flow line can be prolonged so that it starts from either infinity or a singular point of $\nabla \Re \hat{F}_1(z; x)$, (that is,  $0$ or $1$,) although the prolonged flow line may not be unique if it goes through a saddle point. 

By the definition \eqref{eq:U^L_pm} of $u^{\mathrm{L}}_+$, we have that $u^{\mathrm{L}}_+$ is a critical point of $\nabla \Re \hat{F}_1(z; x)$ of second order, so from this point there are two flow lines going into this point. Note that the curve $\tilde{\Gamma} \cap \compC_+$ divides $\compC_+$ into two disconnected parts. Since $\Re \hat{F}_1(z; x)$ attains its minimum over $\tilde{\Gamma} \cap \compC_+$ at $u^{\mathrm{L}}_+$, the two flow lines into $u^{\mathrm{L}}_+$ cannot intersect $\tilde{\Gamma} \cap \compC_+$. We denote $\gamma_1$ the flow line coming to $u^{\mathrm{L}}_+$ from below $\tilde{\Gamma} \cap \compC_+$, and $\gamma_2$ the flow line coming to $u^{\mathrm{L}}_+$ from above it. 

Since the gradient field $\nabla \Re \hat{F}_1(z; x)$ does not have a singularity within the region enclosed by $\tilde{\Gamma} \cap \compC_+$ and the inteval $[0, 1 + \theta^{-1}]$, flow line $\gamma_1$ does not start within the region, and has to enter the region before reaching $u^{\mathrm{L}}_+$. Since the value of $\Re \hat{F}_1(z; x)$ on $\tilde{\Gamma} \cap \compC_+$ attains its minimum at $u^{\mathrm{L}}_+$, $\gamma_1$ has to enter the region from the $[0, 1 + \theta^{-1}]$ side. Hence we let $\sigma_1$ be the part of $\gamma_1$ between its (last) entrance to the region and $u^{\mathrm{L}}_+$. It is clear that $\sigma_1$ lies in the region and connects $u^{\mathrm{L}}_+$ and a point on $(0, 1 + \theta^{-1})$. (We can further show that this point is $1$, but it is not relevant to our construction.)

On the other hand, we want to show that the flow line $\gamma_2$ does not intersect the real axis. If this holds, it has to be from infinity to $u^{\mathrm{L}}_+$, and from the limiting behaviour of $\hat{F}_1(z; x)$ we know that it comes from the direction $e^{i\pi} \cdot \infty$. We construct $\tilde{\Sigma} \cap \compC_+$ as $\gamma_2 \cup \sigma_1$, and then have $\tilde{\Sigma}$ by taking the reflection about the real axis. In the end, we take $\Sigma = N \theta \tilde{\Sigma}$ and orient $\Sigma$ as from $N \theta u^{\mathrm{L}}_-$ to $N \theta u^{\mathrm{L}}_+$, and finish the construction of $\Sigma$. 

To show that $\gamma_2$ does not intersect the real axis, or more specifically, $\realR \setminus [0, 1 + \theta^{-1}]$, we note that the ray $(1 + \theta^{-1}, +\infty)$ is a flow line, so $\gamma_2$ does not intersect with it since $\gamma_2$ does not overlap with it. Let $x^* \in (-\infty, 0)$ be the unique critical point of $\nabla \Re \hat{F}_1(z; x)$ on $(-\infty, 0)$, then $(-\infty, x^*)$ and $(x^*, 0)$ are flow lines, so $\gamma_2$ can intersect with $(-\infty, 0)$ only at $x^*$. However, we can check that $\Re \hat{F}_1(x^*; x) > \Re \hat{F}_1(0; x) > \Re \hat{F}_1(u^{\mathrm{L}}_+; x)$, so $\gamma_2$ cannot pass through $x^*$ before reaching $u^{\mathrm{L}}_+$.

Now we briefly describe the construction of $\Gamma_{\alpha}$ in the $\theta \in (0, 1)$ case. We still define $\tilde{\Gamma}^{\epsilon}$ by \eqref{eq:defn_tilde_Gamma^epsilon}, and then define $\Gamma_1$ and $\Gamma_2$ by \eqref{eqGamma_alpha_split:1} and \eqref{eqGamma_alpha_split:2}, and also define $\Gamma_{\curved}$ as the union of $\Gamma_1$ and $\Gamma_2$, as in \eqref{eq:defn_contour_Gamma}. Next, analogous to $\Gamma_3$ and $\Gamma_4$ defined in \eqref{eqGamma_alpha_split:3} and \eqref{eqGamma_alpha_split:4}, we define, in our case, $\Gamma_3$ as a contour connecting the two ending points of $\Gamma_1$ such that all points of $\Gamma_3$ are within distance $\epsilon$ to $\Sigma$, and $\Gamma_4$ as a contour connecting the two ending points of $\Gamma_2$ such that all points of $\Gamma_4$ are within distance $\epsilon$ to $\Sigma$. Finally we denote the union of $\Gamma_3$ and $\Gamma_4$ as $\Gamma_{\vertical}$, although $\Gamma_3$ and $\Gamma_4$ are no longer vertical, and let $\Gamma = \Gamma_{\curved} \cup \Gamma_{\vertical}$ as in \eqref{eq:defn_contour_Gamma}. Hence we finish the construction of the contours in the $\theta \in (0, 1)$ case, and see Figure \ref{fig:small_theta_all} for the shape of the contours.

\begin{remark}
  The construction of contours $\Sigma$ and $\Gamma_{\alpha}$ above also applies for the $\theta \geq 1$ case. We still keep the construction with vertical $\Sigma$, because it is conceptually simpler and more analogous to the construction in \cite{Liu-Wang-Zhang14}.
\end{remark}

Our choice of the contours $\Sigma$ and $\tilde{\Gamma}_{\alpha}$, in either the $\theta \geq 1$ case or the $\theta \in (0, 1)$ case, allows $N \theta u_\pm^{\rm L}$ to be identified as maximum points of $\Re F(z; x)$ for $z \in \Sigma$ and minimum points of $\Re F(z; x)$ for $z \in \Gamma_{\curved}$, which is key to the subsequent asymptotic analysis. To be precise, we state the result as follows, where we denote $D_r(z) = \{ w \in \compC \mid \lvert w - z \rvert < r \}$.
\begin{lemma} \label{lem:global_steepest_descent}
  There exists $\delta > 0$ such that for all $N$ large enough,
  \begin{align}
    \Re F(z; x) \geq {}& \Re F(N \theta u^{\mathrm{L}}_{\pm}) + \delta N \left\lvert \frac{z}{N \theta} - u^{\mathrm{L}}_{\pm} \right\rvert^2, && z \in \Gamma_{\curved} \cap D_{N^{\frac{3}{5}}} (N \theta u^{\mathrm{L}}_{\pm}), \label{eq:global_steepest_descent_Gamma_1} \\
    \Re F(z; x) > {}& \Re F(N \theta u^{\mathrm{L}}_{\pm}) + \delta N^{\frac{1}{5}}, && z \in \Gamma_{\curved} \setminus \left( D_{N^{\frac{3}{5}}} (N \theta u^{\mathrm{L}}_+) \cup D_{N^{\frac{3}{5}}} (N \theta u^{\mathrm{L}}_+) \right), \label{eq:global_steepest_descent_Gamma_2} \\
    \Re F(z; x) \leq {}& \Re F(N \theta u^{\mathrm{L}}_{\pm}) - \delta N \left\lvert \frac{z}{N \theta} - u^{\mathrm{L}}_{\pm} \right\rvert^2, && z \in \Sigma \cap D_{N^{\frac{3}{5}}} (N \theta u^{\mathrm{L}}_{\pm}), \label{eq:global_steepest_descent_Sigma_1} \\
    \Re F(z; x) < {}& \Re F(N \theta u^{\mathrm{L}}_{\pm}) - \delta N^{\frac{1}{5}}, && z \in \Sigma \setminus \left( D_{N^{\frac{3}{5}}} (N \theta u^{\mathrm{L}}_+) \cup D_{N^{\frac{3}{5}}} (N \theta u^{\mathrm{L}}_+) \right), \label{eq:global_steepest_descent_Sigma_2} \\
    \Re F(z; x) < {}& \Re F(N \theta u^{\mathrm{L}}_{\pm}) - \delta \lvert z \rvert && z \in \Sigma \cap \{ \lvert z \rvert > \delta^{-1} N \}. \label{eq:global_steepest_descent_Sigma_3} 
  \end{align}
\end{lemma}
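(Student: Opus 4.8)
In order to prove Lemma~\ref{lem:global_steepest_descent}, the plan is to derive all of \eqref{eq:global_steepest_descent_Gamma_1}--\eqref{eq:global_steepest_descent_Sigma_3} from the uniform expansion \eqref{eq:F_in_F_hat}, which reduces the problem to a steepest-descent geometry for the explicit leading term $\hat F_1(\cdot\,;x)$ of \eqref{4.13}, and then to follow the corresponding analysis of Liu, Wang and Zhang \cite{Liu-Wang-Zhang14} almost verbatim. That reference treats the case $\theta = M \in \intZ^+$, but $\hat F_1$, its saddle points $u_\pm^{\rm L}$ in \eqref{eq:U^L_pm}, its second derivative $C_\pm$ in \eqref{eq:defn_C_pm}, and the contours $\tilde\Gamma$ and $\Sigma$ in \eqref{eq:Gamma_constr_Lag} and \eqref{eq:Simaga_-1_vertical} all depend on $\theta$ only through elementary functions analytic in $\theta > 0$, so no new feature appears for non-integer $\theta$. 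Writing $u = z/(N\theta)$, for $z$ in the region \eqref{eq:z_away_from_interval} one has $\Re F(z;x) = N\,\Re\hat F_1(u;x) + \Re\hat F_0(u) + c_N + \bigO(N^{-1})$, where $c_N = N\log N + \tfrac12\log(2\pi N\theta)$ is an additive constant common to both sides of every inequality and $\Re\hat F_0$ is bounded on the relevant arcs of $\Sigma$ and $\Gamma_{\curved}$ (these stay at distance $\gtrsim\epsilon$ from $u = 0,1$). Hence it is enough to prove each estimate for $N\,\Re\hat F_1$, after which the bounded and $\bigO(N^{-1})$ corrections are absorbed by shrinking $\delta$; near the saddles this absorption must be organised around the exact critical point of $F$, which lies within $\bigO(1)$ of $N\theta u_\pm^{\rm L}$, so that the subleading term coming from $\hat F_0'$ does not spoil the quadratic bound.

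The local step is to Taylor expand near $u_\pm^{\rm L}$. On $D_{N^{3/5}}(Nu_\pm^{\rm L})$ one has $|u - u_\pm^{\rm L}| \lesssim N^{-2/5}$, hence $\Re\hat F_1(u;x) = \Re\hat F_1(u_\pm^{\rm L};x) + \tfrac12\Re\big(C_\pm (u - u_\pm^{\rm L})^2\big) + \bigO(|u - u_\pm^{\rm L}|^3)$, with $C_\pm \neq 0$ by \eqref{eq:defn_C_pm}; since $N|u - u_\pm^{\rm L}|^3 \lesssim N^{-1/5} \to 0$, the cubic remainder is negligible against $N|u - u_\pm^{\rm L}|^2$ throughout the window. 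The contours are chosen precisely so that, near $u_\pm^{\rm L}$, $\Gamma_{\curved}$ runs in a direction along which $\Re\big(C_\pm(u - u_\pm^{\rm L})^2\big) \ge c_0|u - u_\pm^{\rm L}|^2$ (steepest ascent out of the minimum), while the vertical $\Sigma$ runs in a direction along which $\Re\big(C_\pm(u - u_\pm^{\rm L})^2\big) \le -c_0|u - u_\pm^{\rm L}|^2$ (steepest descent from the maximum); verifying these two facts is a short computation matching $\arg C_\pm$ from \eqref{eq:defn_C_pm} with the tangent directions of $\tilde\Gamma$ at $\phi = \pm\varphi$ and of the vertical line, using \eqref{eq:U^L_pm} and the parametrisation \eqref{4.15}. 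This gives \eqref{eq:global_steepest_descent_Gamma_1} and \eqref{eq:global_steepest_descent_Sigma_1}, and the boundary cases of \eqref{eq:global_steepest_descent_Gamma_2} and \eqref{eq:global_steepest_descent_Sigma_2} at $|z/N - u_\pm^{\rm L}| \sim N^{-2/5}$ (where $N|u - u_\pm^{\rm L}|^2 \sim N^{1/5}$).

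For the global bounds \eqref{eq:global_steepest_descent_Gamma_2} and \eqref{eq:global_steepest_descent_Sigma_2} away from the windows the plan is to show that $\phi \mapsto \Re\hat F_1\big(\tfrac{\sin((\theta+1)\phi)}{\sin(\theta\phi)}e^{i\phi};x\big)$, for $\phi \in [-\tfrac{\pi}{\theta+1}, \tfrac{\pi}{\theta+1}]$, strictly increases as $\phi$ moves away from each of $\pm\varphi$, and symmetrically that $t \mapsto \Re\hat F_1(\Re u_\pm^{\rm L} + it;x)$ has its only maxima at $t = \pm\Im u_\pm^{\rm L}$; combined with the local quadratic bounds this produces the uniform gap $\ge \delta N^{1/5}$ outside the two windows. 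The decay bound \eqref{eq:global_steepest_descent_Sigma_3} for $|z| > \delta^{-1}N$ is handled separately, since \eqref{eq:F_in_F_hat} is not in force there: Stirling's formula applied directly to $\log\Gamma(z+1)$ together with $\log\prod_{k=1}^N(z - c - (k-1)\theta) = N\log z + \bigO(N^2/|z|)$ gives $\Re F(z;x) = \Re\big(z\log z - z - z\log(N\theta x^{1/\theta})\big) + N\log|z| + \bigO(N^2/|z|) + \bigO(1)$, and on the vertical contour $\arg z \to \pm\pi/2$ forces $\Re(z\log z) \to -\infty$ linearly in $|\Im z|$, which dominates and yields the claimed $-\delta|z|$ estimate.

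The hard part will be the \emph{global} monotonicity claimed in the previous paragraph: establishing that along the specific curve $\tilde\Gamma$ and along the vertical line through $\Re u_\pm^{\rm L}$ the real part of $\hat F_1$ has no competing extremum, so that one really gets a drop of fixed size (hence, after multiplying by $N$, a drop of at least $\delta N^{1/5}$). This is exactly the technical heart of \cite{Liu-Wang-Zhang14}; I would import it, noting that although their estimates are phrased for integer $\theta = M$ their proofs rely only on elementary inequalities for $\sin((\theta+1)\phi)$ and $\sin(\theta\phi)$ and so go through unchanged for real $\theta > 0$. The remaining ingredients — the local Taylor expansions, the choice of the mesoscopic radius $N^{3/5}$, and the bookkeeping of $\hat F_0$ together with the $\bigO(N^{-1})$ error — are then routine.
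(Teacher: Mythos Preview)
Your proposal is correct and follows essentially the same route as the paper: reduce via \eqref{eq:F_in_F_hat} to estimates on the leading term $\hat F_1$, import the global monotonicity along $\tilde\Gamma$ and along the vertical line from \cite[Lem.~3.1, 3.2 and Sec.~3.2]{Liu-Wang-Zhang14} (whose proofs use only elementary real-analytic inequalities and carry over to non-integer $\theta>0$), and handle the local quadratic behaviour near $u_\pm^{\rm L}$ and the tail $|z|>\delta^{-1}N$ directly. The paper's own treatment is in fact terser than yours --- it simply cites \cite{Liu-Wang-Zhang14} and omits the details --- so your sketch is a faithful expansion of exactly what the authors intend.
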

The most important ingredient of the proof of Lemma \ref{lem:global_steepest_descent} is the estimate of the leading term $\hat{F}_1(z; x)$, which is identical to $\hat{F}(z; x)$ in \cite{Liu-Wang-Zhang14} if $\theta = M$. In the $\theta \geq 1$ case, the estimate of $\hat{F}_1(z; x)$ is stated in \cite[Lem.~3.1 and 3.2]{Liu-Wang-Zhang14}, where the results and the proofs hold for general $\theta > 1$. Then Lemma \ref{lem:global_steepest_descent} is proved analogously to the proof of \cite[Lem.~2.1]{Liu-Wang-Zhang14} in \cite[Sec.~3.2]{Liu-Wang-Zhang14}. We omit the detail. In the $\theta \in (0, 1)$ case, \eqref{eq:global_steepest_descent_Gamma_1} and \eqref{eq:global_steepest_descent_Gamma_2} are the same as the $\theta \geq 1$ case, and \eqref{eq:global_steepest_descent_Sigma_1}, \eqref{eq:global_steepest_descent_Sigma_2}, \eqref{eq:global_steepest_descent_Sigma_3} are due to the flow line definition of $\Sigma$. We also omit the detail.

\begin{remark} \label{rmk:identical_to_a_k}
  In the $\theta \geq 1$ case, If $\Re Nu_{\pm}^{\rm L}$ happens to be identical to a pole $\alpha_j$, we replace $\Re Nu_{\pm}^{\rm L}$ into $\Re Nu_{\pm}^{\rm L} + 1/2$ in formulas \eqref{eq:Simaga_-1_vertical}, \eqref{eqGamma_alpha_split:1} and \eqref{eqGamma_alpha_split:2}. All later arguments are valid with notational changes. In the $\theta \in (0, 1)$ case, if $\Sigma$ hits a pole $\alpha_j$, we deform $\Sigma$ and $\Gamma_{\vertical}$ slightly in an analogous way to avoid the pole.
\end{remark}

  Taking the limit $\epsilon \to 0$, we have
  \begin{equation} \label{eq:K=I_1+I_2}
    \tilde{K}^{\mathrm{L}}(x) = I_1 + I_2,
  \end{equation}
  where, with p.v.~denoted the principal value integral,
  \begin{equation} \label{eq:I_1_integral}
    \begin{split}
     I_1 = {}& \lim_{\epsilon \to 0} \frac{(N\theta x)^{-1}}{(2\pi i)^2} \int_{\Sigma} dz \int_{\Gamma_1 \cup \Gamma_2} dw \frac{e^{F(z; x)}}{e^{F(w; x)}} \frac{1}{z - w} \\
     = {}& \frac{(N\theta x)^{-1}}{(2\pi i)^2} \pv \int_{N \theta \tilde{\Gamma}^r} \left( \int_{\Sigma} dz \frac{e^{F(z; x)}}{e^{F(w; x)}} \frac{1}{z - w} \right) dw,
    \end{split}
  \end{equation}
  and
  \begin{equation} \label{eq:exact_value_of_I_2}
    \begin{split}
      I_2 = {}& \lim_{\epsilon \to 0} \frac{(N\theta x)^{-1}}{(2\pi i)^2} \int_{\Sigma} dz \int_{\Gamma_3 \cup \Gamma_4} dw \frac{e^{F(z; x)}}{e^{F(w; x)}} \frac{1}{z - w} \\
      = {}& \frac{(N\theta x)^{-1}}{2\pi i} \int^{N \theta u_+^{\rm L}}_{N \theta u_-^{\rm L}} \frac{e^{F(z; x)}}{e^{F(z; x)}} dz \\
      = {}& \frac{\Im u^{\mathrm{L}}_+}{\pi x} = \frac{(\sin\varphi)^2 (\sin(\theta\varphi))^{\theta - 1}}{\pi (\sin((\theta + 1)\varphi))^{\theta}}.
    \end{split}
  \end{equation}
  
  To evaluate $I_1$, we define
  \begin{equation}
    \Sigma^{\pm}_{\local} = \Sigma_{} \cap D_{N^{\frac{3}{5}}}(N\theta u_{\pm}^{\rm L}), \quad \Gamma^{\pm}_{\local} = \tilde{\Gamma}_{\alpha} \cap D_{N^{\frac{3}{5}}}(N\theta u_{\pm}^{\rm L}).
  \end{equation}
  Our strategy is to consider the Cauchy principal integral \eqref{eq:I_1_integral} on $\Sigma^+_{\local} \times \Gamma^+_{\local}$ and $\Sigma^-_{\local} \times \Gamma^-_{\local}$ first, and then to show that the remaining part of the integral is negligible in the asymptotic analysis.
  
  Make the change of variables
  \begin{equation}
    z = N\theta u_+^{\rm L} + N^{\frac{1}{2}}\theta s, \quad w = N\theta u_+^{\rm L} + N^{\frac{1}{2}}\theta t.
  \end{equation}
  Then by \eqref{eq:F_in_F_hat}
  \begin{equation} \label{eq:F_near_saddle_pt}
    \begin{split}
      F(z; x) = {}& N\hat{F}_1(u_+^{\rm L} + N^{-\frac{1}{2}}s; x)  + \hat{F}_0(u_+^{\rm L} + N^{-\frac{1}{2}}s) + N \log N + \log \sqrt{2\pi N\theta} + \bigO(N^{-1}) \\
      = {}& \frac{C_+}{2} s^2 + N\hat{F}_1(u_+^{\rm L}; x) + N \log N + \log \sqrt{2\pi N\theta} + \hat{F}_0(u_+^{\rm L}) + \bigO(N^{-\frac{1}{5}}),
    \end{split}
  \end{equation}
where $C_+$ is defined in \eqref{eq:defn_C_pm}. This gives
\begin{multline} \label{eq:crossing_estimate}
  \pv \int_{\Gamma^+_{\local}} dw \int_{\Sigma^+_{\local}} dz \frac{e^{F(z; x)}}{e^{F(w; x)}} \frac{1}{z - w}\\
  = \frac{1}{\sqrt{N}\theta} \pv \int_{\Gamma^+_{\local}} dw \int_{\Sigma^+_{\local}} dz \frac{e^{\frac{C_+}{2} s^2}}{e^{\frac{C_+}{2} t^2}} \frac{1 + \bigO(N^{-\frac{1}{5}})}{s - t}.
\end{multline}
Note that the $\bigO(N^{-1/5})$ term in the integrand on the right-hand side of \eqref{eq:crossing_estimate} is uniform and analytic in $N_{N^{3/5}}(N\theta u_+^{\rm L})$. Using \eqref{eq:global_steepest_descent_Gamma_1} of \eqref{eq:global_steepest_descent_Gamma_2} in Lemma \ref{lem:global_steepest_descent}, we find that there exists $c_1, c_2 > 0$ such that for $z \in \Sigma^+_{\local}$ and $w \in \Gamma^+_{\local}$,
\begin{equation} \label{eq:estimate_near_saddle_pt}
  \lvert e^{\frac{C_+}{2} s^2} \rvert \leq e^{-c_1 s^2}, \quad \lvert e^{\frac{C_+}{2} t^2} \rvert \geq e^{c_2 t^2}.
\end{equation}
Hence standard application of the saddle point method yields
\begin{equation} \label{eq:bulk_est_insenssial_1}
  \pv \int_{\Gamma^+_{\local}}  dw \int_{\Sigma^+_{\local}} dz \frac{e^{F(z; x)}}{e^{F(w; x)}} \frac{1}{z - w} = \bigO(N^{\frac{1}{2}}).
\end{equation}
Analogous reasoning gives
\begin{equation} \label{eq:bulk_est_insenssial_2}
  \pv \int_{\Gamma^-_{\local}}  dw \int_{\Sigma^-_{\local}} dz \frac{e^{F(z; x)}}{e^{F(w; x)}} \frac{1}{z - w} = \bigO(N^{\frac{1}{2}}).
\end{equation}

Finally, by \eqref{eq:global_steepest_descent_Sigma_1}, \eqref{eq:global_steepest_descent_Sigma_2} and \eqref{eq:global_steepest_descent_Sigma_3} in Lemma \ref{lem:global_steepest_descent}, there exists $c_3 > 0$ such that for large enough $N$
\begin{align}
  \lvert e^{-F(w; x)} \rvert < {}& \left\lvert e^{-F(Nu^{\mathrm{L}}_{\pm}; x)} \right\rvert e^{-c_3 N^{\frac{1}{5}}} \quad \text{if $w \in \Gamma_{\alpha} \setminus \Gamma^{\pm}_{\local}$}, \label{eq:global_I1_1} \\
  \lvert e^{F(z; x)} \rvert < {}&
  \begin{cases}
    \left\lvert e^{F(Nu^{\mathrm{L}}_{\pm}; x)} \right\rvert e^{-c_3 N^{\frac{1}{5}}} & \text{if $z \in \Sigma \setminus \Sigma^{\pm}_{\local}$}, \\
    \left\lvert e^{F(Nu^{\mathrm{L}}_{\pm}; x)} \right\rvert e^{-c_3 \lvert z \rvert} & \text{if $z \in \Sigma \setminus \{ \lvert z \rvert > \frac{N}{c_3} \}$}.
  \end{cases} \label{eq:global_I1_2}
\end{align}
With the help of estimates \eqref{eq:global_I1_1}, \eqref{eq:global_I1_2}, \eqref{eq:F_near_saddle_pt} and \eqref{eq:estimate_near_saddle_pt}, we obtain
\begin{multline} \label{eq:bulk_est_insenssial_3}
  \pv \int_{\tilde{\Gamma}_{\alpha}} dw \int_{\Sigma_{-1}}  dz \frac{e^{F(z; x)}}{e^{F(w; x)}} \frac{1}{z - w} - \pv \int_{\Gamma^+_{\local}}  dw \int_{\Sigma^+_{\local}} dz \frac{e^{F(z; x)}}{e^{F(w; x)}} \frac{1}{z - w} \\
  - \pv \int_{\Gamma^-_{\local}}  dw \int_{\Sigma^-_{\local}} dz \frac{e^{F(z; x)}}{e^{F(w; x)}} \frac{1}{z - w} = \bigO(e^{-\epsilon n^{\frac{1}{5}}}).
\end{multline}

Plugging \eqref{eq:bulk_est_insenssial_1}, \eqref{eq:bulk_est_insenssial_2} and \eqref{eq:bulk_est_insenssial_3} into \eqref{eq:I_1_integral}, we have that
\begin{equation} \label{eq:result_I_2}
  I_1 = \bigO(N^{- \frac{1}{2}}).
\end{equation}
Therefore we have proved Proposition \ref{prop:global_density} upon combining \eqref{eq:K=I_1+I_2}, \eqref{eq:exact_value_of_I_2} and \eqref{eq:result_I_2}.
\end{proof}

\begin{remark} \label{rmk:density_saddle_relation}
  The asymptotic analysis above confirms a simple relation that
  \begin{equation} \label{eq:density_saddle_relation}
    \tilde{\rho}^{\mathrm{L}}_{(1)}(x) = \lim_{N \to \infty} \tilde{K}^{\mathrm{L}}(x) = \frac{\Im u^{\mathrm{L}}_+}{\pi x}.
  \end{equation}
  This pattern persists in the computation of the global density for the Jacobi Muttalib--Borodin ensemble given in \S \ref{subsec:Jacobi_universality}.
\end{remark}

\begin{remark}
  The asymptotic analysis above can also prove the bulk local universality of the limiting distribution of the eigenvalues, as in \cite{Liu-Wang-Zhang14}. The result, which is unsurprisingly the sine universality, is also the same as in \cite{Liu-Wang-Zhang14}. Since the local universality is not our focus in this paper, we omit further discussion on it. In the case $\theta = 2$ this problem, along with Airy kernel universality at the soft edge, was solved some time ago \cite{Lueck-Sommers-Zirnbauer06}. Very recently, these universalities have been established for general $\theta > 0$ by Zhang \cite{Zhang15} according to the method of \cite{Liu-Wang-Zhang14}. In \cite{Zhang15}, an equivalent form of \eqref{KK2} is also derived.
\end{remark}

\subsubsection{The $\theta = 0$ case} \label{subsubsec:theta=0_L}

In Proposition \ref{PC1}, we do not use the most straightforward scaling transform
\begin{equation} \label{eq:scaling_theta=0}
  x \mapsto \frac{x}{N}
\end{equation}
to compute the limiting global density but rather we took the combined change of variables and scaling transform
\begin{equation} \label{eq:scaling_theta>0}
  x \mapsto \left( \frac{x}{N\theta} \right)^{\theta},
\end{equation}
like in Proposition \ref{prop:global_density}, to conform our result to the Fuss--Catalan distribution. It is clear that the combined change of variables and scaling transform \eqref{eq:scaling_theta>0} is not well defined for $\theta = 0$. So in the $\theta = 0$ case, which is the particularly interesting $\theta \to 0$ limit of the Laguerre Muttalib--Borodin ensemble \eqref{Pd3A}, we use the scaling transform \eqref{eq:scaling_theta=0} instead.

According to (\ref{KK2}), we then have
\begin{equation}\label{KK2p}
  K^{\mathrm{L}}(x) := K^{\rm L}(Nx,Nx) = {1 \over (2 \pi i)^2}  \oint_{\Sigma} dz  \oint_{\Gamma_\alpha} dw \,
  { x^{-z - 1} x^w \Gamma(z+1)(z - c)^N \over
    (z - w) \Gamma(w+1) (w - c)^N} N^{w - z - 1}.
\end{equation}
Changing variables $z=Nu$, $w = Nv$ and using Stirling's formula shows that for large $N$, if we can deform $\Sigma$ and $\Gamma_{\alpha}$ such that $\arg u, \arg v \in (-\pi + \epsilon, \pi - \epsilon)$ and $\lvert u \rvert, \lvert v \rvert > \epsilon > 0$,
$$
 K^{\rm L}(x) = {1 \over Nx (2 \pi i)^2}  \oint_{\Sigma} dz  \oint_{\Gamma_\alpha} dw \,
 {1 \over z - w} {e^{N H_1(u;x) + {1 \over 2} \log u - \frac{c}{u}} \over e^{N H_1(v;x) + {1 \over 2} \log v - \frac{c}{v}} }
 \Big ( 1 + \bigO \Big ( {1 \over N} \Big ) \Big ),
 $$
 where $H_1(u;x) := u (\log u - 1) - u \log x + \log u$. Noting that
 $$
 {d \over du} H_1(u;x) = \log \Big ( {u \over x} e^{1/u} \Big )
 $$
 we see that the stationary points of $H_1(u;x)$ occur when
 \begin{equation} \label{3.3}
 u e^{1/ u} = x, \quad \text{or equivalently} \quad \left( -\frac{1}{u} \right) e^{-\frac{1}{u}} = -\frac{1}{x}.
 \end{equation}
This equation is solved by the Lambert $W$ function \cite[Sec.~4.13]{Boisvert-Clark-Lozier-Olver10} --- recall too Remark \ref{rmk:Fuss--Catalan_distr}(i) --- and the two solutions are complex conjugates
\begin{equation}
  u^{\mathrm{L}}_+ = -W \left( -\frac{1}{x} \right), \quad u^{\mathrm{L}}_- = -\overline{W \left( -\frac{1}{x} \right)}.
\end{equation}
Then using steepest descent arguments like in the $\theta > 0$ case, we have
\begin{prop}
  As $\theta \to 0$, the limiting global density of the Muttalib--Borodin ensemble is
  \begin{equation}\label{4.55}
    \lim_{N \to \infty} K^{\mathrm{L}}(x) = - \Im \frac{1}{\pi x W(-x^{-1})}.
  \end{equation}
\end{prop}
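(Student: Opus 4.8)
The plan is to repeat, with minor modifications, the steepest-descent analysis used in the proof of Proposition \ref{prop:global_density}, now applied to the double contour integral \eqref{KK2p} in the form already prepared in the text: after the rescaling $z=Nu$, $w=Nv$ the integrand is governed by $e^{NH_1(u;x)+\frac12\log u-c/u}$ with $H_1(u;x)=u(\log u-1)-u\log x+\log u$, up to a multiplicative factor $1+\bigO(N^{-1})$ and the overall prefactor $(Nx)^{-1}$. The feature that makes the $\theta=0$ case slightly easier than the general one is that all the parameters $\alpha_j$ collapse to the single value $c$, so the $w$-contour $\Gamma_\alpha$ may be taken initially as a small positively oriented circle around $c$ and $\Sigma$ as a Hankel-like contour around $\{-1,-2,\dots\}$ also enclosing $\Gamma_\alpha$.

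First I would record the saddle points. Since $H_1'(u;x)=\log u-\log x+1/u=\log\bigl(ue^{1/u}/x\bigr)$, the stationary points solve \eqref{3.3}, equivalently $-1/u=W(-1/x)$, so $u^{\mathrm L}_\pm$ are the complex-conjugate pair defined above (with $1/u^{\mathrm L}_+=-W(-x^{-1})$ for $x$ in the relevant range), and $H_1''(u^{\mathrm L}_\pm)=1/u-1/u^2\neq0$ there. Next I would deform the contours exactly as in \eqref{eq:Simaga_-1_vertical}--\eqref{eqGamma_alpha_split:2}: take $\Sigma$ to be the vertical line $\{\Re Nu^{\mathrm L}_\pm+it:t\in\realR\}$ (shifted by $\tfrac12$ as in Remark \ref{rmk:identical_to_a_k} if it meets $c$), and deform $\Gamma_\alpha$ into $\Gamma_{\curved}\cup\Gamma_{\vertical}$, where $\Gamma_{\curved}$ is a (rescaled, smoothed) steepest-descent curve of $H_1$ passing through both $Nu^{\mathrm L}_+$ and $Nu^{\mathrm L}_-$ --- the $\theta=0$ analogue of \eqref{eq:Gamma_constr_Lag}, with a small arc around the origin as in the definition of $\tilde\Gamma^\epsilon$ --- and $\Gamma_{\vertical}$ consists of two short vertical bars joining the endpoints of $\Gamma_{\curved}$, so that $\Gamma_{\curved}\cup\Gamma_{\vertical}$ still encircles the (order-$N$) pole at $w=c$.

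The crux is the analogue of Lemma \ref{lem:global_steepest_descent}: with these contours $\Re F(z;x)$ has a strict maximum on $\Sigma$ and a strict minimum on $\Gamma_{\curved}$ at $u^{\mathrm L}_\pm$, with the usual quadratic behaviour in $N^{3/5}$-neighbourhoods and global decay (including as $|z|\to\infty$ along $\Sigma$, and near the origin where $\log u\to-\infty$, so that the pole at $w=c$ stays inside $\Gamma_{\curved}\cup\Gamma_{\vertical}$ and contributes no residue). Granting this, one splits $K^{\mathrm L}(x)=I_1+I_2$ as in \eqref{eq:K=I_1+I_2}: the residue collected when $\Sigma$ crosses the vertical bars gives, as in \eqref{eq:exact_value_of_I_2},
\[
  I_2=\frac{(Nx)^{-1}}{2\pi i}\int_{Nu^{\mathrm L}_-}^{Nu^{\mathrm L}_+}dz=\frac{\Im u^{\mathrm L}_+}{\pi x},
\]
while the remaining principal-value double integral $I_1$ is $\bigO(N^{-1/2})$ by the local quadratic approximation near the saddles together with the exponential separation of $|e^{F(z;x)}|$ and $|e^{-F(w;x)}|$ away from them, exactly as in \eqref{eq:bulk_est_insenssial_1}--\eqref{eq:result_I_2}. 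Hence $\lim_{N\to\infty}K^{\mathrm L}(x)=\Im u^{\mathrm L}_+/(\pi x)$, and inserting $1/u^{\mathrm L}_+=-W(-x^{-1})$ gives \eqref{4.55}; this is the $\theta=0$ instance of the relation in Remark \ref{rmk:density_saddle_relation}.

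The main obstacle is proving this steepest-descent lemma. One cannot simply invoke \cite{Liu-Wang-Zhang14}, whose estimates are built for the exponent arising from a product of $M\ge1$ standard complex Gaussian matrices and do not cover the degenerate ``$M=0$'' case; instead the level-set geometry of $\Re H_1(u;x)=\Re\bigl(u\log u-u-u\log x+\log u\bigr)$ around the Lambert-$W$ saddles has to be analysed directly, and one must exhibit an admissible curved $w$-contour through both saddles that winds around the origin (where $H_1$ is singular) while carrying no competing minimum of $\Re H_1$. The remaining ingredients --- the Stirling expansion already displayed, the residue computation, and the $\bigO(N^{-1/2})$ bound on the principal-value integral --- are then routine transcriptions of the $\theta>0$ argument.
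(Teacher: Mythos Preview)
Your proposal is correct and follows essentially the same approach as the paper, which explicitly omits the proof on the grounds that ``it is similar to the $\theta > 0$ case'' and that the relation \eqref{eq:density_saddle_relation} still holds. You have carried out exactly the adaptation the paper has in mind: the same contour deformation, the same $I_1+I_2$ splitting, the same residue computation yielding $\Im u^{\mathrm L}_+/(\pi x)$, and the same $\bigO(N^{-1/2})$ estimate on the principal-value piece. If anything you are more scrupulous than the paper in flagging that the analogue of Lemma \ref{lem:global_steepest_descent} for the exponent $H_1(u;x)=u(\log u-1)-u\log x+\log u$ cannot be lifted verbatim from \cite{Liu-Wang-Zhang14} and requires a direct check of the level-set geometry near the Lambert-$W$ saddles; the paper is silent on this point.
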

The proof is omitted since it is similar to the $\theta > 0$ case. Note that the relation \eqref{eq:density_saddle_relation} in Remark \ref{rmk:density_saddle_relation} still holds.
The result (\ref{4.55}) was derived in \cite[Cor.~1]{Cheliotis14} by a determination of the moments, with the latter also known from \cite{Dykema-Haagerup04}.

\subsection{The Jacobi Muttalib--Borodin ensemble} \label{subsec:Jacobi_universality}

In the case of the Jacobi Muttalib--Borodin ensemble, comparing the correlation kernel formula \eqref{KK1} with the correlation kernel \eqref{eq:corr_kernel_J}, we have, analogous to \eqref{KK2},
\begin{multline}\label{KK2j}
  K^{\rm J}(x,y) = {1 \over (2 \pi i)^2} {\tilde{h}(x) \over \tilde{h}(y)} \\
  \times \oint_{\Sigma} dz  \oint_{\Gamma_\alpha} dw \,
  { x^{-z - 1} y^w \over (z - w) }
  { \Gamma(w + c_2 + N + 1) \Gamma(z+1) \prod_{k=1}^N ( z - \alpha_k) \over
   \Gamma(z + c_2 + N + 1)  \Gamma(w+1) \prod_{l=1}^N (w - \alpha_l)},
\end{multline}
where $\alpha_j = \theta(j - 1) + c_1$ as specified in \eqref{XY1} and $c = c_1$. Analogous to \eqref{KK2} we have that the factor $\tilde{h}(x)/\tilde{h}(y)$ in (\ref{KK2j}) cancels out of the determinant (\ref{KK}), and the reasoning leading to (\ref{hh}) tells us that we can take
\begin{equation}\label{hhj}
  \tilde{h}(x) = x^{c_1/2} (1 - x)^{-c_2/2}, \quad \text{because} \quad \frac{\tilde{h}(x)}{\tilde{h}(y)} = \Big ({x \over y} \Big )^{c_1/2} \Big ({1 - x \over 1 - y} \Big )^{-c_2/2}.
\end{equation}
Here we assume $\theta > 0$. The $\theta = 0$ case of the Jacobi upper-triangular ensemble, that is, $\alpha_1 = \dotsb = \alpha_N = c_1$, is well defined and can be thought as the $\theta \to 0^+$ limit of the Jacobi Muttalib--Borodin ensemble, but we omit it in this paper. Our aim is to use a saddle point analysis to compute
\begin{equation}\label{hhj1}
  \tilde{\rho}^{\rm J}_{(1)}(x) = \lim_{N \to \infty} \tilde{K}^{\mathrm{J}}(x) := \lim_{N \to \infty} \frac{1}{N\theta} x^{1/\theta - 1} K^{\rm J}( x^{1/\theta}, x^{1/\theta}),
\end{equation}
which is the limiting density of the eigenvalues under the change of variables $x \to x^{\theta}$. The required workings is structurally identical to that just given to derive Proposition \ref{prop:global_density};
only a brief sketch will be given below.

Recall that in the special case that $\theta \in \intZ^+$, the resolvent $\tilde{G}^{\mathrm{J}}(z)$ defined in \eqref{eq:defn_tilde_G^J}, satisfies the identity \eqref{G8}, and then the global density $\tilde{\rho}^{\mathrm{J}}(x)$ satisfies
\begin{equation} %\label{hh2}
  \tilde{\rho}^{\rm J}_{(1)}(x)  = \lim_{\epsilon \to 0^+} {1 \over 2 \pi i}
  \Big ( \tilde{G}^{\rm J}(x - i \epsilon) - \tilde{G}^{\rm J}(x + i \epsilon) \Big ), \quad
  x \in (0,1).
\end{equation}
Below we show that the global density $\tilde{\rho}^{\mathrm{J}}(x)$ has an explicit formula given as follows. For all $\theta > 0$ and all $x \in (0, 1)$, there is a unique $\varphi \in (0, \pi/(\theta + 1))$ such that
\begin{equation} \label{eq:para_Jacobi}
  x = \frac{\theta^{\theta}}{(1 + \theta)^{1 + \theta}} \frac{\sin((\theta + 1)\varphi)^{\theta + 1}}{\sin \varphi \sin(\theta\varphi)^{\theta}}.
\end{equation}
We define
\begin{equation} \label{eq:v_for_Jacobi}
  v(\varphi) = \frac{\theta \sin((\theta + 1)\varphi)}{(1 + \theta)\sin(\theta \varphi)} e^{i\varphi},
\end{equation}
and then the density function 
\begin{equation*}
  \begin{split}
    \rho^{\JFC}(x) = {}& \frac{1}{\pi x} \Im \left( \frac{1}{\theta} \frac{v(\varphi)}{1 - v(\varphi)} \right) \\
    = {}& \frac{1}{(\pi x)} \frac{(1 + \theta)\sin(\theta \varphi)\sin\varphi \sin((\theta + 1)\varphi)}{\sin(\theta \varphi)^2 + \theta^2\sin(\varphi)^2 - 2\theta \cos((\theta + 1)\varphi) \sin(\theta\varphi) \sin(\varphi)}.
  \end{split}
\end{equation*}

Thus we have, parallel to Proposition \ref{prop:global_density}
\begin{prop} \label{prop:global_densityJ}
  For all $\theta > 0$, the limiting global density of the Jacobi Muttalib--Borodin ensemble with change of variable \eqref{Cv} is
  \begin{equation*}
    \tilde{\rho}^{{\rm J}}_{(1)}(x) = \rho^{\JFC}(x).
  \end{equation*}
\end{prop}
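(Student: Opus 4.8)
The plan is to mirror, essentially verbatim, the steepest-descent argument just used to prove Proposition \ref{prop:global_density}, replacing the Laguerre phase function by its Jacobi analogue. Starting from the double contour integral \eqref{KK2j} for $K^{\rm J}(x,y)$, I would substitute $x \mapsto x^{1/\theta}$, $y \mapsto x^{1/\theta}$ and pull out the irrelevant prefactor \eqref{hhj1} to obtain a representation of $\tilde K^{\rm J}(x)$ as a double contour integral over $\Sigma$ and $\Gamma_\alpha$ of $e^{F(z;x)-F(w;x)}/(z-w)$, where now $F(z;x) = \log\bigl(\Gamma(z+1)\Gamma(z+c_2+N+1)^{-1} x^{-z/\theta}\prod_{k=1}^N(z-c_1-(k-1)\theta)\bigr)$ up to normalisation. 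Writing $z = N\theta u$, $w = N\theta v$ and applying Stirling's formula to $\Gamma(z+1)$, to $\Gamma(z+c_2+N+1)$, and the Euler--Maclaurin estimate \eqref{4.10} to the product $\prod_k(z-\alpha_k)$ — exactly as in \eqref{eq:F_in_F_hat} — yields $F(z;x) = N\hat F_1^{\rm J}(u;x) + \hat F_0^{\rm J}(u) + (\text{$u$-independent terms}) + \bigO(N^{-1})$ uniformly for $u$ bounded away from the relevant intervals. The new feature is the extra $\Gamma(z+c_2+N+1)$ factor, which after scaling contributes a term like $-(1+1/\theta)u(\log(u+1/\theta) - \ldots)$ to $\hat F_1^{\rm J}$; the precise form is dictated by requiring $\hat F_1^{\rm J}{}'(u;x) = 0$ to reproduce the cubic-type equation \eqref{G8} under $u \leftrightarrow z\tilde G^{\rm J}(z)$.

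The heart of the argument is then the saddle point identification. Differentiating $\hat F_1^{\rm J}$ should give $\frac{d}{du}\hat F_1^{\rm J}(u;x) = 0 \iff z(z u - 1)(zu + 1/\theta)^\theta$-type relation, i.e. exactly equation \eqref{G8} with $z\tilde G^{\rm J}(z)$ replaced by $u$ and $z$ replaced by $x$; I would verify directly that the parametrisation \eqref{eq:para_Jacobi}--\eqref{eq:v_for_Jacobi} gives a complex-conjugate pair of saddles $u^{\rm J}_\pm = v(\varphi), \overline{v(\varphi)}$ solving this equation for $x \in (0,1)$, and that the second derivative $C^{\rm J}_\pm = \hat F_1^{\rm J}{}''(u^{\rm J}_\pm;x)$ is nonzero in the relevant range of $\varphi$ (an explicit check, as in \eqref{eq:defn_C_pm}). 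Next I would deform $\Sigma$ to the vertical line $\Re N\theta u^{\rm J}_\pm + i\realR$ and split $\Gamma_\alpha$ into two closed loops $\Gamma_1\cup\Gamma_3$, $\Gamma_2\cup\Gamma_4$ around the poles $\alpha_1,\dots,\alpha_N$ (with the Remark \ref{rmk:identical_to_a_k} adjustment if $\Re N\theta u^{\rm J}_\pm$ collides with an $\alpha_j$), using the same contour $\tilde\Gamma$ of \eqref{eq:Gamma_constr_Lag} but with the Jacobi parametrisation of $x$. The curves are chosen so that $u^{\rm J}_\pm$ are maxima of $\Re F$ along $\Sigma$ and minima along $\Gamma_{\curved}$; the analogue of Lemma \ref{lem:global_steepest_descent} provides the quantitative descent/ascent estimates. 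The integral splits as $\tilde K^{\rm J}(x) = I_1 + I_2$, where $I_2$ is the residue contribution picked up when moving $\Sigma$ across $\Gamma_{\vertical}$, evaluating to $\frac{1}{2\pi i}\int_{Nu^{\rm J}_-}^{Nu^{\rm J}_+} dz \cdot (N\theta x)^{-1} = \frac{\Im u^{\rm J}_+}{\pi \theta x}\cdot\theta = \frac{1}{\pi x}\Im u^{\rm J}_+$ up to the $\theta$-bookkeeping in \eqref{hhj1}; after matching constants this is precisely $\rho^{\JFC}(x)$ as displayed before the proposition. The term $I_1$, a principal-value double integral over $\Sigma \times \Gamma_{\curved}$, is shown to be $\bigO(N^{-1/2})$ by localising near the two saddles (change of variables $z = N\theta u^{\rm J}_+ + N^{1/2}\theta s$ etc., Gaussian reduction with the nonzero $C^{\rm J}_\pm$), bounding the away-from-saddle parts exponentially, and invoking the same cancellation as in \eqref{eq:bulk_est_insenssial_1}--\eqref{eq:bulk_est_insenssial_3}. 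Taking $N\to\infty$ then gives $\tilde\rho^{\rm J}_{(1)}(x) = \frac{1}{\pi x}\Im u^{\rm J}_+ = \rho^{\JFC}(x)$, which is Remark \ref{rmk:density_saddle_relation}'s pattern persisting.

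The main obstacle I anticipate is not conceptual but bookkeeping: correctly carrying the extra $\Gamma(z+c_2+N+1)$ factor through Stirling (its argument grows linearly in $N$, so it contributes at the same order as the $\prod_k(z-\alpha_k)$ term and genuinely changes the phase function), and then verifying algebraically that the resulting stationarity condition for $\hat F_1^{\rm J}$ coincides with \eqref{G8} and that the parametrisation \eqref{eq:para_Jacobi}--\eqref{eq:v_for_Jacobi} solves it. A secondary subtlety is the contour for $\Sigma$ in the case $\beta = c_2 \notin \intZ$ (case (2) of Proposition \ref{prop:corr_kernel_J}), where $\Sigma$ and $\Gamma_\alpha$ are Hankel-type contours picking up the additional poles $w = -(c_2+N+k)$; one must check these extra poles do not interfere with the descent estimates, which should follow because they lie on the negative real axis far from the saddles $u^{\rm J}_\pm$ (whose real parts are $\bigO(N)$ and positive). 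Since every ingredient — the Euler--Maclaurin estimate \eqref{4.10}, the deformation of contours, Lemma \ref{lem:global_steepest_descent}'s estimates, and the $I_1 = \bigO(N^{-1/2})$ argument — transfers with only notational change (the leading term $\hat F_1^{\rm J}$ differing from $\hat F_1$ in a controlled way), I would present only the modified phase function, the saddle computation, and the evaluation of $I_2$ in detail, and refer to Proposition \ref{prop:global_density} and \cite{Liu-Wang-Zhang14} for the remaining estimates, exactly as the paper signals with its remark that "only a brief sketch will be given below."
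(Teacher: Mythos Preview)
Your plan is the paper's own: same double-contour representation, same Stirling/Euler--Maclaurin expansion $G(z;x) = N\hat G_1(u;x) + \hat G_0(u) + \dotsb$, same identification of the stationarity equation with \eqref{G8}, same $I_1 + I_2$ split with $I_2 = \frac{\Im u^{\rm J}_+}{\pi x}$ and $I_1 = \bigO(N^{-1/2})$. Two points to correct. First, the saddles are not $v(\varphi)$ and $\overline{v(\varphi)}$ but
\[
u^{\rm J}_+ = \frac{1}{\theta}\,\frac{v(\varphi)}{1 - v(\varphi)}, \qquad u^{\rm J}_- = \overline{u^{\rm J}_+};
\]
you would discover this when you carry out the verification you promise, and it is exactly $\Im u^{\rm J}_+/(\pi x)$ in this form that equals $\rho^{\JFC}(x)$ as displayed above the proposition. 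The contour $\tilde\Gamma$ must accordingly be rebuilt as $\tilde\Gamma^{\rm J} = \{\frac{1}{\theta}\frac{v(\phi)}{1-v(\phi)} : \phi \in [-\pi/(\theta+1),\pi/(\theta+1)]\}$, not the Laguerre $\tilde\Gamma$ with a different $x$-parametrisation. Second, for $c_2 \notin \intZ$ the paper does more than observe the extra poles are far away: it splits $\Gamma_\alpha = \Gamma'_\alpha \cup \Gamma''_\alpha$ with $\Gamma'_\alpha$ a Hankel loop around the poles $-(c_2+N+k)$, places $\Sigma$ between $\Gamma'_\alpha$ and $\Gamma''_\alpha$, runs the full $c_2 \in \intZ$ argument on the $\Gamma''_\alpha$ piece, and then shows by separate estimation of the $z$- and $w$-integrands that the $\Gamma'_\alpha$ contribution vanishes in the limit.
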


\begin{remark}
  The distribution $\rho^{\JFC}(x)$ is the counterpart of the Fuss--Catalan distribution $\rho^{\FC}(x)$ in the Laguerre case. 
  According to the results of \S \ref{S3.2}, for $\theta \in \mathbb Z^+$,
   the measure $d\mu(x) = \rho^{\JFC}(x) dx$ is the unique probability measure supported on $[0, 1]$ that makes the following identity hold:
  \begin{equation}\label{GGa}
    z(zG(z) - 1)(z G(z) + 1/\theta)^{\theta} = (zG(z))^{\theta + 1}, \quad \text{where} \quad G(z) = \int_{[0, 1]} \frac{\rho^{\JFC}(x) dx}{z - x}.
  \end{equation}
  This is analogous to the resolvent for the  Fuss--Catalan distribution satisfying the identity $z(zG(z) - 1) = (zG(z))^{\theta + 1}$ hold. 
 Of interest is to extend  the characterisation (\ref{GGa}) to general $\theta > 0$. Such a characterisation was established in \cite{Gawronski-Neuschel-Stivigny14} 
 for the measures corresponding to the moments (\ref{Ja}) with $r>s$, whereas as remarked below (\ref{GSa}) the moments (\ref{mL}) deduced
 from (\ref{GGa}) require $r=s$.
\end{remark}
The method of the proof to Proposition \ref{prop:global_densityJ} is the same as the proof of Proposition \ref{prop:global_density} for the Laguerre case. So we only give a sketch of the proof and point out the main differences. 

\begin{proof}[Sketch of the proof of Proposition \ref{prop:global_densityJ}]
  First we consider the case that $c_2 \in \intZ$. Making the replacements $x \mapsto x^{1/\theta}$, $y \mapsto y^{1/\theta}$ in \eqref{KK2j} gives
  \begin{equation} \label{eq:double_contour_densityJ}
    \tilde{K}^{\rm J}(x) := {1 \over N \theta}
    x^{\frac{1}{\theta} - 1} K^{\rm J}(x^{\frac{1}{\theta}}, x^{\frac{1}{\theta}}) = \frac{1}{\theta x} \frac{1}{(2\pi i)^2} \oint_{\Sigma} dz  \oint_{\Gamma_\alpha} dw \frac{e^{G(z; x)}}{e^{G(w; x)}} \frac{1}{z - w},
  \end{equation}
  where
  \begin{equation}
    G(z; x) = \log \left( \frac{\Gamma(z + 1)}{\Gamma(z+N+c_2 +1)} x^{\frac{z}{\theta}} \prod^N_{k = 1} (z - c_1 - (k - 1)\theta) \right).
  \end{equation}
  
  Changing variables in the integrand according to (\ref{zuv}), then use of Stirling's formula and (\ref{4.10}) shows that
  for large $N$
  \begin{equation} \label{eq:G_in_G_hat}
    G(z; x) = N \hat{G}_1(u; x) + \hat{G}_0(u)  + N \log \theta -c_2 \log N + \bigO(N^{-1}),
  \end{equation}
  where, after the change of variables of $z, w$ into $u, v$ as in \eqref{zuv},
  \begin{align}
    \hat{G}_1(u; x) & = \theta u \log {\theta u \over 1 + \theta u}  - \log(1 + \theta u)  - u\log x + (1 - u)
                      \log (1 - {1 \over u}) + \log u,  \label{4.13G} \\
    \hat{G}_0(u) & = {1 \over 2} \log  {\theta u \over 1 + \theta u} - c_2 \log (1 + \theta u) +
                      \Big ( {c \over \theta} - {1 \over 2} \Big ) \log (1 - {1 \over u} ),
  \end{align}
  and the validity of the error bound $ \bigO(N^{-1})$ is the same as in (\ref{eq:F_in_F_hat}), \ie, all $z$ satisfying \eqref{eq:z_away_from_interval} and $\arg z = \arg u \in (-\pi + \epsilon, \pi - \epsilon)$.
  Substituting in (\ref{eq:double_contour_densityJ}) gives
  \begin{equation} \label{eq:double_contour_densityJ1}
    \tilde{K}^{\rm J}(x) = \frac{1}{N \theta x} \frac{1}{(2\pi i)^2} \oint_{\Sigma} dz  \oint_{\Gamma_\alpha} dw \frac{e^{N\hat{G}_1(u; x)
        + \hat{G}_0(u)}}{e^{N\hat{G}_1(v; x)
        + \hat{G}_0(v)}} \frac{1}{z - w} \Big (1 + \bigO(N^{-1}) \Big ).
  \end{equation}
  This is structurally identical to \eqref{eq:double_contour_density1}, and is analysed accordingly. For the sake of steepest-descent analysis, we need to deform the contours $\Sigma$ and $\Gamma_{\alpha}$. Since we assume that $c_2 \in \intZ$, $\Gamma_{\alpha}$ is a finite contour similar to the $\Gamma_{\alpha}$ in \eqref{eq:double_contour_density}. It is clear that $\Sigma$ can be deformed to an infinite contour that is from $-i \cdot \infty$ to $i  \cdot \infty$,as long as it keeps the poles $-1, \dotsc, -(\beta + N)$ and $\Gamma_{\alpha}$ to the left. Then we see that we can split $\Gamma_{\alpha}$ into two positively oriented contours that jointly enclose all the poles $\alpha_1, \dotsc, \alpha_N$, and let $\Sigma$ be an infinite vertical contour passing between them. Note that the precise shapes of $\Sigma$ and $\Gamma_{\alpha}$ in the Laguerre case depend on the computation of the critical points of $\hat{F}_1(u; x)$ that is the counterpart of our $\hat{G}_1(u; x)$, see \eqref{eq:U^L_pm} and \eqref{eq:Gamma_constr_Lag}. Below we explain the analogous construction of $\Sigma$ and $\Gamma_{\alpha}$ in the Jacobi case.

  A difference in detail is now that
  $$
  {d \over du} \hat{G}_1(u;x) = \log \bigg ( \Big ( {\theta u \over 1 + \theta u} \Big )^\theta \Big ( {u \over u - 1} \Big ) {1 \over x}
  \bigg ),
  $$
  so the stationary points now occur for $u$ such that
  \begin{equation}\label{4.16J}
    u^{\theta + 1} = x(1/\theta + u)^\theta (u-1).
  \end{equation}
  Analogous to the situation with (\ref{ux}), we observe
  that this is precisely the equation 
  (\ref{G8}) after the identification $z \mapsto x$,
  $z \tilde{G^{\rm J}}(z) \mapsto u$ in the latter. It is straightforward to verify that if $x \in (0, 1)$ is parametrised by $\varphi \in (0, \pi/(\theta + 1))$ in \eqref{eq:para_Jacobi}, then with $v(\varphi)$ defined in \eqref{eq:v_for_Jacobi}
  \begin{equation*}
    u^{\mathrm{J}}_+ = \frac{1}{\theta} \frac{v(\varphi)}{1 - v(\varphi)}, \quad u^{\mathrm{J}}_- = \overline{u^{\mathrm{J}}_+} = \frac{1}{\theta} \frac{\overline{v(\varphi)}}{1 - \overline{v(\varphi)}}
  \end{equation*}
  are two solutions to \eqref{4.16J}. As $x$ runs over $(0, 1)$, the locus of $u^{\mathrm{J}}_+$ (\resp\ $u^{\mathrm{J}}_-$) is a curve in $\compC_+$ (\resp\ $\compC_-$) whose two ends are on the real line. Thus we let $\Sigma$ be the upward vertical contour through $N\theta u^{\mathrm{J}}_{\pm}$, analogous to \eqref{eq:Simaga_-1_vertical} in the Laguerre case. To define $\Gamma_{\alpha}$, we first define
  \begin{equation*}
    \tilde{\Gamma}^{\mathrm{J}} = \left\{ \frac{1}{\theta} \frac{v(\phi)}{1 - v(\phi)} \middle| \phi \in \left[ -\frac{\pi}{\theta + 1}, \frac{\pi}{\theta + 1} \right] \right\}
  \end{equation*}
  analogous to $\tilde{\Gamma}$ in \eqref{eq:Gamma_constr_Lag}, and  then deform it to the desired $\Gamma_{\alpha}$ parallel to the deformation carried out in the Laguerre case. We omit the detail, but only point out that the shapes of $\Sigma$ and $\Gamma_{\alpha}$ is like those in Figure \ref{fig:contours} if $\theta \geq 1$ or Figure \ref{fig:small_theta_all} if $\theta \in (0, 1)$.

  By saddle-point analysis analogous to that in the Laguerre case, we derive, as expected in Remark \ref{rmk:density_saddle_relation},
  \begin{equation*}
    \tilde{\rho}^{\mathrm{J}}_{(1)}(x) = \lim_{N \to \infty} \tilde{K}^{\mathrm{J}}(x) = \frac{\Im u^{\mathrm{J}}_+}{\pi x} = \rho^{\JFC}(x),
  \end{equation*}
  and hence finish the proof.

  In the case that $c_2 \notin \intZ$, the contour $\Gamma_{\alpha}$ in \eqref{KK2j} is infinite. We express it as the combination of $\Gamma'_{\alpha} \cup \Gamma''_{\alpha}$. Here $\Gamma'_{\alpha}$ is an infinite, Hankel like contour starting at $-\infty - i\epsilon$, running parallel to the negative real axis, looping around the poles $w = -(c_2 + N + k)$ with $k \in \intZ_+$, and finishing at $-\infty + i\epsilon$ after again running parallel to the negative real axis. $\Gamma''_{\alpha}$ is a finite positive oriented contour enclosing $\alpha_1, \dotsc, \alpha_N$. Then by argument of contour deformation used in the $c_2 \in \intZ$ case and the Laguerre case, we can deform $\Sigma$ into a contour going from $-i \cdot \infty$ to $i \cdot \infty$ between $\Gamma'_{\alpha}$ and $\Gamma''_{\alpha}$. We write 
  \begin{multline*}
    \tilde{K}^{\mathrm{J}}(x) = \tilde{K}^{\mathrm{J},}{}'(x) + \tilde{K}^{\mathrm{J},}{}''(x), \quad \text{where} \\
    \tilde{K}^{\mathrm{J}, *}(x) = \frac{1}{\theta x} \frac{1}{(2\pi i)^2} \oint_{\Sigma} dz  \oint_{\Gamma^*_\alpha} dw \frac{e^{G(z; x)}}{e^{G(w; x)}} \frac{1}{z - w}, \quad
    * = {}' \text{ or }''.
  \end{multline*}

  In the computation of $\tilde{K}^{\mathrm{J},}{}''(x)$, we further deform the contours $\Gamma''_{\alpha}$ into two parts as the deformation of $\Gamma_{\alpha}$ in the $c_2 \in \intZ$ case, and let $\Sigma$ go between the two parts, like the deformed shapes of $\Gamma''_{\alpha}$, $\Gamma'_{\alpha}$ and $\Sigma$ in the $c_2 \in \intZ$ case. See Figure \ref{fig:c2_not_int} if $\theta \geq 1$. By the argument in the $c_2 \in \intZ$ case we have that
  \begin{equation*}
    \lim_{N \to \infty} \tilde{K}^{\mathrm{J},}{}''(x) = \rho^{\JFC}(x).
  \end{equation*}
  On the other hand, by estimating the factors of the integrand
  \begin{equation*}
    x^{-z - 1} \Gamma(z+1) \prod_{k=1}^N ( z - \alpha_k) \Gamma(z + c_2 + N + 1)^{-1}
  \end{equation*}
  by Stirling's formula on $\Sigma_{-1/2}$ and
  \begin{equation*}
    y^w \Gamma(w + c_2 + N + 1) \left( \Gamma(w+1) \prod_{l=1}^N (w - \alpha_l) \right)^{-1}
  \end{equation*}
  on $\Gamma'_{\alpha}$ separately, we find that
  \begin{equation*}
    \lim_{N \to \infty} \tilde{K}^{\mathrm{J},}{}'(x) = 0.
  \end{equation*}
  Thus we prove the proposition by combining the two limits above.
\end{proof}

\begin{figure}[htb]
  \begin{minipage}[t]{0.45\linewidth}
    \centering
    \includegraphics{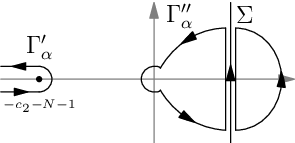}
    \caption{The deformed shape of contours $\Sigma$ and $\Gamma_{\alpha}$ for the computation of the global density with $\theta \geq 1$ and $c_2 \notin \intZ$.}
    \label{fig:c2_not_int}
  \end{minipage}
  \hspace{\stretch{1}}
  \begin{minipage}[t]{0.45\linewidth}
    \includegraphics{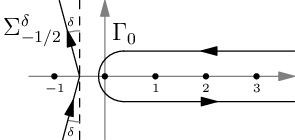}
    \caption{The shape of contours $\Sigma^{\delta}_{-1/2}$ and $\Gamma_0$ for the hard edge correlation kernel.}
    \label{fig:hard_edge}
  \end{minipage}
\end{figure}

  \section{Hard edge density}\label{sec:hed}

\subsection{The Laguerre Muttalib--Borodin ensemble}

\subsubsection{The $\theta > 0$ case} \label{subsubsec:hard_edge_L}

The meaning of the hard edge scaling has already been noted in Section \ref{subsubsec:theta>0_L}. It is the
$N \to \infty$ limit with $x,y$ scaled  in
the neighbourhood of the origin so that the spacing between eigenvalues is of order unity. The limiting kernel with proper scaling is stated in Proposition \ref{prop:Laguerre_hard_edge}, and below we give the proof.

\begin{proof}[Proof of Proposition \ref{prop:Laguerre_hard_edge}]
  First we use the argument in Section \ref{subsec_multiple_Laguerre_ensemble}, more specifically the paragraph above \eqref{eq:Simaga_-1_vertical}, to justify that the double contour integral formula \eqref{KK2} still holds with $\Sigma$ deformed into a contour from $e^{-(\pi/2 + \delta)i} \cdot \infty$ to $e^{(\pi/2 + \delta)i} \cdot \infty$ and going on the left of $\Gamma_{\alpha}$. Thus analogous to \eqref{eq:Simaga_-1_vertical} we can take $\Sigma$ to be $\Sigma^{\delta}_{-1/2}$ in \eqref{KK2}. Furthermore it is clear that the closed contour $\Gamma_{\alpha}$ can be deformed to the infinite Hankel loop contour $\Gamma_0$.

  Substituting (\ref{XY1}) and making use of the functional and reflection formula for the gamma function shows that
  $$
  {\prod_{k=1}^N (z - \alpha_k) \over \prod_{k=1}^N (w - \alpha_k)} =
  {\Gamma((z-c)/\theta + 1) \Gamma(N - (z-c)/\theta) \sin \pi (z - c)/\theta \over
    \Gamma((w-c)/\theta + 1) \Gamma(N - (w-c)/\theta) \sin \pi (w - c)/\theta }.
  $$
  It is standard result that for $z$ and $w$ fixed and $N \to \infty$,
  $$
  { \Gamma(N - (z-c)/\theta)  \over  \Gamma(N - (w-c)/\theta) } = N^{(w - z)/\theta} \Big ( 1 + O\Big ( {1 \over N} \Big )  \Big ).
  $$
  Taking the limit $N \to \infty$ inside the integral (a step which is justified by estimating the integrand and using
  dominated convergence; see \cite[Sec.~5.2]{Kuijlaars-Zhang14} for the details in a very similar setting) we thus see that
  \begin{multline*}
    \lim_{N \to \infty} N^{-1/\theta} K^{\rm L}(N^{-1/\theta} x, N^{-1/\theta} y) = \Big ( {x \over y } \Big )^{c/2} \\
    \times {1 \over (2 \pi i)^2} \oint_{\Sigma^{\delta}_{-1/2}} dz  \oint_{\Gamma_0} dw \,
    { x^{-z - 1} y^w \over z - w}
    {\Gamma(z+1) \Gamma((z-c)/\theta + 1) \sin \pi (z - c)/\theta \over
      \Gamma(w+1) \Gamma((w-c)/\theta + 1) \sin \pi (w - c)/\theta}.
  \end{multline*}
  The result (\ref{KK3}) now follows by the change of variables $z \mapsto \theta z + c$,  $w \mapsto \theta w + c$, and then deforming the contours back to
  $\Sigma^{\delta}_{-1/2}$ and $\Gamma_0$. (Since the deformation does not cross any pole, it does not change the value of integral). 
  
  In the special case $\theta \in \mathbb Z^+$ the duplication formula can be used to rewrite
  $\Gamma(\theta z + c + 1) / \Gamma(\theta w + c + 1)$ and (\ref{KK3a}) results.
\end{proof}

\medskip
Borodin has previously given a different formula for the hard edge scaled limit in 
(\ref{KK3}).

\begin{prop}\label{P5.2} \cite{Borodin99}
We have
\begin{equation}\label{KK9}
\lim_{N \to \infty} N^{-1/\theta} K^{\rm L}(N^{-1/\theta} x, N^{-1/\theta} y)  =
\Big ( { y \over x} \Big )^{c/2}  {K}^{(c,\theta)}(x,y)
\end{equation}
where \footnote{Our notation $K^{(c, \theta)}(x, y)$ agrees with the notation $K^{(\alpha, \theta)}(x, y)$ defined in \cite[Eq.~(5.3)]{Kuijlaars-Stivigny14}, but differs from the $\mathcal{K}^{(\alpha, \theta)}(x, y)$ in \cite[Eq.~(3.6)]{Borodin99} by a factor.}
\begin{equation}\label{BK}
{K}^{(c,\theta)}(x,y) = \theta x^c \int_0^1 J_{(c+1)/\theta,1/\theta}(xu) J_{c+1,\theta}(yu)^\theta) u^c \, du,
\end{equation}
with  $J_{a,b}$ denoting  Wright's generalisation of the Bessel function given by
\begin{equation}\label{KK9a}
J_{a,b}(x) = \sum_{j=0}^\infty {(-x)^j \over j! \Gamma (a + j b)}.
\end{equation}
\end{prop}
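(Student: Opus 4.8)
The plan is to show that Borodin's expression (\ref{BK}) reproduces the double contour integral formula (\ref{KK3}) of Proposition \ref{prop:Laguerre_hard_edge}. Since (\ref{KK3}) and (\ref{KK9}) both compute $\lim_{N\to\infty}N^{-1/\theta}K^{\rm L}(N^{-1/\theta}x,N^{-1/\theta}y)$ and both carry the common prefactor $(y/x)^{c/2}$, it suffices to check that the double contour integral appearing in (\ref{KK3}) equals $K^{(c,\theta)}(x,y)$. The first ingredient is the Mellin--Barnes representation of Wright's Bessel function: applying Ramanujan's master theorem to the series (\ref{KK9a}) (or closing a vertical contour to the left and summing residues at the simple poles of $\Gamma(s)$) gives
\begin{equation*}
  J_{a,b}(x) = \frac{1}{2\pi i}\int_{\mathcal C}\Gamma(s)\,\frac{x^{-s}}{\Gamma(a-bs)}\,ds ,
\end{equation*}
where $\mathcal C$ separates the poles $s=0,-1,-2,\dots$ of $\Gamma(s)$ from the right half plane; for $b\le 1$ it may be taken to be a vertical line, whereas for $b>1$ its ends must tilt into the left half plane (equivalently $\mathcal C$ becomes a Hankel-type loop around $(-\infty,0]$). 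This dichotomy, applied with $b=1/\theta$ for one Bessel factor and $b=\theta$ for the other, is precisely what forces the two admissible shapes of $\Sigma^{\delta}_{-1/2}$ (vertical only when $\theta\ge 1$) and the Hankel loop $\Gamma_0$ in the statement.

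Next I would substitute this representation into (\ref{BK}) for $J_{(c+1)/\theta,1/\theta}(xu)$ (parameters $(a,b)=((c+1)/\theta,1/\theta)$) and for $J_{c+1,\theta}((yu)^{\theta})$ (parameters $(a,b)=(c+1,\theta)$), interchange the order of integration, and perform the elementary $u$-integral $\int_0^1 u^{c-s-\theta t}\,du=(c+1-s-\theta t)^{-1}$. The interchange and the $u$-integral are both legitimate once the two Mellin--Barnes contours are placed with real parts $\sigma_1$ and $\theta\sigma_2$ satisfying $\sigma_1+\theta\sigma_2<c+1$. This yields
\begin{equation*}
  K^{(c,\theta)}(x,y) = \theta x^{c}\,\frac{1}{(2\pi i)^2}\iint \Gamma(s)\Gamma(t)\,\frac{x^{-s}\,y^{-\theta t}}{\Gamma\!\big((c+1-s)/\theta\big)\,\Gamma(c+1-\theta t)\,(c+1-s-\theta t)}\,ds\,dt .
\end{equation*}

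Finally I would change variables $s=\theta z+c+1$ and $t=-w$. Then $\Gamma(s)=\Gamma(\theta z+c+1)$, $\Gamma((c+1-s)/\theta)=\Gamma(-z)$, $x^{-s}x^{c}=x^{-\theta z-1}$, $\Gamma(t)=\Gamma(-w)$, $\Gamma(c+1-\theta t)=\Gamma(\theta w+c+1)$, $y^{-\theta t}=y^{\theta w}$, $c+1-s-\theta t=-\theta(z-w)$ and $ds\,dt=-\theta\,dz\,dw$; using the reflection identity $\Gamma(z+1)\sin\pi z=-\pi/\Gamma(-z)$ to write $\Gamma(-w)/\Gamma(-z)=\Gamma(z+1)\sin\pi z/(\Gamma(w+1)\sin\pi w)$, the displayed formula collapses exactly onto the double contour integral in (\ref{KK3}). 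Under this substitution the $s$-contour becomes $\Sigma^{\delta}_{-1/2}$ (the vertex $-1/2$ being admissible precisely when all poles $z=-(c+1+k)/\theta$ lie to its left, and otherwise reached by a pole-free deformation, since $\Gamma(z+1)\sin\pi z$ is entire), while the $t$-contour becomes the Hankel loop $\Gamma_0$, with the orientations matching those prescribed in Proposition \ref{prop:Laguerre_hard_edge}.

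I expect the main obstacle to be the contour bookkeeping: producing a single pair of Mellin--Barnes contours that simultaneously lie to the right of the poles of $\Gamma(s)$ and $\Gamma(t)$, obey $\sigma_1+\theta\sigma_2<c+1$ so that the $u$-integral converges absolutely and Fubini applies, and carry the tilt at infinity dictated by whether $\theta$ is below or above $1$ — and then tracking how these constraints translate, after the change of variables, into the allowed positions of $\Sigma^{\delta}_{-1/2}$ relative to the poles $-(c+1+k)/\theta$ and to $\Gamma_0$. The analytic estimates needed to justify the interchange and the convergence of the Mellin--Barnes integrals are of the standard type used in similar settings (cf.\ \cite[Sec.~5.2]{Kuijlaars-Zhang14}), and I would only sketch them.
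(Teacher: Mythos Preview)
Your approach is correct, but note that the paper does not itself prove Proposition~\ref{P5.2}: it is simply cited from \cite{Borodin99}. What the paper does prove is Proposition~\ref{prop:Laguerre_hard_edge} (the double contour formula (\ref{KK3})), and it then \emph{combines} Borodin's result with (\ref{KK3}) to obtain the Corollary (\ref{BK2}) as an indirect consequence. In the Remark following that Corollary the authors explicitly point out that ``a direct proof can be obtained via the Meijer-$G$ function representation of $J_{a,b}(x)$ \dots\ analogous to the proof of \cite[Thm.~5.1]{Kuijlaars-Stivigny14}.'' Your argument is precisely this direct proof: you insert Mellin--Barnes representations of the two Wright Bessel factors, do the $u$-integral, and change variables to land on (\ref{KK3}); this establishes (\ref{BK2}) without invoking Borodin, and hence yields Proposition~\ref{P5.2} as a consequence of Proposition~\ref{prop:Laguerre_hard_edge} rather than the other way around.

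Your computations check out (the reflection-formula manipulation and the Jacobian factor $ds\,dt=-\theta\,dz\,dw$ combine correctly with $c+1-s-\theta t=-\theta(z-w)$ to give the right sign and the overall factor $\theta$). The contour discussion is also essentially right: the dichotomy $b\le 1$ versus $b>1$ in the Mellin--Barnes representation of $J_{a,b}$ is exactly what produces the tilted $\Sigma^{\delta}_{-1/2}$ for $\theta<1$ and the vertical option for $\theta\ge1$, and the $t\mapsto -w$ substitution sends the poles $t=0,-1,-2,\dots$ to $w=0,1,2,\dots$, matching the Hankel loop $\Gamma_0$. One small point worth tightening: after $s=\theta z+c+1$ the poles of $\Gamma(s)$ sit at $z=-(c+1+k)/\theta$, which need not all lie to the left of $\Re z=-1/2$ when $c$ is close to $-1$; you acknowledge this and invoke a pole-free deformation, which is fine since (\ref{KK3}) only requires that $\Sigma^{\delta}_{-1/2}$ separate these poles from $\Gamma_0$, but it would be cleaner to state the separating condition explicitly rather than pin the vertex at $-1/2$.
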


Equating (\ref{KK9}) and (\ref{KK3}) gives us a double contour integral form of the kernel
(\ref{BK}). Before doing this,
 we note that the 
 term in the second line on the RHS of (\ref{KK3a}) can be identified with the hard edge scaled kernel 
$K_{\nu_1,\dots,\nu_M}(x,y)$ of
Kuijlaars and Zhang \cite{Kuijlaars-Zhang14}, which came about from the hard edge scaled limit of the correlation kernel
for the product of complex standard Gaussian rectangular random matrices \cite{Akemann-Ipsen-Kieburg13}. We also note that the hard edge scaled kernel is a component of the kernel for the Meijer G random point field, see \cite{Bertola-Gekhtman-Szmigielski09}, \cite{Bertola-Gekhtman-Szmigielski14} and \cite{Bertola-Bothner14}. This kernel reads
\begin{equation}\label{KK3p}
  K_{\nu_1,\dots,\nu_M}(x,y) = {1 \over (2 \pi i)^2} \int_{\Sigma^{\delta}_{-1/2}} dz  \oint_{\Gamma_0} dw \,
{ x^{w} y^{-z-1} \over z - w} \prod_{j=0}^M {\Gamma(z+1+\nu_j) \over \Gamma(w+1+\nu_j)} 
{\sin \pi z \over \sin \pi w},
\end{equation}
where  $\nu_0 :=0$, $\nu_j > -1$ ($j=1,\dots,M$). Denoting the second line on the RHS of (\ref{KK3a}) by
$\tilde{K}^{(c,\theta)}(x,y)$, for $\theta \in \mathbb Z^+$ we see that
\begin{equation}\label{KK4}
\tilde{K}^{(c,\theta)}(x,y)  = K_{\nu_1,\dots,\nu_\theta}(y,x) , \qquad \nu_j = {c \over \theta} - 1 + {j \over \theta} \: \:\:
(j=1,\dots,\theta).
\end{equation}
Recalling (\ref{KK9}) and the LHS of (\ref{KK3a}) we thus have
\begin{equation}\label{KK4a}
x^{1/\theta - 1} {K}^{(c,\theta)}(\theta x^{1/\theta}, \theta y^{1/\theta})  = K_{\nu_1,\dots,\nu_\theta}(y,x) ,
\end{equation}
for $\{ \nu_j  \}$ as in (\ref{KK4}). This equation has been deduced using a rewrite of (\ref{BK}) in
\cite[Eq.~(5.7)]{Kuijlaars-Stivigny14}.

We now specify the double contour integral formula for Borodin's kernel (\ref{BK}).

\begin{corollary}
We have
\begin{equation}\label{BK2}
K^{(c,\theta)}(x, y)  = {\theta \over (2 \pi i)^2} \oint_{\Sigma^{\delta}_{-1/2}} dz  \oint_{\Gamma_0} dw \,
{ x^{-\theta z - 1} y^{\theta w} \over z - w}
{\Gamma(\theta z + c + 1) \over \Gamma(\theta w + c + 1) }
{\Gamma(z+1) \over \Gamma(w+1)} {\sin \pi z \over \sin \pi w}.
\end{equation}
\end{corollary}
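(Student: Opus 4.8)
The plan is to obtain (\ref{BK2}) by simply equating the two expressions for the hard edge scaled limit of the Laguerre Muttalib--Borodin correlation kernel that are now in hand. Proposition \ref{prop:Laguerre_hard_edge}, in the form (\ref{KK3}), identifies
\[
\lim_{N \to \infty} N^{-1/\theta} K^{\rm L}(N^{-1/\theta} x, N^{-1/\theta} y)
\]
with $(y/x)^{c/2}$ times precisely the double contour integral that appears on the right-hand side of (\ref{BK2}). Proposition \ref{P5.2} (Borodin's result \cite{Borodin99}) identifies the same limit with $(y/x)^{c/2} K^{(c,\theta)}(x,y)$. Since the two left-hand sides are literally the same quantity, the two right-hand sides coincide; cancelling the common, nowhere-vanishing prefactor $(y/x)^{c/2}$ gives (\ref{BK2}).

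In carrying this out I would first place (\ref{KK3}) and (\ref{KK9}) next to each other and check the bookkeeping: both carry the prefactor $(y/x)^{c/2}$ on the right (so no sign of the exponent flips under the comparison), the integrand factors $x^{-\theta z-1}y^{\theta w}/(z-w)$, $\Gamma(\theta z+c+1)/\Gamma(\theta w+c+1)$, $\Gamma(z+1)/\Gamma(w+1)$ and $\sin\pi z/\sin\pi w$ are arranged identically in (\ref{KK3}) and (\ref{BK2}), and the contours $\Sigma^{\delta}_{-1/2}$ and $\Gamma_0$ are the same objects in both statements. After these routine checks the corollary follows in one line.

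There is no real analytic obstacle: both ingredients have already been established, one internally (Proposition \ref{prop:Laguerre_hard_edge}, proved above via the double contour formula (\ref{KK2}) and Stirling's formula) and the other in \cite{Borodin99}. If one wanted an independent confirmation not relying on Borodin's formula, one could instead evaluate the $w$-integral in (\ref{BK2}) over the Hankel contour $\Gamma_0$ term by term using the series $J_{a,b}(x)=\sum_{j\ge 0}(-x)^j/(j!\,\Gamma(a+jb))$ from (\ref{KK9a}) --- picking up the residues at $w=0,1,2,\dots$ through $1/\sin\pi w$ --- and then recognise the remaining $z$-integral as the integral representation (\ref{BK}); but this longer route is unnecessary given Proposition \ref{P5.2}. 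The only mild subtlety, already handled in the statement of Proposition \ref{prop:Laguerre_hard_edge}, is the admissible range of $\delta$, which is inherited verbatim.
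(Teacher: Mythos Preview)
Your proposal is correct and matches the paper's own approach essentially verbatim: the paper states just before the corollary that ``Equating (\ref{KK9}) and (\ref{KK3}) gives us a double contour integral form of the kernel (\ref{BK}),'' and in the remark immediately after notes that this proof is indirect, with a direct proof available via the Meijer-$G$ representation --- the same alternative route you sketch.
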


\begin{remark}
  \begin{enumerate}[label=(\roman*)]
  \item
    Our proof of \eqref{BK2} is indirect. A direct proof can be obtained via the Meijer-G function representation of $J_{a, b}(x)$ in \eqref{KK9a}, analogous to the proof of \cite[Thm.~5.1]{Kuijlaars-Stivigny14}.
  \item
    Here we use a nearly vertical contour $\Sigma^{\delta}_{-1/2}$ to be in consistent with \cite{Kuijlaars-Zhang14}. It is also possible to deform $\Sigma$ into a Hankel-like form, which is symmetric to $\Gamma_0$. With the help of this symmetry, and under the change of variables $z \mapsto - (z + (c+1)/\theta)$, $w \mapsto - (w + (c+1)/\theta)$
    in (\ref{BK2}) the contours $\Sigma$ and
    $\Gamma_0$ interchange. Making use of the reflection formula for the gamma function then allows
    us to deduce that
    \begin{equation}\label{BK2b}
      {1 \over \theta} x^{1/\theta - 1} K^{(\alpha,\theta)}(x^{1/\theta}, y^{1/\theta}) =
      \Big ( {x \over y} \Big )^{\alpha'} K^{(\alpha',1/\theta)}(y,x),
    \end{equation}
    which as noted in \cite{Kuijlaars-Stivigny14} also follows from the original form (\ref{BK}).
    \item When our article was almost complete, we received a preprint from Zhang \cite{Zhang15}, containing amongst other things
    an independent analysis of the hard edge scaling of the Laguerre Muttalib--Borodin ensemble.
  \end{enumerate}
\end{remark}

\subsubsection{The $\theta = 0$ case} \label{subsubsec:Laguerre_hard_edge_L0}

We could take the limit of the double contour integral formula \eqref{KK2} with $\alpha_1 = \dotsb = \alpha_N = c$ to derive the limiting correlation kernel near $0$ for the $\theta = 0$ case of the Laguerre Muttalib--Borodin ensemble, but here we introduce an alternative approach. Note that the joint PDF of the $\theta = 0$ Laguerre Muttalib--Borodin ensemble is given in \eqref{Pd3A}. Changing variables $\log \lambda_j \mapsto \mu_j$, we have that the joint PDF for $\mu_1 \geq \mu_2 \geq \dotsb \geq \mu_N$ is proportional to
\begin{equation*}
  \prod^N_{k = 1} e^{-V(\mu_k)} \prod_{1 \leq j < k \leq N} (e^{\mu_j} - e^{\mu_k})(\mu_j - \mu_k), \quad \text{where} \quad V(x) = -(c + 1)x + e^x.
\end{equation*}
Then the results in \cite{Claeys-Wang11} indicate that the correlation functions for the smallest variables $\mu_N, \mu_{N - 1}, \dotsc$ converge to the correlation functions in the Tracy-Widom distribution, upon proper scaling. More specifically, the results in \cite{Claeys-Wang11} are only concerned with the asymptotics of biorthogonal polynomials. In principle, by summing up the products of the biorthogonal polynomials, the asymptotics of the correlation kernel results, but technically this is nontrivial. Note that in \cite{Claeys-Wang11} there is a technical assumption that $V(x) \to +\infty$ faster than any linear equation as $x \to \pm \infty$. But it is not hard to see that in the $-\infty$ direction this requirement can be relaxed to a linear growth to $+\infty$.

\subsection{The Jacobi Muttalib--Borodin ensemble}

The hard edge scaled limit of the Jacobi Muttalib--Borodin ensemble is the same as that for the
Laguerre Muttalib--Borodin ensemble, although the specific scale is different \cite{Borodin99}.

\begin{prop}
Let $c_1 = c$ in (\ref{J}), and specify the ratio $\tilde{h}(x)/ \tilde{h}(y)$ occurring in the definition (\ref{KK2j})
by (\ref{hhj}). We have
\begin{equation} \label{eq:Jacobi_Laguerre_same}
\lim_{N \to \infty} N^{-1 - 1/\theta} K^{\rm J}( N^{-1 - 1/\theta}x,  N^{-1 - 1/\theta}y) =
\lim_{N \to \infty} N^{-1/\theta} K^{\rm L}(N^{-1/\theta}x, N^{-1/\theta} y).
\end{equation}
\end{prop}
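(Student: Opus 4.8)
The plan is to compute the left-hand side of \eqref{eq:Jacobi_Laguerre_same} by repeating, with minor changes, the argument that proves Proposition \ref{prop:Laguerre_hard_edge}, and to show that the only structural difference between the Jacobi kernel \eqref{KK2j} and the Laguerre kernel \eqref{KK2} --- the prefactor $\tilde h$ in place of $h$, and the single extra factor $\Gamma(w+c_2+N+1)/\Gamma(z+c_2+N+1)$ in the integrand --- washes out in the relevant scaling limit. First I would substitute $x\mapsto N^{-1-1/\theta}x$, $y\mapsto N^{-1-1/\theta}y$ in \eqref{KK2j}, multiply by $N^{-1-1/\theta}$, and deform the contours exactly as in the proof of Proposition \ref{prop:Laguerre_hard_edge}: $\Sigma$ onto the nearly-vertical contour $\Sigma^\delta_{-1/2}$ and the closed loop $\Gamma_\alpha$ onto the Hankel loop $\Gamma_0$. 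When $c_2\notin\intZ$ the loop $\Gamma_\alpha$ of Proposition \ref{prop:corr_kernel_J} also wraps the poles $w=-(c_2+N+k)$, $k\in\intZ_+$, which recede to $-\infty$ as $N\to\infty$; after the $N$-independent deformation of the remaining part of $\Gamma_\alpha$ onto $\Gamma_0$, the contribution of the arc around those far poles is $o(1)$, exactly as in the estimate $\lim_{N\to\infty}\tilde{K}^{\mathrm{J},}{}'(x)=0$ in the proof of Proposition \ref{prop:global_densityJ}.

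Next I would track the powers of $N$. Under the scaling, $x^{-z-1}y^{w}$ acquires a factor $N^{(1+1/\theta)(z+1-w)}$ and the overall normalization contributes $N^{-1-1/\theta}$; substituting \eqref{XY1}, the $\alpha$-product is $\prod_k(z-\alpha_k)/\prod_k(w-\alpha_k)=N^{(w-z)/\theta}\bigl(1+\bigO(N^{-1})\bigr)$ times the same $N$-free ratio of $\Gamma$'s and $\sin$'s as in the Laguerre proof; and the new factor satisfies $\Gamma(w+c_2+N+1)/\Gamma(z+c_2+N+1)=N^{w-z}\bigl(1+\bigO(N^{-1})\bigr)$, uniformly for $z,w$ in compact sets. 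Summing the exponents,
\[
\Bigl(1+\tfrac1\theta\Bigr)(z+1-w)-1-\tfrac1\theta+\tfrac{w-z}{\theta}+(w-z)=0,
\]
so all powers of $N$ cancel; equivalently, relative to the Laguerre integrand the extra factor $N^{-1}$ hidden in each scaled variable and in the overall normalization contributes $N^{z-w}$, which is cancelled precisely by $N^{w-z}$ coming from the extra $\Gamma$-ratio. Meanwhile the $c_2$-dependent part of \eqref{hhj}, namely $\bigl((1-N^{-1-1/\theta}x)/(1-N^{-1-1/\theta}y)\bigr)^{-c_2/2}$, tends to $1$, so $\tilde h(x)/\tilde h(y)\to(x/y)^{c_1/2}$, which is exactly the limit of $h(x)/h(y)$ from \eqref{hh}. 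Hence the scaled Jacobi integrand converges pointwise, as $N\to\infty$, to precisely the scaled Laguerre integrand obtained in the proof of Proposition \ref{prop:Laguerre_hard_edge}, and the change of variables $z\mapsto\theta z+c$, $w\mapsto\theta w+c$ then gives the same double contour integral \eqref{KK3} for both sides of \eqref{eq:Jacobi_Laguerre_same}.

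To pass the limit inside the double integral one uses dominated convergence: the integrand here differs from that of the Laguerre kernel only by the factor $N^{z-w}\Gamma(w+c_2+N+1)/\Gamma(z+c_2+N+1)=1+\bigO(N^{-1})$, which is uniformly bounded along $\Sigma^\delta_{-1/2}$ and $\Gamma_0$, so the dominating function constructed for the Laguerre kernel (see \cite[Sec.~5.2]{Kuijlaars-Zhang14}, as invoked in the proof of Proposition \ref{prop:Laguerre_hard_edge}) works here up to a multiplicative constant.

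I expect the main obstacle to be the contour bookkeeping in the case $c_2\notin\intZ$: one must show that the $N$-dependent Hankel arcs of $\Gamma_\alpha$ that wind around the receding poles $w=-(c_2+N+k)$, together with the matching segment of $\Sigma$, can be replaced by the fixed contours $\Gamma_0$ and $\Sigma^\delta_{-1/2}$ with only an $o(1)$ error --- that is, that these far poles genuinely decouple from the hard-edge limit. This is the direct analogue of the estimate $\lim_{N\to\infty}\tilde{K}^{\mathrm{J},}{}'(x)=0$ in the proof of Proposition \ref{prop:global_densityJ}, and I would adapt that argument; the rest is the same asymptotic bookkeeping as in the Laguerre hard-edge proof.
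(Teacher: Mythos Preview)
Your proposal is correct and follows essentially the same approach as the paper: compare the Jacobi integrand \eqref{KK2j} with the Laguerre one \eqref{KK2}, observe that the only new factor $\Gamma(w+c_2+N+1)/\Gamma(z+c_2+N+1)\sim N^{w-z}$ is exactly absorbed by the extra $N^{-1}$ in the Jacobi scaling, and in the case $c_2\notin\intZ$ split off the Hankel arc around the far poles $w=-(c_2+N+k)$ and show its contribution vanishes, just as in the proof of Proposition \ref{prop:global_densityJ}. Your power-counting and your handling of the prefactor $\tilde h(x)/\tilde h(y)\to (x/y)^{c/2}$ spell out details the paper leaves implicit, but the argument is the same.
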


\begin{proof}
  First we consider the case that $c_2 \in \intZ$. Then the contour $\Gamma_{\alpha}$ in the integral \eqref{KK2j} is a finite contour enclosing $\alpha_1, \dotsc, \alpha_N$. By the argument similar to that in Section \ref{subsec:Jacobi_universality}, we can deform the contour $\Sigma$ in \eqref{KK2j} into a vertical contour going from $e^{-(\pi/2 + \delta)i} \cdot \infty$ to $e^{(\pi/2 + \delta)i} \cdot \infty$ on the left of $\Gamma_{\alpha}$. Furthermore we can deform the closed contour $\Gamma_{\alpha}$ into the infinite contour $\Gamma_0$. Thus the double integral formula \eqref{KK2j} still holds with $\Sigma$ and $\Gamma_{\alpha}$ replaced by $\Sigma^{\delta}_{-1/2}$ and $\Gamma_0$ respectively.

  Comparing the integrands of $K^{\rm L}$ (\ref{KK2}) and $K^{\rm J}$ (\ref{KK2j}) we see that the only difference is that in the latter there is an additional factor $\Gamma(w + c_2 + N + 1)/\Gamma(z + c_2 + N + 1)$. For $z$ and $w$ fixed and $N \to \infty$, this simplifies to
  $$
  {\Gamma(w + c_2 + N + 1) \over \Gamma(z + c_2 + N + 1)} = N^{(w - z)} \Big ( 1 + \bigO(1/N) \Big ),
  $$
  so the only difference in the present case is an additional factor of $N^{(w - z)}$. The additional factor is accounted for by the different scalings: $x \mapsto N^{-1/\theta} x$ for the Laguerre case, and $x \mapsto N^{-1-1/\theta} x$ for the Jacobi case.

  In the case that $c_2 \notin \intZ$, the contour $\Gamma_{\alpha}$ in \eqref{KK2j} is infinite. We express $\Gamma_{\alpha} = \Gamma'_{\alpha} \cup \Gamma''_{\alpha}$ and $\Sigma$ a contour from $e^{-(\pi/2 + \delta)i} \cdot \infty$ to $e^{(\pi/2 + \delta)i} \cdot \infty$ in between $\Gamma'_{\alpha}$ and $\Gamma''_{\alpha}$ as in the proof of Proposition \ref{prop:global_densityJ}. Furthermore, we deform $\Gamma''_{\alpha}$ into the infinite contour $\Gamma_0$, and then take $\Sigma$ as $\Sigma^{\delta}_{-1/2}$. We write
  \begin{multline*}
    K^{\mathrm{J}}(x, y) = K^{\mathrm{J},}{}'(x, y) + K^{\mathrm{J},}{}''(x, y), \quad \text{where} \quad K^{\mathrm{J}, *}(x, y) = {1 \over (2 \pi i)^2} {\tilde{h}(x) \over \tilde{h}(y)}\\
    \times \oint_{\Sigma} dz  \oint_{\Gamma^*_\alpha} dw \,
    { x^{-z - 1} y^w \over (z - w) }
    { \Gamma(w + c_2 + N + 1) \Gamma(z+1) \prod_{k=1}^N ( z - \alpha_k) \over
      \Gamma(z + c_2 + N + 1)  \Gamma(w+1) \prod_{l=1}^N (w - \alpha_l)}, \quad
    * = {}' \text{ or }''.
  \end{multline*}
  
  By the argument in the $c_2 \in \intZ$ case, we have that
  \begin{equation} \label{eq:Jacobi_Laguerre_same:1}
    \lim_{N \to \infty} N^{-1 - 1/\theta} K^{\rm J,}{}''( N^{-1 - 1/\theta}x,  N^{-1 - 1/\theta}y) = \lim_{N \to \infty} N^{-1/\theta} K^{\rm L}(N^{-1/\theta}x, N^{-1/\theta} y).
  \end{equation}
  On the other hand, by estimating the factors of the integrand 
  \begin{equation*}
    x^{-z - 1} \Gamma(z+1) \prod_{k=1}^N ( z - \alpha_k) \Gamma(z + c_2 + N + 1)^{-1}
  \end{equation*}
  by Stirling's formula on $\Sigma_{-1/2}$ and
  \begin{equation*}
    y^w \Gamma(w + c_2 + N + 1) \left( \Gamma(w+1) \prod_{l=1}^N (w - \alpha_l) \right)^{-1}
  \end{equation*}
  on $\Gamma'_{\alpha}$ separately, we find that
  \begin{equation}  \label{eq:Jacobi_Laguerre_same:2}
    \lim_{N \to \infty} N^{-1 - 1/\theta} K^{\rm J,}{}'( N^{-1 - 1/\theta}x,  N^{-1 - 1/\theta}y) = 0.
  \end{equation}
  Thus we prove \eqref{eq:Jacobi_Laguerre_same} by combining \eqref{eq:Jacobi_Laguerre_same:1} and \eqref{eq:Jacobi_Laguerre_same:2}.
\end{proof}

\section*{Acknowledgements}
The work of PJF was supported by the Australian Research Council. The work of DW was partially supported by the start-up grant R-146-000-164-133. Thanks are also given to the
Department of Mathematics, National University of Singapore, for hosting a visit during July 2014 when this
work was begun. We thank Arno Kuijlaars and Lun Zhang for spotting misprints in an earlier version. We thank Dongzhou Huang for noticing that the $\theta \in (0, 1)$ case of Proposition \ref{prop:global_density} requires a deformation of $\Sigma$ other than a vertical line.
  
% \bibliographystyle{amsplain}
% \bibliography{../bibliography/bibliography}

\def\cydot{\leavevmode\raise.4ex\hbox{.}}
\providecommand{\bysame}{\leavevmode\hbox to3em{\hrulefill}\thinspace}
\providecommand{\MR}{\relax\ifhmode\unskip\space\fi MR }
% \MRhref is called by the amsart/book/proc definition of \MR.
\providecommand{\MRhref}[2]{%
  \href{http://www.ams.org/mathscinet-getitem?mr=#1}{#2}
}
\providecommand{\href}[2]{#2}

\end{document}